     \newcommand{\mapsfrom}{\mathrel{\reflectbox{\ensuremath{\mapsto}}}} 
\newsavebox{\tempbox}
     \renewcommand{\chaptermark}[1]{\markboth{%
        \ifnum\value{secnumdepth}>\m@ne 
        \if@mainmatter\@chapapp\ \thechapter. \ \fi\fi #1}{}} 
     \renewcommand{\sectionmark}[1]{\markright{%
        \ifnum\value{secnumdepth}>\z@ \thesection.\ \fi#1}} 
\theoremstyle{definition}
\newtheorem{thm}{Theorem}[section]
\newtheorem{defn}{Definition}[section]
\newtheorem{prop}{Proposition}[section]
\newtheorem{lemma}{Lemma}[section]
\newtheorem{ex}{Example}[section]
\newtheorem*{rmk}{Remark}
\begin{document}
\frontmatter
\includepdf{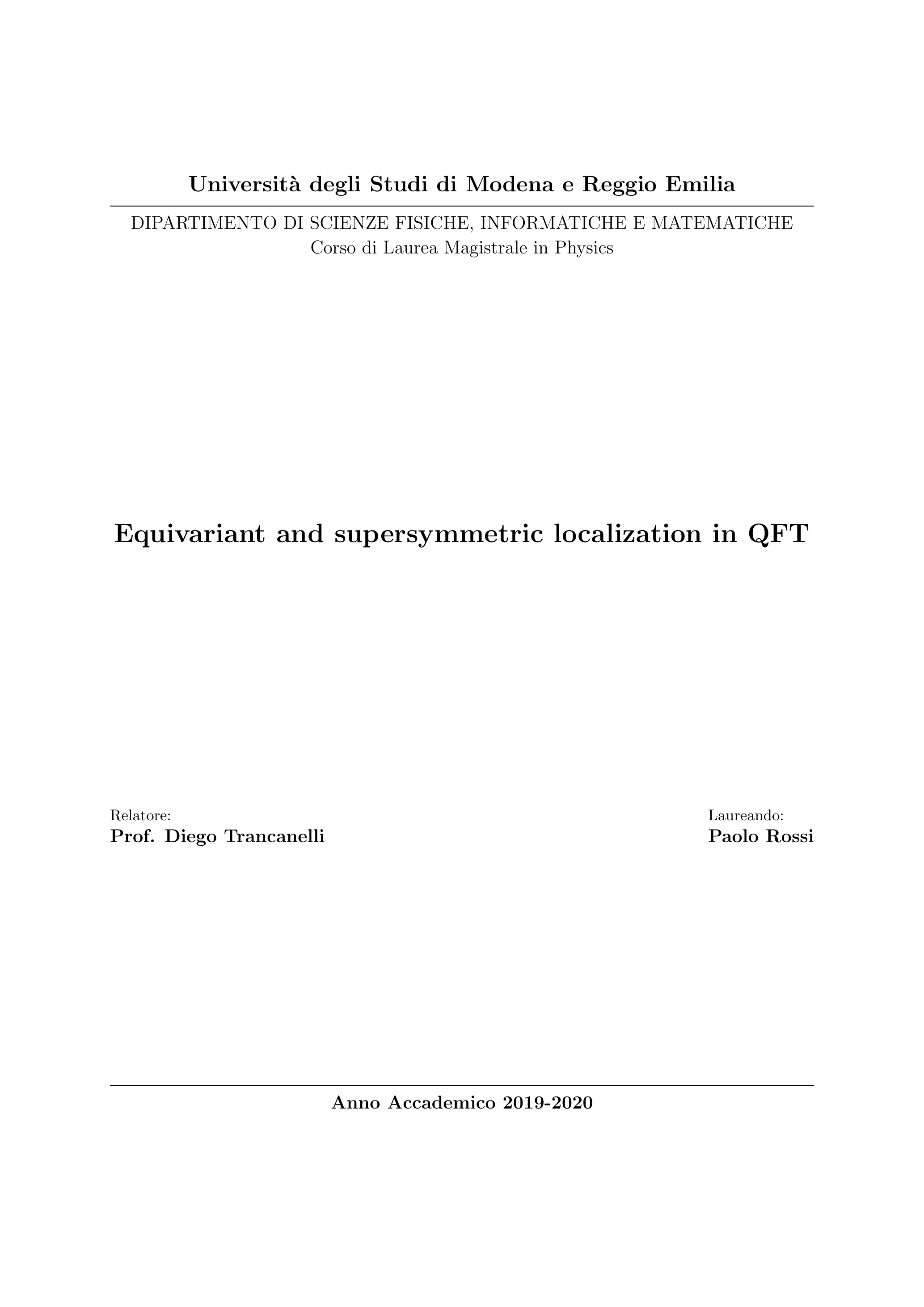}


\cleardoublepage
\addcontentsline{toc}{chapter}{Abstract}
\markboth{Abstract}{Abstract}
\chapter*{Abstract}

Equivariant localization theory is a powerful tool that has been extensively used in the past thirty years to elegantly obtain exact integration formulas, in both mathematics and physics. These integration formulas are proved within the mathematical formalism of equivariant cohomology, a variant of standard cohomology theory that incorporates the presence of a symmetry group acting on the space at hand. A suitable infinite-dimensional generalization of this formalism is applicable
to a certain class of Quantum Field Theories (QFT) endowed with  \emph{supersymmetry}.

In this thesis we review the formalism of equivariant localization and some of its applications in Quantum Mechanics (QM) and QFT. We start from the mathematical description of equivariant cohomology and related localization theorems of finite-dimensional integrals in the case of an Abelian group action, and then we discuss their formal application to infinite-dimensional path integrals in QFT. We summarize some examples from the literature of computations of partition functions and expectation values of supersymmetric operators in various dimensions. For 1-dimensional QFT, that is QM, we review the application of the localization principle to the derivation of the Atiyah-Singer index theorem applied to the Dirac operator on a twisted spinor bundle. In 3 and 4 dimensions, we examine the computation of expectation values of certain Wilson loops in supersymmetric gauge theories and their relation to 0-dimensional theories described by \lq\lq matrix models\rq\rq. Finally, we review the formalism of non-Abelian localization applied to 2-dimensional Yang-Mills theory and its application in the mapping between the standard \lq\lq physical\rq\rq\ theory and a related \lq\lq cohomological\rq\rq\ formulation.

\cleardoublepage
\addcontentsline{toc}{chapter}{Acknowledgments}
\markboth{Acknowledgments}{Acknowledgments}
\chapter*{Acknowledgments}

First and foremost, I would like to express my sincere gratitude to professor Diego Trancanelli, who supervised me during the draft of this thesis. I thank him for his constant availability, his valuable advice and his sincere interest for my learning process, as well as his great patience in meticulously reviewing my work step by step. It is also a pleasure for me to thank professor Olindo Corradini, who initiated me to the wonderful world of QFT with two brilliant courses, and was always available to discuss and answer my questions.

I would like to thank all the friends and colleagues that grew up with me in this journey. Even though our paths and interests separated, they have always been a precious source of inspiration to me.
A heartfelt thanks goes to Giulia, whose presence alone makes everything easier. Finally I would like to thank my family, for the love and support, no matter what, during all these years.

\tableofcontents
\mainmatter

\chapter{Introduction}

Quantum Field Theory (QFT) is the framework in which modern theoretical physics describes fundamental interactions between elementary particles and it is also central in the study of condensed matter physics and statistical mechanics. QFT has made the most precise predictions ever in the history of science and it has been tested against a huge amount of experimental data. Nowadays, the most useful formulation of QFT is made in terms of  \emph{path integrals}, which are integrals over the space of all possible field configurations. A \lq\lq field\rq\rq\ mathematically speaking can be thought roughly as a function over  spacetime, so these integrals are computed over functional spaces, that are infinite-dimensional. This makes the exact computation of such objects a complicated task, except for some very special cases, and in fact their precise mathematical formulation is still an open problem. Despite these formal difficulties, many interesting results can be extracted from these objects, that can describe \emph{partition functions} and \emph{expectation values} of physical observables in QFT. The favorite approach to deal with such computations is perturbation theory, applied in the case in which the QFT is weakly coupled. In this regime, one can compute approximately the expectation values in a perturbative expansion, order by order in the coupling constant. This method, applied to the computation of the partition function, is the infinite-dimensional analogous of a \lq\lq saddle-point\rq\rq\ or \lq\lq stationary-phase\rq\rq\ approximation. Intuitively it represents a semi-classical approach to the quantum dynamics.

There are however many cases in which perturbation theory is not applicable, mainly when the QFT is strongly coupled, \textit{i.e.}\ the coupling constant is of order 1. This is not a very rare situation. For example we know that one of the fundamental interactions of the Standard Model of particle physics, the strong nuclear force, is well described by \lq\lq quantum chromodynamics\rq\rq\ (QCD), a QFT that is strongly coupled at low energies (so in the \lq\lq phenomenological\rq\rq\ regime).
Understanding the behavior of QFT in the strong coupling regime is then a major problem from the physical point of view, and one is lead to develop techniques that permit to study the path integral in a non-perturbative approach.

In this thesis we describe features of one of these techniques, that has been exploited in the last few decades for a class of special QFTs, those who exhibit some kind of \emph{supersymmetry}. This technique is called \emph{supersymmetric localization}, or \emph{equivariant localization}, or simply localization. Its name derives from the fact that, when one is able to use this method, the path integral of the QFT at hand simplifies (so \lq\lq localizes\rq\rq) to an integral over a smaller domain, sometimes even a finite-dimensional integral over constant field configurations. This result can be viewed as an exact stationary-phase approximation, so that the full quantum spectrum of the localized theory is completely determined by its semi-classical limit. Without entering in the technical details of this localization phenomenon, we just point out that this method fundamentally relies on the presence of a large amount of symmetry of the theory, that can be described by the presence of a \emph{group action} on the space of fields. When the space of fields is graded, \textit{i.e.}\ there is a distinction between \lq\lq bosonic\rq\rq\ and \lq\lq fermionic\rq\rq\ degrees of freedom, the symmetry group action can exchange these two types of fields and in this case it is called \emph{super}symmetry. This situation arises mainly in BRST-fixed and topological field theories, where the grading is regarded as the \emph{ghost number}, and in Poincaré-supersymmetric QFT, where the grading distinguishes between bosons and fermions in the standard sense of particle physics. In both cases, the action of a supersymmetry transformation \lq\lq squares\rq\rq\ to a canonical (bosonic) one, \textit{i.e.}\ a gauge transformation or a Poincaré transformation. It is reasonable that such a huge amount of symmetry can simplify the dynamics of the theory, but it can be non-trivial a priori how to translate this in a simplification of the path integral.

At this point, one wishes to understand if there is a theoretical framework that allows to systematically understand why and when such a drastic simplification of the path integral can occur, and at which level this is related to (super-)symmetries in QFT. To be mathematically more rigorous, we can think in terms of integration over finite-dimensional spaces, and then try to extrapolate and generalize the important results to the infinite-dimensional case. Integrals of differential forms over manifolds are built up technically from the smooth (so local) structure of the space, but it is a well-known fact that their result can describe and is regulated by topological (so global) properties. It is a consequence of de Rham's theorem and Stokes' theorem that they really depend on the \emph{cohomology class} of the integrand and not on the particular differential form that represents the class. It is thus reasonable that a theory of integration that embeds the presence of a symmetry group action should arise from a topological construction. Indeed, the mathematical framework in which the localization formulas
were firstly derived is a suitable  modification of the standard cohomology theory. This is called \emph{equivariant cohomology}. As de Rham's theorem relates the usual cohomology to differential forms on a smooth manifold, equivariant cohomology can be associated to a modification of them, called \emph{equivariant differential forms}.

Equivariant cohomology theory was initiated in the mathematical literature by Cartan, Borel and others during the 50's \cite{cartan-1,cartan-2,borel-1,borel-2}, but the first instance of a localization formula was presented by Duistermaat and Heckman in 1982 \cite{duistermaat-heckman}. In this paper, they proved the exactness of a stationary-phase approximation  in the context of symplectic geometry and Hamiltonian Abelian group actions. Subsequently, Atiyah and Bott realized that the Duistermaat-Heckman localization formula can be viewed as a special case of a more general theorem, that they proved in the topological language of equivariant cohomology \cite{atiyah-bott-localization}. Almost at the same time, Berline and Vergne derived an analogous localization formula valid for Killing vectors on general compact Riemannian manifolds \cite{berline-vergne-loc}. Roughly, the Atiyah-Bott-Berline-Vergne (ABBV) formula says that the integral over a manifold that is acted upon by an Abelian group localizes as a sum of contributions arising only from the fixed points of the group action.

The first infinite-dimensional generalization of this localization formula was given soon after by Atiyah and Witten in 1985 \cite{Atiyah-circular-symmetry}, applied to supersymmetric Quantum Mechanics (QM). This turns out to be an example of topological theory, since their localization formula relates the partition function to the index of a Dirac operator. Many generalizations of this approach followed, first mainly in the context of topological field theories. Here localization allows to get closed formulas relating partition functions of physical QFTs to topological invariants of the spaces where they live. In all these cases, the BRST cohomology is interpreted as the equivariant structure of the theory, and is responsible for the localization of the path integral.

In recent years, the formal application of the ABBV localization formula for Abelian symmetry actions was employed in the context of Poincaré-supersymmetric theories, whose explicit supersymmetry is an extension of the spacetime symmetry that is common to all QFTs. In this case, the equivariant structure is generated by the cohomology of a \emph{supersymmetry charge}, that acts as a differential on the space of Poincaré symmetric field configurations. This formal structure of supersymmetric field theories was firstly realized by Niemi, Palo and Morozov in the 90's \cite{morozov-supersymplectic-QFT, palo}. Starting from the work of Pestun in 2007 \cite{pestun-article}, localization has been applied to the computation of partition functions and expectation values of supersymmetric operators on curved compact manifolds \cite{LocQFT}. In many of these cases, the partition function localizes to a finite-dimensional integral over matrices, a so-called \emph{matrix model}. To carry out this procedure, a number of technical difficulties have to be overcome, the most urgent being to understand how to define Poincaré-supersymmetric theory on curved spaces. Nowadays there is a well-defined and well-understood procedure that allows to do that, essentially deforming an original theory defined in flat space through the coupling to a non-trivial \lq\lq rigid\rq\rq\ supersymmetric background. This in general reduces the degree of supersymmetry of the original theory, but if some of it is preserved on the new background then one is able in principle to perform localization. Beside the general and abstract motivation of understanding strongly coupled QFT, the importance of these computations can be viewed in a string theory perspective, and in particular as possible tests of the so called \emph{AdS/CFT correspondence} \cite{Maldacena-ads_cft}.

Some generalizations of the ABBV formula for non-Abelian group actions have been proposed both in the mathematical and physical literature. The first generalization of the Dui\-ster\-maat-Heckman theorem to non-Abelian group actions was presented by Guillemin and Prato \cite{guillemin-prato}, restricting the localization principle to the action of the maximal Abelian subgroup. An infinite-dimensional generalization of the localization principle was proposed by Witten in 1992 \cite{witten-2dYM}, applied to the study of 2-dimensional Yang-Mills theory, and a more rigorous proof appeared in the mathematical literature in 1995 by Jeffrey and Kirwan \cite{Jeffrey-nonabLoc}. Of course these localization formulas find many interesting applications not only in physics but also in pure mathematics, but this side of the story is far from the purpose of this work.


\section*{An exact saddle-point approximation}

To give a feeling of what we mean by \lq\lq exact saddle-point approximation\rq\rq, we present a very simple but instructive example here, in the finite-dimensional setting. The saddle-point (or stationary-phase) method  is applied to oscillatory integrals of the form
\begin{equation}
\label{intro-1}
I(t) = \int_{-\infty}^{+\infty} dx\ e^{itf(x)} g(x) ,
\end{equation}
when one is interested in the asymptotic behavior of \(I(t)\) at positive large values of the real parameter \(t\). In this limit, the integral is dominated by the critical points of \(f(x)\), where its first derivative vanishes and it can be expanded in Taylor series as
\begin{equation}
f(x) = f(x_0) + \frac{1}{2} f''(x_0) (x-x_0)^2 + \cdots .
\end{equation}
If \(F\subset \mathbb{R}\) is the set of critical points, that for simplicity we assume to be discrete, the leading contribution to \eqref{intro-1} is then given by a Gaussian integral,
\begin{equation}
\begin{aligned}
I(t) &\approx \sum_{x_0\in F}  g(x_0) e^{itf(x_0)} \int_{-\infty}^{+\infty} dx\ e^{\frac{it}{2} f''(x_0) (x-x_0)^2} \\
&= \sum_{x_0\in F}  g(x_0) e^{itf(x_0)} e^{\frac{i\pi}{4}\mathrm{sign}(f''(x_0))} \sqrt{\frac{2\pi}{t|f''(x_0)|}} .
\end{aligned}
\end{equation}
If the integral is performed over \(\mathbb{R}^n\), this last formula generalizes easily to
\begin{equation}
\label{intro-2}
I(t) \approx \left(\frac{2\pi}{t}\right)^{n/2} \sum_{x_0\in F}  g(x_0) e^{itf(x_0)} \frac{ e^{\frac{i\pi}{4}\sigma(x_0)} }{|\det(\mathrm{Hess}_f(x_0))|^{1/2}} ,
\end{equation}
where \(\mathrm{Hess}_f(x)\) is the matrix of second derivatives of \(f\) at \(x\in \mathbb{R}^n\), and \(\sigma(x)\) denotes the sum of the signs of its eigenvalues.

Of course, there is no reason for the RHS of \eqref{intro-2} to be the exact answer for \(I(t)\), but the claimed property of localization is that in some cases this turns out to be true! To see this, let us consider the integration over the 2-sphere \(\mathbb{S}^2\), defined by its embedding in \(\mathbb{R}^3\) as the set of points whose distance from the origin is 1. For this example we chose \(f(x,y,z) = z\), the \lq\lq height function\rq\rq, and \(g(x,y,z)=1\). The resulting oscillatory integral is then
\begin{equation}
\label{intro-3}
I(t) = \int_{\mathbb{S}^2} dA\ e^{itz} ,
\end{equation}
where \(dA\) is the volume form on the sphere, normalized such that \(\int_{\mathbb{S}^2} dA = 4\pi\). The critical points of the height function are the North and the South poles, where 
\begin{equation}
z \approx \pm \left( 1 -\frac{1}{2}(x^2+y^2) \right) .
\end{equation}
The volume form at the poles is just \(dA = dxdy\), so if we apply the saddle-point approximation to \eqref{intro-3} we get
\begin{equation}
\label{intro-height-function}
\begin{aligned}
\int_{\mathbb{S}^2} dA\ e^{itz} &\approx e^{it}\int dxdy\ e^{-\frac{it}{2}(x^2+y^2)} + e^{-it}\int dxdy\ e^{\frac{it}{2}(x^2+y^2)} \\
&= \frac{2\pi}{it} e^{it} - \frac{2\pi}{it} e^{-it} \\
&= 4\pi \frac{\sin(t)}{t} .
\end{aligned}
\end{equation}

Now, since this integral is rather easy, in this case we can actually compare the result of the saddle-point approximation with  its exact value. Using spherical coordinates,
\begin{equation}
I(t)= \int_{-1}^{+1} d\cos(\theta) \int_0^{2\pi} d\varphi\ e^{it\cos(\theta)} =  4\pi \frac{\sin(t)}{t} .
\end{equation}
As promised, the result coincides with the stationary-phase result \eqref{intro-height-function}. This is the simplest example of equivariant localization! The scope of the first chapters of the thesis is to describe in  general the structure underlying this result, how it can be related to symmetry properties of the specific function and space under consideration, and the localization theorems that generalize this specific computation to possibly more complicated examples. In the remaining part, we will deal instead with examples of the infinite-dimensional analog of this exact stationary-phase approximation.

\section*{Structure of the thesis}

The aim of this project is to summarize some results in the context of equivariant localization applied to physics. We will draw a line from the first mathematical results concerning the theory of equivariant cohomology and associated powerful localization theorems of finite-dimensional integrals, to the generalization to path integration in quantum mechanical models and finally to applications in quantum field theories. One of the purposes of this work is to provide a suitable reference  for other students  in theoretical and mathematical physics with a background at the level of a master degree, who are interested in approaching the subject. In this spirit, we will try to expose the material in a pedagogical order, being as self-contained as possible, and otherwise giving explicit references to background material.

\medskip
In Chapter 2, we will review the basics of equivariant cohomology theory, starting from its construction in algebraic topology. After having recalled some notions of basic homology and cohomology, we will introduce group actions, and define equivariant cohomology with the so-called \emph{Borel construction}. Then, we will describe the most common algebraic models that generalize de Rham's theorem in the equivariant setup, the \emph{Weil}, \emph{Cartan} and \emph{BRST models}. These give a description of equivariant cohomology in terms of a suitable modification of the complex of differential forms.

\medskip
In Chapter 3, we will describe the common rationale behind the localization property of equivariant integrals in finite-dimensional geometry, the so-called \emph{equivariant localization principle}. Then we will state and explain the Abelian localization formula derived by Atiyah-Bott and Berline-Vergne. In the final part of the chapter, we will connect the discussion with the context of symplectic geometry that, as we will see, can be rephrased in terms of equivariant cohomology. We will review the basic notions of symplectic manifolds, symmetries and Hamiltonian systems, and then state the Duistermaat-Heckman localization formula as a special case of the ABBV theorem.

\medskip
Chapter 4 can be viewed as a long technical aside. Here we will review some notions about supergeometry that are needed to understand a proof of the ABBV theorem, and then specialize the discussion to Poincaré-supersymmetric theories. We will discuss their construction from the perspective of superspace, give some practical examples, and then generalize their description over general curved backgrounds. This is achieved by coupling the given theory to the supersymmetric version of Einstein gravity, \emph{supergravity}, and then requiring the gravitational sector of the resulting theory to decouple from the rest in a \lq\lq rigid limit\rq\rq, analogous to \(G_N\to 0\). This method will bring up to the notion of \emph{Killing spinors}, a special type of spinorial fields whose existence ensures the preservation of some supersymmetry on the curved background. 
Finally, we will comment about a possible \lq\lq super-interpretation\rq\rq\ of the models of equivariant cohomology described in Chapter 2, that connects them to the usual BRST formalism for quantization of constrained Hamiltonian systems.

\medskip
In Chapter 5 we will discuss examples of Abelian supersymmetric localization of path integrals in the infinite-dimensional setting of QFT. The first case we will report is 1-dimensional, \textit{i.e.}\ QM. In an Hamiltonian formulation on phase space, we will describe how it is possible to give a supersymmetric (so equivariant) interpretation to the path integral in a model-independent way, the supersymmetry arising as a \lq\lq hidden\rq\rq\ BRST symmetry which is linked to the Hamiltonian dynamics. This results in a localization of the path integral over the space of classical field configurations or over constant field configurations, that applied to supersymmetric QM gives an alternative proof the Atiyah-Singer index theorem for the Dirac operator over a twisted spinor bundle. Next, we will review two modern applications of the localization principle to the computation of expectation values of \emph{Wilson loop operators} in supersymmetric gauge theories. The first application computes the expectation value in \(\mathcal{N}=4,2,2^*\) Super Yang-Mills theory on the 4-sphere \(\mathbb{S}^4\), the second one in \(\mathcal{N}=2\) Super Chern-Simons theory on the 3-sphere \(\mathbb{S}^3\). In both cases, the partition function and the Wilson loop expectation value can be reduced to finite-dimensional integrals over the Lie algebra of the gauge group of the theory. This makes the supersymmetric theories in exam equivalent to a suitable \lq\lq matrix model\rq\rq, whose path integral can be computed exactly with some special regularization. We will give an example of computations of such a matrix model, since this class of objects arises in many important areas of modern theoretical physics.

\medskip
In Chapter 6 we will introduce Witten's non-Abelian localization formula, and briefly describe its possible application in the study of 2-dimensional Yang-Mills theory. This more general formalism is able to show the mapping between the standard \lq\lq physical\rq\rq\ version of Yang-Mills theory and its \lq\lq cohomological\rq\rq\ (\textit{i.e.}\ topological, in some sense) formulation, and is at the base of the localization of the 2-dimensional Yang-Mills partition function.

\medskip
Some technical asides are relegated to the appendices.
Appendix \ref{app:diff_geo} is devoted to some background in differential geometry, concerning principal bundles and the definition of spinors in curved spacetime. In Appendix \ref{app:eq_coho} we report more details about equivariant cohomology, equivariant vector bundles and characteristic classes. This can be seen as a completion of the discussion of Chapter 2, from a more mathematical point of view.


\chapter{Equivariant cohomology}
\label{cha:equivariant cohomology}

In this chapter we review the theory of \emph{equivariant cohomology}, as a modification of the standard cohomology theory applied to spaces that are equipped with the action of a \emph{symmetry group} \(G\) on them, the so-called \(G\)-manifolds. First, we will review the basic notions  about cohomology and homology, from their algebraic definition to the application in topology and differential geometry. The main result that we need to care about, and extend to the equivariant case, is \emph{de Rham's theorem} \cite{bott-tu-differential_forms}, that gives an \emph{algebraic model} for the cohomology of a smooth manifold in terms of the complex of its differential forms. This and  Stokes' theorem relate the theory of cohomology classes to the integration on smooth manifolds. Next, we will extend this to the equivariant setting, giving a topological definition of equivariant cohomology, and then discussing, in the smooth case, an equivariant version of  de Rham's theorem. This, analogously to the standard case, will give an equivalence between the topological definition of equivariant cohomology and the cohomology of some suitable differential complex built from the smooth structure of the space at hand. There are different, but equivalent, possibilities of such algebraic models for the equivariant cohomology of a \(G\)-manifold: we will see the \emph{Weil model}, the \emph{Cartan model} and the \emph{BRST model}, and discuss how they are related one to each other, since at the end they have to describe the same equivariant cohomology.

The purpose of all this, from the physics point of view, is that with equivariant cohomology we can describe a theory of cohomology and integration over manifolds that are acted upon by a symmetry group, the standard setup of classical mechanics and QFT. In the next chapter we will review one of the climaxes of this theory applied to the problem of integration over \(G\)-manifolds: the famous \emph{localization formulas} of Berline-Vergne \cite{berline-vergne-loc} and Atiyah-Bott \cite{atiyah-bott-localization}, that permit to highly simplify a large class of integrals thanks to the equivariant structure of the underlying manifold. The aim and the core of this thesis will be then the description of some generalizations and application of those theorems to the context of QM and QFT, where the integrals of interest are the infinite-dimensional \emph{path integrals} describing partition functions or expectation values of operators.

For this introductory chapter, we follow mainly \cite{tu-equiv_cohom, szabo, Cordes-2dYMTFT, Bott-equiv-cohom}. Another classical reference is \cite{guillemin-sternberg-book}.
Some background tools from differential geometry that are needed can be found in Appendix \ref{app:diff_geo}.

\section{A brief review of standard cohomology theory}
\label{sec:cohomology-review}
In this section we will review some of the basic facts about standard homology and cohomology theory, and in particular its application to topological spaces with the definition of \emph{singular} homology and cohomology groups. Since this is after all standard material, we refer to any book of topology/geometry/algebra (for example \cite{nakahara, hatcher,tu-intro_to_mfd,rotman-algebra}) for the various proofs, while we will give some intuitive examples to help making concrete the various abstract definitions. The main result that we aim to recall is  \emph{de Rham's theorem}, that relates the cohomology theory to differential forms over smooth manifolds, and that will be extended in the next sections to the modified equivariant setup.

We start with the abstract definition of homology and cohomology as algebraic constructions. From this point of view, (co)homology groups are defined in relation to \emph{(co)chain complexes} (or \emph{differential complexes}). 
\begin{defn}  Given a ring \(R\), a \emph{chain complex} is an ordered sequence \(A=( A_p, d_p)_{p\in \mathbb{N}}\) of \(R\)-modules \(A_p\) and homomorphisms \(d_p :A_p \to A_{p-1}\) such that \(d_{p-1}\circ d_p = 0\). A \emph{cochain complex} has the same structure but with homomorphisms \(d_p : A_p \to A_{p+1}\), and \(d_{p+1}\circ d_p = 0\). 
\[
\begin{aligned}
\text{Chain complex}&: \qquad \cdots A_{p-1} \xleftarrow{d_{p}} A_{p} \xleftarrow{d_{p+1}} A_{p+1} \cdots \\
\text{Cochain complex}&: \qquad \cdots  A_{p-1} \xrightarrow{d_{p-1}} A_{p} \xrightarrow{d_{p}} A_{p+1} \cdots 
\end{aligned}
\] \end{defn}

An element \(\alpha\) of a (co)chain complex is called \emph{(co)cycle} or \emph{closed} if \(\alpha\in \mathrm{Ker}(d_p)\) for some \(p\). It is instead called \emph{(co)boundary} or \emph{exact} if \(\alpha\in \mathrm{Im}(d_p)\) for some \(p\). By definition \[
\mathrm{Im}(d_p) \subseteq \mathrm{Ker}(d_{p\pm 1}) , \]
where the \(-\) is for chain and the \(+\) for cochain complexes,
so the quotient sets \(\faktor{\mathrm{Ker}}{Im}\)  of \(p\)-(co)cycles modulo \(p\)-(co)boundries are well defined.
\begin{defn} Given a (co)chain complex \(A\), the \emph{\(p^{th}\) (co)homology group} of \(A\) is \[
\left( H^p(A):= \faktor{\mathrm{Ker}(d_p)}{\mathrm{Im}(d_{p+1})} \right) \qquad H_p(A):= \faktor{\mathrm{Ker}(d_p)}{\mathrm{Im}(d_{p-1})} . \] \end{defn} 
(Co)homology groups are called like that because they inherit a natural Abelian group structure (or equivalently \(\mathbb{Z}\)-module structure) from the sum in the original chain complex.\footnote{Notice that \([\alpha]+[\beta]:=[\alpha+\beta]\) is well defined.} As usual, once we have a definition of a class of mathematical objects, a prime interest lies in the study of structure preserving maps between them. A morphism between (co)chain complexes \(A\) and \(B\) is then a sequence of homomorphisms \(\left(f_p:A_p \to B_p\right)_{p\in\mathbb{N}}\) such that, schematically, \(f\circ d^{(A)} = d^{(B)}\circ f\). It is easy to see that every such a morphism induces an homomorphism of (co)homology groups, since for example \(f^*:H^p(A)\to H^p(B)\) such that \(f^*([\alpha]):= [f(\alpha)]\) is well defined.

Notice that, in many applications one considers (co)chain complexes defined by \emph{graded} modules or algebras with a suitable \emph{differential}. An example of this type is the complex of differential forms \((\Omega(M),d)\) over a smooth manifold, that we will recover later on. For a general (co)chain complex \((A_p,d_p)_{p\in\mathbb{N}}\), we can always see \(A:=\bigoplus_p A_p\) as a \emph{graded} \(R\)-module, whose elements as \(\alpha \in A_p \subset A\) are said to have pure \emph{degree} \(\textrm{deg}(\alpha):= p\). A generic element will be a sum of elements of pure degree.
\begin{defn} \label{def:dga}
A \emph{differential graded algebra} (\emph{dg-algebra} for short) over \(R\) is then an \(R\)-algebra with the decomposition (grading) \(A=\bigoplus_p A_p\), the product satisfying \(
A_p A_q \subseteq A_{p+q} \),
and a \emph{differential} \(d:A \to A\) such that
\begin{enumerate}[label=(\roman*)]
\item it has degree \(\textrm{deg}(d)=\pm 1\), meaning that for every \(\alpha\) of degree \(p\), \(\textrm{deg}(d\alpha) = p\pm 1\);
\item it is nilpotent, \(d^2 = 0\);
\item it satisfies the graded Leibniz rule, \(d(\alpha \beta) = (d\alpha)\beta + (-1)^{\textrm{deg}(\alpha)} \alpha (d\beta)\).
\end{enumerate} \end{defn}
Every such an algebra clearly defines an underlying complex (cochain if \(\textrm{deg}(d)=1\), chain if \(\textrm{deg}(d)=-1\)) and thus has associated (co)homology groups. Even if the algebra structure (the product and the Leibniz rule) are not needed to define the complex, we included it in the definition because this is the kind of structure that arises in physics or in differential geometry. Morphisms of dg-algebras are naturally defined as structure preserving maps between them, analogously to the above discussion.

\bigskip
Beside the abstract algebraic definitions of above, one of the most important applications of homology and cohomology groups is in the study and classification of topological spaces. In order to define these groups in a topological setup, the complexes one takes into consideration are the \emph{simplicial complexes}, that intuitively represents a formal way of constructing \lq\lq polyhedra\rq\rq\ over \(\mathbb{R}^n\), and that can be used in turn to study properties of topological spaces. Given \(\mathbb{R}^\infty\) with the standard basis \(\lbrace e_i\rbrace_{i=0,1,\cdots}\) (\(e_0=0\)), a \emph{standard q-simplex} is 
\begin{equation}
\Delta_q := \left\lbrace \left. x=\sum_{i=0}^q \lambda_i e_i \right| \sum_{i=0}^q \lambda_i =1, \lambda_i\in[0,1]\ \forall i=0,\cdots,q \right\rbrace .
\end{equation}
Although this definition takes into account any possible dimensionality, we can embed these simplices in in finite-dimensional Euclidean spaces, giving them a more practical interpretation.
Given \(q+1\) points \(v_0,\cdots,v_q \in \mathbb{R}^n\), the associated  \emph{affine singular q-simplex} in \(\mathbb{R}^n\) is the map 
\begin{equation}
\begin{aligned}
\left[ v_0\cdots v_q \right]: \Delta_q &\to \mathbb{R}^n \\
\sum_{i=0}^q \lambda_i e_i &\mapsto \sum_{i=0}^q \lambda_i v_i .
\end{aligned}
\end{equation}
This is the convex hull in \(\mathbb{R}^n\) generated by the \emph{vertices} \((v_i)\). Geometrically, the 0-simplex is just the point \(0\in \mathbb{R}^n\), 1-simplices are line segments, 2-simplices are triangles and so on. Notice that \(\Delta_{q-1}\subset \Delta_q\), and its image through \([v_0\cdots v_q]\) is a \lq\lq face\rq\rq\ of the resulting polygon. More precisely, \([v_0\cdots\hat{v_i}\cdots v_q]: \Delta_{q-1} \to \Delta_q\) (the hat means we take away that point from the list) is regarded as the \emph{\(i^{th}\) face map}, denoted concisely as \(F^{(i)}_q\).
\begin{figure}[ht]
\centering
\sbox{\tempbox}{\includegraphics[width=.45\textwidth]{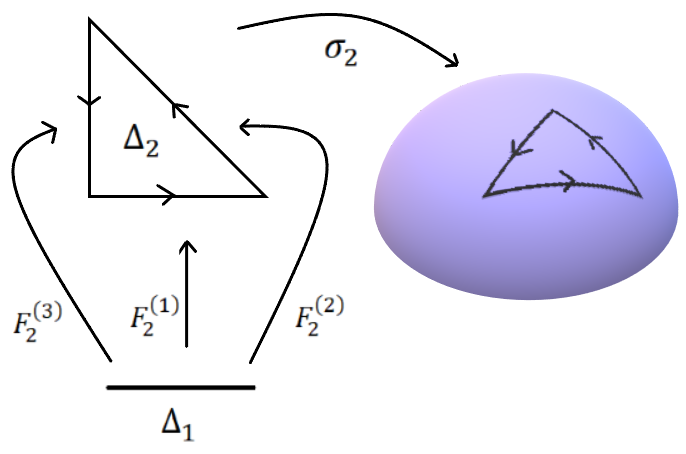}}
\subfloat[]{\vbox to \ht\tempbox{\vfil \hbox{\includegraphics[width=.45\textwidth]{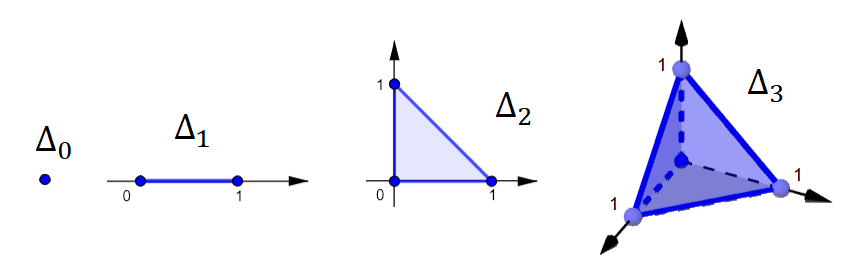}}\vfil}}
\qquad
\subfloat[]{\usebox{\tempbox}}
\caption{(a) First standard simplices. (b) An oriented affine 2-simplex, its face maps and a singular 2-simplex \(\sigma_2\) on a 2-dimensional topological space.}
\label{fig:simplices}
\end{figure}

The same idea can be used to embed the simplices in a generic topological space \(M\), changing the codomain of the simplex map. A \emph{singular q-simplex} in \(M\) is then a continuous map\footnote{The standard topology on \(\mathbb{R}^q\) is induced on \(\Delta_q\).}
\begin{equation}
\sigma_q: \Delta_q \to M 
\end{equation}
where now \(\lbrace \sigma_q(e_0),\cdots,\sigma_q(e_q)\rbrace\) are the \emph{vertices} of \(\sigma_q\). Two simplices are said to have the same/opposite \emph{orientation} if the vertex sets are respectively even/odd permutations of each other. The word \lq\lq singular\rq\rq\ is there because only continuity is required, thus from a \lq\lq smooth\rq\rq\ point of view these simplices can present singularities.

With this setup, we can construct chain complexes on topological spaces in terms of singular simplices. In fact, defining the sum of two singular simplices \(\sigma_q, \rho_p\) as 
\begin{equation}
\begin{aligned}
(\sigma_q + \rho_p) : \Delta_q \sqcup \Delta_p &\to M \\
\lambda &\mapsto \begin{cases}
\sigma_q (\lambda) & \text{if}\ \lambda\in \Delta_q \\
\rho_q (\lambda) & \text{if}\ \lambda\in \Delta_p  ,
\end{cases}
\end{aligned}
\end{equation}
whose image is the (disjoint) union in \(M\) of the images of the two starting simplices. Since the \lq\lq +\rq\rq\ is clearly commutative, \(\mathcal{C}_q(M):=C^0(\Delta_q,M)\) is an Abelian group, called the \emph{(singular) q-chain group} of \(M\). We can define a \emph{boundary operator} as a group homomorphism
\begin{equation}
\begin{aligned}
\partial : \mathcal{C}_q(M) &\to \mathcal{C}_{q-1}(M) \\
\sigma &\mapsto \partial\sigma := \sum_{i=0}^q (-1)^i \left(\sigma\circ F^{(i)}_q\right) ,
\end{aligned}
\end{equation}
that restricts to the (oriented) sum of faces of a given simplex, and happens to satisfy the nilpotency condition \(\partial\circ\partial =0\). This means that \(\mathcal{C}(M):= \bigoplus_{q=0}^\infty \mathcal{C}_q(M)\) with the operator \(\partial\) defines a dg-module over \(\mathbb{Z}\), and an associated chain complex, that we use to define the homolgy groups of \(M\).
\begin{defn} The \emph{singular \(q^{th}\) homology group} of \(M\) is \[
H_q(M;\mathbb{Z}) := \faktor{\mathrm{Ker}(\partial_q)}{\mathrm{Im}(\partial_{q+1})} .\]
It is often  useful to work with homology groups \emph{with coefficients} in some \(\mathbb{Z}\)-module \(A\) (like the real numbers), that is considering \(H_q(M;A)\) as defined from the simplicial complex \(\mathcal{C}(M)\otimes A\). \end{defn}

\begin{ex}[Homology of spheres]\label{ex:homology-spheres}
In practice, the strategy to get \(H_*(M)\) is the so-called \emph{triangulation} of \(M\), \textit{i.e.}\ constructing a suitable simplicial complex \(K\) in \(\mathbb{R}^{\dim(M)}\) as a set of standard simplices, whose union gives a polyhedron that is homeomorphic to \(M\). Then, one can count and classify all the cycles and the boundaries in \(K\), and then get \(H_*(K)\cong H_*(M)\). Some examples of this rigorous approach can be found in \cite{nakahara}. We can still give some examples, less rigorously, by looking directly at simple topological spaces, just to help building some intuition. Remember that a \(q\)-cycle \(\sigma\) on \(M\) is a boundary-less singular \(q\)-simplex \emph{up to continuous deformations}, and it is also a boundary if it can be seen as the border of a \((q+1)\)-simplex. 
\begin{enumerate}
\item[\((\mathbb{S}^1)\)] On the circle there is no place for simplices of dimension higher than 1, so we look at the 1-simplices. There are two inequivalent ways of deforming the standard 1-simplex onto the circle: it can join at the end points covering all \(\mathbb{S}^1\) or not. In the first case, that we call \(\sigma_1\), we have \(\partial \sigma_1 = 0\) since \(\mathbb{S}^1\) has no boundaries, in the second the boundaries are the end points of the singular 1-simplex. The first boundary-less case cannot be seen as a boundary of something else, by dimensionality, so the Abelian group \(H_1(\mathbb{S}^1;\mathbb{Z})=\mathrm{Ker}(\partial_1)/\mathrm{Im}(\partial_2)\) is generated by a single element, \([\sigma_1]\). In other words, \(H_1(\mathbb{S}^1;\mathbb{Z})\cong \mathrm{span}_{\mathbb{Z}}\lbrace [\sigma_1]\rbrace\cong \mathbb{Z}\).

The case \(q=0\) is trivial, since we have only one way of drawing a point on the circle, and every point is boundary-less. We have just said that the boundary of a 1-simplex is either zero or two points, so a single point is never a boundary. Thus the Abelian group \(H_0(\mathbb{S}^1;\mathbb{Z})\cong \mathbb{Z}\) since it is generated by only one element. Generalizing a little, we can already see from this example that the homology group in 0-degree will always follow this trend for \emph{connected} topological spaces. If the space has \(n\) connected components, there will be \(n\) inequivalent ways of drawing a point on it, so \(n\) generators for the homology group, giving \(H_0(M_{(n)};\mathbb{Z})\cong \mathbb{Z}\oplus \cdots \oplus \mathbb{Z}\) (\(n\) factors).

\item[\((\mathbb{S}^2)\)] The 2-sphere does not necessitate of much more work, at least with this level of rigor. Again, by dimensionality the homology groups in degrees higher than \(\dim(\mathbb{S}^2)=2\) are empty. For \(q=2\), the only way we can construct a boundary-less figure on the sphere from \(\Delta_2\) is joining the vertices and the edges together and cover the whole sphere. All other singular 2-simplices have boundaries, and the 2-sphere cannot be seen as a boundary of something else by dimensionality, so analogously to the previous case \(H_2(\mathbb{S}^2;\mathbb{Z})\cong \mathbb{Z}\).

For the 1-simplices, we notice that the only two ways of drawing a segment on the sphere (up to continuous deformations) is to close it or not  at the end points. In the first case, the 1-simplex has no boundary, but can be seen as the boundary of its internal area, so it is in fact exact. In the second case, the 1-simplex has boundaries so it is outside \(\mathrm{Ker}(\partial)\). This means that every 1-cycle is also a boundary, and thus \(H_1(\mathbb{S}^2;\mathbb{Z})\cong 0\). In 0-degree we can argue in the same way as for the circle that \(H_0(\mathbb{S}^2;\mathbb{Z})\cong \mathbb{Z}\).

\item[\((\mathbb{S}^n)\)] It turns out that all spheres follow this trend, giving homology groups \[
H_q(\mathbb{S}^n;\mathbb{Z}) \cong \begin{cases} 
\mathbb{Z} & q=0,n \\
0 & \text{otherwise}. \end{cases}
\]
If interested in the case with real coefficient, the homology of spheres are again very simple, since \(\mathbb{Z}\otimes \mathbb{R}\cong \mathbb{R}\).
\end{enumerate}
\end{ex}

Now we can turn to the construction of singular cohomology groups on topological spaces. This is done considering the dual spaces \(\text{Hom}(\mathcal{C}_q(M),A)\) with values in a \(\mathbb{Z}\)-module \(A\). The simplest choice is of course \(A=\mathbb{Z}\). Notice that \(\text{Hom}(\mathcal{C}_q(M),A)\) itself is a \(\mathbb{Z}\)-module. The \emph{coboundary operator} in this case  is defined as the \(\mathbb{Z}\)-module homomorphism 
\begin{equation}
\begin{aligned}
\delta: \text{Hom}(\mathcal{C}_q(M),A) &\to \text{Hom}(\mathcal{C}_{q+1}(M),A) \\
f &\mapsto \delta f \quad s.t. \quad \delta f(\sigma_{q+1}):= f(\partial \sigma_{q+1})
\end{aligned}
\end{equation}
and from the nilpotency \(\partial^2=0\) we get easily \(\delta^2=0\).
\begin{defn} The \emph{singular \(q^{th}\) cohomology group} of \(M\), \emph{with coefficients in \(A\)}, is \[
H^q(M;A) := \faktor{\mathrm{Ker}(\delta_q)}{\mathrm{Im}(\delta_{q-1})} . \]
\end{defn}
Note that for a commutative ring \(A\) (as for example \(\mathbb{R}\)), the cohomology groups are naturally \(A\)-modules.
Although the definition is less practical than the one for homology groups, there is an important theorem that allows to relate the two, so that homology computations can be used to infer the structure of singular cohomology groups. This is the so-called \emph{universal coefficient theorem} \cite{hatcher}. Since for applications to smooth manifolds we will be primarily interested in cohomology groups with real coefficients (as it will become clearer later) we take \(A=\mathbb{R}\). For this special case, the theorem says
\begin{equation}
H^q(M;\mathbb{R}) \cong H^q(M;\mathbb{Z})\otimes \mathbb{R} \cong H_q(M;\mathbb{R})^*,
\end{equation}
so that the cohomolgy groups are exactly the dual spaces of the homology groups. Notice that, from the example above, \(H^q(\mathbb{S}^n;\mathbb{R}) \cong \mathbb{R}\) in degree \(q=0,n\).

The above construction relates the topological properties of the space \(M\) to the algebraic concept of (co)homology groups. In general we said that morphisms of complexes induce morphisms of associated (co)homologies, and this extends to the present topological case: if we consider a continuous map (morphism of topological spaces) \(F:M\to N\) between the topological spaces \(M, N\), we can lift it to \(F_\# : \mathcal{C}_q(M) \to \mathcal{C}_q(N)\) such that \(F_\#(\sigma):= F\circ\sigma\), that is a morphism of dg-modules. This in turn induces the morphism of homology groups as descrbed above. For the cohomology groups we have, analogously, the lifted map in the opposite direction \(F^\#: \text{Hom}(\mathcal{C}_q(N),A)\to \text{Hom}(\mathcal{C}_q(M),A)\) such that \(F^\#(f):= f\circ F_\#\), giving the morphism of cochain complexes. This induces \(F^*: H^q(N)\to H^q(M)\) such that \(F^*([f]):=[F^\#(f)]\).\footnote{In the context of smooth manifolds and de Rham cohomology, this is analogous to the \emph{pull-back} of differential forms.} Also, we notice that if we have two continuous maps \(F,G\) between the topological spaces, then\footnote{In category theory language, we can summarize these properties saying that singular homology \(H_*(\cdot )\) is a \emph{covariant functor} between the categories \textbf{Top} of topological spaces and \textbf{Ab} of Abelian groups, and singular cohomology \(H^*(\cdot )\) is a \emph{contravariant functor} between \textbf{Top} and \textbf{Ab}. Anyway, we will not need such a terminology for what follows. See for example \cite{marsh}, Appendix A, for a quick introduction to the subject.}
\begin{equation}
\begin{aligned}
&(F\circ G)_\# = F_\# \circ G_\# \quad \Rightarrow \quad (F\circ G)_* = F_* \circ G_* \\
&(F\circ G)^\# =  G^\# \circ F^\# \quad \Rightarrow \quad (F\circ G)_* = G_* \circ F_* .
\end{aligned}
\end{equation}

An important fact that permits to use homology and cohomology groups to classify and characterize topological spaces, is that these objects are \emph{topological invariants}, meaning that isomorphic spaces have the same (co)homology groups. Moreover, a stricter result holds: two homotopy-equivalent topological spaces have the same cohomology and homology groups. We recall that two continuous maps \(F,G:M\to N\) between topological spaces are \emph{homotopic} if it exists a continuous map \(H:[0,1]\times M\to N\) that deforms continuously \(F\) in \(G\), \textit{i.e.}\ \(H(0,x)=F(x)\) and \(H(1,x)=G(x)\) for every \(x\in M\). Homotopy of maps is an equivalent relation, and we denote it by \(F\sim G\). Two topological spaces \(M,N\) are said to be \emph{homotopy-equivalent}, or of the same homotopy type, if there exist two maps \(F:M\to N\) and \(G:N\to M\) such that \( (G\circ F)\sim id_M\). Homotopy-equivalence is also an equivalence relation, that we denote also as \(M\sim N\). The result stated above is then, for cohomologies 
\begin{equation}
\label{cohomology-1}
M\sim N \Rightarrow H^*(M)\cong H^*(N). 
\end{equation}

\begin{ex}[More singular homologies] \label{ex:homology-contractibility}
\hspace{1em}
\begin{enumerate}
\item[\((pt)\)] We can consider the very trivial case of \(M\) being just a point. In this case, the informal discussion of Example \ref{ex:homology-spheres} can be carried out just for the 0-dimensional simplices: \(H_*(pt;\mathbb{Z})\cong \mathbb{Z}\) in degree 0. It follows by definition and by the universal coefficient theorem that for any commutative ring \(A\) (as \(\mathbb{R}\)), \(H^0(pt;A)\cong H_0(pt;A) \cong A\) and \(H^q(pt;A)\cong H_q(pt;A) \cong 0\) for \(q>0\). By the homotopy-invariance property discussed above, any contractible space will have the same trivial cohomology and homology as the point!
 
\item[\((\mathcal{C})\)] Let us look at another simple case, the cylinder \(\mathcal{C}=\mathbb{S}^1\times [0,1]\). To compute its homology groups we could follow the intuitive discussion of Example \ref{ex:homology-spheres}, or we can just notice that since the interval \([0,1]\) is contractible, \[
\mathbb{S}^1\times [0,1] \sim \mathbb{S}^1.\]
This means that \(H^q(\mathbb{S}^1\times [0,1];\mathbb{R})\cong H^q(\mathbb{S}^1;\mathbb{R})\cong H_q(\mathbb{S}^1;\mathbb{R})\).

\item[\((T)\)] A less trivial example is the 2-torus \(T= \mathbb{S}^1\times \mathbb{S}^1\). In this case no one of the factors is contractible, so we cannot use the homotopy invariance to get the result from a simpler space. We can anyway get the answer using the same method of Example \ref{ex:homology-spheres}. Starting from the top-degree homology group, we notice that the only boundary-less surface on the torus is the torus itself. Thus analogously to all the other cases, \(H_2(T;\mathbb{Z})\cong \mathbb{Z}\) or \(H_2(T;\mathbb{R})\cong \mathbb{R}\). Since the torus is connected, in degree 0 we get trivially \(H_0(T;\mathbb{R})\cong \mathbb{R}\). In degree 1 we see the difference with the other cases. On the torus there are two inequivalent ways of drawing a closed line that is not a boundary of any 2-dimensional surface, following essentially the two factors of \(\mathbb{S}^1\) (see figure \ref{fig:homology-torus}). This means that the \(1^{st}\) homology group is generated by two elements, and thus \(H_1(T;\mathbb{Z})\cong \mathbb{Z}\oplus \mathbb{Z}\). The same reasoning can be applied to higher genus surfaces \(\Sigma_g\), giving \(H_1(\Sigma_g;\mathbb{Z})\cong (\mathbb{Z})^{\oplus  2g}\).
\end{enumerate}
\end{ex}
\begin{figure}
\centering
\sbox{\tempbox}{\includegraphics[width=.30\textwidth]{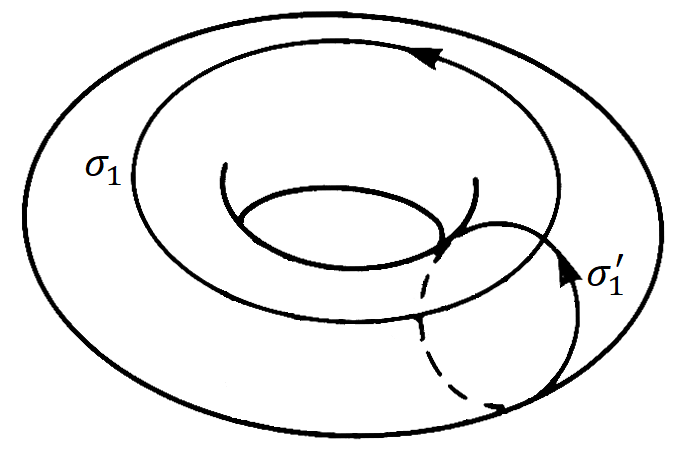}}
\subfloat[]{\usebox{\tempbox}} \qquad
\subfloat[]{\vbox to \ht\tempbox{\vfil \hbox{\includegraphics[width=.60\textwidth]{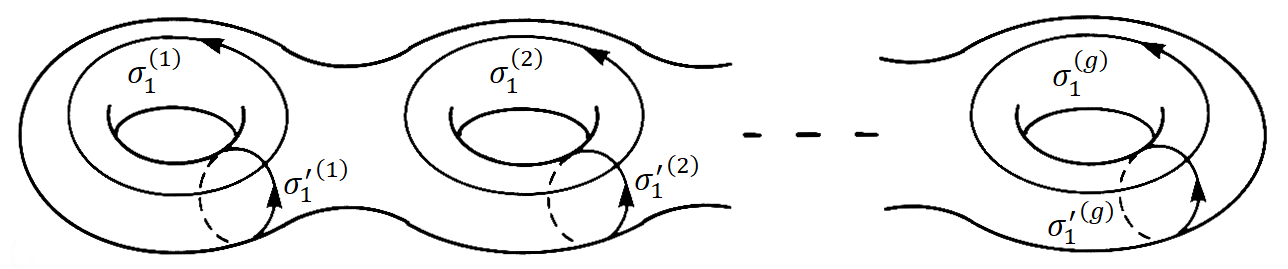}}\vfil}}
\caption{(a) The 2 inequivalent non-exact 1-cycles \(\sigma_1\) and \(\sigma_1'\) on the 2-torus. (b) The genus-\(g\) surface \(\Sigma_g\) has \(2g\) inequivalent non-exact 1-cycles. Figures adapted from \cite{nakahara}.}
\label{fig:homology-torus}
\end{figure}

\bigskip
We leave now the purely topological setup, since in physics we are mostly interested in studying local properties, \textit{i.e.}\ from the differential geometry point of view. We assume to work in the smooth setting and consider \(M\) to be a \(d\)-dimensional \(C^\infty\)-manifold. With \(TM\) we denote its \emph{tangent bundle}, and with \(T^*M\) its \emph{cotangent bundle}.\footnote{Sections of any bundle \(E\to M\) over \(M\) will be denoted in the following with \(\Gamma(M,E)\), or \(\Gamma(E)\) when the base space is clear from the context. For example, vector fields are elements of \(\Gamma(TM)\).} At every point \(p\in M\), \(T_p M\) and \(T^*_p M\) are the dual vector spaces of tangent vectors and 1-forms at \(p\), respectively. We consider the \emph{exterior algebra} \(\bigwedge(T^*_p M)\), with the wedge product \(\wedge\) making it in a graded-commutative algebra, and the exterior derivative \(d: \bigwedge(T^*_p M)\to \bigwedge(T^*_p M)\) acting as a graded derivation of \(\textrm{deg}(d)=+1\). Extending these operations point-wise for every point \(p\in M\), we have the bundle of \emph{differential forms} over \(M\), 
\begin{equation}
\Omega(M) := \bigoplus_{k=0}^d \Omega^k(M) \qquad \text{with}\ \Omega^k(M) := \bigsqcup_{p\in M} \bigwedge^k(T^*_p M).
\end{equation}
\((\Omega(M),\wedge,d)\) is thus a dg-algebra over the commutative ring \(C^\infty(M)\), and naturally defines a cochain complex called the \emph{de Rham complex}. The associated cohomology groups are the \emph{de Rham cohomology groups}, constituting the graded-commutative ring\footnote{Analogously to the singular cohomology, in  category theory language the de Rham cohomology \(H_{dR}(\cdot)\) is a contravariant functor between the categories \textbf{Man} of smooth manifolds and \textbf{Ab} of Abelian groups.}
\begin{equation}
H_{dR}(M) = \bigoplus_{k=0}^d H_{dR}^k(M) \qquad \text{with} \quad H_{dR}^k(M) := H^k(\Omega(M),d) = \faktor{\mathrm{Ker}(d_k)}{\mathrm{Im}(d_{k-1})}.
\end{equation}
The ring structure of \(H_{dR}(M)\) is naturally inherited from the wedge product of differential forms, that lifts at the level of cohomology classes. In fact, for two closed forms \(\omega,\eta\in \Omega(M)\),
\begin{equation}
[\omega] \wedge [\eta] := [\omega\wedge\eta]
\end{equation}
is well-defined.\footnote{This result can be seen also in the topological setup for singular cohomology groups, as it should be by de Rham's theorem. The operation that corresponds to the wedge product between singular cohomology classes is called \emph{cup-product} \cite{hatcher}. It is important to remember that cohomology in general has a ring structure.}
The final important result that we state, and that will be crucial to extend to the equivariant setting in the following section, is the so called \emph{de Rham's theorem}:
\begin{thm}[de Rham] The de Rham cohomology of the smooth manifold \(M\) is isomorphic to its singular cohomology with real coefficients:
\[ H_{dR}(M) \cong H^*(M;\mathbb{R}). \]
\end{thm}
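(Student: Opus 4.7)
The plan is to construct an explicit comparison map between the two cohomologies and then show it is an isomorphism via a good-cover/Mayer--Vietoris argument. For each smooth singular $k$-simplex $\sigma : \Delta_k \to M$ (i.e.\ $\sigma$ extending smoothly to a neighborhood of $\Delta_k$) and each $\omega \in \Omega^k(M)$, define
\[
\Psi(\omega)(\sigma) := \int_{\Delta_k} \sigma^*\omega ,
\]
extended by linearity to a real-valued cochain on smooth singular chains. By Stokes' theorem,
\[
\Psi(d\omega)(\sigma) = \int_{\Delta_k} \sigma^*(d\omega) = \int_{\Delta_k} d(\sigma^*\omega) = \int_{\partial\Delta_k} \sigma^* \omega = \Psi(\omega)(\partial\sigma),
\]
so $\Psi$ is a morphism of cochain complexes and descends to a ring homomorphism $\Psi_* : H_{dR}(M) \to H^*_{\mathrm{sm}}(M;\mathbb{R})$, where $H^*_{\mathrm{sm}}$ denotes singular cohomology built from smooth simplices.

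A preliminary step is to identify smooth and continuous singular cohomology. This is done by a Whitney-type smoothing: every continuous simplex is chain-homotopic, via convolutions and partitions of unity, to a smooth one, yielding a quasi-isomorphism $C^*(M;\mathbb{R}) \simeq C^*_{\mathrm{sm}}(M;\mathbb{R})$. Hence we may freely compare $\Psi_*$ with the continuous singular cohomology of the statement.

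To prove $\Psi_*$ is an isomorphism, I would induct on the number of charts in a \emph{good cover} of $M$ (finite intersections diffeomorphic to $\mathbb{R}^d$). The base case is $M \cong \mathbb{R}^d$: by Example \ref{ex:homology-contractibility} the singular cohomology is $\mathbb{R}$ in degree zero and vanishes otherwise, and the Poincar\'e lemma gives the same for $H_{dR}$; the map $\Psi_*$ identifies both copies of $\mathbb{R}$ via constant functions, tautologically. The inductive step uses Mayer--Vietoris long exact sequences on both sides, built from the short exact sequences $0 \to \Omega^*(U\cup V) \to \Omega^*(U) \oplus \Omega^*(V) \to \Omega^*(U\cap V) \to 0$ and the corresponding topological sequence for singular cochains. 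A diagram chase and the five-lemma then upgrade isomorphisms on $U$, $V$, $U\cap V$ to an isomorphism on $U\cup V$. For non-compact $M$ one passes to the good cover by a direct-limit argument.

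The main obstacle is verifying that $\Psi_*$ intertwines the two connecting homomorphisms, since one is defined analytically (via $d$ and a partition of unity subordinate to $\{U,V\}$) and the other topologically (via barycentric subdivision), and only their commutativity licenses the five-lemma. A more conceptual, and equivalent, route that bypasses this bookkeeping is to recognize both cochain complexes as fine resolutions of the constant sheaf $\underline{\mathbb{R}}_M$ on $M$: the Poincar\'e lemma makes $(\Omega^*(M),d)$ such a resolution, and smooth singular cochains form another; uniqueness of sheaf cohomology then yields the isomorphism canonically.
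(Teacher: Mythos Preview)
The paper does not actually prove de Rham's theorem: it is stated as a fundamental result to be generalized later to the equivariant setting, with the proof deferred to the cited reference \cite{bott-tu-differential_forms}. The only hint given is the remark that Stokes' theorem is ``a crucial tool for the proof,'' which is precisely the ingredient you use to show that the integration pairing $\Psi$ is a cochain map. Your sketch is the standard Bott--Tu style argument (integration map, smoothing of simplices, Mayer--Vietoris induction over a good cover, five-lemma), and the alternative sheaf-theoretic route you mention is equally standard; both are correct and in the spirit of what the paper cites. The only caveat is that the five-lemma step really does require checking naturality of $\Psi_*$ with respect to the connecting homomorphisms, which you flag but do not carry out---this is genuine bookkeeping and not automatic, though it is done in full in the references.
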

The power of this theorem is that it allows to study topological properties of the manifold (recall that \(H^*(M;\mathbb{R})\) are homotopy-invariants) using differential geometric (so local) objects, the differential forms. We say that the de Rham complex \((\Omega(M),d)\) constitute an \emph{algebraic model} for the singular cohomology of \(M\). Notice that, by dimensionality reasons, we get trivially also in this case that the cohomology groups \(H_{dR}^q(M)\) for \(q>\dim(M)\) are automatically zero. Another important property that is intuitively very clear from the de Rham complex is the \emph{Poincaré duality}. For a closed connected manifold \(M\) this states that, as vector spaces
\begin{equation}
H^k_{dR}(M) \cong H^{\dim(M)-k}_{dR}(M) .
\end{equation}
A crucial tool for the proof of de Rham's theorem is the so-called \emph{Stokes' theorem}, that relates the integral of an exact \(d\)-form over a \(d\)-dimensional manifold to the integral of its primitive over the \((d-1)\)-dimensional boundary,
\begin{equation}
\int_M d\omega = \int_{\partial M} \omega .
\end{equation}
Notice that integration over \(M\) when \(\partial M = \emptyset\) can be regarded as a function on the \(d^{th}\) de Rham cohomology \(
\int : H^d_{dR}(M) \to \mathbb{R} \). 

\begin{ex}[Cohomology rings]
With the help of de Rham's theorem, we can compute some of the previous example directly at the level of cohomology using differential forms and integration.\footnote{Another powerful tool to practically compute cohomology groups and rings, at the topological level, goes by the name of \emph{spectral sequences}. See for example \cite{tu-equiv_cohom}.} Let us consider the case of the tours \(T=(\mathbb{S}^1)^2\). We can parametrize it with coordinates \((x,y)\) taking values in \([0,1)^2\subset \mathbb{R}^2\). If we call \(\alpha:= dx\) and \(\beta:=dy\) in \(\Omega^1(T)\), a natural choice of volume form is \(\omega:=\alpha\wedge\beta\), that gives \(\mathrm{vol}(T)=1\). The volume form is of course closed by dimensionality, but it cannot be exact since otherwise by Stokes' theorem the volume of the torus would be 0, so it defines a non-trivial cohomology class \([\alpha\wedge\beta]\). Any other 2-form is of the type \(\omega'=f\omega\) for some \(f\in C^\infty(T)\), but closed forms must satisfy \(df=0\), so \(f\in \mathbb{R}\) constant. We conclude that any other independent closed 2-form has to be \lq\lq cohomologous\rq\rq\ to \([\alpha\wedge\beta]\), so that in top-degree \(H^2_{dR}(T)\cong \mathrm{span}\lbrace [\alpha\wedge\beta]\rbrace\cong \mathbb{R}\). 

In degree 1, any closed form must be a combination of \(\alpha\) and \(\beta\) with real coefficients (since again \(d(f\alpha)=0\ \Leftrightarrow\ df=0\)), so they are the only independent closed 1-forms (they correspond to the volume forms for the two \(\mathbb{S}^1\) factors). To see whether or not they are exact, we can use Stokes' theorem: if they are, then their integral over \emph{any} closed curve on \(T\) must be zero. But we can take the two curves \(\sigma_1(t)=(t,0)\) and \(\sigma_1'(t)=(0,t)\) of Figure \ref{fig:homology-torus} and see that \[
\int_{\sigma_1}\alpha = 1 = \int_{\sigma_1'}\beta ,\]
so they define two independent cohomology classes \([\alpha]\) and \([\beta]\). This means that \(H^1_{dR}(T)\cong \mathrm{span}\lbrace [\alpha],[\beta]\rbrace\cong \mathbb{R}\oplus \mathbb{R}\).

Since the torus is connected, the only closed 0-form is a constant number, that we can chose to be \(1\). Thus, \(H^0_{dR}\cong \mathbb{R}\).
We can further easily get the ring structure of \(H_{dR}(T)\) by looking at the multiplication rules between the generators. If we call \(a:=[\alpha]\) and \(b:=[\beta]\), the wedge product of differential forms gives the following rules  \[
a^2 \sim 0, \qquad b^2\sim 0, \qquad ab + ba\sim 0. \]
Thus we can rewrite the cohomology ring as a polynomial ring over the indeterminates \((a,b)\), taken in degree 1, that satisfy the above rules:
\[ H^*(T;\mathbb{R})\cong \faktor{\mathbb{R}[a,b]}{(a^2,b^2,ab+ba)} ,\]
where \((a^2,b^2,ab+ba)\) denotes the quotient by the ideal generated by the corresponding expressions.

In the same fashion we can rewrite the cohomology rings of the other examples that we gave above for the \(n\)-sphere. Introducing an indeterminate \(u\) of degree \(n\), and the multiplication rule \(u^2\sim 0\), its cohomology ring can be expressed as \[
H^*(\mathbb{S}^n;\mathbb{R}) \cong \faktor{\mathbb{R}[u]}{u^2}.\]
We quote another example, that will enter in the case of equivariant cohomology with respect to a circle action by \(U(1)\cong\mathbb{S}^1\). For the complex projective plane \(\mathbb{C}P^n\), it turns out that \[
H^*(\mathbb{C}P^n;\mathbb{R}) \cong \faktor{\mathbb{R}[u]}{u^{(n+1)}} \]
where \(\mathrm{deg}(u):=2\). In the limiting case \(n\to\infty\), one has thus \(H^*(\mathbb{C}P^\infty;\mathbb{R})\cong \mathbb{R}[u]\), the polynomials in \(u\).
\end{ex}

\section{Group actions and equivariant cohomology}
\label{sec:equivariant-cohomology}

As already mentioned, equivariant cohomology is an extension of the standard cohomology theory, partly reviewed in the last section, to the cases in which the space \(M\) is acted upon by some group \(G\). This is the common setup in physics, from the finite-dimensional cases of classical Lagrangian or Hamiltonian mechanics to the infinite dimensional case of Quantum Field Theory, where \(M\) can be the configuration space, the phase space, or the space of fields, and \(G\) is a Lie group representing a \emph{symmetry} of the physical system. In gauge theory for example, we want to identify those physical configurations that are equivalent modulo a gauge transformations, so the moduli space of gauge orbits \(M/G\). In Poincaré-supersymmetric theories, the group \(G\) is actually the Poincaré group of spacetime symmetries. In all these cases we are interested in the cohomology of \(M\) modulo these symmetry transformations, since many primary objects of study (partition functions, expectation values...) are usually given in terms of integrals over \(M\).
Before moving to the technical definition of \(G\)-equivariant cohomology of \(M\), we recall some terminology about group actions.

\begin{defn} \begin{enumerate}[label=(\roman*)]
\item Given a group \(G\) and a topological space \(M\),\footnote{We are going to work practically always with smooth manifolds and (compact) Lie groups, but for the moment we do not need this level of structure on \(M\) and \(G\).}  a \emph{\(G\)-action} on \(M\) is given by a group homomorphism (\emph{left} action) or anti-homomorphism (\emph{right} action) \[
\rho: G \to \mbox{Homeo}(M)\quad (\text{or Diff}(M)\text{ for smooth manifolds}). \]
If \(m\in M, g\in G\), the left action of \(g\) on \(m\) can be denoted \(\rho(g)m\equiv g\cdot m\), and the right action \(m\cdot g\), if this causes no confusion. \(M\) is said to be a (left or right) \emph{\(G\)-space}.

\item If \(M, N\) are two \(G\)-spaces, on the product \(M\times N\) it is canonically defined the \emph{diagonal \(G\)-action} \[
\rho^{M\times N}(g)(m,n):= (\rho^M(g)m, \rho^N(g)n)\quad \text{for }m\in M, n\in N, g\in G .\]

\item Given a point \(m\in M\), the \emph{orbit} of \(m\) is the subset of \(M\) of all points that are reached from \(m\) by the action of \(G\). The \emph{orbit space} with respect to the \(G\)-action is \(M/G\).\footnote{It is easy to check that \(m\sim m' \Leftrightarrow m' = g\cdot m\) for some \(g\in G\) is an equivalence relation.}

\item The \emph{stabilizer} (or \emph{isotropy group}, or \emph{little group}) of \(m\) is the subgroup of \(G\) of all elements that act trivially on \(m\), \textit{i.e.}\ \(g\cdot m=m\). The \(G\)-action is called \emph{free} if the stabilizer of every point in \(M\) is given by the identity of \(G\). The \(G\)-action is called \emph{locally free} if the stabilizer of every point is \emph{discrete}. The \emph{fixed point set} \(F\subseteq M\), is the set of all points that are stabilized by the entire \(G\).

\item Morphisms of \(G\)-spaces are called \emph{\(G\)-equivariant} functions. \(f:M\to N\) is \(G\)-equivariant if \(f(g\cdot m) = g\cdot f(m)\), for every \(m\in M, g\in G\).
\end{enumerate}
\end{defn}

In the following we will not care much about distinguishing between left and right actions, and assume all \(G\)-actions are from the left, unless otherwise stated. For the first part of the discussion it is not needed, but we are going to assume \(M\) and \(G\) to be at least topological manifolds, and then specialize to the case of smooth manifolds, since these are the most common structures arising in physics. Since, as we said above, we are interested in identifying those elements in \(M\) that are equivalent up to a \lq\lq symmetry\rq\rq\ transformation by \(G\), the first candidate for the \(G\)-equivariant cohomology of \(M\) could be simply the cohomology of the orbit space \(M/G\),
\begin{equation}
H_G^*(M) := H^*\left(\faktor{M}{G}\right) .
\end{equation}
This definition has the problem that, if  the \(G\)-action is not free and has fixed points on \(M\), the orbit space is singular: in the neighborhood of those fixed points there is no well-defined notion of dimensionality. This kind of singular quotient spaces are called \emph{orbifolds}. 

\begin{ex}[Some group actions and orbit spaces]\label{ex:group-actions} 
\hspace{1em}
\begin{enumerate}[label=(\roman*)]
\item Let us consider the Euclidean space \(\mathbb{R}^n\). \(O(n)\) rotations (and reflections) act naturally on it, with the only fixed point being the origin. Any point  but the origin identifies a direction in the Euclidean space, and thus is stabilized by the subgroup of \(n-1\) rotations, \(O(n-1)\). For example we see that, without considering parity transformations, the standard \(SO(2)\)-action is free on \(\mathbb{R}^2\setminus \lbrace 0\rbrace\).
If we bring translations into the game, considering the Euclidean space as an affine space acted upon by \(ISO(n)=\mathbb{R}^n\rtimes O(n)\), then the stabilizer of any point is the entire \(O(n)\), since any point can be considered an origin after translation. So the action of \(ISO(n)\) is neither free nor locally free, but has no fixed points on the entire \(\mathbb{R}^n\).

Considering only rotations,  the orbit space \(\mathbb{R}^n/O(n)\) is the space of points identified up to their angular coordinates, that is an half-line starting from the origin, \(\mathbb{R}^n/O(n)\cong [0,\infty)\). This is not a manifold, since the interval is closed on the left, giving a \lq\lq singularity\rq\rq\ on the original fixed point of the action.

Notice that, by embedding \(\mathbb{S}^{n-1}\) in \(\mathbb{R}^{n}\), an \(O(n)\)-action descends on it, and the orbit of any point of \(\mathbb{S}^{n-1}\) is the sphere itself. The orbit of any point can be seen as the quotient of \(O(n)\) by the stabilizer of that point, so that one has
\[ \faktor{O(n)}{O(n-1)} \cong \mathbb{S}^{n-1}.\]
This quotient describes the common situation of spontaneous symmetry braking inside \(O(n)\)-models, in Statistical Mechanics.

\item One can always consider circle actions on the spheres \(\mathbb{S}^n\). Starting with \(n=1\), and considering the circle as embedded in the \(\mathbb{C}\)-plane, \(U(1)\) acts on itself by multiplication: \(e^{i\varphi}\mapsto e^{ia}e^{i\varphi}\) for some \(a\in [0,2\pi)\). This action has clearly no fixed points, and it is also free. Thus the quotient is well defined, giving simply \(\mathbb{S}^1/U(1) \cong pt\).

The \(U(1)\)-action on the 2-sphere is already more interesting. Rotations around a given axis fix two points on \(\mathbb{S}^2\), that we identify with the North and the South poles. If we exclude the poles the resulting space is homeomorphic to a cylinder, the \(U(1)\)-action becomes free and indeed we have that \(\mathbb{S}^1\times (0,1)\to (0,1)\) is a trivial principal \(U(1)\)-bundle. But considering the poles, the quotient space is singular since \(\mathbb{S}^2/U(1) \cong [0,1]\). This is an elementary example of an orbifold.
\begin{figure}[ht]
\centering
\includegraphics[width=.40\textwidth]{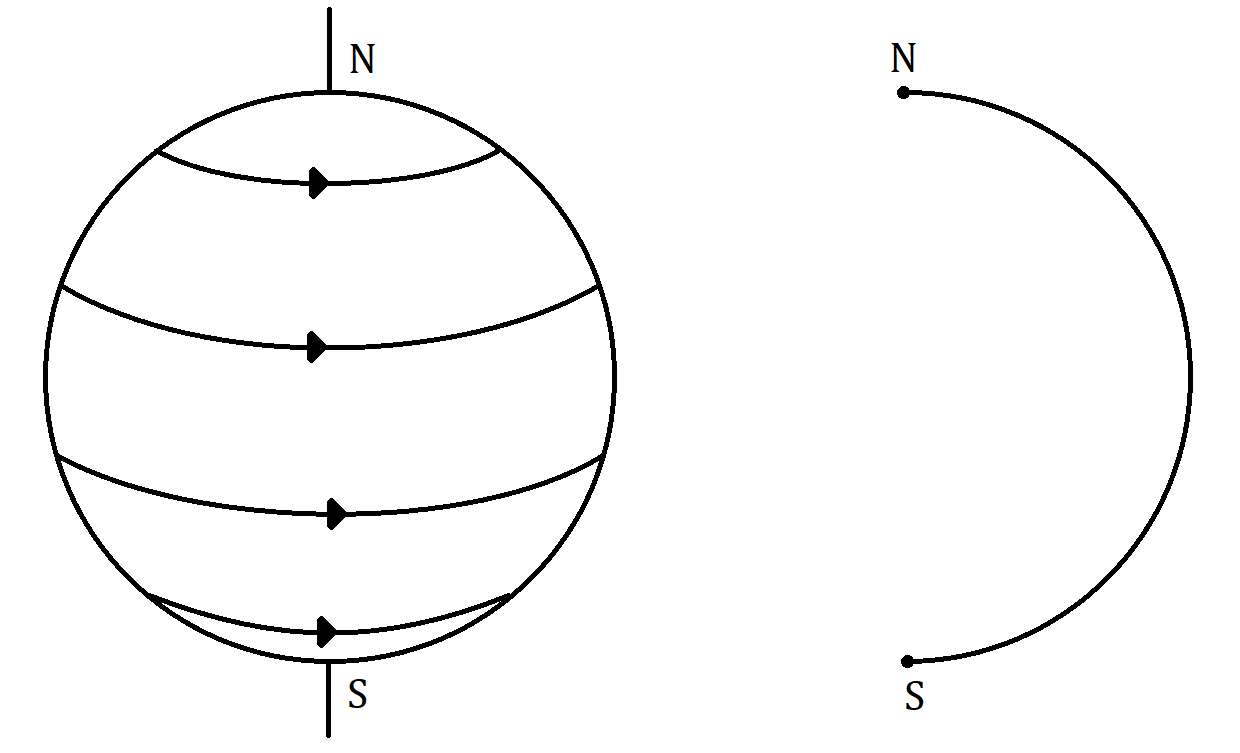}
\caption{The circle acting on the 2-sphere, the orbits being the parallels. The orbit space \(\mathbb{S}^2/U(1)\) is a meridian, homeomorphic to the interval \([0,1]\).}
\label{fig:sphere-rotation}
\end{figure}

Let us consider also the case \(n=3\). The 3-sphere \(\mathbb{S}^3\cong SO(2)\) can be parametrized by a pair of complex numbers \((z_1,z_2)\) such that \(|z_1|^2 + |z_2|^2 = 1\). The circle then acts naturally by diagonal multiplication: \((z_1,z_2)\mapsto (e^{ia}z_1,e^{ia}z_2)\) for some \(e^{ia}\in U(1)\). This action is clearly free, since the two coordinates \(z_i\) cannot be simultaneously zero on the sphere, thus the quotient is well defined, giving the 3-sphere the structure of a principal \(U(1)\)-bundle known as the \emph{Hopf bundle}. The equivalence classes \([z_1,z_2]\in \mathbb{S}^3/U(1)\) describe, by definition, points on the complex projective line \(\mathbb{C}P^1\cong \mathbb{S}^2\), that is isomorphic to the Riemann 2-sphere. The Hopf bundle can be thus seen as \(\mathbb{S}^3 \to \mathbb{S}^2\), with typical fiber \(\mathbb{S}^1\). Clearly this is not a trivial bundle, since \(\mathbb{S}^3 \neq \mathbb{S}^2\times \mathbb{S}^1\).

The above case generalizes to any odd-dimensional sphere \(\mathbb{S}^{2n+1}\), since they all can be embedded in complex spaces \(\mathbb{C}^{n+1}\). The circle acts always by diagonal multiplication, and the resulting action is free. The bundles \(\mathbb{S}^{2n+1}\to \mathbb{S}^{2n+1}/U(1)\cong \mathbb{C}P^{n}\) are all principal \(U(1)\)-bundles over the complex projective spaces \(\mathbb{C}P^n\).

\item The last case we mention is the possible \(U(1)\)-action on a torus \(T=(\mathbb{S}^1)^2\), by rotations along one of the two factors. This is the only possible free action on a closed surface, giving the well defined quotient \(T/U(1)\cong \mathbb{S}^1\). More examples can be found in \cite{audin}.
\end{enumerate}
\end{ex}

The example above showed that also in very simple cases singularities can appear in quotient spaces, so that one cannot define cohomology in a smooth way using the powerful de Rham theorem. It is thus more convenient to set up a definition of equivariant cohomology that automatically avoids this problem.  
This more clever definition is given by the \emph{Borel construction} for the \(G\)-space \(M\).
\begin{defn} Considering a \(G\)-space \(M\), its associated \emph{Borel construction}, or \emph{homotopy quotient}, is \[
M_G := \faktor{(M\times EG)}{G} \equiv M\times_G EG \]
where \(EG\) is some contractible space on which \(G\) acts freely, called the \emph{universal bundle of \(G\)} (see Appendix \ref{app:univ-bundles} for the precise definition).
The \emph{\(G\)-equivariant cohomology} of \(M\) is then defined as\footnote{From now on we always consider cohomologies with coefficients in \(\mathbb{R}\), unless otherwise stated.} 
\[ H_G^*(M) := H^*(M_G) .\]
\end{defn}

We assume the action on the product \(M\times EG\) to be the diagonal action. Notice that, since \(G\) acts freely on \(EG\), the action on the product is automatically free. Indeed, if in the worst case \(p\in M\) is a fixed point, for every point \(e\in EG\),  \(g\cdot (p,e) = (p,g\cdot e)\neq (p,e)\). This means that the homotopy quotient defines a smooth manifold, and we can hope for a generalization of  de Rham's theorem, allowing to study this topological definition from its smooth structure in terms of something analogous to the differential forms on \(M\). We will discuss this result in the next section.

Since the space \(EG\) does not need to either exist or be unique a priori, one could think that the above definition contains some degree of arbitrariness,  so a natural question is: is equivariant cohomology well defined? The answer is of course yes, and the crucial fact allowing this stands in the contractibility of the space \(EG\). The arguments that lead to this conclusion are summarized in Appendix \ref{app:univ-bundles}, together with some examples of universal bundles. The important property that one has to keep in mind is that, intuitively, to get it acted freely by \(G\) and being contractible, one has to define it \lq\lq so big\rq\rq\ that for \emph{any} other principal \(G\)-bundle \(P\), there is a copy of \(P\) sitting inside \(EG\). This is why it is called \lq\lq universal\rq\rq.
Here we just notice that, if we assume a contractible free \(G\)-space \(EG\) to exist, we have an homotopy equivalence \(M\times EG \sim M\), that descends also to the homotopy quotient
\begin{equation}
\faktor{(M\times EG)}{G} \sim \faktor{M}{G} ,
\end{equation}
since one can show that \(M\times_G EG \to M/G\) is a fiber bundle with typical fiber \(EG\) \cite{tu-equiv_cohom}. From homotopy invariance of cohomology, we see that at least in the case in which \(G\) acts freely on \(M\) and \(M/G\) is well defined, the equivariant cohomology reduces to the naive definition above,
\begin{equation}
H_G^*(M) = H^*\left( M\times_G EG\right) \cong H^*\left( \faktor{M}{G}\right).
\end{equation}
Notice that the contractible space \(EG\) alone has a very simple cohomology. Indeed, for what we pointed out in Example \ref{ex:homology-contractibility}, it must be \(H^*(EG) \cong H^*(pt) \cong \mathbb{R}\) in degree zero. When we take the quotient, the base space \(BG:=EG/G\) can have a less trivial cohomology. This space is called \emph{classifying space} of the Lie group \(G\). When, for example, the \(G\)-action on \(M\) is trivial (all points are fixed points), the homotopy quotient is just \(M\times_G EG\cong M\times (EG/G) = M\times BG\), and in this case we have\footnote{This is an application of the so-called \emph{K\"{u}nneth theorem} \cite{hatcher}.}
\begin{equation}
\label{eq:eq-coho-trivial-action}
H^*_G(M) = H^*(M\times BG) \cong H^*(M)\otimes H^*(BG),
\end{equation}
so that the homotopy quotient by a trivial action does not bring any further information to the cohomology of \(M\) but for tensoring it with the cohomology of the classifying space. In Section \ref{sec:cartan} we will see that the latter can be described in general by a very simple algebraic model, while here we carry on the example of the case \(G=U(1)\).

\begin{ex}[A few \(U(1)\)-equivariant cohomologies]\label{ex:equivariant-cohomologies}
To search for a suitable principal \(U(1)\)-bundle whose total space is contractible, we can first notice from Example \ref{ex:group-actions} that we already described a class of principal \(U(1)\)-bundles, \(\mathbb{S}^{2n+1}\to \mathbb{C}P^n\), whose total spaces are the odd-dimensional spheres. The bad news is that any of these total spaces are contractible, but this problem can be solved considering the limiting case \(n\to\infty\), since it turns out that \(\mathbb{S}^\infty = \bigcup_n S^{2n+1} \equiv \bigcup_n S^n\) is contractible \cite{tu-equiv_cohom}!\footnote{Notice that any sphere \(\mathbb{S}^n\) can be embedded as the equator of \(\mathbb{S}^{n+1}\). Thus there is a sequence of inclusions \(\mathbb{S}^1\subset \mathbb{S}^3 \subset \cdots\) as well as \(\mathbb{C}P^1\subset \mathbb{C}P^3 \cdots\), and the circle action is compatible with the inclusion. Thus, in the limit, a free circle action induces on \(\mathbb{S}^\infty\).} Thus the universal bundle for \(U(1)\) can be chosen to be \(EU(1)=\mathbb{S}^\infty\), and the classifying space \(BU(1)=\mathbb{C}P^\infty\). Being infinite-dimensional, they are strictly speaking not manifolds, but \(\mathbb{S}^\infty\to \mathbb{C}P^\infty\) is still a topological bundle, and this is enough for the definition of an homotopy quotient. 
\begin{enumerate}[label=(\roman*)]
\item In Example \ref{ex:homology-contractibility} we quoted the resulting cohomology ring of the complex projective planes, \(H^*(\mathbb{C}P^n) \cong \mathbb{R}[\phi]/\phi^{n+1}\) and \(H^*(\mathbb{C}P^\infty)= H^*(BU(1)) \cong \mathbb{R}[\phi]\), with \(\phi\) in degree 2. Thus the \(U(1)\)-equivariant cohomology of any space \(M\) on which \(U(1)\) acts trivially is, from \eqref{eq:eq-coho-trivial-action}, \[H^*_{U(1)}(M)=H^*(M)\otimes \mathbb{R}[\phi].\] In particular if \(M\) is contractible, \(H^*_{U(1)}(M)=H^*_{U(1)}(pt) = \mathbb{R}[\phi]\).

\item The opposite case is the one of a free action, for example the circle acting on itself. As we pointed out above, \(\mathbb{S}^1/U(1)\cong pt\), so simply \(H^*_{U(1)}(\mathbb{S}^1)\cong \mathbb{R}\).

\item The last example that we mention is the case of \(U(1)\) acting on \(\mathbb{S}^2\). The equivariant cohomology \(H^*_{U(1)}(\mathbb{S}^2)\) is non-trivial a priori, as we remarked above, and it can be calculated easily for example using spectral sequences. We will not enter in the detail of the calculation but only describe the result. Consider first the standard cohomology of the 2-sphere that we already saw in various examples, being  \(H^*(\mathbb{S}^2) \cong \mathbb{R}\oplus \mathbb{R}y\), where we explicitly wrote a generator \(y\) for the term in degree 2, that can be identified in the de Rham model by a volume form \(y=[\omega], \omega\in \Omega^2(\mathbb{S}^2)\). It turns out that its equivariant version can be obtained simply by tensoring with the polynomial ring \(H^*(BU(1)) = \mathbb{R}[\phi]\),
\[ H^*_{U(1)}(\mathbb{S}^2) \cong \mathbb{R}[\phi]\oplus \mathbb{R}[\phi]y,\]
although the generator \(y\) has now a different interpretation, that we will give in terms of an equivariant version of the de Rham model in the next sections.\footnote{In terms of differential forms, \(\omega\) will have to be \emph{equivariantly extended} in the Cartan model, as discussed at the end of Section \ref{sec:cartan}.}
This equivariant cohomology actually can be given a ring structure, defining the multiplication \(y\cdot y = a \phi y + b \phi^2\) for some constants \(a,b\). It turns out \cite{tu-equiv_cohom} that the correct constants are \(a=1, b=0\), making \[
\faktor{\mathbb{R}[y,\phi]}{(y^2-\phi^2)} \to H_{U(1)}^*(\mathbb{S}^2)
\]
into a ring isomorphism, where the denominator stands for the ideal generated by the expression \((y^2-\phi^2)\) in \(\mathbb{R}[y,\phi]\). Notice that the cohomology groups \(H^n_{U(1)}(\mathbb{S}^2)\) are now non-empty in every even-degree (while in odd-degree they are all trivial), even when \(n\) is bigger than the dimension of the sphere! This intuitively matches the fact that the quotient \(\mathbb{S}^2/U(1)\) is singular, and thus simple dimensionality arguments do not make sense anymore at the fixed points.
\end{enumerate}
\end{ex}

\section{The Weil model and equivariant de Rham's theorem}
\label{sec:Weil-model}

From now on, we specialize the equivariant cohomological theory to \(G\) being a Lie group with Lie algebra \(\mathfrak{g}\),\footnote{Some of what follows is only rigorous if \(G\) is compact, but the formal discussion can be applied generically.} and \(M\) being a smooth \(G\)-manifold. We saw that  de Rham's theorem provides an \emph{algebraic model} for the singular cohomology (with real coefficients) of the smooth manifold \(M\), through the complex of differential forms. We now describe a way to obtain an algebraic model for the homotopy quotient \((M\times EG)/G\), the so-called \emph{Weil model} for the \(G\)-equivariant cohomology of \(M\). From the discussion of the last sections, it is already imaginable that this will contain in some way the de Rham complex of \(M\), but modifying it through a somewhat \lq\lq trivial\rq\rq\ extension, in the sense of the triviality of the cohomology of \(EG\). This is thus the most natural model that is connected to the topological definition of the last section, but we will see that it is also overly complicated. In fact, in the next section we will describe a simpler but equivalent way to obtain the same equivariant cohomology, the \emph{Cartan model}, that is more intuitive from the differential geometry point of view, and that we will use to generalize the theory of integration to the equivariant setting. This is what we often use in physics for practical calculations.

Before defining the Weil model, we notice that, in presence of a \(G\)-action, the de Rham complex \((\Omega(M),d)\) of differential forms on \(M\) has more structure than being a dg algebra. In fact, if \(\rho:G\to\text{Diff}(M)\) is the \(G\)-action, this induces an infinitesimal action of the Lie algebra \(\mathfrak{g}\) on any tensor space via the Lie algebra homomorphism\footnote{Usually, in physics conventions, in the action of the exponential map one collects a factor of \(i\) at the exponent, in order to consider the Lie algebra element Hermitian for the most commonly considered group actions. Left and right actions should be taken with different signs at the exponent.}
\begin{equation}
\label{eq:fundam-vf}
\begin{aligned}
\mathfrak{g} &\to \Gamma(TM) \\
X &\mapsto \underline{X} := \left.\frac{d}{dt}\right|_{t=0} \rho\left(e^{-tX}\right)^*
\end{aligned}
\end{equation}
that defines for any \(X\in\mathfrak{g}\) the corresponding \emph{fundamental vector field} \(\underline{X}\in \Gamma(TM)\). Then \(\mathfrak{g}\) acts infinitesimally on \(\Omega(M)\) via the \emph{Lie derivative} and the \emph{interior multiplication} with respect to the fundamental vector fields,
\begin{equation}
\begin{aligned}
\mathcal{L}_X (\alpha) &:= \mathcal{L}_{\underline{X}} (\alpha) \\
\iota_X \alpha &:= \iota_{\underline{X}} \alpha
\end{aligned} 
 \qquad \text{for}\ X\in \mathfrak{g}, \alpha\in \Omega(M) ,
\end{equation}
with the additional property (\emph{Cartan's magic formula})
\begin{equation}
\mathcal{L}_X = d\circ \iota_X + \iota_X \circ d .
\end{equation}
This makes \(\Omega(M)\) into a so-called \emph{\(\mathfrak{g}\)-gd algebra}. In general, a \(\mathfrak{g}\)-gd algebra is defined as a differential graded algebra (cf. definition \ref{def:dga}) with two actions of \(\mathfrak{g}\), denoted by analogy as \(\iota\) and \(\mathcal{L}\), such that for any \(X\in\mathfrak{g}\)
\begin{enumerate}[label=(\roman*)]
\item \(\iota_X\) acts as an \emph{antiderivation} of degree \(-1\), satisfying \((\iota_X)^2= 0\);
\item \(\mathcal{L}_X\) acts as a derivation (of degree 0);
\item the Cartan's magic formula holds: \(\mathcal{L}_X = d\circ \iota_X + \iota_X \circ d\).
\end{enumerate}
Morphisms of \(\mathfrak{g}\)-dg algebras are naturally defined as maps between \(\mathfrak{g}\)-dg algebras that commute with all the above stated operations. It is not difficult to show that \(G\)-equivariant maps of \(G\)-manifolds induce pull-backs of differential forms that preserve the \(\mathfrak{g}\)-dg algebra structure.

Now we can define the Weil model via an extension of \((\Omega(M),d)\) that preserves this new structure. We want this extension to be an algebraic analog of \(EG\), so its cohomology must be trivial, but carrying information about \(\mathfrak{g}\). To do this, we associate it to the characteristic differential structure of a generic principal \(G\)-bundle \(P\) (remember that any principal \(G\)-bundle sits inside \(EG\)): its connection 1-form \(A\in \Omega^1(P)\otimes \mathfrak{g}\) and the associated curvature \(F=dA + \frac{1}{2}[A\stackrel{\wedge}{,} A]\in \Omega^2(P) \otimes \mathfrak{g}\), satisfying the Bianchi identity \(dF = [F,A]\). We first notice that the connection 1-form and the curvature 2-form  can be seen as linear maps
\begin{equation}
\begin{aligned}
A: \mathfrak{g}^* &\to \Omega^1(P) \\ \eta &\mapsto A(\eta):= (\eta\circ A) ,
\end{aligned}  \qquad
\begin{aligned}
F: \mathfrak{g}^* &\to \Omega^2(P) \\ \eta &\mapsto F(\eta):=(\eta\circ F) .
\end{aligned}
\end{equation}
These maps can be extended multi-linearly to the whole \(\Omega(P)\), if we start from algebras constructed by \(\mathfrak{g}^*\) that respect the commutativity of 1- and 2-forms, respectively. This means that \(A\) has to \lq\lq eat\rq\rq\ an element of the (anticommutative) exterior algebra \(\bigwedge(\mathfrak{g}^*)\), while \(F\) has to \lq\lq eat\rq\rq\ an element of the (commutative) symmetric algebra \(S(\mathfrak{g}^*)\):
\begin{equation}
\begin{aligned}
A: \bigwedge(\mathfrak{g}^*) &\to \Omega(P) \\ \eta_1\wedge\cdots\wedge\eta_k &\mapsto A(\eta_1)\wedge\cdots\wedge A(\eta_k) ,
\end{aligned}  \qquad
\begin{aligned}
F: S(\mathfrak{g}^*) &\to \Omega(P) \\ \eta_1\cdots\eta_k &\mapsto F(\eta_1)\wedge\cdots\wedge F(\eta_k) .
\end{aligned}
\end{equation}
We can combine the two maps in the homomorphism of graded-algebras
\begin{equation}
\begin{aligned}
f: S(\mathfrak{g}^*) \otimes \bigwedge(\mathfrak{g}^*) &\to \Omega(P) \\
\eta \otimes \xi &\mapsto F(\eta)\wedge A(\xi) .
\end{aligned}
\end{equation}
This captures the fact that a connection of \(P\) could be \emph{defined} as a map \(S(\mathfrak{g}^*) \otimes \bigwedge(\mathfrak{g}^*) \to \Omega(P)\), and motivates the following definition.
\begin{defn} The \emph{Weil algebra} of \(\mathfrak{g}\) is the graded algebra \[
W(\mathfrak{g}) := S(\mathfrak{g}^*) \otimes \bigwedge(\mathfrak{g}^*) \]
and the map \(f: W(\mathfrak{g})\to \Omega(P)\) is called the \emph{Weil map}. We define the graded structure of \(W(\mathfrak{g})\) by assigning to the generators \(\lbrace \phi^a\rbrace\) of \(S(\mathfrak{g}^*)\) degree \(\textrm{deg}(\phi^a)=2\), and to the generators \(\lbrace \theta^a\rbrace\) of \(\bigwedge(\mathfrak{g}^*)\) degree \(\textrm{deg}(\theta^a)=1\).
\end{defn}

The generators \(\lbrace \phi^a, \theta^a\rbrace\) are two copies of a basis set for \(\mathfrak{g}^*\), but taken in different degrees. With respect to this \emph{graded} basis, the Weil algebra can also be written as \begin{equation}
W(\mathfrak{g}) = \bigwedge \left( \mathbb{R}[ \phi^1,\cdots,\phi^{\dim\mathfrak{g}}] \oplus \mathbb{R}[\theta^1,\cdots,\theta^{\dim{\mathfrak{g}}}] \right) ,
\end{equation}
and a generic element will be expanded as
\begin{equation}
\label{tu-3}
\alpha = \alpha_0 + \alpha_a\theta^a + \frac{1}{2}\alpha_{ab}\theta^a\theta^b + \cdots + \alpha^{(top)} \theta^1\theta^2\cdots\theta^{\dim\mathfrak{g}} \quad \text{with} \quad \alpha_I \in \mathbb{R}[\phi^1,\cdots,\phi^{\dim\mathfrak{g}}] ,
\end{equation}
since higher order terms vanish by the anticommutativity of the \(\theta\)'s. Here we suppressed tensor and wedge products to simplify the notation, as we will often do in the following.
On this basis, the Weil map projects simply the connection and the curvature on the given Lie algebra components,
\begin{equation}
f(\theta^a) = \theta^a\circ A = A^a , \qquad f(\phi^a) = \phi^a\circ \Omega = \Omega^a .
\end{equation}
The Weil algebra is the central object to define an algebraic model for \(EG\). We need to define a \(\mathfrak{g}\)-dg algebra structure on it to properly take its cohomology, but this is naturally done requiring the Weil map to be a morphism of \(\mathfrak{g}\)-dg algebras. This means introducing a differential \(d_W :W(\mathfrak{g})\to W(\mathfrak{g})\) and two \(\mathfrak{g}\)-actions \(\mathcal{L}, \iota\) such that the following diagram commutes for all the three operations separately,
\begin{equation}
\label{diagram-Weil}
\begin{tikzcd}
W(\mathfrak{g}) \arrow[d, "d_W\ \iota\ \mathcal{L}"] \arrow[r,"f"]  & \Omega(P) \arrow[d, "d\ \iota\ \mathcal{L}"]  \\
W(\mathfrak{g}) \arrow[r,"f"] & \Omega(P) .
\end{tikzcd}
\end{equation}
One can check that, defining the \emph{Weil differential} \(d_W\) on the generators as\footnote{In the second column we introduced \(\theta := \theta^i\otimes T_i\) and \(\phi:= \phi^i \otimes T_i\) in \(W(\mathfrak{g})\otimes \mathfrak{g}\), where \(\lbrace T_i\rbrace\) is a basis of \(\mathfrak{g}\) dual to the generators. Notice that this notation make the formulas independent on a choice of basis. Also, these are the objects that are really correspondent to the connection \(A\) and the curvature \(F\) on \(P\), respectively.}
\begin{equation}
\label{eq:weil-differential} 
\begin{aligned}
d_W \theta^a &= \phi^a - \frac{1}{2}f^a_{bc} \theta^b \theta^c \\
d_W \phi^a &= f^a_{bc} \phi^b \theta^c
\end{aligned} \quad \text{or} \quad 
\begin{aligned}
d_W \theta &= \phi - \frac{1}{2} [\theta \stackrel{\wedge}{,} \theta] \\
d_W \phi &= [\phi \stackrel{\wedge}{,} \theta]
\end{aligned}
\end{equation}
where \(f_{ab}^c\) are the structure constants of \(\mathfrak{g}\), it commutes with \(f\) giving correctly the definition of curvature and the Bianchi identity. Moreover, extending the differential on \(W(\mathfrak{g})\) as an antiderivation of degree \(+1\), it gives \(d_W^2 = 0\) (since \(d_W^2\) is a derivation, it is enough to check it on the generators). To be compatible with the properties of the connection and the curvature
\begin{equation}
\iota_X A =A(\underline{X}) = X , \qquad \iota_X F = 0  \qquad \forall X\in\mathfrak{g},
\end{equation}
the interior multiplication must be defined as 
\begin{equation}
\label{tu-interior-mult}
\begin{aligned}
\iota_X \theta^a &:= \theta^a(X) = X^a \\
\iota_X \phi^a &:= 0
\end{aligned} \quad \text{or} \quad 
\begin{aligned}
\iota_X \theta &:= \theta(X) = X \\
\iota_X \phi &:= 0
\end{aligned} 
\end{equation}
and extended as an antiderivation of degree \(-1\). Then the Lie derivative is simply defined via Cartan's magic formula. We finally have defined the Weil algebra as a \(\mathfrak{g}\)-dg algebra.

\begin{thm} \label{thm:Weil-cohomology}
 The cohomology of the Weil algebra is \[ 
H^0(W(\mathfrak{g}),d_W) \cong \mathbb{R} , \qquad H^k(W(\mathfrak{g}),d_W) \cong 0\ \text{for}\ k>0 .\]
\end{thm}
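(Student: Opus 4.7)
The plan is to show that $(W(\mathfrak{g}), d_W)$ is acyclic in positive degree via a change of generators that reveals a Koszul-type structure, after which the classical acyclicity of the Koszul complex together with a K\"unneth argument concludes the proof.

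The key observation comes from reading the definition \eqref{eq:weil-differential} backwards: the ``curvature generators'' $\hat{\phi}^a := d_W \theta^a = \phi^a - \frac{1}{2} f^a_{bc}\theta^b\theta^c$ can replace the $\phi^a$. Indeed, the relation $\phi^a = \hat{\phi}^a + \frac{1}{2} f^a_{bc}\theta^b\theta^c$ is a degree-preserving, triangular change of generators (each $\phi^a$ is $\hat{\phi}^a$ plus a quadratic polynomial in the $\theta$'s, which is lower in the filtration by symmetric degree), so it extends to an automorphism of the underlying graded algebra. Hence, as graded algebras,
\[
W(\mathfrak{g}) \;\cong\; \textstyle\bigwedge(\theta^1,\dots,\theta^n) \,\otimes\, S(\hat{\phi}^1,\dots,\hat{\phi}^n),
\]
with $n = \dim\mathfrak{g}$.

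In these new variables the differential becomes strikingly simple: by construction $d_W\theta^a = \hat{\phi}^a$, and from $d_W^2 = 0$ applied to $\theta^a$ one obtains $d_W\hat{\phi}^a = 0$. Thus $(W(\mathfrak{g}), d_W)$ is isomorphic, as a dg-algebra, to the tensor product of $n$ copies of the elementary Koszul complex
\[
K_a \;=\; \textstyle\bigwedge(\theta^a) \otimes \mathbb{R}[\hat{\phi}^a], \qquad d\theta^a = \hat{\phi}^a,\quad d\hat{\phi}^a = 0.
\]
I would then finish in two steps. First, acyclicity of each $K_a$ in positive degree is an explicit check: a general element writes as $p(\hat{\phi}^a) + q(\hat{\phi}^a)\theta^a$ with $d$ sending it to $q(\hat{\phi}^a)\hat{\phi}^a$, so the cocycles are the polynomials in $\hat{\phi}^a$, and every such polynomial of positive degree equals $d$ of $P(\hat{\phi}^a)\theta^a$ for a suitable $P$. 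Equivalently, one exhibits an explicit contracting homotopy $K$ with $K(\hat{\phi}^{a\,k}) = \frac{1}{k}\theta^a\hat{\phi}^{a\,k-1}$ and $K(\theta^a\hat{\phi}^{a\,k}) = 0$, satisfying $dK + Kd = \mathrm{id}$ on positive-degree elements. Second, I would invoke the algebraic K\"unneth formula for tensor products of dg-algebras over a field to conclude $H^*(W(\mathfrak{g}), d_W) \cong \bigotimes_a H^*(K_a) \cong \mathbb{R}$, concentrated in degree zero.

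The main subtlety is conceptual rather than computational: one must justify carefully that the substitution $\phi^a \mapsto \hat{\phi}^a$ is a genuine isomorphism of the underlying free graded-commutative algebra. This follows from the triangular nature of the transformation in the symmetric degree, but it is the step where the algebraic model mirrors most transparently its geometric origin, namely that $W(\mathfrak{g})$ is the algebraic counterpart of a contractible space $EG$ equipped with its tautological connection.
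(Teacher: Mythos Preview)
Your proof is correct and is the standard fleshing-out of the paper's one-line sketch: the paper merely asserts (citing Tu) that a cochain homotopy $K$ with $[K,d_W]_+=\mathrm{id}$ exists, while you supply the change of generators $\phi^a \to \hat{\phi}^a := d_W\theta^a$ that renders $(W(\mathfrak{g}),d_W)$ a tensor product of acyclic Koszul complexes---your single-factor computation being exactly the paper's Example~\ref{ex:U(1)-Weil-model}. One small slip: your explicit homotopy should read $K(\hat{\phi}^{a\,k}) = \theta^a\hat{\phi}^{a\,k-1}$ without the $\tfrac{1}{k}$ factor, since with your normalization $(dK+Kd)(\hat{\phi}^{a\,k}) = \tfrac{1}{k}\,\hat{\phi}^{a\,k}$ rather than $\hat{\phi}^{a\,k}$; this is harmless for the argument, as you had already established acyclicity of each $K_a$ by direct inspection of cocycles and coboundaries.
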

\begin{proof}
The full proof can be found in \cite{tu-equiv_cohom}. Schematically it follows the proof of the Poincaré lemma: one has to find an \emph{cochain homotopy}, \textit{i.e.}\ a map \(K:W(\mathfrak{g})\to W(\mathfrak{g})\) of degree -1, such that \([K,d_W]_+ = id\). Then any cocycle (\(d_W \alpha = 0\)) is also a coboundary, since \(\alpha = [K,d_W]_+ \alpha = d_W(K\alpha)\). This can be found for any degree \(k>0\). In degree zero \(W^0(\mathfrak{g})\cong \mathbb{R}\) by definition, so every element is a cocycle, and no one is a coboundary for degree reasons.
\end{proof}

\begin{ex}[Weil model for torus and circle actions] \label{ex:U(1)-Weil-model}
Consider the case of a compact Abelian group, \textit{i.e.}\ a torus \(T=U(1)^l\) for some \(l\). Remember that a possible purpose of the Weil algebra is to describe the connection and the curvature of any principal \(T\)-bundle, so we are somewhat analyzing the structure of \(l\) electromagnetic fields, from the point of view of the Lie algebra \(\mathfrak{t}\). Since the structure constants are all zero, the Weil differential \eqref{eq:weil-differential} and the \(\mathfrak{t}\)-actions \eqref{tu-interior-mult} on the generators \((\theta^a,\phi^a)\) simplify as \[
\begin{array}{ll}
d_W\theta^a = \phi^a , & d_W\phi^a =0 ,\\
\iota_b \theta^a = \delta^a_b, & \iota_b \phi^a = 0, \\
\mathcal{L}_b \theta^a = 0, & \mathcal{L}_b \phi^a=0 ,
\end{array}\]
where we denoted \(\iota_a\equiv \iota_{T_a}\)  and \(\mathcal{L}_a \equiv \mathcal{L}_{T_a}\), with \(\lbrace T_a\rbrace\) the basis of \(\mathfrak{t}\) dual to the generators of \(W(\mathfrak{t})\). We jump ahead a little and notice that the first line really resembles the structure of a \lq\lq supersymmetry\rq\rq\ transformation, with \(\phi^a\) being the \lq\lq bosonic partner\rq\rq\ of \(\theta^a\). The remaining non-Abelian piece of the generic case can be viewed as the action of a Chevalley-Eilemberg differential, so that \(d_W = d_{susy} + d_{CE}\).\footnote{Remember that the C-E differential is the one that appears in BRST quantization of gauge theories.} We will return to this point in Section \ref{sec:BRST-cohom-equiv}, after having introduced some technology about supergeometry and supersymmetry.

Let us simplify again and prove theorem \ref{thm:Weil-cohomology} for \(l=1\). In the case of \(U(1)\), the Lie algebra has only one generator \(T\cong i\), and the symmetric algebra is the algebra of polynomials in the indeterminate \(\phi\in \mathfrak{g}^*\), \(S(\mathfrak{g}^*) = \mathbb{R}[\phi]\), while the exterior algebra reduces to \(\bigwedge(\theta) = \mathbb{R}\oplus \mathbb{R}\theta\) by anticommutativity. The Weil algebra is thus \[
W(\mathfrak{u}(1)) = \mathbb{R}[\phi] \otimes \left( \mathbb{R}\oplus \mathbb{R}\theta\right) = \mathbb{R}[\phi] \oplus \mathbb{R}[\phi]\theta .
\]
The cohomology of \(W(\mathfrak{u}(1))^0 = \mathbb{R}\) in degree zero is as always trivial, since all constant numbers are closed, and none of them is exact, giving \(H^0(W(\mathfrak{u}(1)),d_W)\cong\mathbb{R}\). In degree 1 we have \(W(\mathfrak{u}(1))^1 = \mathbb{R}\theta\), thus no one element (but zero) is closed. This extends to any odd-degree, since \(W(\mathfrak{u}(1))^{2n+1} = \mathbb{R}\phi^n \theta\), and \(d_W(\phi^n \theta) = \phi^{n+1} \neq 0\). This means that \(H^{2n+1}(W(\mathfrak{u}(1)),d_W)\cong 0\). In degree 2, we have \(W(\mathfrak{u}(1))^2 = \mathbb{R}\phi\), so that any element is closed but also exact, since \(\phi =d_W\theta\). This extends to any even-degree, since \(W(\mathfrak{u}(1))^{2n} = \mathbb{R}\phi^n\), and \(\phi^n = d_W\theta \phi^{n-1} = d_W(\theta\phi^{n-1})\). Thus we have also \(H^{2n}(W(\mathfrak{u}(1)),d_W)\cong 0\), showing the triviality of the Weil algebra in the simplest case of a circle action. Almost the same direct computation can be carried out in the \(l\)-dimensional case.
\end{ex}

Theorem \ref{thm:Weil-cohomology} shows that we are in business: the Weil algebra is exactly an algebraic analog of the universal bundle \(EG\). Since the de Rham model for \(M\) is just \(\Omega(M)\), the product \(M\times EG\) can be modeled by the  complex \(W(\mathfrak{g})\otimes \Omega(M)\), since by the K\"{u}nneth formula \cite{hatcher} and de Rham's theorem
\begin{equation}
H^*(EG\times M) = H^*(EG)\otimes H^*(M) = H^*(W(\mathfrak{g}^*),d_W)\otimes H^*(\Omega(M),d).
\end{equation}
The differential and the \(\mathfrak{g}\)-actions are extended naturally on this complex as graded derivations, making it into a \(\mathfrak{g}\)-dg algebra too. Explicitly,
\begin{equation}
\label{tu-gactions-Weil}
\begin{aligned}
d_T &:= d_W \otimes 1 + 1 \otimes d , \\
\iota &\equiv \iota \otimes 1 + 1\otimes \iota , \\
\mathcal{L} &\equiv \mathcal{L}\otimes 1 + 1\otimes \mathcal{L}.
\end{aligned}
\end{equation}
A model for the homotopy quotient \(M_G\) can be guessed by the following argument. Since \(M_G\) is the base of the principal bundle \(EG\times G\to M_G\), differential forms on \(M_G\) identify the \emph{basic forms} on \(EG\times M\) (see Appendix \ref{app:diff_geo}), \textit{i.e.}\ those that are both \emph{\(G\)-invariant} and \emph{horizontal}. It is thus reasonable that the homotopy quotient can be modeled by the \emph{basic subcomplex} of the Weil model. Since the differential closes on the basic subcomplex, we are allowed to take its cohomology, giving the \(G\)-equivariant cohomology of \(M\). This is exactly the content of the \emph{equivariant de Rham's theorem}. A recent original proof of it can be found in \cite{tu-equiv_cohom}.
\begin{thm}[equivariant de Rham]
If \(G\) is a connected Lie group, and \(M\) is a \(G\)-manifold, \[
\boxed{
H^*_G(M) \cong H^*\left( \left( W(\mathfrak{g})\otimes \Omega(M) \right)_{bas}, d_T \right) }. \] 
\end{thm}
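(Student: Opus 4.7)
The plan is to bridge the topological definition $H^*_G(M)=H^*(M\times_G EG)$ on the left with the purely algebraic object on the right by exploiting two facts already at our disposal: (i) when $G$ acts freely on a manifold $P$, ordinary de Rham cohomology of the quotient agrees with the cohomology of the basic subcomplex $\Omega(P)_{bas}$ (this is standard for principal $G$-bundles, cf.\ Appendix \ref{app:diff_geo}); (ii) the Weil algebra $W(\mathfrak{g})$ is an acyclic $\mathfrak{g}$-dg algebra equipped with a universal \lq\lq algebraic connection\rq\rq\ $(\theta,\phi)$. Since the diagonal $G$-action on $M\times EG$ is free, (i) already gives
\begin{equation*}
H^*_G(M)=H^*_{dR}(M\times_G EG)\cong H^*\bigl(\Omega(M\times EG)_{bas},d\bigr),
\end{equation*}
so the task reduces to replacing $\Omega(EG)$ by $W(\mathfrak{g})$ as a model for the \lq\lq universal connection\rq\rq\ factor.

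Because $EG$ is typically infinite-dimensional, the first concrete step is to work with finite-dimensional approximations: a tower of smooth free $G$-manifolds $E_n$ whose cohomology is trivial up to degree $n$ (for example, $E_n=\mathbb{S}^{2n+1}$ in the case $G=U(1)$, as seen in Example \ref{ex:equivariant-cohomologies}). For each $n$, the diagonal action on $P_n:=M\times E_n$ is free, making $P_n\to P_n/G$ a principal $G$-bundle, and $H^k_G(M)\cong H^k_{dR}(P_n/G)\cong H^k(\Omega(P_n)_{bas},d)$ for $k$ up to a bound that grows with $n$. Taking $n\to\infty$, it suffices to establish, for each fixed bundle $P_n\to P_n/G$ (in fact for any principal $G$-bundle $P\to M'$ over a quotient involving $M$), a natural quasi-isomorphism of basic subcomplexes
\begin{equation*}
\bigl(W(\mathfrak{g})\otimes\Omega(M)\bigr)_{bas}\;\longrightarrow\;\Omega(P_n)_{bas}.
\end{equation*}

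To construct this map, choose any principal connection on $E_n\to E_n/G$ (possible because $G$ is a compact connected Lie group). By the defining property of $W(\mathfrak{g})$ as the receptacle for a universal connection, this choice yields a canonical Weil map $f_n:W(\mathfrak{g})\to\Omega(E_n)$ which is a morphism of $\mathfrak{g}$-dg algebras, by the commuting diagram \eqref{diagram-Weil} used to define $d_W$, $\iota$ and $\mathcal{L}$. Tensoring with the identity on $\Omega(M)$ and postcomposing with the pull-back $\Omega(M)\otimes\Omega(E_n)\hookrightarrow\Omega(M\times E_n)=\Omega(P_n)$, and using the total differential $d_T$ of \eqref{tu-gactions-Weil}, produces the desired morphism of $\mathfrak{g}$-dg algebras. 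It restricts to the basic subcomplexes because all the operations $d,\iota_X,\mathcal{L}_X$ are intertwined by $f_n$.

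The main obstacle, and the heart of the proof, is showing that this restricted map is a quasi-isomorphism. The strategy is to exploit acyclicity: $W(\mathfrak{g})$ is acyclic by Theorem \ref{thm:Weil-cohomology}, and $\Omega(E_n)$ is acyclic in degrees $\le n$ by the chosen connectivity of $E_n$. Since both algebras are acyclic $\mathfrak{g}$-dg algebras endowed with a connection, they are \lq\lq equivalent models for $EG$\rq\rq, and one can build a filtration on each side (by polynomial degree in $\phi$, respectively by a Leray-type filtration on $\Omega(P_n)$) whose $E_2$ pages coincide with $\bigl(S(\mathfrak{g}^*)^G\otimes H^*(\Omega(M))\bigr)$ twisted by the same Cartan differential; a spectral sequence comparison then forces the map to induce an isomorphism in cohomology. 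The input making the comparison work is precisely that horizontal and invariant projections, combined with Cartan's magic formula, kill all contributions from the acyclic \lq\lq connection variables\rq\rq\ $\theta^a$ in favor of the invariant polynomials in the curvature variables $\phi^a$. Passing to $n\to\infty$ and assembling the statements for all degrees completes the proof. A fully detailed execution can be found in \cite{tu-equiv_cohom}.
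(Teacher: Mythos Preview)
The paper does not actually prove this theorem: it offers only the heuristic that forms on the base $M_G$ of the principal bundle $EG\times M\to M_G$ correspond to basic forms on $EG\times M$, and then refers to \cite{tu-equiv_cohom} for a genuine proof. Your proposal is therefore more detailed than anything in the paper, and the strategy you outline---finite-dimensional approximations $E_n$ of $EG$, the Weil map induced by a connection, and a spectral-sequence comparison of basic subcomplexes---is precisely the standard route taken in the cited reference and in Guillemin--Sternberg. In that sense you are aligned with what the paper intends, just more explicit.

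A few small inaccuracies are worth flagging. First, the existence of a connection on $E_n\to E_n/G$ does not require $G$ compact; connections exist on any principal bundle over a paracompact base. Compactness of $G$ is genuinely needed elsewhere (for averaging to produce $G$-invariant forms, and for the identification of basic cohomology with $S(\mathfrak{g}^*)^G$ at the $E_2$ level), but not where you placed it. Second, the map $\Omega(M)\otimes\Omega(E_n)\to\Omega(M\times E_n)$ is a K\"unneth quasi-isomorphism, not an inclusion; this does not damage the argument but the $\hookrightarrow$ is misleading. Third, the filtration you invoke on the Weil side is usually by total degree in the $\theta^a$ (or equivalently by the complementary ``curvature degree''), not simply by polynomial degree in $\phi$; getting the filtration right is what makes the $E_1$ page collapse the acyclic $\theta$-direction and leave $S(\mathfrak{g}^*)\otimes\Omega(M)$, after which $G$-invariance and the Cartan differential appear. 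These are refinements rather than gaps: the overall architecture of your argument is sound.
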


The equivariant de Rham's theorem is telling us that the \lq\lq correct\rq\rq\ differential complex that encodes the topology of the \(G\)-action on \(M\) is \emph{not} anymore the complex \(\Omega(M)\) of differential forms, but a modification of it through the presence of the Weil algebra. Remember always that, via the Weil map, \(W(\mathfrak{g})\) can be thought as in correspondence with the presence of a connection and a curvature on some principal \(G\)-bundle. This means that the right extension of the de Rham complex in presence of a \(G\)-action embeds somewhat the presence of a connection and a curvature with respect to \(G\). We can analyze as an example the simplest case of a \(U(1)\)-action. The unrestricted Weil model is (from Example \ref{ex:U(1)-Weil-model})
\begin{equation}
W(\mathfrak{u}(1))\otimes \Omega(M) = \Omega(M)[\phi] \oplus \Omega(M)[\phi]\theta ,
\end{equation}
thus any element can be written as 
\begin{equation}
\alpha = \alpha^{(0)} + \alpha^{(1)} \theta ,
\end{equation}
where \(\alpha^{(0)},\alpha^{(1)}\in \Omega(M)[\phi]\) are polynomials in \(\phi\) with differential forms as coefficients. The subcomplex of basic forms consists of those elements that satisfy both \(\iota_T\alpha = 0\) and \(\mathcal{L}_T\alpha = 0\), imposing the conditions
\begin{equation}
\alpha^{(1)} = -\iota_T \alpha^{(0)} ,\qquad \mathcal{L}_T \alpha^{(0)} = 0 .
\end{equation}
Thus any basic element can be written as \(\alpha = (1-\theta\iota_T) \sum_i  (\phi)^i \alpha^{(0)}_i \), where all the differential forms \(\alpha^{(0)}_i\in \Omega(M)^G\) must be \(G\)-invariant, and the basic subcomplex can be identified with the polynomials in \(\phi\) with invariant differential forms as coefficients. In the next section we will argue that this is not a special case, and that the Weil model can be simplified in general, producing another model for the same equivariant cohomology.

\section{The Cartan model}
\label{sec:cartan}

As we said at the beginning of the last section, the cohomology of the Weil complex is not the unique algebraic model for the \(G\)-equivariant cohomology of the \(G\)-manifold \(M\). Moreover, although very transparent, the Weil model seems overly complicated for differential geometric applications. In fact, the extreme simplicity of the basic subcomplex of the Weil algebra \(W(\mathfrak{g})_{bas}\) suggests that a simpler model for equivariant cohomology can be obtained simplifying this one. To see this, we analyze this basic subcomplex first. As we recalled at the end of the last section (for further details see Appendix \ref{app:diff_geo}), a basic element \(\alpha\in W(\mathfrak{g})_{bas}\) is both \emph{horizontal} and \emph{invariant}, \textit{i.e.}\
\begin{equation}
\iota_X \alpha = 0 = \mathcal{L}_X\alpha .
\end{equation} 
The horizontal condition means that we pick only the symmetric algebra inside \(W(\mathfrak{g})\), since by definition \(\iota_X \phi^a = 0\). Imposing also the \(G\)-invariance we have
\begin{equation}
W(\mathfrak{g})_{bas} \cong S(\mathfrak{g}^*)^G ,
\end{equation}
\textit{i.e.}\ the basic subcomplex is the algebra of Casimir invariants. It is easy to check that on this subcomplex \(d_W \cong 0\), so that
\begin{equation}
H^*(W(\mathfrak{g})_{bas}, d_W) =  H^*(S(\mathfrak{g}^*)^G, d_W) = S(\mathfrak{g}^*)^G \quad \text{in degree 0,}
\end{equation}
since every element is closed, and no one element can be exact. Moreover, from the equivariant de Rham's theorem \(H_G^*(pt)\cong H^*(W(\mathfrak{g})_{bas}, d_W)\), so the Casimir invariants are precisely the cohomology of the classifying space,
\begin{equation}
H^*(BG) \cong S(\mathfrak{g}^*)^G .
\end{equation}

Motivated by the above simplification, we can turn now to analyze the complete Weil model \((W(\mathfrak{g})\otimes \Omega(M))_{bas}\). Let us see concretely what it means to restrict the attention to a basic element \(\alpha \in (W(\mathfrak{g})\otimes \Omega(M))_{bas}\), starting from its expansion on a basis \eqref{tu-3}. Imposing the horizontality condition means, using the multi-index notation \(I=(a_1,\cdots,a_{|I|})\),
\begin{equation}
0= \iota_X\alpha = \iota_X\left(\alpha_0 + \frac{1}{|I|!}\alpha_I \theta^I \right) \quad \Rightarrow \quad 0= \iota_X\alpha_0 + \frac{1}{|I|!}(\iota_X \alpha_I)\theta^I + \frac{1}{|I|!}\alpha_I (\iota_X \theta^I) .
\end{equation}
Equating the terms of the same degree and taking \(X\) to be a basis element, one arrives at the condition on the various components,
\begin{equation}
\alpha_{a_1 \cdots a_{|I|}} = (-1)^{|I|} \iota_{a_1}\cdots\iota_{a_{|I|}} \alpha_0
\end{equation}
where we denoted \(\iota_k := \iota_{T_k}\), meaning that a horizontal element is fully determined by its first component \(\alpha_0\in S(\mathfrak{g}^*)\otimes \Omega(M)\), and it can be expressed as
\begin{equation}
\alpha = \left( \prod_{k=1}^{\dim\mathfrak{g}} (1-\theta^k\iota_k)\right) \alpha_0 .
\end{equation}
This comment, with some more checks (see again \cite{tu-equiv_cohom} for a complete proof), proves the following theorem, and extends the above discussion to the complete Weil model.
\begin{thm}[Mathai-Quillen isomorphism] 
\label{thm:Mathai-Quillen-iso} 
There is an isomorphism of \(\mathfrak{g}\)-dg algebras, called the \emph{Mathai-Quillen isomorphism} \cite{mathai-quillen} (or \emph{Cartan-Weil} in \cite{tu-equiv_cohom}), \[
\begin{aligned}
\varphi: (W(\mathfrak{g})\otimes \Omega(M))_{hor} &\to S(\mathfrak{g}^*)\otimes \Omega(M) \\
\alpha = \alpha_0 + \frac{1}{|I|!}\alpha_I \theta^I &\mapsto \alpha_0 \\
 \left( \prod_{k=1}^{\dim\mathfrak{g}} (1-\theta^k\iota_k)\right) \alpha_0 &\mapsfrom \alpha_0 .
\end{aligned}
\]
\end{thm}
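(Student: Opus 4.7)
The plan is to verify the bijectivity of $\varphi$ by explicit construction of its inverse and then to check the compatibility with the $\mathfrak{g}$-dg algebra structure operation by operation. First I would expand an arbitrary element $\alpha \in W(\mathfrak{g})\otimes\Omega(M)$ in the anticommutative generators $\theta^a$, writing
\[
\alpha = \sum_{I} \frac{1}{|I|!} \alpha_I\, \theta^I, \qquad \alpha_I \in S(\mathfrak{g}^*) \otimes \Omega(M),
\]
with the components fully antisymmetric in the multi-index $I$. Imposing horizontality $\iota_X \alpha = 0$ for every $X \in \mathfrak{g}$, using that $\iota_X$ acts as an antiderivation of degree $-1$ with $\iota_X \theta^a = X^a$ and $\iota_X \phi^a = 0$, and equating coefficients of $\theta^J$ of each fixed length, I would derive the recursion
\[
\alpha_{a_1 \cdots a_{|I|}} = (-1)^{|I|}\, \iota_{a_1} \cdots \iota_{a_{|I|}}\, \alpha_0,
\]
which shows that a horizontal element is completely determined by its $\theta$-free part $\alpha_0$. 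This immediately yields the well-definedness and injectivity of $\varphi$, and the explicit formula for $\psi$ is simply a repackaging of this recursion as the product $\prod_k (1-\theta^k \iota_k)\alpha_0$.

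Next I would verify that conversely $\psi(\alpha_0)$ is itself horizontal for every $\alpha_0 \in S(\mathfrak{g}^*)\otimes\Omega(M)$. The key ingredients are the anticommutativity of the interior multiplications $\iota_k$ (antiderivations of odd degree), the relations $\iota_X \theta^k = X^k$ and $\iota_X \phi^a = 0$, and the antiderivation property of $\iota_X$. A direct telescoping computation shows that when $\iota_X$ is pushed through each factor $(1-\theta^k\iota_k)$, the contribution arising from $\iota_X \theta^k$ cancels exactly against the second term of the same factor, yielding $\iota_X \psi(\alpha_0) = 0$. A routine check then gives $\varphi\circ\psi = \mathrm{id}$ (only the $1$ in each factor survives in the $\theta$-free part) and $\psi\circ\varphi = \mathrm{id}$ on horizontal elements by comparison with the recursion above.

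Finally I would check that the bijection intertwines the remaining $\mathfrak{g}$-dg algebra structure. Multiplicativity of $\varphi$ is inherited from the fact that the product of two horizontal elements is horizontal (since $\iota_X$ is a graded derivation and both factors are annihilated by it) and that the $\theta$-free part of a product is the product of the $\theta$-free parts. Compatibility with $\mathcal{L}_X$ follows from the commutation $[\mathcal{L}_X, \iota_Y] = \iota_{[X,Y]}$, which ensures $\mathcal{L}_X$ preserves horizontality and commutes with the projection onto $\alpha_0$. Compatibility with $\iota_X$ is automatic, since both $\iota_X$ on horizontal elements on the left and the induced $\iota_X$ on the right are identically zero on the image. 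The compatibility with the differential is what, by transport, defines the differential on $S(\mathfrak{g}^*)\otimes\Omega(M)$, producing the Cartan differential that will be analyzed in the next section.

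The main obstacle I expect is the verification that $\psi(\alpha_0)$ is horizontal: while conceptually natural, it requires careful bookkeeping of signs and a consistent ordering of the antiderivations $\iota_k$ in the product $\prod_k(1-\theta^k \iota_k)$, since these anticommute among themselves; the clean cancellation relies on the precise interplay between the odd degree of $\iota_X$ and the odd degree of the generators $\theta^k$.
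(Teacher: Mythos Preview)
Your proposal is correct and follows essentially the same approach as the paper: the paper derives the recursion \(\alpha_{a_1\cdots a_{|I|}} = (-1)^{|I|}\iota_{a_1}\cdots\iota_{a_{|I|}}\alpha_0\) from the horizontality condition just before stating the theorem, and then says ``this comment, with some more checks (see again \cite{tu-equiv_cohom} for a complete proof), proves the following theorem.'' You carry out exactly those ``more checks'' (horizontality of \(\psi(\alpha_0)\), bijectivity, compatibility with the algebra structure and with \(\mathcal{L}_X\), \(\iota_X\), and transport of the differential) that the paper defers to the reference, so your argument is strictly more detailed than what appears in the text but identical in strategy.
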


The RHS of the isomorphism above inherits the \(\mathfrak{g}\)-actions and the differential from the Weil model on the LHS, making commutative the following diagram, similarly to \eqref{diagram-Weil},
\begin{equation}
\begin{tikzcd}
(W(\mathfrak{g})\otimes \Omega(M))_{hor} \arrow[d, "d_T\ \iota\ \mathcal{L}"] \arrow[r,"\varphi"]  & S(\mathfrak{g}^*)\otimes \Omega(M) \arrow[d, "d_C\ \iota\ \mathcal{L}"]  \\
(W(\mathfrak{g})\otimes \Omega(M))_{hor} \arrow[r,"\varphi"] & S(\mathfrak{g}^*)\otimes \Omega(M) .
\end{tikzcd}
\end{equation}
In particular, the new differential is called \emph{Cartan differential}, defined such that
\begin{equation}
d_C := \varphi \circ d_T \circ \varphi^{-1} .
\end{equation}
It is not difficult to get, directly from this definition, that it can be expressed more simply as
\begin{equation}
\label{eq:Cartan-differential}
d_C = 1\otimes d - \phi^k  \otimes \iota_k
\end{equation}
where \(d\) is the de Rham differential on \(\Omega(M)\). The two \(\mathfrak{g}\)-actions commute with \(\varphi\) without modification, so they agree with their behavior in the Weil model,
\begin{equation}
\mathcal{L}_X \phi^a = f^a_{bc} \phi^b X^c , \qquad \iota_X \phi^a =0 .
\end{equation}
Using the nilpotency of \(d\) and \(\iota\), and Cartan's magic formula, wee see that the Cartan differential on the horizontal subcomplex squares to a Lie derivative (an infinitesimal symmetry transformation)
\begin{equation}
d_C^2 = - \phi^k \otimes \mathcal{L}_k ,
\end{equation}
so when we restrict to the \(G\)-invariant subspace, \(d_C^2 \cong 0 \) on \(\left(S(\mathfrak{g}^*)\otimes \Omega(M)\right)^G\), as it should. Using the Mathai-Quillen isomorphism and the equivariant de Rham theorem we then have the fundamental result,
\begin{equation}
\boxed{
H_G^*(M) \cong H^*\left( \left(S(\mathfrak{g}^*)\otimes \Omega(M)\right)^G, d_C \right)
}
\end{equation}
that simplifies the algebraic model for the \(G\)-equivariant cohomology of \(M\).
\begin{defn} The \(\mathfrak{g}\)-dg algebra \[
\Omega_G(M) := \left(S(\mathfrak{g}^*)\otimes \Omega(M)\right)^G
\] with the Cartan differential \(d_C\) is called the \emph{Cartan model} for the equivariant cohomology of \(M\). Elements of \(\Omega_G(M)\) are called \emph{equivariant differential forms} on \(M\). The degree of an equivariant form is the total degree with respect to the generators of \(\Omega(M)\) (in degree 1), and the generators of \(S(\mathfrak{g^*})\) (in degree 2).
\end{defn}

Equivariant forms will be in the next chapters the the principal object of study. In the physical applications we are interested in, we will always search for an interpretation of the space of interest as a Cartan model with respect to the action of a symmetry group \(G\). The Cartan differential will be some object that squares to an infinitesimal symmetry, and on the subspace of \(G\)-invariant forms (or \lq\lq fields\rq\rq, in the following) it will define a \(G\)-equivariant cohomology. Cartan differentials arise in Field Theory as \emph{supersymmetry transformations}, that we will contextualize in Chapter \ref{cha:susy} and relate to equivariant cohomology in Chapter  \ref{cha:loc-susy}. This interpretation will be crucial in treating some of the most important objects in QM and QFT that arise as (infinite-dimensional) path integrals over the space of fields. In fact, we will see in the next chapter that integration of equivariant forms leads to powerful \emph{localization thorems}, that formally extended to the infinite-dimensional case greatly simplifying those integrals.

\begin{ex}[Cartan model for \(U(1)\)-equivariant cohomology]\label{ex:U(1)-cohom}
Until Chapter \ref{cha:non Abelian YM}, we will actually deal with the equivariant cohomology with respect to a circle action of \(G=U(1)\), or at most a torus action of \(U(1)^n\) for some \(n\). As we saw also for the Weil model, this greatly simplifies the problem, so we carry on that example also in the Cartan model for a \(U(1)\)-action. We recall from Example \ref{ex:U(1)-Weil-model} that in the Weil model \[
\iota_T \phi = \mathcal{L}_T \phi = d_W \phi = 0 ,
\]
\textit{i.e.}\ \(\phi\) is automatically also \(U(1)\)-invariant. The equivariant forms are thus \[
\Omega_{U(1)}(M) = (\mathbb{R}[\phi]\otimes \Omega(M))^{U(1)} \cong \Omega(M)^{U(1)}[\phi] ,
\]
so polynomials in \(\phi\) with \(U(1)\)-invariant forms as coefficients. The Cartan differential is, suppressing tensor products, \[
d_C = d - \phi\ \iota_T . \]
In this 1-dimensional case, the indeterminate \(\phi\) is just a spectator, and serves only to properly count the equivariant form-degree. This is important of course, but for many purposes it creates no confusion to suppress its presence. More precisely, we often \emph{localize} the algebra \(\Omega_G(M)\), substituting the \emph{indeterminate} \(\phi\) with a \emph{variable}, and setting it for example to \(\phi=-1\),\footnote{This could seem harmless, but it is definitely a non-trivial move. We are really able to do this without spoiling the resulting equivariant cohomology because (algebraic) localization commutes with taking cohomology. More details on this are reported in Appendix \ref{app:borel-loc}.} so that \[
d_C = d + \iota_T .\]
This differential squares to an infinitesimal symmetry generated by \(\underline{T}\), \(d_C^2 = \mathcal{L}_T\). Equivariant differential forms after this localization are just \(U(1)\)-invariant forms.

Often it is useful to generate equivariant forms from invariant differential forms in \(\Omega(M)\), for the purpose of integration for example. If \(\alpha\in \Omega^{2n}(M)\), an \emph{equivariant extension} of \(\alpha\) is \(\tilde{\alpha}\in \Omega_{U(1)}(M)\) such that \[
\tilde{\alpha} = \alpha + f_{(2n-2)} \phi + f_{(2n-4)}\phi^2 + \cdots \]
where any coefficient is an invariant form in \(\Omega(M)\).
As an example, we can take the circle acting on the 2-sphere \(\mathbb{S}^2\), via rotations around a chosen axis. If \(\theta\) is the polar coordinate and \(\varphi\) is the azimutal coordinate, \[
(\theta,\varphi)\left(e^{it}\cdot p \right) := (\theta(p), \varphi(p) + t) \qquad \text{for}\ p\in \mathbb{S}^2, e^{it}\in U(1), \]
so that the fundamental vector field is \(\underline{T} =  \frac{\partial}{\partial\varphi}\).\footnote{Recall that this action has two fixed points, at the North and the South pole.} Consider the canonical volume-form \(\omega = d\cos{(\theta)}\wedge d\varphi\). It is obviously closed, and also \(U(1)\)-invariant, since \(\mathcal{L}_T \omega = 0\). Aiming to the extension of \(\Omega(M)\) to \(\Omega_{U(1)}(M)\), we can find an \emph{equivariantly closed extension} of the volume form \(
\tilde{\omega} = \omega + f \phi \),
with \(f\in C^\infty(M)\) such that \(d_C \tilde{\omega} = 0\), so that it is closed in the \lq\lq correct\rq\rq\ complex. This imposes the equation \(df = \iota_T \omega\),\footnote{\(\omega\) is a \emph{symplectic} form on \(\mathbb{S}^2\), and the \(df = \iota_T \omega\) means that \(H:= -f\) is the \emph{Hamiltonian function} with respect to the \(U(1)\)-action on the sphere. We will deepen this point of view in the next chapter.} and so \(f = - \cos(\theta)\): \[
\tilde{\omega} = \omega - \cos(\theta)\phi .\]
\end{ex}

\section{The BRST model}
\label{sec:BRST-model}

In this section we mention the last popular model for equivariant cohomology: the so-called \emph{BRST model}, or sometimes \emph{intermediate model}. It is worth to mention it because we will see in the next chapter that it is (as the first name suggests) intimately related to the BRST method for gauge-fixing in the Hamiltonian formalism. Moreover, its complex is the one that arises naturally in Topological Field Theories (TFT), as we will mention in Chapter \ref{cha:non Abelian YM}. It is also important because it provides (as the second name suggests) an \lq\lq interpolation\rq\rq\ between the Weil and the Cartan models that we saw in the last sections, relating the latter more \lq\lq physical\rq\rq (or differential geometric) point of view with the former more \lq\lq topological\rq\rq\ one. 

As an algebra, the (unrestricted) complex of the BRST model is identical to that of the Weil model,
\begin{equation}
B := W(\mathfrak{g})\otimes \Omega(M) ,
\end{equation}
but with the new differential (compare to \eqref{eq:weil-differential} and \eqref{tu-gactions-Weil})
\begin{equation}
\label{eq:BRST-diff}
d_B = d_W \otimes 1 + 1 \otimes d + \theta^a \otimes \mathcal{L}_a - \phi^a \otimes \iota_a 
\end{equation}
that satisfies \(d_B^2 = 0\) on \(B\), and has the same trivial cohomology of the unrestricted Weil model.

The idea that brought to the construction of this model in \cite{kalkman-BRST}, was essentially to prove along the line we did in the last section the equivalence of the models, but from a slightly different point of view. In fact one can construct \(d_B\) using an algebra automorphism that carries the Weil model \((B,d_W)\) into the BRST model \((B,d_B)\), at the level of the unrestricted algebras. The restriction to the basic subcomplex gives then automatically the Cartan model. The automorphism is given by the map 
\begin{equation}
\varphi := e^{\theta^a \iota_a} \equiv \prod_a (1 + \theta^a \otimes \iota_a ) ,
\end{equation}
that looks very similar to the Mathai-Quillen isomorphism of theorem \ref{thm:Mathai-Quillen-iso}, but now is applied to the whole algebra and not only on the horizontal part. Analogously to the definition of the Cartan differential, \(d_B\) is got as \eqref{eq:BRST-diff} from the commutativity of the diagram
\begin{equation}
\begin{tikzcd}
B \arrow[d, "d_W"] \arrow[r,"\varphi"]  & B \arrow[d, "d_B"]  \\
B \arrow[r,"\varphi"] & B ,
\end{tikzcd}
\end{equation}
so that \(d_B = \varphi^{-1} \circ d_W \circ \varphi\), as well as the two \(\mathfrak{g}\)-actions. In particular, it results
\begin{equation}
\begin{aligned}
\iota^{(B)} &= \iota \otimes 1 \neq \iota^{(W)} , \\
\mathcal{L}^{(B)} &= \mathcal{L}\otimes 1 + 1\otimes \mathcal{L} = \mathcal{L}^{(W)},
\end{aligned}
\end{equation}
where we called \(\iota^{(W)}, \mathcal{L}^{(W)}\) the one defined in \eqref{tu-gactions-Weil}. Thus the BRST differential carries the same information of the Weil differential, giving the same trivial cohomology of the unrestricted Weil model,
\begin{equation}
H^*(B,d_B) \cong H^*(B,d_W) \cong H_{dR}(M)
\end{equation}
where the last equivalence follows from the triviality of the cohomology of the Weil algebra \(W(\mathfrak{g}^*)\). Of course, we have to restrict the the action of \(d_B\) to the basic subcomplex, \textit{i.e.} to the intersection with the kernels of \(\iota^{(B)}\) and \(\mathcal{L}^{(B)}\), to get a meaningful \(G\)-equivariant cohomology. This reproduces again the Cartan model, as expected.

This result shows that there is in fact a whole continuous family of \(\mathfrak{g}\)-dg algebras that give equivalent models for the \(G\)-equivariant cohomology of \(M\), because we can conjugate the Weil differential through the modified automorphism
\begin{equation}
\varphi_t :=  e^{t \theta^a \iota_a} \qquad \text{with} \ t\in\mathbb{R} .
\end{equation}
This produces, by conjugation, the family of differentials and \(\mathfrak{g}\)-actions on \(B\),
\begin{equation}
\begin{aligned}
d^{(t)} &= d_W \otimes 1 + 1 \otimes d + t\theta^a \otimes \mathcal{L}_a - t\phi^a \otimes \iota_a + \frac{1}{2}t(1-t)f_{ab}^c \theta^a\theta^b \otimes \iota_c , \\
\iota^{(t)} &= \iota\otimes 1 + (1-t)1\otimes \iota , \\
\mathcal{L}^{(t)} &= \mathcal{L}^{(W)} \quad \forall t .
\end{aligned}
\end{equation}
We see that for \(t=0\) we recover the Weil model, while for \(t=1\) we get the BRST model, as special cases. When restricted to the basic subcomplex, they all give the same equivariant cohomology.

\chapter{Localization theorems in finite-dimensional geometry}
\label{cha:loc theorems}

In this chapter we are going to introduce one of the most important results of the equivariant cohomology theory: the \emph{Atiyah-Bott-Berline-Vergne (ABBV) localization formula} for torus actions, discovered independently by Berline and Vergne \cite{berline-vergne-loc}, and by Atiyah and Bott \cite{atiyah-bott-localization}. For the most applications to QM and QFT, we will focus on the case of a circle action, and higher-dimensional generalizations will be postponed to Chapter \ref{cha:non Abelian YM}. This formula can be viewed as a generalization of an analogous result of Duistermaat and Heckman \cite{duistermaat-heckman}, that treats the special case in which the torus action is Hamiltonian on a symplectic manifold. We will expand on this point of view in the second part of the chapter, since this is the situation we are more commonly interested in when we treat dynamical systems in physics, at least at the classical level. The formal generalization of these formulas in the infinite-dimensional setting of QFT will be discussed in Chapter \ref{cha:loc-susy}. 

Since we are going to deal with integration of equivariant forms, we consider \(U(1)\)-equivariant cohomologies from the point of view of the Cartan model. The definition and notational conventions for integration of equivariant forms on a smooth \(G\)-manifold are reported in Appendix \ref{app:integr}, as well as an equivariant version of the Stokes' theorem, needed for the proof of the localization formulas that are presented in the following.

\section{Equivariant localization principle}
\label{sec:eq_loc_princ}

Let \(U(1)\) act (smoothly) on a compact oriented \(n\)-dimensional manifold \(M\) without boundaries,\footnote{If not specified a manifold is always \lq\lq without boundaries\rq\rq\ since, strictly speaking, manifolds with boundaries have to be defined in an appropriate separated way. In particular, near points at the boundary the manifold is locally homeomorphic not to an open set in \(\mathbb{R}^n\), but to an \emph{half-open} disk in \(\mathbb{R}^n\).} with fixed point set \(F\subseteq M\), and consider the integral of a generic \(U(1)\)-invariant top-form
\begin{equation}
\int_M \alpha \qquad \quad \text{with}\ \alpha\in \Omega^n(M)^{U(1)} .
\end{equation}
As we saw in Example \ref{ex:U(1)-cohom}, in some cases we can find an \emph{equivariantly closed extension} \(\tilde{\alpha}\in \Omega_{U(1)}(M)\) such that \(d_C\alpha = 0\), with
\begin{equation}
d_C = d + \iota_T
\end{equation}
and \(T\cong i\) being the generator of \(U(1)\).\footnote{Notice that we have localized the Cartan model and set \(\phi=-1\), as discussed in Example \ref{ex:U(1)-cohom}. This will be our standard convention up to Chapter \ref{cha:non Abelian YM}.} Then we can deform the integral without changing its value,
\begin{equation}
I[\tilde{\alpha}] := \int_M \tilde{\alpha} = \int_M \alpha
\end{equation}
since only the top-degree component \(\alpha\) is selected by integration. We are going to argue now that such integration of an equivariantly closed form is completely captured by its values at the fixed point locus \(F\), using two different arguments. The first is cleaner, the second less explicit but more common especially in the physics literature. We are going to need in both cases some preliminary facts, that we collect in the following lemma.
\begin{lemma}
\begin{enumerate}[label=(\roman*)]
\item If \(G\) is a compact Lie group, any smooth \(G\)-manifold \(M\) admits a \(G\)-invariant Riemannian metric. In other words, \(G\) acts via isometry on \(M\), and the fundamental vector field \(\underline{T}\) is a Killing vector field,\footnote{This follows from two facts: if a \(G\)-action on \(M\) is smooth and \emph{proper}, then \(M\) admits a \(G\)-invariant Riemannian structure \cite{Kankaanrinta-proper}; also, it is easy to prove that any smooth action of a \emph{compact} Lie group is proper.}
\[ \mathcal{L}_T g = 0 . \]
\item If \(G\) is a connected Lie group, then the fixed point locus is the zero locus of all the fundamental vector fields:\footnote{This is just reasonable, see \cite{tu-equiv_cohom} for a proof. Connectedness is required because we passed from the action of \(G\) to the action of \(\mathfrak{g}\) by the exponential map.} \[
F \cong \lbrace\left. p\in M \right| \underline{A}_p = 0 \quad \forall A\in \mathfrak{g}\rbrace . \]
\item For any point \(p\in M\), the stabilizer of \(p\) under the action of a Lie group \(G\) is a closed subgroup of \(G\).\footnote{By continuity of the action, every sequence inside the stabilizer of \(p\) converges inside the stabilizer.}
\end{enumerate}
\end{lemma}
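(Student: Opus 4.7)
The plan is to treat each of the three items separately, as they rely on essentially independent techniques.

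For part (i), I would use the standard averaging argument over the compact group. First I would invoke the existence of any Riemannian metric $g_0$ on $M$ (constructed via a partition of unity subordinate to a chart cover), then define the averaged metric
\[ g_p(X,Y) := \int_G g_0\bigl((\rho(h))_* X, (\rho(h))_* Y\bigr)_{h\cdot p}\, d\mu(h), \]
where $d\mu$ is the normalized bi-invariant Haar measure on $G$, which exists and is finite precisely because $G$ is compact. Smoothness, symmetry and positive-definiteness are inherited from $g_0$, while $G$-invariance $\rho(g)^* g = g$ follows immediately from the left-invariance of $d\mu$ by a change of variables inside the integral. Once $G$ acts by isometries, the one-parameter subgroup $t \mapsto \rho(e^{-tX})$ consists of isometries, so by the definition \eqref{eq:fundam-vf} each fundamental vector field $\underline{X}$ generates a flow by isometries, that is $\mathcal{L}_X g = 0$ for every $X \in \mathfrak{g}$.

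For part (ii), one inclusion is immediate: if $p$ is fixed by all of $G$, then $\rho(e^{-tA})p = p$ for every $A \in \mathfrak{g}$ and every $t$, and differentiating at $t=0$ gives $\underline{A}_p = 0$. For the converse, suppose $\underline{A}_p = 0$ for every $A \in \mathfrak{g}$; then for each $A$ the integral curve of $\underline{A}$ through $p$ is the constant curve at $p$, hence $\rho(\exp(tA))\, p = p$ for all $t \in \mathbb{R}$. The connectedness of $G$ now enters crucially: any connected Lie group is generated as a group by any neighborhood of the identity, and a neighborhood of the identity is covered by the image of the exponential map. Thus every $g \in G$ can be written as a finite product $g = \exp(A_1)\cdots\exp(A_k)$, and iterating the previous observation yields $g\cdot p = p$.

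Part (iii) is the shortest: the evaluation map $\mathrm{ev}_p : G \to M$, $g \mapsto g\cdot p$, is continuous by definition of a (topological) group action, and the singleton $\{p\}$ is closed in the Hausdorff manifold $M$. The stabilizer is then $\mathrm{Stab}(p) = \mathrm{ev}_p^{-1}(\{p\})$, closed as the preimage of a closed set under a continuous map. The only subtle point in the whole lemma, and the step I expect to require the most care, is the connectedness argument in part (ii): it is essential that finite products of exponentials exhaust $G$. For a disconnected group the statement genuinely fails, since the identity component alone could act trivially on $p$ while elements in other components permute it nontrivially.
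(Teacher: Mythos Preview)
Your proofs of all three parts are correct. Note, however, that the paper does not actually give a proof of this lemma: it only supplies brief footnote justifications and references, so the comparison is between your detailed arguments and the paper's one-line sketches.

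For part (i) the paper's footnote argues indirectly, citing that a smooth \emph{proper} action admits an invariant metric and that compact-group actions are proper. Your averaging construction via Haar measure is the more classical and self-contained route; it avoids the detour through properness and gives the invariant metric explicitly. For part (ii) your argument matches what the paper's footnote hints at (``Connectedness is required because we passed from the action of $G$ to the action of $\mathfrak{g}$ by the exponential map''), and you correctly identify the key step that $G$ is generated by $\exp(\mathfrak{g})$. For part (iii) the paper's footnote uses a sequential characterization of closedness, whereas your preimage argument $\mathrm{Stab}(p) = \mathrm{ev}_p^{-1}(\{p\})$ is cleaner and does not presuppose first-countability.
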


\paragraph*{\(1^{st}\) argument: Poincaré lemma}  

For simplicity, suppose that \(F\) contains only isolated fixed points. From lemma (i), we can pick any \(U(1)\)-invariant metric on \(M\), and  define through it \emph{open balls} of radius \(\epsilon\) \(B(p,\epsilon)\) around any fixed point \(p\in F\). Then \(U(1)\) acts without fixed points on the complement
\begin{equation}
\tilde{M}(\epsilon) := M \setminus \bigcup_{p\in F} B(p,\epsilon) ,
\end{equation}
that is a manifold \emph{with boundaries}, them being the union of the surfaces of the balls at every fixed point (oriented in the opposite direction to the usual one). From lemma (iii), the stabilizer of any point in \(\tilde{M}\) is a closed subgroup of \(U(1)\), but it cannot be \(U(1)\) since we excluded the fixed points, so it is discrete.\footnote{The closed subgroups of \(U(1)\) are \(U(1)\) and the finite cyclic groups \(\lbrace 1\rbrace, \mathbb{Z}/n\) with \(n\in \mathbb{Z}\).} This means that the \(U(1)\)-action on \(\tilde{M}\) is locally free.

We would like to find an equivariant version of the Poincaré lemma on \(\tilde{M}\), where the action is locally free. This means finding a map \(K: \Omega(\tilde{M})^{U(1)} \to \Omega(\tilde{M})^{U(1)}\) of odd-degree such that \([d_C, K ]_+ = id\). If we are able to find such a map, then any equivariantly closed form \(\eta \in \Omega(\tilde{M})^{U(1)}\) is also equivariantly exact,
\begin{equation}
\eta = (Kd_C + d_C K)\eta = K(d_C\eta) + d_C(K\eta) = d_C(K\eta).
\end{equation}
We can define the map \(K\) by multiplication with respect to an equivariant form \(\xi\in \Omega(\tilde{M})^{U(1)}\) of pure odd-degree such that \(d_C\xi = 1\), since 
\begin{equation}
[\xi,d_C]_+ = \xi d_C + (d_C \xi) + (-1)^{\textrm{deg}(\xi)} \xi d_C = 1.
\end{equation}
This form can be defined using again a \(U(1)\)-invariant metric on \(M\), that we call \(g\). We define the following 1-form away from the fixed point set, where \(\underline{T}= 0\),
\begin{equation}
\beta := \frac{1}{g(\underline{T},\underline{T})}g(\underline{T},\cdot)
\end{equation}
and notice that it is \(U(1)\)-invariant by invariance of \(g\), and \(\iota_T \beta = 1\), so that the action of the Cartan differential on it gives \(d_C \beta = d\beta + 1\). Then the odd-degree form \(\xi\) can be defined as
\begin{equation}
\label{eq:inverse-of-localiz-form}
\xi := \beta (d_C \beta)^{-1} = \beta \left( 1 + d\beta\right)^{-1} = \beta \sum_{i=0}^{n-1} (-1)^i (d\beta)^i .
\end{equation}
The inverse of the form \((1+ d\beta)\) can be guessed pretending that \(d\beta\) is a number, and using the Taylor expansion \[ 
(1+z)^{-1} = \sum_{i=0}^\infty (-1)^i z^i .\] 
In the case of forms, the sum at the RHS stops at finite order, since by degree reasons \((d\beta)^i=0\) for \(i>(n/2)\). It is easy to check that \((d_C\beta)^{-1}(d_C\beta)=1\),  \(d_C\xi = 1\), and \(\textrm{deg}(\xi)\) is odd.

Now we know that any equivariantly closed form in \(\Omega(\tilde{M})^{U(1)}\) is also equivariantly exact, so we can simplify the integral \(I[\alpha]\) of an equivariantly closed form \(\alpha\) using an equivariant version of Stokes' theorem (see Appendix \ref{app:integr}):
\begin{equation}
\int_{\tilde{M}} \alpha = \int_{\tilde{M}} d_C(\xi\alpha) = \int_{\partial\tilde{M}} \xi\alpha .
\end{equation}
Taking the limit \(\epsilon\to 0\), the domain of integration on the LHS covers all \(M\), and the integral over the boundary on the RHS reduces to a sum of integrals over the boundaries of \(n\)-spheres centered at each fixed point \(p\in F\) (since \(\partial M = \emptyset\)). Thus the integral of an equivariantly closed form \lq\lq localizes\rq\rq\ as a sum over the fixed points of the \(U(1)\)-action,
\begin{equation}
I[\alpha] = \int_M \alpha = \lim_{\epsilon \to 0} \int_{\tilde{M}(\epsilon)} \alpha = \sum_{p\in F} \lim_{\epsilon \to 0}  \left(-\int_{S^{n-2}_{\epsilon}(p)} (\xi\alpha)\right) = \sum_{p\in F} c_p
\end{equation}
for some contributions \(c_p\) at each fixed point. The precise form of these contributions will be discussed in the next section.

\paragraph*{\(2^{nd}\) argument: localization principle}

The second argument for the localization of the equivariant integral is less explicit, but more direct. Also, it is closer to the approach we will use in the infinite-dimensional context of supersymmetric QFT.

Again, we start from the integral \(I[\alpha]\) of an equivariantly closed form \(\alpha\in \Omega(M)^{U(1)}\). The basic idea is to take advantage of the equivariant cohomological nature of the integral over \(M\): this depends really on the cohomology class of the integrand, not on the particular representative. So we can deform the integral staying in the same class in a way that simplifies its evaluation, without changing the final result. To do this, we pick a positive definite \(U(1)\)-invariant 1-form \(\beta\) on \(M\), and define the new integral
\begin{equation}
I_t[\alpha] := \int_M \alpha e^{-t d_C \beta}
\end{equation}
with  \(t\in \mathbb{R}\). It is again an integral of an equivariantly closed form,
\begin{equation}
d_C \left( \alpha e^{-t d_C\beta} \right) = (d_C\alpha) e^{-t d_C\beta} -t \alpha (d_C^2 \beta)e^{-t d_C\beta} = 0 ,
\end{equation}
since \(d_C^2 = \mathcal{L}_T\) and \(\beta\) is \(U(1)\)-invariant. To show that this integral is equivalent to \(I[\alpha]\), we show that it is independent on the parameter \(t\):
\begin{equation}
\begin{aligned}
\frac{d}{dt} I_t[\alpha] &= \int_M \alpha(- d_C \beta) e^{-t d_C \beta} \\
&= - \int_M d_C\left( \alpha\beta e^{-t d_C \beta} \right) \qquad &\text{(integration by parts)} \\
&= 0 \qquad &\text{(equivariant Stokes' theorem)}. 
\end{aligned}
\end{equation}
Noticing that \(I[\alpha] = I_{t=0}[\alpha]\), from the \(t\)-independence  it follows that \(I[\alpha] = I_{t}[\alpha]\) for every value of the parameter.

We showed that the deformation via the exponential \(e^{-t d_C\beta}\) does not change the equivariant cohomology class of the integrand, so we are free to compute the integral for any value of the parameter. In particular, in the limit \(t\to \infty\), we see that the only contributions come from the zero locus of the exponential. This gives the \lq\lq localization formula\rq\rq
\begin{equation}
\label{eq:loc-principle-U1}
\int_M \alpha = \lim_{t\to\infty} \int_M \alpha e^{-t d_C \beta} ,
\end{equation}
that will be the starting point for all the applications of the equivariant localization principle of the next chapters, also in the infinite-dimensional case in which \(M\) describes generically the \lq\lq space of fields\rq\rq\ of a given QFT.

The 1-form \(\beta\) is usually called \lq\lq localization 1-form\rq\rq. Notice that choosing different localization 1-forms produces different practical localization schemes, but at the end of the computation they must all agree on the final result! In particular, by lemma (i) we can pick a \(U(1)\)-invariant Riemannian metric \(g\), and choose the 1-form as
\begin{equation}
\beta := g(\underline{T},\cdot) .
\end{equation}
This makes it positive definite and produces the same localization scheme of the first argument, since its zeros coincide with the zeros of the fundamental vector field \(\underline{T}\) and thus with the fixed point locus \(F\) of the circle action, by lemma (ii).

\section{The ABBV localization formula for Abelian actions}
\label{sec:ABBV}

Here we state the celebrated result by Atiyah-Bott and Berline-Vergne, about the localization formulas for circle and torus actions. The rationale of the last section showed that the equivariant cohomology of the manifold \(M\) is encoded in the fixed point set \(F\) of the symmetry action, but left us with the evaluation of an integral over the fixed point set. We show the result of this integration here, and we are going to give an argument for the proof in the next chapter, with some tools from supergeometry. That proof is different from the original ones in \cite{atiyah-bott-localization, berline-vergne-loc}, but will introduce a method that can be easily generalized to functional integrals.

To warm up, we consider first the simple case of isolated fixed point set \(F\subseteq M\), and a \(U(1)\)-action. Notice that, at any fixed point \(p\in F\), the circle action gives a representation of \(U(1)\) on the tangent space, since for any \(\psi \in U(1)\)
\begin{equation}
(\psi \cdot )_* : T_p M \to T_{\psi\cdot p}M\equiv T_pM ,
\end{equation}
so \((\psi \cdot)_* \in GL(T_p M)\). Since \(T_p M\) is finite dimensional, it can be decomposed in irreducible representations of \(U(1)\),
\begin{equation}
\label{abbv-irrepr}
T_p M \cong V_1 \oplus \cdots \oplus V_n .
\end{equation}
The circle has to act faithfully on \(T_p M\), since if there was  \(v\in T_p M\) such that \((\psi\cdot)_* v = v\), then the whole curve \(\exp(tv)=\exp(t(\psi\cdot)_* v) = \psi \cdot \exp(tv)\) would be fixed by \(U(1)\), thus \(p\) would not be isolated. Recall that the irreducible representations of \(U(1)\) are complex 1-dimensional, and are labeled by integers,
\begin{equation}
\psi= e^{ia} \in U(1) , \qquad \rho_m(\psi) := e^{ima} \quad \text{with} \ m\in\mathbb{Z}.
\end{equation}
This means that the irreducible representations in \eqref{abbv-irrepr} are all non-trivial (of real dimension 2), and that \(\dim(M)=2n\). In other words, if a circle action on a manifold \(M\) has isolated fixed points, \(M\) must be even-dimensional. 
Excluding the trivial representation with \(m=0\), the tangent spaces at the fixed points are thus labeled by a set of integers,
\begin{equation}
\label{eq:circle-representation}
T_p M \cong V_{m_1} \oplus \cdots \oplus V_{m_n}
\end{equation}
where \((m_1,\cdots,m_n)\in \mathbb{Z}^n\) are called the \emph{exponents} of the circle action at \(p\in F\).\footnote{In terms of the Lie algebra representation, every exponent \(m\) coincide with the \emph{weight} of the single generator of \(U(1)\) in the fundamental representation.} They can be regarded as maps \(m_i:F\to\mathbb{Z}\). We formulate now a simplified version of the localization theorem in term of this local data. The proof of this can be found in \cite{tu-equiv_cohom}.
\begin{thm}[Localization for circle actions]\label{thm:ABBV-circle}
Let \(U(1)\) act on a compact oriented manifold \(M\) of dimension \(\dim(M)=2n\), with isolated fixed point locus \(F\). If \(m_1,\cdots, m_n :F\to \mathbb{Z}\) are the exponents of the circle action, and \[
\alpha = \alpha^{(2n)} + \alpha^{(2n-2)}\phi + \alpha^{(2n-4)}\phi^4 + \cdots + \alpha^{(0)} \]
is an equivariant top-form in \(\Omega_{U(1)}(M)\) such that \(d_C\alpha=0\), then \[ \boxed{
\int_M \alpha^{(2n)} = \int_M \alpha = (2\pi)^n \sum_{p\in F} \frac{\alpha^{(0)}(p)}{m_1(p)\cdots m_n(p)}  }\]
where the last component \(\alpha^{(0)}\in C^\infty(M)\).
\end{thm}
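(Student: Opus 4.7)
My approach is to apply the equivariant localization principle \eqref{eq:loc-principle-U1} with a localization $1$-form adapted to the geometry near the fixed points, and then to reduce $\int_M \alpha$ to a sum of Gaussian integrals concentrated at the isolated fixed points. By lemma (i) I choose a $U(1)$-invariant Riemannian metric $g$ on $M$, and set $\beta := g(\underline{T},\cdot)$; then $\iota_T\beta = g(\underline{T},\underline{T})$ is nonnegative and vanishes exactly on $F$ by lemma (ii). The principle \eqref{eq:loc-principle-U1} gives
\[ \int_M \alpha = \lim_{t\to\infty} \int_M \alpha\, e^{-t\, d_C\beta}, \]
so as $t\to\infty$ the integrand is exponentially suppressed outside arbitrarily small neighborhoods of the fixed points, and the computation reduces to a local contribution $c_p$ at each $p\in F$.

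To compute $c_p$ I will use Bochner linearization together with the exponential map of $g$ at $p$. Since $g$ is $U(1)$-invariant and $p$ is fixed, $\exp_p : T_pM\to M$ is $U(1)$-equivariant, intertwining the smooth action on $M$ with its differential on $T_pM$. By the representation splitting \eqref{eq:circle-representation} I pick an orthonormal basis of $T_pM$ compatible with the weight-$m_i(p)$ decomposition, obtaining normal coordinates $(x^1,\dots,x^{2n})$ in which $g_{ij}=\delta_{ij}+O(|x|^2)$ and
\[ \underline{T} = \sum_{i=1}^n m_i(p)\bigl(x^{2i-1}\partial_{2i}-x^{2i}\partial_{2i-1}\bigr). \]
A direct computation then gives, to the order needed for the Gaussian approximation,
\[ \iota_T\beta = \sum_{i=1}^n m_i(p)^2\bigl((x^{2i-1})^2+(x^{2i})^2\bigr), \qquad d\beta = 2\sum_{i=1}^n m_i(p)\,dx^{2i-1}\wedge dx^{2i}. \]

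The evaluation of $c_p$ combines a Gaussian integral with a degree-counting argument. Under the rescaling $y^i := \sqrt{t}\,x^i$, the scalar factor $e^{-t\iota_T\beta}$ becomes a $t$-independent Gaussian in $y$, while the coordinate measure contributes $t^{-n}$. A component $\alpha^{(2n-2k)}\phi^k$ of $\alpha$ must be paired with the form-degree $2k$ piece of $e^{-t\,d\beta}$, which carries a prefactor $(-t)^k$; the net $t$-scaling of its contribution to $c_p$ is $t^{k-n}$. Hence all components with $k<n$ are suppressed in the limit, and only the scalar coefficient of $\phi^n$, namely $\alpha^{(0)}$, survives. The remaining Gaussian then evaluates via $(d\beta)^n|_p = 2^n n!\prod_i m_i(p)\,dx^1\wedge\cdots\wedge dx^{2n}$ and the standard formula $\int_{\mathbb{R}^2}e^{-tm^2 r^2}dxdy = \pi/(tm^2)$, yielding
\[ c_p = \frac{(2\pi)^n\,\alpha^{(0)}(p)}{m_1(p)\cdots m_n(p)}, \]
after the sign $(-1)^n$ from $(-2t)^n$ cancels against the $(-1)^n$ from $\phi^n|_{\phi=-1}$ in the scalar component (Example \ref{ex:U(1)-cohom}). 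Summation over $p\in F$ then gives the claimed identity.

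The principal technical hurdle will be controlling the subleading terms in the $t\to\infty$ limit: one must verify that the $O(|x|^2)$ corrections to $g_{ij}$, and the higher Taylor coefficients of the $\alpha^{(2n-2k)}$ around $p$, produce only $O(t^{-1/2})$ corrections after rescaling, and hence vanish in the limit. A secondary bookkeeping point is orientational: the ordering of the pairs $(x^{2i-1},x^{2i})$ chosen to linearize the action must be compatible with the global orientation of $M$, since it is this choice that fixes the signs of the weights $m_i(p)$ appearing in the denominator.
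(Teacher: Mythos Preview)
Your argument is correct and complete in outline; it is the standard stationary-phase/normal-coordinate proof. The paper, however, does not carry out the computation this way. In Section~\ref{sec:ABBVproof} it rewrites the localization integral over the odd tangent bundle $\Pi TM$, expresses $e^{-t\,d_C\beta}$ in Grassmann variables, and uses the delta-function representations for bosonic and fermionic Gaussians to collapse the integral directly onto $\delta(\underline{T})\delta(\theta)$; the contribution at each fixed point then comes out as a ratio $\mathrm{Pf}(B)/\sqrt{\det g}$ times a Jacobian $|\det dT|^{-1}$, which is simplified to $\mathrm{Pf}(dT)^{-1}$ and identified with the product of weights.

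The difference is one of packaging rather than content. Your approach is more elementary---it needs only the exponential map, Bochner linearization, and honest Gaussian integrals---and makes the degree-counting that isolates $\alpha^{(0)}$ very transparent. The paper's supergeometric route is less direct for the finite-dimensional statement but is deliberately chosen because it is the template that will be recycled verbatim for path integrals in Chapters~\ref{cha:loc-susy} and~\ref{cha:non Abelian YM}: the Berezin integrals over $\theta^\mu$ become fermionic path integrals, and the pfaffian/determinant ratio becomes the 1-loop factor. Your two flagged hurdles (the $O(t^{-1/2})$ error control and the orientation convention fixing the signs of the $m_i(p)$) are exactly the points that need care in either approach; in the paper's version they are hidden inside the passage from the Gaussian to the delta function and in the identification $\mathrm{Pf}(B)=\sqrt{2^d}\,\mathrm{Pf}(dT)\sqrt{\det g}$ at the fixed point.
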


\begin{ex}[Localization on the 2-sphere]\label{ex:S2-ABBV} 
Let us consider again the case of the height function \(H:\mathbb{S}^2 \to \mathbb{R}\) such that, in spherical coordinates \((\theta,\varphi)\), \(H(\theta,\varphi) := \cos(\theta)\). In Example \ref{ex:U(1)-cohom} we related this function to the equivariantly closed extension of the volume form on the 2-sphere, \[
\tilde{\omega} = \omega + H . \]
We can use the last localization theorem to compute integrals involving this \lq\lq Hamiltonian\rq\rq\ function on \(\mathbb{S}^2\). The 2-sphere has two isolated fixed points at the poles, and only one exponent \(m:F\to \mathbb{Z}\). It is not difficult to see that the exponent of the action at the fixed points is \(m(N)=1\) at the North pole, and \(m(S)=-1\) at the South pole (the sign comes from the orientation of the charts).

We can check the theorem with two instructive integrals. The first is simply the area of the sphere, \textit{i.e.}\ the integral of \(\omega\). Using the theorem we easily get the correct result,
\begin{equation*}
\int_{\mathbb{S}^2}\omega = \int_{\mathbb{S}^2} (\omega + H) =  2\pi \sum_{p\in \lbrace N,S\rbrace} \frac{H(p)}{m(p)} = 2\pi \left( \frac{\cos(0)}{1} +  \frac{\cos(\pi)}{-1} \right) = 4\pi.
\end{equation*}
The second integral is the \lq\lq partition function\rq\rq on the sphere,
\begin{equation*}
Z(t) := \int_{\mathbb{S}^2} \omega e^{itH} = \frac{1}{it}\int_{\mathbb{S}^2} e^{it(H+\omega)}
\end{equation*}
where the second equality comes from degree arguments. This is the integral of an equivariantly closed form, since \(d_C e^{it(H+\omega)}\propto d_C \tilde{\omega}=0\), whose \(C^\infty(\mathbb{S}^2)\) component is given by \(e^{itH}\). Using the localization theorem we get \[ 
Z =  \frac{1}{it}2\pi \left( \frac{e^{it\cos(0)}}{1} + \frac{e^{it\cos(\pi)}}{-1} \right) = 4\pi \frac{\sin(t)}{t} \]
matching the result from the \lq\lq semiclassical\rq\rq\ saddle-point approximation \eqref{intro-height-function}.
\end{ex}

\bigskip
We now get to the main theorem, considering a more generic torus action with higher dimensional fixed point locus on \(M\).
\begin{thm}[Atiyah-Bott \cite{atiyah-bott-localization}, Berline-Vergne \cite{berline-vergne-loc}] \label{thm:ABBV} 
Let the torus \(T=U(1)^l\) of dimension \(l\) act on a compact oriented \(d\)-dimensional manifold \(M\), with fixed point locus \(F\). If \(\alpha\in \Omega_T(M)\) is an equivariantly closed form, \textit{i.e.}\ \(d_C\alpha = 0\), and \(i: F\hookrightarrow M\) is the inclusion map, then \[ \boxed{
\int_M \alpha = \int_F \frac{i^*\alpha}{\left. e_T(R)\right|_N}
}\]
where \(\left. e_T(R)\right|_N\) is the \emph{T-equivariant Euler class} of the normal bundle of \(F\) in \(M\).
\end{thm}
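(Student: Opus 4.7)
The plan is to combine the equivariant localization principle established in Section 3.1 with a stationary phase analysis near the fixed point locus $F$, extending the isolated-fixed-point argument of Theorem 3.2.1. Starting from \eqref{eq:loc-principle-U1} (and its obvious torus analog), I would pick a $T$-invariant Riemannian metric $g$ on $M$ (possible by lemma (i) for the compact torus) and take as localization 1-form $\beta = g(\underline{X},\cdot)$, where $\underline{X}$ is the fundamental vector field of a generic $X\in\mathfrak{t}$. Then $d_C\beta = d\beta + \|\underline{X}\|^2$, and the factor $e^{-t\|\underline{X}\|^2}$ confines the integrand, as $t\to\infty$, to arbitrarily small tubular neighborhoods of the zero set of $\underline{X}$, which by lemma (ii) (applied to a generic $X$, whose zeros coincide with $F$) is precisely $F$.

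Next I would use the tubular neighborhood theorem to identify a small neighborhood of $F$ with a neighborhood of the zero section of the normal bundle $N = NF/M \to F$. Since $T$ acts trivially along $TF$, the splitting $TM|_F \cong TF \oplus N$ is $T$-equivariant, and the $T$-action on $N$ has no trivial subrepresentation (else $F$ would be larger). Decompose $N = \bigoplus_k N_k$ into a sum of $T$-invariant rank-$2$ real subbundles on which $T$ acts with non-zero weights $\lambda_k \in \mathfrak{t}^*$; this generalizes the exponents $m_i$ of Theorem 3.2.1 from isolated points to the fibers of $N \to F$.

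I would then carry out the fiber Gaussian integration. Choosing a $T$-invariant connection on $N$ gives a horizontal/vertical decomposition of $TN$; in adapted fiber polar coordinates $r_k$ on $N_k$ one has $\|\underline{X}\|^2 = \sum_k \langle \lambda_k, X\rangle^2 r_k^2 + O(r^3)$ and the change of variable $r_k \mapsto r_k/\sqrt{t}$ turns the $t\to\infty$ limit into an exact Gaussian on each fiber, while pushing the integrand forward via the projection $N \to F$ reduces the computation to an integral over $F$ of $i^*\alpha$ multiplied by an explicit fiber factor. The remaining step is to recognize this fiber factor as $1/e_T(N)$: the Whitney formula $e_T(N) = \prod_k e_T(N_k)$ together with the fact that for a rank-$2$ equivariant bundle with weight $\lambda_k$ one has $e_T(N_k) = \langle \lambda_k, \phi\rangle + e(N_k)$ (where $\phi$ is the tautological element of $S(\mathfrak{t}^*)$ and $e(N_k)$ the ordinary Euler class computed from the connection curvature) reproduces, formally as a Taylor expansion in the curvature, the fiber integral.

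The main obstacle is making the third step rigorous: the naive stationary-phase calculation only captures the leading term $\prod_k \langle\lambda_k,\phi\rangle$, whereas the full answer requires all the higher-order curvature corrections encoded in $e_T(N)$. The cleanest resolution is to invoke the Mathai--Quillen construction of an explicit $T$-equivariant Thom form representing the equivariant Thom class of $N$: the form $\alpha e^{-t d_C\beta}$ restricted to the tubular neighborhood converges, in the $t\to\infty$ limit, to (a representative of) the Thom class times $\pi^* i^*\alpha / e_T(N)$, and integrating over the fibers via the Thom isomorphism yields precisely $\int_F i^*\alpha/e_T(N)$. The alternative route, which bypasses Mathai--Quillen, is to prove the formula first for a linear $T$-action on the total space of an equivariant vector bundle over $F$ (direct Gaussian computation) and then reduce the general case to this linear model by the tubular neighborhood theorem, using the fact that both sides of the localization formula are equivariant cohomological invariants and therefore independent of all the auxiliary choices (metric, connection, localization 1-form) made along the way.
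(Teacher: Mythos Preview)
Your overall strategy is correct and matches the paper's in its broad strokes: start from the localization principle \eqref{eq:loc-principle-U1} with $\beta=g(\underline{T},\cdot)$, localize to a tubular neighborhood of $F$, split into tangent and normal directions, rescale the normal fluctuations by $1/\sqrt{t}$, and perform a Gaussian integral over the normal fibers. The paper carries this out in Section~\ref{sec:ABBVproof}.

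Where you and the paper diverge is precisely at the step you flag as the obstacle --- getting the full curvature corrections, not just the leading weight product. You propose invoking the Mathai--Quillen Thom form (or a linear-model plus cohomological-invariance argument). The paper instead does a direct computation in supergeometric language: it rewrites $\int_M$ as a Berezin integral over the odd tangent bundle $\Pi TM$ with coordinates $(x^\mu,\theta^\mu\equiv dx^\mu)$, so that the fermionic Gaussian integrals produce pfaffians automatically. After rescaling $(x_\perp,\theta_\perp)\mapsto(x_\perp,\theta_\perp)/\sqrt{t}$ and Taylor-expanding $B_{\mu\nu}(x)=(\nabla_{[\mu}T)_{\nu]}$ about $F$, the Riemann curvature $R^\sigma_{\nu\lambda\rho}(x_0)\theta_0^\lambda\theta_0^\rho$ appears directly in the exponent, and the normal Gaussian integrals yield $1/\mathrm{Pf}_N\!\left(\frac{R+B}{2\pi}\right)$. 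The paper then checks that $B=\nabla T$ is the moment map for the Riemannian curvature (satisfying $\nabla B = -\iota_T R$), so this pfaffian is precisely $e_T(R)|_N$ in the sense of Appendix~\ref{app:char-classes}.

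Your Mathai--Quillen route is more conceptual and avoids the explicit expansion; the paper's supergeometric route is more computational but shows concretely \emph{where} the curvature comes from and sets up the language used later for path-integral localization in QFT. Both are valid; the paper's choice is deliberate because the same Berezin-integral manipulation is the template for all the infinite-dimensional applications in Chapter~\ref{cha:loc-susy}.
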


This is the localization formula as originally presented for a torus action and fixed point locus \(F\), that is generically an embedded (regular) submanifold of \(M\). The \emph{normal bundle} to \(F\) can be regarded as
\begin{equation}
T N =  \faktor{TM}{i_* TF} ,
\end{equation}
where the quotient is taken pointwise at any \(p\in F\), so that the tangent bundle of \(M\) is split as \(TM = i_* TF \oplus TN\).
The finite sum is replaced by an integral over \(F\), and the zero-degree component of \(\alpha\) is replaced by the component with the correct dimensionality, that matches \(\dim(F)\), by  pulling-back \(\alpha\) on \(F\). The product of the exponents at the denominator is represented in general by the equivariant Euler class of the normal bundle,
\begin{equation}
\left. e_T (R)\right|_N =\mathrm{Pf}_N\left(\frac{R^T}{2\pi}\right) = \mathrm{Pf}_N\left(\frac{R+\mu}{2\pi}\right) ,
\end{equation}
where the pfaffian is taken over the coordinates that span the normal bundle \(TN\), \(R\) is the curvature of an invariant Riemannian metric on \(M\), \(\mu:\mathfrak{t}\to \Omega^0(M;\mathfrak{gl}(d))\) is the \lq\lq moment map\rq\rq\ that makes \(R^T\) an equivariant extension of the Riemannian curvature in the Cartan model (see Appendix \ref{app:char-classes}). 

As an example, let us apply the ABBV localization formula in the case of a discrete fixed point set \(F\), so that we can recover at least the more readable version of theorem \ref{thm:ABBV-circle}. The normal bundle in this case is the whole tangent bundle and, since \(F\) is 0-dimensional, the restriction of the equivariant curvature \(R^T\) to \(F\) makes only its \(\Omega^0\) component contribute, so \(\mathrm{Pf}(R^T)=\phi^a \otimes \mathrm{Pf}(\mu_a)\). At an isolated fixed point \(p\), as we said before, the tangent space \(T_p\) is a representation space for the torus action. Since the torus is Abelian, analogously to the above discussion this representation can be decomposed as the sum of 2-dimensional \emph{weight spaces} \cite{brocker-book, guillemin-sternberg-book}, 
\begin{equation}
\label{ABBV-weigth-decomposition-tangent-space}
T_pM \cong \bigoplus_{i=1}^{d/2} V_{v_i}.
\end{equation}
In Section \ref{sec:ABBVproof} we will see that the moment map at an isolated fixed point encodes exactly these weights, being the representation \(\mu(p):\mathfrak{t}\to \mathfrak{gl}(d)\cong \mathrm{End}(T_pM)\). The equivariant Euler class computes exactly the product of the weights,
\begin{equation}
\label{eq:euler-weights}
e_T(R)_p = \frac{1}{(2\pi)^{d/2}}\prod_i v_i = \frac{1}{(2\pi)^{d/2}}\prod_i \phi^a \otimes v_i(T_a),
\end{equation}
where \(T_a\) are the generators of \(T\). This recovers the formula for the circle action in theorem \ref{thm:ABBV-circle}, where the exponents play the role of the weights for the single generator of \(\mathbb{S}^1\).

Notice that, as it is clear from the above example, in the generic \(l\)-dimensional case it is not so convenient to forget about the generators \(\lbrace \phi^a\rbrace\) of \(\mathfrak{t}^*\), and the ABBV localization formula should be thought as an equivalence of elements in \(H^*_T(pt) = \mathbb{R}[\phi^1,\cdots,\phi^l]\). The LHS is clearly polynomial in \(\phi^a\), so has to be the RHS. Since in the latter both the numerator and the denominator are polynomials in \(\phi^a\), some simplification has to occur in the rational expression to give a polynomial as the final answer.

\begin{rmk}
We anticipate that in QFT the pfaffian in the definition of the Euler class is usually realized in terms of a Gaussian integral over Grassmann (anticommuting) variables, as we will see in detail in Section \ref{sec:ABBVproof}. These \lq\lq fermionic\rq\rq\ Gaussian integrals arise naturally as \lq\lq 1-loop determinants\rq\rq\ from some saddle-point (semi-classical) approximation technique to the partition function of the theory, for example. In general, the differential form \(\alpha\) will be an \lq\lq observable\rq\rq\ of the QFT, and the equivariantly closeness condition will be interpreted as it being \lq\lq supersymmetric\rq\rq. The localization locus \(F\) will be then the fixed point set of a symmetry group that is the \lq\lq square\rq\rq\ of this supersymmetry (as \(d_C^2 \propto \mathcal{L}_{T}\) schematically), so a Poincaré symmetry or a gauge symmetry. The integral then localizes onto the \lq\lq moduli space\rq\rq\ of gauge-invariant (or BPS) field configurations. In the context of Hamiltonian mechanics, the gauge symmetry can be one generated by the dynamics of the theory itself, and in this case the path integral localizes onto the classical solutions of the equations of motion. The ABBV formula thus gives a systematic way to understand in which cases the semi-classical approximation results to be exact. We will reexamine this point of view in the next section in the context of finite-dimensional Hamiltonian mechanics, while in Chapter \ref{cha:loc-susy} we will describe the infinite-dimensional case of QM and QFT, giving some examples of the ABBV localization formula at work.
\end{rmk}

\section{Equivariant cohomology on symplectic manifolds}
\label{sec:e-cohom-symplectic}

As we remarked at the beginning of the chapter, the localization formulas of the last section can be seen as generalizing a similar result showed by Duistermaat and Heckman \cite{duistermaat-heckman} in the context of Hamiltonian group actions on symplectic manifolds. This special case is of fundamental importance in physics, because this is the context in which classical Hamiltonian mechanics is constructed. In some special cases also the quantum theory can be formally given  such a structure, and thus some results from symplectic geometry can be extended to QM and QFT in general. We begin this section by quickly recalling some basic concepts about symplectic and Hamiltonian geometry, then we will describe how this can be seen as a special case of equivariant cohomology theory from the point of view of the localization formulas.

\subsection{Pills of symplectic geometry}

The notion of \emph{phase space} can be constructed in a basis-independent way in differential geometry through the definition of \emph{symplectic manifold}. We suggest for example \cite{marsden-ratiu, audin, dasilva} for a complete introduction to the subject.

\begin{defn} A \emph{symplectic manifold} is a pair \((M,\omega)\), where \(M\) is a \(2n\)-dimensional smooth manifold, and \(\omega\) is a \emph{symplectic form} on \(M\):
\begin{enumerate}[label=(\roman*)]
\item \(\omega\in \Omega^2(M)\);
\item \(d\omega = 0\);
\item \(\omega\) is non-degenerate.
\end{enumerate}
\end{defn}

The fact that \(M\) is even-dimensional is not really a requirement but a consequence of its symplectic structure. This is because any skew-symmetric bilinear map on a \(d\)-dimensional vector space can be represented in a suitable basis by the matrix
\begin{equation}
\label{symplectic-1}
\left( \begin{array}{c|cc}
\mathbf{0}_{k} & \mathbf{0} & \mathbf{0} \\
\hline
\mathbf{0} & \mathbf{0} & -\mathds{1}_{n} \\
\mathbf{0} & \mathds{1}_{n} & \mathbf{0} 
\end{array} \right)
\end{equation}
with \(2n+k=d\). To be non degenerate, it must be \(k=0\). The symplectic form is a skew-symmetric bilinear form on \(T_pM\) at any point \(p\in M\), so the even-dimensionality of \(M\) follows from its non-degeneracy. On manifolds, a stronger result than the above one holds: the so-called \emph{Darboux theorem}. It states that, for every point \(p\in M\), there exists an entire open neighborhood \(U_p\subseteq M\) and a coordinate system \(x:U_p \to \mathbb{R}^{2n}\) with respect to which \(\omega_{\mu\nu} = \omega(\partial_\mu,\partial_\nu)\) has the canonical form \eqref{symplectic-1}, with \(k=0\). The coordinates \(x\) are called \emph{Darboux coordinates}.\footnote{This means that all symplectic manifolds look locally as the prototype \(\mathbb{R}^{2n}\) with \(\omega = \sum_{i=1}^n dx^i\wedge dx^{i+n}\). This is a very strong property,  compared for example with the Riemannian case.} Notice that from the non-degeneracy of \(\omega\) we have a canonical choice for the volume form on \(M\), the so-called \emph{Liouville volume form}
\begin{equation}
\mbox{vol} := \frac{\omega^n}{n!} = \mathrm{Pf}||\omega^{(x)}_{\mu\nu}|| d^{2n} x = dp_1\wedge dp_2\wedge \cdots d p_n \wedge dq^1\wedge dq^2 \wedge \cdots \wedge dq^n ,
\end{equation}
where \((q^\mu,p_\mu)_{\mu=1,\cdots,n}\) are Darboux coordinates. The closeness of \(\omega\) implies that in some cases there can be a 1-form \(\theta \in \Omega^1(M)\) such that
\begin{equation}
d\theta = \omega .
\end{equation}
Such a 1-form, if it exists, is called \emph{symplectic potential}. In practice, sometimes it is useful to \emph{locally} define a symplectic potential even if \(\omega\) is not globally integrable. Isomorphisms of symplectic manifolds are called \emph{symplectomorphisms} or \emph{canonical transformations}, defined as  diffeomorphisms  that preserve the symplectic structure via pull-back. 

The standard example of a symplectic manifold is exactly the \emph{phase space} associated to some \(n\)-dimensional \emph{configuration space} \(Q\), \textit{i.e.}\ its cotangent bundle \(M:= T^* Q\). A point \(p\in Q\) represents the \lq\lq generalized position\rq\rq\ of the system with coordinates \(q(p)=(q^\mu(p))\) with \(\mu=1,\cdots,n\), and a point \(p\in T^* Q\) represents the \lq\lq generalized momentum\rq\rq, with coordinates \(\xi(p) := (q^\mu\circ \pi(p), \iota_\mu (p)) \equiv (q^\mu, p_\mu)\), where \(\pi:TQ\to Q\) is the projection and \(\iota_\mu \equiv \iota_{\partial/\partial q^\mu}\). The cotangent bundle has a canonical integrable symplectic form. In fact, the symplectic potential is the so-called \emph{tautological 1-form} given by the pull-back of the projection map, \(\theta:= \pi^* \in \Omega^1(T^*Q)\). In Darboux coordinates, at a point \(p\in T^*Q\),
\begin{equation}
\theta_p = \pi^*(p) = p_\mu dq^\mu
\end{equation}
where we denoted \(dq^\mu \equiv d(q\circ\pi)^\mu = d\xi^\mu\) with \(\mu=1\cdots,n\), as 1-forms on the cotangent bundle. The canonical symplectic form is then just \(\omega = d\theta\), and in Darboux coordinates
\begin{equation}
\label{eq:symplectic-darboux}
\omega = dp_\mu \wedge dq^\mu
\end{equation}
where again we simplified the notation setting \(dp_\mu \equiv d\xi^\mu\) for \(\mu=n+1,\cdots,2n\). Thus the canonical coordinates on the cotangent bundle are Darboux coordinates. One can show that canonical symplectic structures over diffeomorphic manifolds are \lq\lq canonically compatible\rq\rq, \textit{i.e.}\ if \(\phi:Q_1\to Q_2\) is a diffeomorphism, there is a lift of it as a symplectomorphism between \((T^* Q_1,\omega_1)\) and \((T^* Q_2, \omega_2)\). If we take \(Q_1 = Q_2\), this means that there is a group homomorphism
\begin{equation}
\text{Diff}(Q) \to \text{Symp}(T^*Q,\omega).
\end{equation}
This example showed that symplectic manifolds are the right generalization of the concept of phase space in a fully covariant setting. It is thus common to call functions on a symplectic manifold \emph{observables}.

Let us return to a generic symplectic manifold \((M,\omega)\). Giving to it some additional structure, it is possible to define on it \emph{dynamics} and \emph{symmetries} in the sense of classical mechanics. Naturally, we call symmetry of \((M,\omega)\)  a diffeomorphism \(\phi:M\to M\) that preserves the symplectic structure, \(\phi^*\omega = \omega\), that is a symplectomorphism. At the infinitesimal level, a diffeomorphism can be generated by the flow of a vector field \(X\in \Gamma(M)\), and the symmetry condition is rephrased to
\begin{equation}
\mathcal{L}_X \omega = 0.
\end{equation}
Such a vector field is called \emph{symplectic vector field}. It is easy to realize that a vector field is symplectic if and only if \(\iota_X\omega = \omega(X,\cdot)\) is closed, by Cartan's magic formula. More special vector fields are those for which \(\iota_X\omega\) is exact, so that it exists an observable \(f\in C^\infty(M)\) such that
\begin{equation}
df = - \iota_X \omega ,
\end{equation}
where the minus sign is conventional. The vector field \(X\) is called \emph{Hamiltonian vector field} associated to the observable \(f\). In components, 
\begin{equation}
\label{eq:ham-vf}
\partial_\mu f = \omega_{\mu\nu}X^\nu \qquad \text{or} \qquad X^\mu = \omega^{\mu\nu}\partial_\nu f ,
\end{equation}
where \(\omega^{\mu\nu}\) is the \lq\lq inverse\rq\rq\ of the symplectic form. Of course Hamiltonian vector fields are symplectic, and the flow of the Hamiltonian vector field \(X\) preserves the value of the Hamiltonian function \(f\), since \(\mathcal{L}_X(f) = X(f) = df(X) = \omega(X,X) = 0\). The flow of the Hamiltonian vector field is regarded as the \lq\lq time-evolution\rq\rq\ over the generalized phase space \(M\), generated by the observable \(f\).
\begin{defn} \label{def:hamiltonian-system} 
An \emph{Hamiltonian (or dynamical) system} is a tuple \((M,\omega, H)\), where \((M,\omega)\) is a symplectic manifold and \(H\in C^\infty(M)\) an observable called \emph{Hamiltonian}. 
The \emph{time-evolution} of points \(p\in M\) is defined by the flow of the Hamiltonian vector field \(X_H\) of \(H\), \[
p(t) := \gamma^H_p(t) \]
where \(\gamma^H_p\) is the integral curve of \(X_H\) with \(\gamma^H_p(0)=p\). In particular, the evolution of an observable \(f\in C^\infty(M)\) is regulated by the \emph{equation of motion} \[
\dot{f}(p) := (f\circ\gamma^H_p)'(0) \equiv \left. \mathcal{L}_{X_H}(f)\right|_p .\] \end{defn}

The equation of motion can be rewritten in a more usual way introducing the \emph{Poisson brackets} \(\lbrace\cdot,\cdot\rbrace:C^\infty(M)\times C^\infty(M)\to C^\infty(M)\) such that \(\lbrace f,g\rbrace := \omega(X_g,X_f)\), where \(X_f, X_g\) are the Hamiltonian vector fields of \(f\) and \(g\), respectively. In a chart and with respect to Darboux coordinates \((q^\mu,p_\mu)\) on \(M\), by the Darboux theorem the Poisson brackets take the usual form
\begin{equation}
\lbrace f,g\rbrace = \frac{\partial f}{\partial q^\mu}\frac{\partial g}{\partial p_\mu} - \frac{\partial g}{\partial q^\mu}\frac{\partial f}{\partial p_\mu} .
\end{equation}
With this definition we can write
\begin{equation}
\dot{f} = - \lbrace H, f\rbrace , \qquad \dot{q}^\mu = \frac{\partial H}{\partial p_\mu} , \qquad \dot{p}_\mu = - \frac{\partial H}{\partial q^\mu} ,
\end{equation}
recovering the Hamilton's equations for the Darboux coordinates. 
The Poisson brackets are anti-symmetric and satisfy the Jacobi identity, so this turns \((C^\infty(M),\lbrace\cdot,\cdot\rbrace)\) into a Lie algebra,\footnote{In fact this is a \emph{Poisson algebra}, \textit{i.e.}\ a Lie algebra whose brackets act as a derivation.} and one can check that there is a Lie algebra homomorphism
\begin{equation}
\begin{aligned}
(C^\infty(M),\lbrace\cdot,\cdot\rbrace) &\to (\text{Hamiltonian v.f.}, [\cdot,\cdot]) \\
f &\mapsto X_f ,
\end{aligned}
\end{equation}
where we also already used the fact that Hamiltonian vector fields form a Lie subalgebra with respect to the standard commutator on \(\Gamma(TM)\).

\medskip
We just reviewed that the concept of symmetry in symplectic geometry is correlated with the concept of dynamics on the symplectic manifold. The next fact that we need is to connect this formalism to the equivariant cohomology one, identifying these symmetries as generated by a \emph{group action} on \(M\). In particular, we would like to identify the Lie subalgebra of Hamiltonian vector fields as the Lie algebra of a Lie group that acts on the symplectic manifold. We can start thus the discussion of symmetry by declaring that \(M\) is a \(G\)-manifold with respect to a Lie group \(G\) of Lie algebra \(\mathfrak{g}\). Denoting  the \(G\)-action as \(\rho\), this is called \emph{symplectic} if it makes \(G\) act by symplectomorphisms on \((M,\omega)\), \textit{i.e.}\
\begin{equation}
\rho : G \to \text{Symp}(M,\omega) .
\end{equation}
We can characterize again infinitesimally this action by saying that \(\mathfrak{g}\) acts on \(\Omega(M)\) via symplectic vector fields: if \(A\in \mathfrak{g}\), the corresponding fundamental vector field preserves the symplectic structure, \(\mathcal{L}_A\omega = 0\). We are interested in the special case analogous to the one above, in which not only a fundamental vector field is symplectic, but it is also Hamiltonian. This forces a generalization of the concept of Hamiltonian function, because now there are more than one independent fundamental vector fields to take into account, if \(\dim(\mathfrak{g})>1\).
\begin{defn} The \(G\)-action \(\rho:G\to \text{Symp}(M,\omega)\) on the symplectic manifold \((M,\omega)\) is said to be an \emph{Hamiltonian action} if every fundamental vector field is Hamiltonian. In particular, there exists a \(\mathfrak{g}^*\)-valued function \(\mu\in \mathfrak{g}^*\otimes C^\infty(M)\) such that:
\begin{enumerate}[label=(\roman*)]
\item For every \(A\in \mathfrak{g}\), \(\mu(A)\equiv \mu_A\in C^\infty(M)\) is the Hamiltonian function with respect to \(\underline{A}\),
\[ d\mu_A = -\iota_A \omega = \omega(\cdot,\underline{A}) .\]
\item It is \(G\)-equivariant with respect to the canonical (co)adjoint action of \(G\) on \(\mathfrak{g}\ (\mathfrak{g}^*)\),\footnote{If, for every \(g\in G\), \(Ad_g:G\to G\) is the action by conjugation, the adjoint action \(Ad_{*g}\) on \(\mathfrak{g}\) is the push-forward of \(Ad_g\), while the coadjoint action \(Ad^*_g\) on \(\mathfrak{g}^*\) is the pull-back of \(Ad_{g^{-1}}\). If \(g=\exp(tA)\) for some \(A\in \mathfrak{g}\), differentiating one gets the infinitesimal actions of \(\mathfrak{g}\) on \(\mathfrak{g}\) and \(\mathfrak{g}^*\), \(ad_{A}(B) = [A,B]\) and \(ad^*_A(\eta) := \eta([\cdot,A])\). } so for any \(g\in G\)
\[ \mu \circ Ad_{*g} = \rho_g^*\circ \mu \qquad \text{or} \qquad  Ad^*_g \circ \mu  =  \mu \circ \rho_g ,\]
where in the first equation \(\mu\) is considered as \(\mathfrak{g}\xrightarrow{\mu} C^\infty(M)\), in the second one as \(M \xrightarrow{\mu} \mathfrak{g}^*\). If \(G\) is connected, this is equivalent to requiring \(\mu : \mathfrak{g}\to C^\infty(M)\) to be a Lie algebra anti-homomorphism with respect to the Poisson brackets, \[
\mu_{[A,B]} = \lbrace \mu_B, \mu_A\rbrace \qquad \forall A,B\in \mathfrak{g} .\]
\end{enumerate}
The map \(\mu\) is called \emph{moment map}, and \((M,\omega,G,\mu)\) is called \emph{Hamiltonian \(G\)-space}.
\end{defn}

In general the job of the moment map is to collect all the \lq\lq Hamiltonians\rq\rq\ with respect to which the system can flow. There are \(\dim(G)\) independent of them, one for every generator. In the 1-dimensional case, where \(G=U(1)\) (or its non-compact counterpart \(G=\mathbb{R}\)), the moment map produces only one independent Hamiltonian, \(\mu_T \equiv H\), and the above definition reduces to the Hamiltonian system \((M,\omega,H)\) of definition \ref{def:hamiltonian-system}. Notice that for any Hamiltonian structure we build on \((M,\omega)\), its flow preserves the symplectic form and thus the canonical Liouville volume form \(\omega^n/n!\). This is the content of the so-called \emph{Liouville theorem}.


%

\subsection{Equivariant cohomology for Hamiltonian systems}
\label{subsec:eq-cohom-symplectic}

We can first generalize what we noticed in examples \ref{ex:U(1)-cohom} and \ref{ex:S2-ABBV}, in the case of a circle action on a symplectic manifold \((M,\omega)\). In the above examples the manifold was the 2-sphere \(\mathbb{S}^2\) and the symplectic form was the canonical volume form. Rephrased in terms of symplectic geometry, the existence of an \emph{equivariantly closed extension} \(\tilde{\omega}\) of the symplectic form is the condition of \(U(1)\) acting in an Hamiltonian way, since
\begin{equation}
d_C \tilde{\omega} = d_C(\omega + H) = \iota_T \omega + dH = 0
\end{equation}
is satisfied if and only if \(dH=-\iota_T \omega\). This is readily generalizable to the multidimentional case, so that we can describe the Hamiltonian \(G\)-space \((M,\omega,G,\mu)\) and its classical mechanics in equivariant cohomological terms. In fact, we can always find an equivariantly closed extension of the symplectic form in \(\Omega_G(M)\),
\begin{equation}
\tilde{\omega} := 1\otimes \omega - \phi^a \otimes \mu_a
\end{equation}
where \(\mu_a \equiv \mu_{T_a}\in C^\infty(M)\) and \(T_a\) are the dual basis elements with respect to the generators \(\phi^a\) of \(S(\mathfrak{g}^*)\). It is straightforward to check that \(\tilde{\omega}\in \Omega_G(M)\) is indeed \(G\)-invariant, and closed with respect to \(d_C\) thanks to the Hamiltonian property of the \(G\)-action, \(d\mu_a = -\iota_a \omega\).

In the language of \(G\)-equivariant bundles (see Appendix \ref{app:char-classes}), the symplectic structure on \(M\) can be seen as the presence of a principal \(U(1)\)-bundle \(P\to M\) whose connection 1-form is the symplectic potential \(\theta\) (that has not always a global trivialization on \(M\)), and whose curvature is the symplectic 2-form \(\omega = d\theta\) (that instead transforms covariantly on \(M\)). \(G\) acts symplectically if also \(\theta\) is \(G\)-invariant, 
\begin{equation}
\mathcal{L}_X \theta = 0 \quad \Rightarrow \quad \mathcal{L}_X \omega= 0 \qquad \forall X\in \mathfrak{g} ,
\end{equation}
so that \(P\to M\) is a \(G\)-equivariant bundle. Thus, this equivariant extension to the curvature \(\omega\) is the same as in \cite{bott-tu-classes,berline-vergne-loc}.

We return for a moment to the symplectic geometric interpretation, to describe the results of Duistermaat and Heckman related to the localization formulas that we described in the last section. In \cite{duistermaat-heckman} they proved an important property of the Liouville measure in the presence of an Hamiltonian action by a torus \(T\) on \((M,\omega)\). Namely, defining a measure on \(\mathfrak{g}^*\) as the \emph{push-forward} of the Liouville measure,
\begin{equation}
\mu_*\left(\frac{\omega^n}{n!}\right)(U) = \int_{\mu^{-1}(U)} \frac{\omega^n}{n!} \qquad \forall U\subseteq M \ \text{measurable},
\end{equation}
they proved that \(\mu_*\left(\omega^n/n!\right)\) is a \emph{piecewise polynomial function}.\footnote{To be more precise, denoting \(\mu_*\left(\omega^n/n!\right) = f d\xi\) with \(d\xi\) the standard Lebesgue measure on \(\mathfrak{g}^* \cong \mathbb{R}^{\dim\mathfrak{g}}\), the function \(f\) is piecewise polynomial.} This, and an application of the stationary phase approximation showed a localization formula for the oscillatory integral
\begin{equation}
\int_M \frac{\omega^n}{n!} \exp{\left(i\mu_X\right)}
\end{equation}
for every \(X\in \mathfrak{t}:=Lie(T)\) with non-null weight at every fixed point of the \(T\)-action. This can be viewed as the Fourier transform of the Liouville measure, or as the \emph{partition function} of a 0-dimensional QFT with target space \(M\). Let the fixed point locus \(F\) be the union of compact connected symplectic manifolds \(M_k \hookrightarrow M\) of even codimension \(2n_k\), and denote \((m_{kl})_{l=1,\cdots,n_k}\) the weights of the \(T\)-action at a tangent space of a fixed point \(p\in M_k\).\footnote{The components of the fixed point set being symplectic is not an assumption, but a consequence of the Hamiltonian action. See \cite{audin}, proposition IV.1.3.} Then the \emph{Duistermaat-Heckman (DH) localization formula} is
\begin{equation} \boxed{
\int_M \frac{\omega^n}{n!} \exp{\left(i\mu_X\right)} = \sum_k \frac{\mbox{vol}(M_k) \exp{(i\mu_X(M_k))}}{\prod_l^{n_k} (m_{kl}(X)/2\pi)}
}
\end{equation}
where \(\mu_X(M_k)\) denotes the common value of \(\mu_X\) at every point in \(M_k\). 

In equivariant cohomological terms, we can see the above result as a localization formula for the integral of an equivariantly closed form. In fact, if we fix the \(U(1)\) symmetry subgroup generated by \(X\in \mathfrak{t}\), and consider the Cartan model defined by the differential \(d_C = d + i\iota_X\), the LHS can be rewritten as
\begin{equation}
\label{hamiltonian-1}
\int_M \frac{\omega^n}{n!}\exp(i\mu_X) = \int_M \exp(\omega + i\mu_X) ,
\end{equation}
analogously to what we did in Example \ref{ex:S2-ABBV}, and this is clearly the integral of an equivariantly closed form with respect to the differential \(d_C\). To see the correspondence with the ABBV formula, let us examine the case of a circle action and discrete fixed point locus \(F\). In this case we have only one Hamiltonian function \(H:= \mu_X\), the weights are just the exponents \(m_1,\cdots,m_n\) of the circle action, and the sum over \(k\) runs over the isolated fixed points. The DH formula thus recovers exactly the localization formula of theorem \ref{thm:ABBV-circle}:
\begin{equation}
\int_M \exp(\omega+iH) = (2\pi)^n \sum_{p\in F} \frac{e^{iH(p)}}{m_1(p)\cdots m_n(p)} .
\end{equation}
As we remarked in \eqref{eq:euler-weights}, the denominator can be expressed as the equivariant Euler class of the normal bundle to \(F\) (that is just the tangent bundle since \(F\) is 0-dimensional), recovering the DH formula as a special case of the ABBV localization formula for torus actions. See also \cite{szabo} for an explicit correspondence between the two. We wish only to remark again that, especially in the context of Hamiltonian mechanics, this localization formula can be seen as the result of an \lq\lq exact\rq\rq\ saddle-point approximation on the partition function \eqref{hamiltonian-1}. This is the point of view we are going to take in the next chapters, when we are going to discuss the generalization of this formula to higher-dimensional QFT, where the integral of the partition function is turned into an infinite-dimensional \emph{path integral}.

To see the correspondence with the saddle-point approximation, we recall that the isolated fixed points of the \(U(1)\)-action are those in which \(\underline{X}=0\), so \(dH = 0\), and thus they are the critical points of the Hamiltonian. We need to assume that the function \(H\) is \emph{Morse}, so that these fixed points are non-degenerate, \textit{i.e.}\ the Hessian \(\mathrm{Hess}_{p_0}(H)_{\mu\nu} = \partial_\mu\partial_\nu H(p_0)\) at a given \(p_0\in F\) has non-null determinant.\footnote{This subject was in fact firstly connected with Morse theory by Witten in \cite{Witten-susy_morse}, where localization is applied in the context of supersymmetric QM to prove Morse inequalities. We do not need to deepen this point of view for what follows, but a discussion about Morse theory and its connection with the DH formula can be found in \cite{szabo}, and references therein.} This Hessian can be expressed in terms of the exponents \(m_k(p_0)\) via an equivariant version of the Darboux theorem \cite{guillemin-sternberg, audin}: at any fixed point we can choose Darboux coordinates in which the symplectic form takes its canonical form \eqref{eq:symplectic-darboux}, and moreover the action of the fundamental vector field at that tangent space decomposes as in \eqref{eq:circle-representation}. The latter can then be expressed as \(n\) canonical rotations of the type 
\begin{equation}
\underline{X} = \sum_{\mu=1}^n im_\mu\left( q^\mu \frac{\partial}{\partial p_\mu} - p_\mu \frac{\partial}{\partial q^\mu}\right)
\end{equation} 
with different weights \(m_k\). By the general form of the Hamilton's  equations \eqref{eq:ham-vf}, this means that the Hamiltonian near the isolated fixed point \(p_0\in F\) can be expanded as
\begin{equation}
H(x) = H(p_0) + \frac{1}{2}\sum_{\mu=1}^{n} im_\mu(p_0) \left( p_\mu(x)^2 + q^\mu(x)^2 \right) + \cdots
\end{equation}
Plugging this expansion into the oscillatory integral, we get the saddle-point approximation
\begin{equation}
\begin{aligned}
\int_M d^np d^nq\ e^{iH(p,q)} &\approx \sum_{p_0\in F} e^{iH(p_0)} \prod_{\mu=1}^n \left( \int dp\ e^{-\frac{m_\mu(p_0)}{2} p^2} \int dq\ e^{-\frac{m_\mu(p_0)}{2} q^2} \right) \\
&\approx \sum_{p_0\in F} e^{iH(p_0)}\frac{(2\pi)^{n}}{\prod_\mu m_\mu(p_0)} 
\end{aligned}
\end{equation}
that, again, is exactly the result of the localization formula above. This motivates in the context of Hamiltonian mechanics, and generalization to infinite-dimensional case, that the denominators appearing in these formulas are exactly the \lq\lq 1-loop determinants\rq\rq\ of a would-be semiclassical approximation to the partition function. More aspects of the equivariant theory in contact with symplectic geometry can be found in \cite{guillemin-sternberg-book}.

\chapter{Supergeometry and supersymmetry}
\label{cha:susy}

\section{Gradings and superspaces}
\label{sec:superspaces}
We give some definitions concerning \emph{graded} spaces and \emph{super}-spaces, that are useful for many applications of the localization theorems in physics. In particular, we will see how to translate the problem of integration of differential forms in the context of supergeometry, and how this is useful to prove the ABBV localization formula for circle actions. Also, in the next chapter we will apply this theorem to path integrals in QM and QFT, where the coordinates over which one integrates are those of a \lq\lq field space\rq\rq\ over a given manifold. To construct a suitable Cartan model over this kind of spaces, it is necessary to introduce a graded structure, that physically means to distinguish between \emph{bosonic} and \emph{fermionic} fields, and some operation that acts as a \lq\lq Cartan differential\rq\rq\ transforming one type of field into the other. These structures arise in the context of \emph{supersymmetric field theories}, or in the context of \emph{topological field theories}, and the differentials here are called \emph{supersymmetry} transformations or \emph{BRST} transformations. The precise mathematics behind this is a great subject and we do not seek to be complete here, we just give some of the basic concepts that are necessary to understand what follows. For a more extensive review of the subject, we suggest for example \cite{cattaneo-supergeometry}.

\subsection{Definitions}

To understand the concept of a supermanifold, we need first to recall the linearized case. We already introduced a \emph{graded module} or \emph{graded algebra} \(V\) over a ring \(R\), that is a collection of \(R\)-modules \(\lbrace V_n\rbrace_{n\in\mathbb{Z}}\) such that \(V = \bigoplus_{n\in\mathbb{Z}} V_n\). If \(V\) is an algebra, it must also satisfy \(V_n V_m \subseteq V_{n+m}\). An element \(a\in V_n\) for some \(n\) is called \emph{homogeneous} of \emph{degree} \(\mathrm{deg}(a) \equiv |a| := n\). We now can specialize to the case of \emph{super}- vector spaces (or modules) and \emph{super}-algebras.
\begin{defn} A \emph{super vector space} is a \(\mathbb{Z}_2\)-graded vector space \(V=V_0 \oplus V_1\) where \(V_0,V_1\) are vector spaces. Its \emph{dimension} as a super vector space is defined as \(\dim{V} := \left( \dim{V_0} | \dim{V_1}\right)\). A \emph{superalgebra} is a super vector space \(V\) with the product satisfying \[ 
V_0 V_0 \subseteq V_0 \ ; \quad V_0 V_1 \subseteq V_1 \ ; \quad V_1 V_1 \subseteq V_0 .\]
A \emph{Lie superalgebra} is a superalgebra where the product \( [\cdot,\cdot]:V\times V\to V\), called \emph{Lie superbracket}, satisfies also \[
\begin{aligned}
&[a,b] = -(-1)^{|a||b|}[b,a] & (\mathrm{supercommutativity}), \\
&(-1)^{|a||c|}[a,[b,c]] + (-1)^{|b||c|}[c,[a,b]] + (-1)^{|a||b|}[b,[c,a]] =0 & (\mathrm{super\ Jacobi\ identity}).
\end{aligned}
\]
\end{defn} 

\begin{defn} The \emph{k-shift} of a graded vector space \(V\) is the graded vector space \(V[k]\) such that \((V[k])_n = V_{n+k}\) \(\forall n\in \mathbb{Z}\). \end{defn}

A few remarks are in order. First, it is clear that every graded vector space has naturally also the structure a super vector space, if we split its grading according to \lq\lq parity\rq\rq:
\begin{equation}
\label{eq:Z-Z2-grading}
V_{even} := \bigoplus_{n\in 2\mathbb{Z}} V_n \ , \qquad V_{odd} := \bigoplus_{n\in 2\mathbb{Z}+1} V_n .
\end{equation}
In physics, the \(\mathbb{Z}\)-grading occurs on the \lq\lq field space\rq\rq\ as the so-called \emph{ghost number}, while the \(\mathbb{Z}_2\)-grading with respect to parity distinguish between \emph{bosonic} and \emph{fermionic} coordinates. Second, we notice that every vector space \(V\) can be considered as a (trivial) super vector space, if we think of it as \(V=V\oplus 0\) in even degree or \(V[1]=0\oplus V\) in odd degree. Notice that the even/odd parts of a super vector space can be considered as eigenspaces of an automorphism \(P:V\to V\) such that \(P^2=id_V\). In this sense, a super vector space is a pair \((V,P)\) made by a vector space and the given automorphism \(P\).

Morphisms of graded vector spaces are graded linear maps, \textit{i.e.} grading preserving maps:
\begin{defn} A \emph{graded linear map} \(f\) between graded vector spaces \(V\) and \(W\) is a collection of linear maps \(\lbrace f_k : V_k \to W_k \rbrace_{k\in\mathbb{Z}}\). A linear map of \emph{k-degree} is a graded linear map \(f:V\to W[k]\). \end{defn}

Now we can turn to the non-linear case and consider \emph{supermanifolds}.\footnote{Historically two (apparently) different concepts of \emph{supermanifolds} and \emph{graded manifolds} were firstly developed. They both aimed to generalizing the mathematics of manifolds to a non-commutative setting, following different approaches. Eventually it was proven in \cite{batchelor} that their definitions are equivalent.} Locally, they can be thought as extensions of a manifold via \lq\lq anticommuting coordinates\rq\rq: if we take an open set \(U\subset \mathbb{R}^n\) and a set of coordinates \(\lbrace x^\mu:U\to\mathbb{R}\rbrace_{\mu=1,\cdots,m}\), we can consider a set of additional coordinates \(\lbrace \theta^i\rbrace_{i=1,\cdots,n}\) with the algebraic properties
\begin{equation}
\label{eq:graded-coordinates}
x^\mu \theta^i = \theta^i x^\mu \ , \qquad \theta^i \theta^j = - \theta^j \theta^i .
\end{equation}
The anticommuting \(\lbrace \theta^i\rbrace\) can be thought as generators of \(\bigwedge (V^*)\) for some vector space \(V\), and the product between them and coordinates of \(C^\infty(U)\) is then interpreted as a tensor product in \( C^\infty(U) \otimes \bigwedge (V^*)=:C^\infty (U\times V[1])\).\footnote{Being generators of an exterior algebra, \(\theta^i\) are called \lq\lq Grassmann-odd\rq\rq\ coordinates, while \(x^\mu\) are called \lq\lq Grassmann-even\rq\rq\ consequently. This terminology is commonly inherited by every graded object (vector fields, forms, etc.) on the supermanifold.}  If we then patch together different open sets we get globally a manifold structure, with a modified atlas made by a \emph{graded} ring of local functions \(C^\infty (U\times V[1])\). More formally, we define:
\begin{defn} A (smooth) \emph{supermanifold} \(SM\) of dimension \((m|n)\) is a pair \((M,\mathcal{A})\), where \(M\) is a \(C^\infty\)-manifold of dimension \(m\), and \(\mathcal{A}\) is a sheaf of \(\mathbb{R}\)-superalgebras  that makes \(SM\) locally isomorphic to \[
\left( U, C^\infty(U)\otimes \bigwedge(V^*) \right) \]
for some \(U\subseteq \mathbb{R}^m\) open and some vector space \(V\) of  finite dimension \(\dim{V}=n\). \(M\) is called the \emph{body} of \(SM\) and \(\mathcal{A}\) is called the \emph{structure sheaf} (or, sometimes, \lq\lq soul\rq\rq) of \(SM\).\footnote{Note that also a regular d-dimensional smooth manifold can be viewed as a pair \((M,\mathcal{O}_M)\) composed by a topological space \(M\) (Hausdorff and paracompact) with a structure sheaf of local functions \(\mathcal{O}_M:\ \mathcal{O}_M(U)=C^\infty(U)\) for every \(U\subseteq M\) open, such that locally every \(U\) is isomorphic to a subset of \(\mathbb{R}^d\).}
\end{defn}

We just associated to a real manifold \(M\) a graded-commutative algebra \(C^\infty(SM)\) of functions over \(SM\).
Locally in a patch \(U\subseteq M\), this matches the idea above of having coordinate systems as tuples \((x^\mu,\theta^i)_{\stackrel{\mu = 1,\cdots,m}{i=1,\cdots,n}}\) with the property (\ref{eq:graded-coordinates}). In particular, any local function \(\Phi\in \mathcal{A}(U)\) can be trivialized  with respect to the graded basis of \(\bigwedge(V^*)\):
\begin{equation}
\label{eq:superfield}
\Phi(x,\theta) = \Phi^{(0)}(x) + \Phi_i^{(1)}(x)\theta^i + \Phi_{ij}^{(2)}\theta^i\wedge\theta^j +\cdots + \Phi^{(n)}_{i_1,\cdots,i_n} \varepsilon^{i_1 \cdots i_n} \theta^{1}  \wedge \cdots \wedge \theta^{n}
\end{equation}
where \(\Phi^{(l)}_{i_1\cdots i_l}\in C^\infty(U)\) \(\forall l\in\lbrace 0,\cdots,n\rbrace\). The restriction to the zero-th degree \(\epsilon: \mathcal{A}\to C^\infty_M\) such that \(\epsilon(\Phi) :=\Phi^{(0)}\) is usually called the \emph{evaluation map}.
\begin{ex} To every super vector space \(V = V_0 \oplus V_1\) we can associate the supermanifold \[
\hat{V} = \left( V_0 , C^\infty(V_0) \otimes \Lambda(V_1^*)\right) \cong \mathbb{R}^{\dim(V_0)|\dim(V_1)}.\]
More generally, to every vector bundle \(E\to M\) with sections \(\Gamma(E)\) we can associate the \emph{odd vector bundle}, denoted \(\Pi E\) or \(E[1]\), that is the supermanifold with body \(M\) and structure sheaf \(\mathcal{A}=\Gamma\left(\bigwedge E^*\right)\).  The \emph{odd tangent bundle} \(\Pi TM\) is the supermanifold with \(\mathcal{A}=\Gamma\left( \bigwedge T^*M\right)\), \textit{i.e.}\ globally the functions here are the differential forms on \(M\), \(C^\infty(\Pi TM) = \Omega(M)\). Coordinates on \(\Pi TM\) are just \((x^\mu , dx^\mu)_{\mu=1\cdots,m}\), exactly as the coordinates on the tangent bundle \(TM\), but now we consider them as generators of a graded algebra. 
\end{ex}

Morphisms of supermanifolds can be given in terms of local morphisms of superalgebras, that respect compatibility between different patches. In particular, a morphism \((f,f^{\#}):SM\to SN\) is a pair such that \(f:M\to N\) is a diffeomorphism, and for every \(U\subseteq M\) there is a morphism of superalgebras \(f_U^{\#}:\mathcal{A}_M(U)\to \mathcal{A}_N(f(U))\) that respects \(f^{\#}_V \circ \mathrm{res}_{U,V} = \mathrm{res}_{f(U),f(V)} \circ f^{\#}_U\), where \(\mathrm{res}_{U,V}\) is the restriction to a subset \(V\subseteq U\). In less fancy words, if \(\dim{SM}=(m|p)\) and \(\dim{SN}=(n|q)\), a local coordinate system \((x,\theta)\) in \(SM\) is mapped through \(n\) functions \(y^\nu = y^\nu(x,\theta)\) and \(q\) functions \(\varphi^j = \varphi^j(x,\theta)\) to a coordinate system \((y,\varphi)\) of \(SN\).

A vector field \(X\) on a supermanifold \(SM\), or \emph{supervector field}, is a derivation on \(C^\infty (SM)\). Locally, considering \(U\subseteq M\) open and \(\mathcal{A}(U)=C^\infty(U)\otimes \bigwedge(V^*)\), it can be expressed with respect to a coordinate system \((x,\theta)\) as
\begin{equation}
\label{eq:supervector-field}
X =X^\mu (x,\theta) \frac{\partial}{\partial x^\mu} + X^i (x,\theta) \frac{\partial}{\partial \theta^i} ,
\end{equation}
where \((\partial / \partial x^\mu)\) acts as the corresponding vector field in \(\Gamma(TM)\) on the \(C^\infty(U)\) components and acts trivially on the odd coordinates \(\theta^i\); \((\partial / \partial \theta^i)\) acts  trivially on \(C^\infty(U)\), and as an \emph{interior multiplication} by the dual basis vector \(u_i \in V\): \(\frac{\partial}{\partial \theta^i}\theta^j := \theta^j(u_i) = \delta^j_i\). \(X^\mu , X^i\) are local sections in \(\mathcal{A}(U)\). Supervectors on \(SM\) form the tangent bundle \(TSM\). Notice that, in particular \((\partial / \partial x^\mu)\) preserves the grading of an homogeneous function, \textit{i.e.} it is a derivation of degree 0, while \((\partial / \partial \theta^i)\) shifts the grading by -1.
\begin{defn} A \emph{graded vector field} of degree \(k\) on \(SM\) is a graded linear map \(X: C^\infty(SM) \to C^\infty(SM)[k]\) that satisfies the graded Leibniz rule: \[ 
X(\phi \psi) = X(\phi)\psi + (-1)^{k|\phi|}\phi X(\psi)  \]
for any \(\phi,\psi\in \mathcal{A}\) of pure degree.
The \emph{graded commutator} between graded vector fields \(X,Y\) is defined as \[ 
[X,Y] := X\circ Y - (-1)^{|X||Y|}Y\circ X . \]\end{defn}
From what we said above, partial derivatives \((\partial/\partial x^\mu)\) with respect to even coordinates commute between each other, while \((\partial/\partial \theta^i)\) anticommute, being respectively graded vector fields of degree 0 and -1. Then from (\ref{eq:supervector-field}) and (\ref{eq:superfield}) we see that any supervector field can be decomposed with respect to the \(\mathbb{Z}_2\)-grading given by the parity, as the sum \(X=X_{(0)}+X_{(1)}\) of an even (\emph{bosonic}) and an odd (\emph{fermionic}) vector field. This makes \(\left( \Gamma(TSM),[\cdot,\cdot]\right)\) into a Lie superalgebra. The \emph{value} at a point \(p\in M\) of a supervector field \(X\in\Gamma(TSM)\) is defined through the evaluation map:
\begin{equation}
X_p (\Phi) := \epsilon_p (X(\Phi)) = \left[ X^\mu (x,\theta) \frac{\partial \Phi}{\partial x^\mu} + X^i (x,\theta) \frac{\partial \Phi}{\partial \theta^i} \right]_{\stackrel{x=x(p)}{\theta=0}} .
\end{equation}
Clearly a super vector field \(X\) is not determined by its values at points, since the evaluation map throws away all the dependence on the Grassmann-odd coordinates \(\theta^i\) in the coefficient functions \(X^\mu, X^i\). This means that at every point \(p\in M\), \(T_p SM\) is a super vector space generated by the symbols \(\partial / \partial x^\mu, \partial/\partial \theta^i\) of opposite degrees, with \emph{real} coefficients. We collect this result in the following proposition.

\begin{prop}
\label{prop:isom-supervectorbundles}
Let \(SM=(M,\mathcal{A})\) be a supermanifold such that for every chart \(U\subseteq M\) \(\mathcal{A} = C^\infty(U) \otimes \Lambda(V^*)\). Then at every point \(p\in M\), \(T_p SM \cong T_p M \oplus V[1]\) as real super vector spaces. For an odd vector bundle \(\Pi E\) this specializes as \(T_p \Pi E \cong T_p M \oplus E_p[1]\).
\end{prop}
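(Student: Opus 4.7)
The plan is to reduce to a local coordinate computation: fix $p\in M$, choose a chart $U\subseteq M$ around $p$ with even coordinates $(x^\mu)_{\mu=1,\dots,m}$ and odd coordinates $(\theta^i)_{i=1,\dots,n}$ (a basis of $V^*$), and analyze supervector fields via their local form \eqref{eq:supervector-field} together with the evaluation map.

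First, I would use the graded expansion \eqref{eq:superfield} of the coefficients $X^\mu, X^i \in \mathcal{A}(U)$ to observe that evaluating at $p$ — which sets $x=x(p)$ and $\theta=0$ — kills every term of positive order in $\theta$. Only the constant components $X^\mu(x(p),0),X^i(x(p),0)\in\mathbb{R}$ survive, so $T_pSM$ is generated over $\mathbb{R}$ by the $m+n$ symbols $\{\partial/\partial x^\mu|_p\}\cup\{\partial/\partial\theta^i|_p\}$. Linear independence follows because the two families act by structurally different derivations (one preserves, the other shifts Grassmann degree by $-1$), and within each family they are distinguished by their action on the coordinate functions themselves.

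Next, I would split the generators according to parity. The even generators $\partial/\partial x^\mu|_p$ act trivially on the Grassmann-odd coordinates and reduce to the standard partial derivatives on $C^\infty(U)$, hence their span is canonically $T_pM$ sitting in the even component. The odd generators $\partial/\partial\theta^i|_p$ are determined by $(\partial/\partial\theta^i)\theta^j=\delta^j_i$; this pairing identifies $\partial/\partial\theta^i|_p$ with the dual basis vector $u_i\in V$, and since these generators live in odd degree their span is canonically $V[1]$. Combining the two summands yields $T_pSM\cong T_pM\oplus V[1]$ as real super vector spaces, the decomposition being coordinate-independent because each summand is characterized by an intrinsic property (parity of the generators, combined with the already intrinsic splitting $\mathcal{A}\cong C^\infty_M\oplus\ker\epsilon$ that the evaluation map provides).

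Finally, specializing to $SM=\Pi E$, any local trivialization of $E$ near $p$ identifies the typical fiber of $\bigwedge E^*$ with $\bigwedge E_p^*$, so $V=E_p$ and one recovers $T_p\Pi E\cong T_pM\oplus E_p[1]$. I do not expect any genuine obstacle here: the whole argument is careful bookkeeping of degrees and evaluations. The only point that deserves attention is checking that the identification of $\partial/\partial\theta^i|_p$ with $u_i\in V[1]$ is compatible with changes of chart, which follows from the fact that the coefficients $X^i$ transform as sections of $V^*$ (respectively $E^*$) and that the shift $[1]$ is a functorial operation on graded vector spaces.
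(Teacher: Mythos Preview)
Your argument is correct and follows exactly the approach the paper takes: the proposition is stated without a separate proof, merely collecting the observation made just before it that the evaluation map kills all $\theta$-dependence in the coefficients, leaving $T_pSM$ generated over $\mathbb{R}$ by the symbols $\partial/\partial x^\mu$ and $\partial/\partial\theta^i$ of opposite parity. Your write-up is simply a more detailed unpacking of that same reasoning, including the explicit identification of the odd summand with $V[1]$ via the pairing $\partial/\partial\theta^i\,\theta^j=\delta^j_i$ and the specialization to $\Pi E$.
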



Notice that one has always really both a \(\mathbb{Z}\) and a \(\mathbb{Z}_2\) grading of functions and supervector fields, analogously to the remark (\ref{eq:Z-Z2-grading}). As already mentioned, in field theory and in particular in the BRST formalism, the first one is called \emph{ghost number}, while the second one is the distinction between \emph{bosonic} and \emph{fermionic} degrees of freedom in the theory. The physical (\textit{i.e.} gauge-invariant) combinations are those of ghost number zero.

From the point of view of equivariant cohomology, it is very useful to relate the algebraic models we saw in Chapter \ref{cha:equivariant cohomology} to these graded manifold structures. In particular, in field theory we interpret the graded complex of fields as a Cartan model, with a suitable graded equivariant differential given in terms of supersymmetry or BRST transformations. Generically, a \emph{differential} in supergeometry can be interpreted as a special supervector field on a supermanifold:
\begin{defn} A \emph{cohomological vector field} \(Q\) on a supermanifold \(SM\) is a graded supervector field of degree +1 satisfying \[
[Q,Q]=0 . \] \end{defn}
It is immediate that any cohomological vector field corresponds to a \emph{differential} on the algebra of functions \(C^\infty(SM)\), since being it of degree +1, \(Q\circ Q= (1/2)[Q,Q]=0\). For example, consider the de Rham differential \(d:\Omega(M)\to \Omega(M)\) on a regular smooth manifold \(M\). It corresponds to the cohomological vector field on the odd tangent bundle \(\Pi TM\) given in local coordinates by
\begin{equation}
d = \theta^\mu \frac{\partial}{\partial x^\mu} ,
\end{equation}
where now \(\theta^\mu \equiv dx^\mu\) are odd coordinate functions on \(\Pi TM\). Similarly, the \emph{interior multiplication} \(\iota_X :\Omega(M)\to \Omega(M)\) with respect to some vector field \(X\in \Gamma(TM)\) is a nilpotent supervector field on \(\Pi TM\) of degree -1,
\begin{equation}
\iota_X = X^\mu \frac{\partial}{\partial \theta^\mu} .
\end{equation}

\subsection{Integration}

In the following we will use this graded machinery to translate the problem of integration of differential forms \(\Omega(M)\) on a smooth manifold \(M\), to an integration over the related odd tangent bundle \(\Pi TM\). For this, we need to consider differential forms on a supermanifold \(SM\). If \(SM\) has local coordinates \((x^\mu,\theta^i)\), we can locally form an algebra generated by the 1-forms \(dx^\mu , d\theta^i\), where now \(d\) is the de Rham differential on \(SM\), acting as a cohomological vector field on \(\Pi TSM\):
\begin{equation}
d = dx^\mu \partial_{x^\mu} + d\theta^i \partial_{\theta^i} .
\end{equation}
The odd tangent bundle \(\Pi TSM\) has thus coordinates \((x^\mu,\theta^i,dx^\mu,d\theta^i)\), where now \(dx^\mu\) is odd whereas \(d\theta^i\) is even.\footnote{To be more precise, the algebra of functions locally generated by \((x^\mu,\theta^i,dx^\mu,d\theta^i)\) on \(\Pi TSM\) has \emph{bi-grading}, \emph{i.e.}\ it inherits a \(\mathbb{Z}_2\) grading from the original supermanifold \(SM\) and a \(\mathbb{Z}\) grading from the action of the de Rham differential (the form-degree). The coordinates have thus bi-degrees \[
x^\mu : (\text{even},0), \qquad \theta^i : (\text{odd},0), \qquad
dx^\mu : (\text{even},1), \qquad d\theta^i : (\text{odd},1), \]
which result in a total even degree for  \(d\theta^i\) and a total odd degree for \(dx^\mu\).} From this point of view, the de Rham differential acts as a \lq\lq supersymmetry\rq\rq\ transformation:
\begin{equation}
d: \left\lbrace \begin{aligned} x^\mu &\mapsto dx^\mu \\
\theta^i &\mapsto d\theta^i \end{aligned} \right.
\end{equation} 
exchanging bosonic coordinates with fermionic coordinates. Since the odd 1-forms \(d\theta^i\) are commuting elements, it is not possible to construct, at least in the usual sense, a form of \lq\lq top degree\rq\rq on \(SM\). We will thus interpret integration over the odd coordinates by the purely algebraic rules of Berezin integration for Grassmann variables:
\begin{equation}
\label{eq:berezin-rules}
\int d\theta^i \theta^i = 1 , \qquad \int d\theta^i 1 = 0 ,
\end{equation}
and such that Fubini's theorem holds for multidimensional integrals. We see that symbolically
\begin{equation}
\int d\theta^i \leftrightarrow \frac{\partial}{\partial \theta^i} ,
\end{equation}
and in particular \(\int d\theta^i \frac{\partial}{\partial \theta^i}\Phi(\theta^i) =0\) always holds. We will use the important property of the Berezin integral:
\begin{equation}
\label{eq:gaussian-grassman}
\int d^n\theta\ e^{-\theta^i A_{ij} \theta^j} = \mathrm{Pf}(A)
\end{equation}
where \(d^n\theta \equiv d\theta^1 d\theta^2 \cdots d\theta^n\), to be compared with the usual Gaussian integral for real variables
\begin{equation}
\label{eq:gaussian-real}
\int d^nx\ e^{-x^i A_{ij} x^j} = \frac{\pi^{n/2}}{\sqrt{\det(A)}}.
\end{equation}
Notice that under an homogeneous change of coordinates \(\varphi^i = B^i_j \theta^j\), the \lq\lq measure\rq\rq\ shifts as  \(\int d^n \theta \to \det{B}\int d^n \varphi\), such that  the Gaussian integral (\ref{eq:gaussian-grassman}) is invariant under similarity transformations. For a mathematically refined theory of superintegration, we suggest looking at \cite{witten-superintegration}.

Concerning the integration of functions on the odd tangent bundle \(\Pi TM\), we notice how, making use of the Berezin rules (\ref{eq:berezin-rules}), this is nothing but a reinterpretation of the usual integrals of differential forms on \(M\). If \(\dim{M}=d\), the integral of the form \(\omega = \sum_i \omega^{(i)}\), where \(\omega^{(i)}\in \Omega^i(M)\) is
\begin{equation}
\int_M \omega = \int_M \omega^{(d)} = \int_M d^dx\ \omega^{(d)}(x) ,
\end{equation}
selecting the top-form of degree \(d\). If we consider the same form \(\omega\in C^\infty(\Pi TM)\), its trivialization in coordinates \((x,\theta)\) is
\begin{equation}
\omega(x,\theta) = \sum_i \omega^{(i)}_{\mu_1 \cdots \mu_i} (x) \theta^{\mu_1}\cdots\theta^{\mu_i} .
\end{equation}
Now the Berezin integration over \(d^d\theta\) selects just the term with the right number of \(\theta\)'s, giving
\begin{equation}
\label{eq:integral-odd-tangent-bundle}
\int_{\Pi TM} d^d x d^d\theta \ \omega(x,\theta) = \int_M d^d x\ \omega^{(d)}(x) \int d^d\theta\ \theta^d\theta^{d-1}\cdots\theta^1 = \int_M d^dx\ \omega^{(d)}(x) .
\end{equation}

\section{Supergeometric proof of ABBV formula for a circle action}
\label{sec:ABBVproof}

We give now a proof of the ABBV integration formula for a \(U(1)\)-action, starting from the expression (\ref{eq:loc-principle-U1}). The \lq\lq localization 1-form\rq\rq\ is chosen as
\begin{equation}
\beta := g(\underline{T},\cdot) 
\end{equation}
where \(g\) is a \(U(1)\)-invariant metric and \(\underline{T}\) is the fundamental vector field corresponding to the generator \(T\in \mathfrak{u}(1)\). In local coordinates, the action of the Cartan differential \(d_C = d + \iota_T \) on this 1-form can be written as
\begin{equation}
\begin{split}
d_C \beta &= B_{\mu\nu}(x) dx^\mu dx^\nu + g_{\mu\nu}(x) T^\mu (x) T^\nu (x) \\
B_{\mu\nu} &= (\nabla_\mu T)_\nu - (\nabla_\nu T)_\mu
\end{split}
\end{equation}
where, again, we suppressed the \(S(\mathfrak{u}(1)^*)\) generator setting \(\phi=-1\).

We use the result of the last section to rewrite \eqref{eq:loc-principle-U1} as an integral over the odd tangent bundle \(\Pi TM\), identifying the odd coordinates \(\theta^\mu \equiv dx^\mu\):
\begin{equation}
\label{proofABBV1}
\begin{split}
I[\alpha] &= \int_M \alpha = \lim_{t\to\infty}\int_M \alpha e^{-td_C \beta} \\
&= \lim_{t\to\infty}\int_{\Pi TM} d^dx d^d\theta\ \alpha(x,\theta)\exp{\left\lbrace -t B_{\mu\nu}(x)\theta^\mu \theta^\nu - t g_{\mu\nu}(x)T^\mu(x)T^\nu(x)\right\rbrace} ,
\end{split}
\end{equation}
where the equivariantly closed form \(\alpha\) is the sum of \(U(1)\)-invariant differential forms in \(\Omega(M)^{U(1)}\) suppressing the \(S(\mathfrak{u}(1)^*)\) generator,
\begin{equation}
\alpha(x,\theta) = \sum_i \alpha^{(i)}_{\mu_1 \cdots \mu_i} (x) \theta^{\mu_1}\cdots\theta^{\mu_i} ,
\end{equation}
such that \(d_C\alpha = \left( \theta^\mu \partial_{x^\mu} + T^\mu \partial_{\theta^\mu} \right)\alpha = 0\). Using the Gaussian integrals \eqref{eq:gaussian-grassman} and \eqref{eq:gaussian-real}, we have the following delta-function representations for Grassmann-even and Grassmann-odd variables
\begin{equation}
\begin{aligned}
\delta^{(n)}(y) &= \lim_{t\to\infty} \left(\frac{t}{\pi}\right)^{n/2}\sqrt{\det{A}}\ e^{-ty^\mu A_{\mu\nu} y^\nu} \\
\delta^{(n)}(\eta) &= \lim_{t\to\infty} t^{-n/2}\frac{1}{\mathrm{Pf}A}\ e^{-t A_{\mu\nu}\eta^\mu \eta^\nu}
\end{aligned}
\end{equation}
where the limits are understood in the weak sense. Multiplying and dividing by \((t^{n/2})\) in \eqref{proofABBV1}, and using the delta-representation we rewrite the integral as
\begin{equation}
I[\alpha] =  \pi^{d/2} \int_{\Pi TM} d^dx d^d\theta\ \alpha(x,\theta) \frac{\mathrm{Pf}B(x)}{\sqrt{\det{g}(x)}} \delta^{(d)}(T(x))\delta^{(d)}(\theta) .
\end{equation}
The delta function on the odd coordinates simply puts \(\theta^\mu =0\), that is analogous to selecting the top-degree form in \eqref{proofABBV1}, so that it remains \(\alpha(x,0)\equiv \alpha^{(0)}(x)\in C^\infty(M)\). The delta function on the even coordinates instead selects the values at which \(T^\mu(x)=0\), that corresponds to the fixed point set \(F\hookrightarrow M\) of the \(U(1)\)-action. Suppose this fixed point set to be of dimension 0, \textit{i.e.}\ composed by isolated points in \(M\). If this is the case, we can simply separate the integral \(\int d^dx\) in a sum of integrals, each of which domain \(\mathcal{D}(p): p \in F\) contains one and only one of those fixed points, and in each of them apply the delta function
\begin{equation}
\label{proofABBV2}
\begin{split}
I[\alpha] &=  \pi^{d/2} \sum_{p\in F} \int_{\mathcal{D}(p)} d^dx\ \alpha^{(0)}(x) \frac{\mathrm{Pf}B(x)}{\sqrt{\det{g}(x)}} \delta^{(d)}(T(x)) \\
&= \pi^{d/2} \sum_{p\in F} \frac{\alpha^{(0)}(p)}{|\det{dT}|(p)} \frac{\mathrm{Pf}B(p)}{\sqrt{\det{g}(p)}} .
\end{split}
\end{equation}
Here  the factor \(|\det{dT}|(p)\) is the Jacobian from the change of variable \(y^\mu := T^\mu (x)\). At any point \(p\in F\) we have \((\nabla_\mu T)_\nu (p) = \partial_\mu T_\nu (p)=\partial_\mu T^\rho(p) g_{\rho\mu}(p)\) since \(T^\mu (p)=0\), so the pfaffian in the numerator becomes 
\begin{equation}
\mathrm{Pf}B(p) = \sqrt{2^d}\ \mathrm{Pf}{dT}(p)\sqrt{\det{g}(p)} ,
\end{equation}
and we get the result
\begin{equation}
I[\alpha] = (2\pi)^{d/2} \sum_{p\in F} \frac{\alpha^{(0)}(p)}{\mathrm{Pf}dT(p)} .
\end{equation}

Notice that, at \(p\in F\), the operator \(dT(p)=\partial_\mu T^\nu \theta^\mu \otimes \partial_\nu = -[\underline{T},\cdot]\) coincide up to a sign with the infinitesimal action  \(\mathcal{L}_T\) of \(T\in \mathfrak{u}(1)\) on the tangent space \(T_pM\), just because here \(T^\mu(p)=0\). If we consider a continuous fixed point set, so that \(F\) is a regular submanifold of \(M\), it is not possible to use the delta functions like in (\ref{proofABBV2}), but we can consider the decomposition as a disjoint union \(M=F\sqcup N\). Points far away from \(F\)  give a zero contribution to \(I[\alpha]\) in the limit \(t\to \infty\), so we can consider a neighborhood of \(F\) and split here the tangent bundle as \(TM\cong i_* TF \oplus TN\)  where \(TN\) is the \emph{normal bundle} to \(F\) in \(M\) (\(i\) is the inclusion map). Consequently, in this neighborhood we can split the coordinates \((x^\mu,\theta^\mu)\) on the odd tangent bundle in tangent and normal to \(F\), and rescale the normal components as \(1/\sqrt{t}\),
\begin{equation}
x^\mu = x^\mu_0 + \frac{x^\mu_\perp}{\sqrt{t}}, \qquad \theta^\mu = \theta_0^\mu + \frac{\theta_\perp^\mu}{\sqrt{t}} .
\end{equation}
The measure simply splits as \(d^dx d^d\theta = d^nx_0 d^{\hat{n}}x_\perp d^n\theta_0 d^{\hat{n}}\theta_\perp\), where \(n+\hat{n}=d\), thanks to the Berezin integration rules \eqref{eq:berezin-rules}. Expanding \(B_{\mu\nu}(x), g_{\mu\nu}(x), T^\mu(x)\) around the \(x_0\) components, and taking the limit \(t\to \infty\), the integral becomes \cite{niemi-equiv-morse}
\begin{equation}
\begin{aligned}
I[\alpha] = \int d^nx_0 d^n\theta_0\ \alpha(x_0,\theta_0) &\int d^{\hat{n}}x_\perp d^{\hat{n}}\theta_\perp\ \\ 
& \exp{\left\lbrace -B_{\mu\sigma}(x_0)\left(B_\nu^\sigma(x_0) + R^\sigma_{\nu\lambda\rho}(x_0)\theta_0^\lambda\theta_0^\rho\right)x_\perp^\mu x_\perp^\nu - B_{\mu\nu}(x_0) \theta_\perp^\mu \theta_\perp^\nu \right\rbrace} 
 \end{aligned}
\end{equation}
where \(R^\sigma_{\nu\lambda\rho}(x_0)\) is the curvature relative to the metric \(g\). The integrals over the normal coordinates are Gaussian, giving the exact \lq\lq saddle point\rq\rq\ contribution
\begin{equation}
\frac{1}{\mathrm{Pf}_N \left( \frac{R+B}{2\pi} \right) (x_0)} = \frac{1}{ \left. e_T(R) \right|_N} ,
\end{equation}
where \(\left. e_T(R)\right|_N\) is the \(U(1)\)-\emph{equivariant Euler class} of the normal bundle to \(F\). We notice that this matches the definition in Appendix \ref{app:char-classes}, since \(B:= \nabla T\), seen as an element of the adjoint bundle \(\Omega^0(M;\mathfrak{gl}(n))\), is a moment map for the Riemannian curvature \(R\) satisfying (as can be checked by direct computation)
\begin{equation}
\nabla B^\sigma_\nu = -\iota_T R^\sigma_\nu = R^\sigma_\nu(\cdot,\underline{T}) ,
\end{equation}
where the covariant derivative acts as in the adjoint bundle with respect to the Levi-Civita connection, \(\nabla = d + [\Gamma,\cdot]\). As the final piece, the Berezin integration selects the component of \(\alpha\) of degree \(\dim{F}\) evaluated on \(F\), that is just the pull-back \(i^*\alpha\) along the inclusion map. Summarizing, we are left with
\begin{equation}
I[\alpha] = \int_{F} \frac{i^*\alpha}{\left. e_T(R)\right|_N}
\end{equation}
as presented in Section \ref{sec:ABBV}, for the case of a circle action. 

Notice that in Section \ref{sec:ABBV} we called the moment map \(B^\sigma_\nu(p) = \mu_{T}(p)^\sigma_\mu \in \mathrm{End}(T_pM)\) at a fixed point \(p\). If the fixed point is isolated, the tangent space \(T_pM\) splits, as in \eqref{ABBV-weigth-decomposition-tangent-space}, as the direct sum of the weight spaces of the \(U(1)\)-representation, and the vector field \(\underline{T}\) acts as a rotation in any subspace. On a suitable coordinate basis
\begin{equation}
\underline{T} = \sum_i v_i \left( x^i\frac{\partial}{\partial y^i} - y^i\frac{\partial}{\partial x^i}\right),
\end{equation}
so that the moment map block-diagonalizes as
\begin{equation}
B^\sigma_\nu = \partial_\nu T^\sigma = \left( 
\begin{array}{ccccc}
0 & -v_1 & \cdots & & \\
v_1 & 0  & & & \\
\cdots & & 0 & -v_2 & \\
 & & v_2 & 0 & \\
 & & & & \ddots
\end{array}\right).
\end{equation}
Taking the pfaffian then one gets exactly the product of the weights (or the \lq\lq exponents\rq\rq) \(v_i\) of the circle action, so that the Euler class results 
\begin{equation}
e_T(R) = (2\pi)^{-\dim(M)/2} \prod_i v_i .
\end{equation}
This recovers \eqref{eq:euler-weights} for the case of a circle action. See \cite{atiyah-bott-localization} for the result in presence of a torus action.

\section{Introduction to Poincaré-supersymmetry}
\label{sec:susy}

We have seen in the last section how it is useful to translate the integration problem of a differential form on the manifold \(M\), into an integration over the supermanifold \(\Pi TM\). Here the differential forms \(\Omega(M)\) are seen as the \emph{graded} ring of functions over \(\Pi TM\), and the differential \[
d_C = d + \iota_T \]
being the sum of two graded derivations\footnote{Both \(d\) and \(\iota_T\) are nilpotent supervector fields on \(\Pi TM\). Without suppressing the degree-2 generator \(\phi\) of \(S(\mathfrak{u(1)}^*)\), it is apparent that their sum \(d_C\) is a  cohomological vector field, \textit{i.e.}\ a good differential of degree +1, on the subspace of \(U(1)\)-invariant forms.}
of degree \(\pm 1\), can be viewed as an infinitesimal \emph{supersymmetry} transformation mapping odd-degree (\emph{fermionic}) forms to even-degree (\emph{bosonic}) forms. In field theory, we already mentioned that the presence of a supersymmetry on the relevant complex of fields often arises in two different ways: 
\begin{itemize}
\item The differential \(d_C\) is represented in the physical model by a BRST-like supercharge, introduced because of some gauge freedom. In this case, the complex of fields (analogously to the graded ring of functions \(\Omega(M)\)) is the BRST complex, and the grading is referred to as \emph{ghost number}. Physical states of the quantum field theory are then created by fields of 0-degree, \textit{i.e.}\ the functions on \(M\). We could refer to this type of supersymmetry as a \lq\lq hidden\rq\rq\ one, coming from the original internal gauge symmetry of the model. In Hamiltonian systems, as we shall see in the next chapter, this gauge freedom can be simply associated to the Hamiltonian flow.

\item The original theory could also be explicitly endowed with a supersymmetry. In this case the base space has a supermanifold structure, and the grading of the field complex follows. This is the case of QFT with Poincaré supersymmetry, where the action of a \emph{supercharge} as generator of the super-Poincaré algebra can be interpreted as an equivariant differential.
\end{itemize}
In this and the next sections we will review the geometric setup of Poincaré supersymmetry and its generalization to curved spacetimes, interesting case for practical applications of the localization technique in QFT.

As QFT (with global Poincaré symmetry) is formulated on Minkowski spacetime\footnote{We are really interested in both the Lorentzian and the Euclidean case, so we will express both the Minkowski and Euclidean spaces as \(\mathbb{R}^d\) without stressing of the signature in the notation. The choice of metric will be clear from the context.}
\begin{equation}
\label{eq:spacetime}
\mathbb{R}^d \cong ISO(\mathbb{R}^d)/O(d) ,
\end{equation}
we can formulate a Poincaré-supersymmetric theory on a super-extension of this space, coming from a given super-extension of the Poincaré group \(ISO(\mathbb{R}^d)\). We then first introduce the super-Poincaré groups starting from the super-extension of their algebras.

\subsection{Super-Poincaré algebra and superspace}

\begin{defn} \label{defn:susy-algebra}
A super-Poincaré algebra is the extension of the Poincaré algebra \(\mathfrak{iso}(d)\cong \mathbb{R}^d \oplus \mathfrak{so}(d)\) as a Lie superalgebra, via a given \emph{real} spin representation space \(S\) of \(Spin(d)\) taken in odd-degree:
\begin{equation}
\mathfrak{siso}_S (d) \cong \mathfrak{iso}(d) \oplus S[1].
\end{equation}
The super Lie bracket are extended on \(S[1]\) through the \emph{symmetric} and Spin-equivariant bilinear form \(\Gamma:S\times S \to \mathbb{R}^d\):
\begin{equation}
[\Psi,\Phi] := 2\Gamma(\Psi,\Phi) = 2(\overline{\Psi}\gamma^\mu \Phi)P_\mu = 2(\Psi^T C\gamma^\mu \Phi)P_\mu
\end{equation}
where \(P_\mu\) are generators of the translation algebra \(\mathbb{R}^d\), \(C\) is the charge conjugation matrix, \(\overline{\Psi}\in S^*\) is the Dirac adjoint of \(\Psi\),\footnote{In Lorentzian signature \(\overline{\Psi}=\Psi^\dagger \beta\) with \(\beta_{ab}\equiv (\gamma^0)^a_{\ b}\), in Euclidean signature \(\overline{\Psi}=\Psi^\dagger\). This is due to hermitianity of the generators of the Euclidean algebra, unlike the Lorentzian case.} and \(\gamma^\mu\) are the generators of the Clifford algebra acting on \(S\). 
In a real representation, the Majorana condition \(\Psi^T C \stackrel{!}{=} \overline{\Psi}\) is satisfied. The other  brackets involving \(S\) are defined by the natural action of \(\mathfrak{so}(d)\) on it, and by the trivial action on \(\mathbb{R}^d\): 
\begin{align}
\lambda\in \mathfrak{so}(d): \quad &[\lambda, \Psi] := \frac{i}{2} \lambda_{\mu\nu} \Sigma^{\mu\nu} (\Psi) ,  \\
& [\Psi ,\lambda] = -[\lambda ,\Psi] , \\
a\in \mathbb{R}^d : \quad &[a,\Psi] := 0 ,
\end{align}
where \(\Sigma^{\mu\nu} = \frac{i}{2}\gamma^{[\mu}\gamma^{\nu]}\) are the generators of the rotation algebra in the Spin representation \(S\). If the charge conjugation matrix is symmetric in the given representation,  we can further enlarge this superalgebra via a \lq\lq central extension\rq\rq, considering \(\mathfrak{siso}_S(d)\oplus \mathbb{R}\) with the extended brackets
\begin{align}
\Psi,\Phi\in S[1]: \quad &[\Psi,\Phi]:= \Gamma(\Psi,\Phi) + (\Psi^T C\Phi) ,\\
x\in \mathbb{R}, A\in \mathfrak{siso}_S(d): \quad &[x,A]:=0 .
\end{align} 
The super Jacobi identity is satisfied thanks to the Spin-equivariance of the spinor bilinear form:
\begin{equation}
\Gamma\left(e^{R^{(s)}}\Psi, e^{R^{(s)}}\Phi\right) = e^{R^{(v)}}\Gamma(\Psi,\Phi)
\end{equation}
where \(R^{(s)},R^{(v)}\) are the same element \(R\in\mathfrak{so}(d)\) in the spin and vector representations, respectively.
\end{defn}
Often the bracket structure of this superalgebra is given in terms of the generators. Regarding the odd part and picking a basis \(\left\lbrace Q_a\right\rbrace\) of \(S[1]\),
their brackets are\footnote{We conventionally raise and lower spinor indices with the charge conjugation matrix: \[
(\gamma^\mu)_{ab} := C_{ac}(\gamma^\mu)^c_{\ b} , \qquad (\gamma^\mu)_{ab} = (\gamma^\mu)_{ba}.\]
Notice that the matrix \(C\) represents an inner product on \(S\), while the Clifford algebra generators \(\lbrace\gamma^\mu\rbrace\) act on \(S\) as endomorphisms, so the index structure follows. A review of classification of Clifford algebras, Spin groups and Majorana spinors can be found in \cite{Figueroa-majorana}.}
\begin{equation}
[Q_a , Q_b] = 2(\gamma^\mu)_{ab} P_\mu + C_{ab}.
\end{equation}
The generators \(\left\lbrace Q_a\right\rbrace\) are referred to as \emph{supercharges}. If the real spin representation on \(S\) is irreducible as a representation of the corresponding Clifford algebra, we have the minimal amount of supersymmetry and we refer to \(\mathfrak{siso}_S(d)\) as to an \(\mathcal{N}=1\) supersymmetry algebra. If instead the representation is reducible, then \(S=\bigoplus_{I=1}^{\mathcal{N}} S^{(I)}\) and we can split the basis of supercharges as \(\left\lbrace Q^I_a\right\rbrace\). This case is referred to as \emph{extended} supersymmetry. In this basis the gamma matrices block-diagonalize as \(\gamma^\mu \otimes \mathbb{I}\), with \(\gamma^\mu\) the (minimal) gamma matrices in every \(S^{(I)}\), and the central extension part separates as \(C\otimes Z\), with \(C\) being the (minimal) charge conjugation matrix in every \(S^{(I)}\) and \(Z\) a matrix of so-called \emph{central charges}. The odd part of the superalgebra then looks like
\begin{equation}
\label{eq:susy-algebra-ext}
[Q_a^I , Q_b^J] = 2(\gamma^\mu)_{ab} \delta^{IJ} P_\mu + C_{ab}Z^{IJ} .
\end{equation}
The matrix \(Z\) must be (anti)symmetric if \(C\) is (anti)symmetric. 
\begin{defn}
The subspace \(\mathfrak{st}_S(d) := \mathbb{R}^d \oplus S[1]\) is a Lie superalgebra itself (if there is no central extension), and can be referred to as the \emph{super-translation algebra}. 
\end{defn}
Even if \(\mathfrak{st}_S(d)\) it is not Abelian, it has the property 
\begin{equation}
[a,[b,c]] = 0 \qquad \forall a,b,c \in \mathfrak{st}_S(d)
\end{equation}
that can be easily checked by the definition. This means that the elements of the corresponding \emph{super-translation group} can be computed exactly using the exponential map and the BHC formula. This space can be identified with the \emph{super-spacetime}.
\begin{defn}
We can define the full super-Poncaré group as
\begin{equation}
SISO_S(d) = \exp{(\mathfrak{st}_S(d))} \rtimes Spin(d) ,
\end{equation}
so that we can identify the superspacetime with respect to the spin representation \(S\), analogously to (\ref{eq:spacetime}), as
\begin{equation}
S\mathbb{R}_S^d = SISO_S(d)/Spin(d) \cong \exp{(\mathfrak{st}_S(d))}.
\end{equation}
\end{defn}
As a supermanifold of dimension \((d|\dim(S))\), \(S\mathbb{R}_S^d\) is characterized by its sheaf of functions,
\begin{equation}
\mathcal{A} = C^{\infty}(\mathbb{R}^d) \otimes \bigwedge(S^*) .
\end{equation}
So, \(S\mathbb{R}_S^d\) is the odd vector bundle associated to the spinor bundle over \(\mathbb{R}^d\) of typical fiber \(S\). In particular, on \(S\mathbb{R}_S^d\) we have respectively even and odd coordinates \((x^\mu,\theta^a)\), with \(\mu=1,\cdots,d\) and \(a=1,\cdots,\dim(S)\). As a Lie group, we can get the group operation (the sum by supertranslation) from the exponentiation of its Lie superalgebra. Technically, to use the BHC formula
\begin{equation}
e^A e^B = e^{A+B+\frac{1}{2}[A,B]+\cdots}
\end{equation}
we would like to deal with a \emph{Lie algebra}, so we consider the group operation on coordinates functions instead of points on \(S\mathbb{R}_S^d\), taking the space
\begin{equation}
\left(\mathcal{A}\otimes \mathfrak{st}_S(d)\right)_{(0)} = \left(\mathcal{A}_{(0)}\otimes \mathbb{R}^d\right) \oplus \left( \mathcal{A}_{(1)} \otimes S[1]\right) .
\end{equation}
The Lie brackets on this space are inherited from those on \( \mathfrak{st}_S(d)\) and the (graded) multiplication of functions in \(\mathcal{A}\). The only non-zero ones come from couples of elements of \( \mathcal{A}_{(1)} \otimes S[1]\):
\begin{equation}
[f_1 \otimes \epsilon_1 , f_2 \otimes \epsilon_2 ] = -2 \Gamma(\epsilon_1,\epsilon_2) f_1 f_2 
\end{equation}
where the sign rule has been used since both \(f_1, f_2\) and \(\epsilon_1, \epsilon_2\) are odd.\footnote{We are being a little informal here, but this can be made more rigorous with the help of a construction called \emph{functor of points}. The important thing for us is that in this approach one can work with coordinate functions \(x,\theta\) instead of some would-be \lq\lq points\rq\rq\ on the supermanifold (a misleading concept since we know from the last section that a supermanifold is not a set). See \cite{varadarajan-susy_math} for more details.} We will use the combination (suppressing tensor products) \[
[\theta^a Q_a , \varphi^b Q_b] = -2\Gamma(Q_a,Q_b)\theta^a \varphi^b = -2(\gamma^\mu)_{ab}\theta^a \varphi^b P_\mu \equiv -2(\theta\gamma^\mu \varphi)P_\mu .
\]
This makes \(\left(\mathcal{A}\otimes \mathfrak{st}_S(d)\right)_{(0)}\) into a Lie algebra, by the antisymmetry of the product between odd functions. Representing via the exponential map \(
(x,\theta) \) as \( \exp\left\lbrace i(xP + \theta Q)\right\rbrace \),
where we suppressed also index contractions, we can finally use the BHC formula on this algebra to get the group operation on coordinates:
\begin{equation}
\label{eq:supertranslation_law}
\begin{aligned}
(x,\theta) \, (y,\varphi) &= \exp{\left\lbrace i(xP + \theta Q)\right\rbrace}\exp{\left\lbrace i(yP + \varphi Q)\right\rbrace} \\
&= \exp{\left\lbrace i \left(x P + \theta Q + y P + \varphi Q + \frac{i}{2}[\theta Q,\varphi Q]\right) \right\rbrace} \\
&= \exp{\left\lbrace i \left(x + y + i\theta\gamma \varphi \right)P + i \left(\theta + \varphi \right)Q \right\rbrace} \\
&= \left(x + y - i\theta\gamma \varphi , \theta + \varphi \right).
\end{aligned}
\end{equation}
If the odd dimension \(\dim{(S)}\) is zero, this reduces to a standard translation in \(\mathbb{R}^d\). The infinitesimal action of the superalgebra \(\mathfrak{st}_S(d)\) is defined as a Lie derivative with respect to the fundamental vector field representing a given element of the supertranslation algebra. For \(\Phi\in \mathcal{A}\), \(\epsilon = \epsilon^a Q_a\in (\mathcal{A}_{(1)} \otimes S[1])\)
\begin{equation}
\delta_\epsilon \Phi(x,\theta) := \mathcal{L}_{\underline{\epsilon}} (\Phi(x,\theta)) = \underline{\epsilon} (\Phi) = \epsilon^a \underline{Q}_a (\Phi(x,\theta)) ,
\end{equation}
where the odd vector field \(\underline{Q}_a\) is associated to the supercharge \(Q_a\) through the \emph{left} translation \eqref{eq:supertranslation_law}:
\begin{equation}
\begin{aligned}
\underline{Q}_a \Phi (x,\theta) &= \left. \frac{\partial}{\partial \varphi^a} \left( e^{-i\varphi Q}\right)^* \Phi (x,\theta) \right|_{\varphi=0} = \left. \frac{\partial}{\partial \varphi^a} \Phi\big( (0,-\varphi)(x,\theta) \big) \right|_{\varphi=0} = \\
&= \partial_\mu \Phi(x,\theta) (i\gamma^\mu \theta)_a + \partial_b\Phi(x,\theta) \delta^b_a .
\end{aligned}
\end{equation}
We recognize then
\begin{equation}
\underline{Q}_a = -\frac{\partial}{\partial \theta^a} + i (\theta\gamma^\mu)_a \frac{\partial}{\partial x^\mu} ,
\end{equation}
and one can check that \( [\underline{Q}_a, \underline{Q}_b] = 2 (\gamma^\mu)_{ab} \underline{P}_\mu\) with \(\underline{P}_\mu = -i \partial/\partial x^\mu\) is associated to the momentum generator.

The rest of the super-Poincaré group and its algebra acts naturally on superspace following the same type of arguments. In particular, the spin generators act in the vector and spin representations on the even and odd sectors, respectively. It is useful to introduce also the fundamental vector fields with respect to \emph{right} supertranslations on \(S\mathbb{R}^d_S\). These are called \emph{superderivatives} and are easily obtained from the law \eqref{eq:supertranslation_law} as\footnote{Remember that left and right actions correspond to opposite signs at the exponent in the definition of the fundamental vector fields.}
\begin{equation}
D_a := \frac{\partial}{\partial\theta^a} + i(\theta\gamma^\mu)_a \frac{\partial}{\partial x^\mu} .
\end{equation}
One can check that indeed they satisfy \([\underline{Q}_a, D_b]=0\) and \([D_a,D_b]= -2(\gamma^\mu)_{ab}P_\mu\), since right invariant and left invariant vector fields form anti-isomorphic algebras.

\subsection{Chiral superspace and superfields}

It is customary to construct supersymmetric field theories starting not from a \emph{real} spin representation, but from a \emph{complex} Dirac representation \(S\) endowed with a \emph{real structure}, \textit{i.e.} an antilinear map \(J:S\to S\) which is an involution (\(J^2=id_S\)).\footnote{\(J\) is the generalization of the \lq\lq complex conjugation\rq\rq\ operation on a \(\mathbb{C}\)-vector space.} In this case, diagonalizing \(J\) the representation splits as \[
S \cong S_\mathbb{R}\otimes \mathbb{C} \cong S^{(+)}\oplus S^{(-)} \]
where \(S_\mathbb{R}\cong S^{(+)}\cong iS^{(-)}\) are a real vector spaces. Choosing some basis, the matrix representing the real structure is \(J=C(\gamma^0)^T\) in Lorenzian signature, or \(J=C\) in Euclidean signature. The \emph{Majorana spinors} are the elements of \(S_\mathbb{R}\), that satisfy \(J(\Psi)=\Psi\), or in matrix notation \(\overline{\Psi}=\Psi^TC\). The subspace of Majorana spinors, taken as a real representation, would then give the super-extension of the last paragraph.

If instead we are interested in working with the whole complex representation \(S\), we are forced to introduce \emph{complexified} supersymmetry algebra and  superspace, and then impose constraints on the resulting objects to properly reduce their degrees of freedom a posteriori. In particular, complex spinors from \(S\) generating the supertranslations are taken satisfying the Majorana condition. This is always the case in QFT. As a paradigmatic example, we can take \(\mathcal{N}=1\) supersymmetry in (3+1)-dimensions, where \(\Psi\) is a Dirac spinor in \(S=\mathbb{C}^4 \cong \mathbb{R}^4 \otimes \mathbb{C}\).  

On the complex representation, we can chose a to work in the \emph{chiral basis} \(\lbrace Q_a, \tilde{Q}_{\dot{a}}\rbrace\), splitted in left- and right-handed Weyl spinors. The dotted and undotted indices now run between \(1,2\). Here the symmetric pairing \(\Gamma: S\times S \to \mathbb{C}^4\) is non-zero only on \(S^{(L/R)}\times S^{(R/L)}\), and and the restriction to the symmetrized subspace \(\Gamma: S^L \odot S^R \to \mathbb{C}^4\) is actually an isomorphism, so the relevant non-zero brackets are
\begin{equation}
\label{eq:susy-algebra-chiral}
[Q_a , \tilde{Q}_{\dot{b}}] = 2(\gamma^\mu)_{a\dot{b}} P_\mu  .
\end{equation}
On the chiral basis, \[
\gamma^\mu = \begin{pmatrix}
0 & \sigma^\mu \\
\bar{\sigma}^\mu & 0 \end{pmatrix} , \qquad
C = \begin{pmatrix}
\varepsilon & 0 \\
0 & -\varepsilon \end{pmatrix} , \qquad \sigma^\mu = (\mathbf{1},\sigma^i), \quad \overline{\sigma}^\mu = (\mathbf{1},-\sigma^i),\]
where \((\sigma^i)_{i=1,2,3}\) are the Pauli matrices and \(\varepsilon_{ab}=\varepsilon_{\dot{a}\dot{b}}\) is the totally antisymmetric tensor.
The supercharges are represented by the odd vector fields
\begin{equation}
\label{eq:4dN1-generators}
\underline{Q}_a = -\frac{\partial}{\partial\theta^a} + i \tilde{\theta}^{\dot{b}}(\gamma^\mu)_{\dot{b}a}\frac{\partial}{\partial x^\mu} , \qquad \underline{\tilde{Q}}_{\dot{a}} = - \frac{\partial}{\partial\tilde{\theta}^{\dot{a}}} + i\theta^b (\gamma^\mu)_{b\dot{a}}\frac{\partial}{\partial \overline{x^\mu}} ,
\end{equation}
and the supersymmetry action on a superfield is
\begin{equation}
\label{eq:susyvar-chiral}
\delta_\epsilon \Phi = (\epsilon^a \underline{Q}_a + \tilde{\epsilon}^{\dot{a}} \underline{\tilde{Q}}_{\dot{a}}) \Phi 
\end{equation}
where \(\overline{\epsilon^a} = \tilde{\epsilon}^{\dot{a}}\), being  a Majorana spinor.
The superderivatives are
\begin{equation}
\label{eq:4dN1-superderivatives}
D_a = \frac{\partial}{\partial\theta^a} + i \tilde{\theta}^{\dot{b}}(\gamma^\mu)_{\dot{b}a}\frac{\partial}{\partial x^\mu} ,  \qquad \tilde{D}_{\dot{a}} = \frac{\partial}{\partial\tilde{\theta}^{\dot{a}}} + i\theta^b (\gamma^\mu)_{b\dot{a}}\frac{\partial}{\partial \overline{x^\mu}} .
\end{equation}
Here we split the odd coordinates \(\theta^a, \tilde{\theta}^{\dot{a}}\) according to the split of the supercharges. The reality constraint on the coordinates then reads
\begin{equation}
\overline{\theta^a} = \tilde{\theta}^{\dot{a}} , \qquad \overline{x^\mu}=x^\mu .
\end{equation}

Since we are operating over \(\mathbb{C}\), the subspaces of the supertranslation algebra
\begin{equation}
\mathfrak{st}^{(L/R)} := \mathbb{C}^4 \oplus S^{(L/R)}[1]
\end{equation}
are both Lie superalgebras over \(\mathbb{C}\), and determines the corresponding complex Lie supergroups \(S\mathbb{C}^{(L/R)}\). Moreover these subalgebras are \emph{Abelian}, since \(\Gamma\) vanishes on \(S^{(L/R)}\times S^{(L/R)}\). 
\begin{defn} \(S\mathbb{C}^{(L)}\) is called \emph{chiral} superspace and \(S\mathbb{C}^{(R)}\) \emph{anti-chiral} superspace. \end{defn}

We can write the complexified superspace as
\begin{equation}
S\mathbb{C}^4_S \cong S\mathbb{C}^{(L)} \times_{\mathbb{C}^4} S\mathbb{C}^{(R)}
\end{equation}
where the \(\times_{\mathbb{C}^4}\) here denotes the fiber product with respect to the base \(\mathbb{C}^4\).\footnote{This is analogous to the pull-back bundle, not to be confused with an homotopy quotient.} The chiral and anti-chiral superspaces are those identified by the flows of the corresponding superderivatives, since they generates the chiral and anti-chiral part of the supertranslation algebra on \(\Gamma(TS\mathbb{C}_S^4)\).\footnote{Here \(\Gamma\) denotes the space of sections on the tangent bundle, \textit{i.e.} the vector fields on \(S\mathbb{C}_S^4\), not the spinor pairing.} We can easily find sets of \lq\lq holomorphic-like\rq\rq\ coordinates
\begin{equation}
y^\mu_{(\pm)} := x^\mu \pm i\tilde{\theta}^{\dot{a}}(\gamma^\mu)_{\dot{a}b} \theta^b , \qquad \varphi^a := \theta^a , \qquad \tilde{\varphi}^{\dot{a}} := \tilde{\theta}^{\dot{a}},
\end{equation}
where the superderivatives simplify as
\begin{equation}
D_a = \left\lbrace \begin{aligned} 
&\frac{\partial}{\partial \varphi^a}  \\
&\frac{\partial}{\partial \varphi^a} + 2i (\tilde{\varphi}\gamma^\mu)_a \frac{\partial}{\partial y^\mu_{(-)}}
\end{aligned} \right. , \qquad \tilde{D}_{\dot{a}} = \left\lbrace  \begin{aligned}
&\frac{\partial}{\partial \tilde{\varphi}^{\dot{a}}} + 2i (\varphi\gamma^\mu)_{\dot{a}} \frac{\partial}{\partial y^\mu_{(+)}} \\
&\frac{\partial}{\partial \tilde{\varphi}^{\dot{a}}} 
\end{aligned} \right. .
\end{equation}

Complexifying the superspace we doubled its real-dimension, and that leads to a sort of \lq\lq reducibility\rq\rq\ of the relevant physical objects, \emph{i.e.} the fields on superspace, or \emph{superfields}. The simplest kind of superfields in the complexified setting are complex \emph{even} maps \(\Phi: S\mathbb{C}^4 \to \mathbb{C}\), that is sections of a trivial \(\mathbb{C}\)-line bundle over \(S\mathbb{C}^4\). We can impose now some constraints on them in order to restore the correct number of degrees of freedom. This is usually done in two different ways: asking the superfields to depend only on the chiral (or anti-chiral) sector of the complexified superspace, or imposing a reality condition.

\begin{defn} 
\begin{enumerate}[label=(\roman*)]
\item  A \emph{chiral (anti-chiral) superfield} is a superfield \(\Phi\) such that \[
\tilde{D}_{\dot{a}} \Phi = 0 \qquad \left( D_a \Phi = 0 \right) .\]
\item A \emph{vector superfield} is a superfield \(V\) such that \[
V=V^\dagger . \]
\end{enumerate} 
Notice how the first one is a sort of (anti)holomorphicity condition with respect to the chiral/anti-chiral sectors of \(S\mathbb{C}^4_S\), spanned by the coordinates \(\varphi^a, \tilde{\varphi}^{\dot{a}}\). Moreover, we can see that the complex conjugate \(\Phi^\dagger\) of a chiral superfield \(\Phi\) is antichiral. Next we will see how these conditions reflect on the various component fields of the coordinate expansions of \(\Phi\) and \(V\).
\end{defn}

It is important to stress that in this particular case of \(\mathcal{N}=1\) in (3+1)-dimensions, the complex representation \(S\cong \mathbb{C}^4\) allows for a real structure and the presence of Majorana spinors, and also for a chiral decomposition into left- and right-handed parts. It does not exist though a common basis for the two decompositions, \textit{i.e.}\ \(S^{(\pm)}\neq S^{(L/R)}\). In other words, it is not possible to require \emph{both} the chirality and the Majorana conditions on spinors in 4-dimension, since Majorana spinors contain both  left- and right- handed components. This means that in a theory with some supersymmetry in 4-dimensions there will be the same number of left-handed and right-handed degrees of freedom.
In \(d=2\text{mod}8\) dimensions (in Lorentzian signature), instead, the minimal complexified spin representation \(S\) can be decomposed into Majorana-Weyl subrepresentations \(S=S_\mathbb{R}^{(+)}\oplus S_\mathbb{R}^{(-)}\), so that one can choose to work only with real left-handed spinors. In the case of extended supersymmetry, we can thus have in general a different number of left-handed and right-handed real supercharges, and \(S=(S_\mathbb{R})^{\mathcal{N}_+} \oplus (S_\mathbb{R})^{\mathcal{N}_-}\). This is denoted with \(\mathcal{N}=(\mathcal{N}_+,\mathcal{N}_-)\). When \(\mathcal{N}_+\) or \(\mathcal{N}_-\) is zero, the supersymmetry is called \emph{chiral}. For a review on spinors in different dimensions, Majorana and chirality conditions we refer to \cite{Figueroa-majorana}.

\subsection{Supersymmetric actions and component field expansion}
\label{sec:susy-actions}

Actions for supersymmetric field theories are constructed integrating over superspace combinations of superfields and their derivatives. For this purpose, it is useful to write the superfields in a so-called \lq\lq component expansion\rq\rq, with respect to the generators of \(\bigwedge(S^*)\). In this paragraph we continue with the example of \(\mathcal{N}=1\) in (3+1)-dimensions, but the construction is immediatly generalizable to other cases. Consider a complex superfield \(\Phi: S\mathbb{C}^4\to\mathbb{C}\), its trivialization on the set of coordinates \((x,\theta,\tilde{\theta})\) being
\begin{equation}
\label{eq:component_expansion_general}
\begin{aligned}
\Phi(x,\theta,\tilde{\theta}) = \phi(x) + v_\mu(x) (\tilde{\theta}\gamma^\mu\theta) + \psi_a(x) \theta^a + \tilde{\psi}_{\dot{a}}\tilde{\theta}^{\dot{a}} + F(x) \theta^{(2)} + \tilde{F}(x) \tilde{\theta}^{(2)} +& \\
+ \xi_a(x) \theta^a \tilde{\theta}^{(2)} + \tilde{\xi}_{\dot{a}}(x) \tilde{\theta}^{\dot{a}} \theta^{(2)} + D(x) \theta^{(2)} \tilde{\theta}^{(2)} &
\end{aligned}
\end{equation}
where the the wedge product between the \(\theta\)'s has been suppressed, and we agree they are anticommuting, \(\theta^{(2)} := \theta^2\theta^1\) and \(\tilde{\theta}^{(2)} := \tilde{\theta}^{\dot{2}}\tilde{\theta}^{\dot{1}}\). We used the isomorphism \(\Gamma:S^{L}\odot S^{R}\to \mathbb{C}^4\) to represent the component of degree (1,1) as \(v_{a\dot{a}} \mapsto v_\mu (\gamma^\mu)_{a\dot{a}}\), the reason for this will become clear shortly. The expansion stops at top-degree (here 4) for the anticommuting property of the exterior product.
In order for \(\Phi\) to be an even (scalar) field, we must take the functions \(\phi, v_\mu, F, \tilde{F}, D\) to be even (commuting), and the functions \(\psi_a, \tilde{\psi}_{\dot{a}}, \xi_a, \tilde{\xi}_{\dot{a}}\) carrying a spinor index to be odd (anticommuting). These are called \emph{component fields} of the superfield \(\Phi\).\footnote{The possibility of writing down an expansion similar to \eqref{eq:component_expansion_general} with these properties could again be justified more rigorously thanks to the concept of \emph{functor of point}.}

\medskip
If we impose the chirality condition \(\tilde{D}_{\dot{a}}\Phi=0\), when expressed in the coordinates \((y_{(-)},\varphi,\tilde{\varphi})\) this simply requires the independence on \(\tilde{\varphi}\), so on these coordinates a chiral superfield can be written as
\begin{equation}
\Phi(y_{(-)},\varphi,\tilde{\varphi}) = \phi(y_{(-)}) + \psi_a(y_{(-)}) \varphi^a + F(y_{(-)}) \varphi^{(2)} .
\end{equation}
Taylor-expanding in the old coordinates, this is equivalent to 
\begin{equation}
\label{eq:chiral-superfield}
\Phi(x,\theta,\tilde{\theta}) = \phi(x) + \psi_a(x)\theta^a + F(x) \theta^{(2)} - i\partial_\mu \phi(x) (\tilde{\theta}\gamma^\mu \theta) + i(\tilde{\theta}\gamma^\mu \partial_\mu \psi(x)) \theta^{(2)}  -  \partial^2 \phi(x) \theta^{(2)} \tilde{\theta}^{(2)}
\end{equation}
where higher order terms are again automatically zero for degree reasons.\footnote{Here we used \(\theta^a\theta^b = \varepsilon^{ab} \theta^{(2)}\), and \((\tilde{\theta}\gamma^\mu \theta)(\tilde{\theta}\gamma^\mu \theta) = 2 \eta^{\mu\nu} \theta^{(2)}\tilde{\theta}^{(2)}\) with a \lq\lq mostly plus\rq\rq\ signature.} The \lq\lq irreducible\rq\rq\ chiral superfield has three non-zero field components: a complex scalar field \(\phi\), a left-handed Weyl spinor field \(\psi\) and another complex scalar field \(F\). 
One can work out the supersymmetry transformations of these component fields from the general rule (\ref{eq:susyvar-chiral}), and the result is 
\begin{equation}
\begin{aligned}
&\delta_\epsilon \phi =  \epsilon\psi \\
&\delta_\epsilon \psi = 2i (\tilde{\epsilon}\gamma^\mu) \partial_\mu \phi +  \epsilon F \\
&\delta_\epsilon F = -2i \tilde{\epsilon} \gamma^\mu \partial_\mu \psi
\end{aligned}
\end{equation}
where spinor contractions are implied, and spinor indices are lowered/raised via the charge conjugation matrix as usual. 

To construct a minimal Lagrangian density from the superfield \(\Phi\), we can look at its mass dimension: this is equal to its lowest component \(\phi\), that being a scalar field is \((d-2)/2=1\) in \(d=4\) dimensions. Since \(\psi\) is a spinor, it has dimension \((d-1)/2=3/2\), so the odd coordinates have always dimension -1/2. The highest component of any superfield has thus dimension two more than the superfield. This means that to construct a Lagrangian we have to take a quadratic expression in \(\Phi\). Since the action should be real, the simplest choice is \(\Phi^\dagger \Phi\). Its top-degree component is 
\begin{equation}
\int d^2\theta d^2\tilde{\theta}\ \Phi^\dagger \Phi =  4\left| \partial \phi\right|^2 + i \overline{\Psi}\slashed{\partial} \Psi + |F|^2 + \partial_\mu (\cdots)^\mu
\end{equation}
where \(\Psi\) is a Majorana 4-spinor, whose left-handed component is \(\psi\). Up to a total derivative, this is the Lagrangian of the free, massless \emph{Wess-Zumino model}:
\begin{equation}
S_{WZ,free}[\Phi] = \int_{S\mathbb{R}^4_{\mathbb{C}^4}} d^4x d^2\theta d^2\tilde{\theta}\ \Phi^\dagger \Phi = \int_{\mathbb{R}^4} d^4x \left( 4\left| \partial \phi\right|^2 + i \overline{\Psi}\slashed{\partial} \Psi + |F|^2 \right) .
\end{equation}
Since the field \(F\) appears without derivatives, its equation of motion is an algebraic equation. For this reason, it is called an \emph{auxiliary field}, and it is customary to substitute its on-shell value in the action. For this simple model, this means putting \(F=0\). This procedure makes in general the action to be supersymmetric only if the equations of motion (EoM) are imposed, and is often called \emph{on-shell} supersymmetry. The physical field content of an \(\mathcal{N}=1\) chiral superfield is thus the supersymmetry \emph{doublet} \((\phi, \psi)\). By CPT invariance, the theory must contain both the chiral field \(\Phi\) and its antichiral conjugate \(\Phi^\dagger\), so that the physical field content of a meaningful theory constructed from it is made by two real scalar fields \(\mathrm{Re}(\phi), \mathrm{Im}(\phi)\) and the Majorana spinor \(\Psi\).

\medskip
If we now start from a vector superfield \( V \), satisfying the reality condition \(V=V^\dagger \), we reach a different physical field content and supersymmetric Lagrangian. The component expansion in a chart \((x,\theta,\tilde{\theta}) \) is the following:
\begin{equation}
\begin{aligned}
V(x,\theta,\tilde{\theta}) = C(x) + \xi_a(x)\theta^a + \xi^\dagger_{\dot{a}}\tilde{\theta}^{\dot{a}} + v_\mu(x) (\tilde{\theta}\gamma^\mu \theta) + G(x) \theta^{(2)} + G^\dagger(x) \tilde{\theta}^{(2)} + \\
+ \eta_a(x) \theta^a \tilde{\theta}^{(2)} + \eta^\dagger_{\dot{a}}(x) \tilde{\theta}^{\dot{a}}\theta^{(2)} + E(x) \theta^{(2)} \tilde{\theta}^{(2)}
\end{aligned}
\end{equation}
where \(C,v_\mu , E\) are real fields. It is clear that this is the right type of superfield needed to describe (Abelian) gauge boson fields, represented here by \(v_\mu\). This is why \(V\) is called \emph{vector} superfield.

Notice that the real part of a chiral superfield is a special kind of vector superfield. In particular, its vector component is a derivative: if \(\Lambda\) is chiral, from (\ref{eq:chiral-superfield})
\begin{equation}
\Lambda + \Lambda^\dagger \supset i \partial_\mu (\phi - \phi^\dagger) (\tilde{\theta} \gamma^\mu \theta) .
\end{equation}
This suggest to interpret the transformation
\begin{equation}
\label{eq:Abelian-gauge-tr}
V \mapsto V + (\Lambda + \Lambda^\dagger)
\end{equation}
as the action of a \(U(1)\) internal gauge symmetry on superfields. In terms of component fields this gauge transformation reads
\begin{equation}
\label{eq:supergauge-transf}
\begin{array}{lll}
C\mapsto C+ (\phi+\phi^\dagger) & \xi_a \mapsto \xi_a + \psi_a & G\mapsto G+F \\
v_\mu \mapsto v_\mu - i\partial_\mu (\phi - \phi^\dagger) & \eta_a \mapsto \eta_a - i (\partial_\mu \psi^\dagger \gamma^\mu)_a & E \mapsto E -  \partial^2 (\phi+\phi^\dagger).
\end{array}
\end{equation}
We can notice two main things. The first is that the combinations
\begin{equation}
\begin{aligned}
&\lambda_a := \eta_a + i (\gamma^\mu \partial_\mu \xi^\dagger)_a \\
&D:= E + \partial^2 C
\end{aligned}
\end{equation}
are gauge invariants. The second is that, since \(C,G,\xi_a\) transform as shifts, we can chose a special gauge in which they vanish. This is called the \emph{Wess-Zumino (WZ) gauge}. Chosing a gauge of course breaks explicitly supersymmetry, but it is convenient for most of the calculations. In the WZ gauge, the vector superfield looks like
\begin{equation}
V = v_\mu (\tilde{\theta}\gamma^\mu \theta) + \lambda_a \theta^a \tilde{\theta}^{(2)} + \lambda^\dagger_{\dot{a}} \tilde{\theta}^{\dot{a}}\theta^{(2)} + D \theta^{(2)} \tilde{\theta}^{(2)} .
\end{equation}
Gauge transformations with immaginary scalar component (\(\phi +\phi^\dagger =0\)) preserve  the Wess-Zumino gauge and moreover induce on \(v_\mu\) the usual \(U(1)\) transformation of Abelian vector bosons. Indeed, if \(\alpha := 2\mathrm{Im}(\phi)\),
\begin{equation}
v_\mu \mapsto v_\mu + \partial_\mu \alpha .
\end{equation}

As for \(F\) in the case of chiral superfields, \(D\) is the top component field of  the vector superfield \(V\). It will have a purely algebraic equation of motion, so it can be considered as an auxiliary field. The physical field content of an \(\mathcal{N}=1\) vector superfield is thus the supersymmetry \emph{doublet} \((v_\mu, \lambda)\) composed by an Abelian gauge boson and a Majorana spinor, called the \emph{gaugino}.

A gauge-invariant supersymmetric action for the Abelian vector superfield can be given in terms of the spinorial superfields defined as
\begin{equation}
W_a := \frac{1}{2}\tilde{D}^2 D_a V , \qquad  \tilde{W}_{\dot{a}} := \frac{1}{2} D^2 \tilde{D}_{\dot{a}} V .
\end{equation}
\(W_a\)  (\(\tilde{W}_{\dot{a}}\)) is both chiral (antichiral) and gauge invariant, and moreover it satisfies the \lq\lq reality\rq\rq\ condition \(D^a W_a = \tilde{D}^{\dot{a}} \tilde{W}_{\dot{a}}\). Expanding the component fields in coordinates \((y_{(\pm)},\theta,\tilde{\theta})\), we have
\begin{equation}
\begin{aligned}
& W_a = \lambda_a + if_{\mu\nu}(\gamma^\mu\gamma^\nu)_{ab} \theta^b + D\varepsilon_{ab}\theta^b + 2i (\gamma^\mu \partial_\mu \lambda^\dagger )_a \theta^{(2)} \\
& \tilde{W}_{\dot{a}} = \lambda^\dagger_{\dot{a}} - i f_{\mu\nu} (\gamma^\mu \gamma^\nu)_{\dot{a}\dot{b}}\tilde{\theta}^{\dot{b}} + D \varepsilon_{\dot{a}\dot{b}} \tilde{\theta}^{\dot{b}}  - 2i (\gamma^\mu \partial_\mu \lambda)_{\dot{a}} \tilde{\theta}^{(2)}
\end{aligned}
\end{equation}
where \(f_{\mu\nu}:= (\partial_\mu v_\nu - \partial_\nu v_\nu)\) is the gauge invariant Abelian field-strength of \(v_\mu\). A gauge invariant action is then
\begin{equation}
\label{eq:susyYM4d-Abelian}
\begin{aligned}
S[V] &= \int d^4x \frac{1}{8}\left( \int d^2\theta\ W^a W_a + \int d^2\tilde{\theta}\ \tilde{W}^{\dot{a}} \tilde{W}_{\dot{a}} \right) \\
& = \int d^4x \left\lbrace  f_{\mu\nu}f^{\mu\nu} + i  \lambda \gamma^\mu \partial_\mu \lambda^\dagger + 2 D^2 \right\rbrace
\end{aligned}
\end{equation}
that is an \(\mathcal{N}=1\) supersymmetric extension of the Abelian Yang-Mills theory in 4-dimensions. The supersymmetry transformations of the component field, under which (\ref{eq:susyYM4d-Abelian}) is invariant can be obtained applying the supertranslation on \(V\) in WZ gauge. The result will not be in this gauge anymore, but can be translated back in WZ gauge applying an appropriate gauge transformation as (\ref{eq:supergauge-transf}). The result is 
\begin{equation}
\label{eq:susyvar-gauge}
\begin{aligned}
&\delta_\epsilon v_\mu = \frac{1}{2}\left(\tilde{\epsilon}\gamma_\mu \lambda - \lambda^\dagger \gamma_\mu \epsilon \right) \\
&\delta_\epsilon \lambda =  \frac{i}{2} f_{\mu\nu} \epsilon \gamma^{\mu}\gamma^{\nu} + D \epsilon \\
&\delta_\epsilon D = -i \left( \tilde{\epsilon} \gamma^\mu \partial_\mu \lambda + \epsilon \gamma^\mu \partial_\mu \lambda^\dagger \right).
\end{aligned}
\end{equation}
Notice that in this case the Super Yang-Mills (SYM) action remains supersymmetric even if we impose the EoM on the auxiliary field \(D\), setting \(D=0\) in both (\ref{eq:susyYM4d-Abelian}) and (\ref{eq:susyvar-gauge}). This is a special result, that holds in 4, 6 and 10 dimensions \cite{brink-SYM}.

\medskip
In a generic gauge theory with gauge group
\(G\), we consider a \(G-\)valued chiral multiplet \(\Phi\) which transforms under a gauge transformations as
\begin{equation}
\Phi \mapsto e^{\Lambda} \Phi
\end{equation}
where \(\Lambda\) is a \(\mathfrak{g}\)-valued chiral superfield. Now the combination \(\Phi^\dagger \Phi\) is not gauge invariant, so we introduce a \(\mathfrak{g}-\)valued vector superfield \(V\), transforming as
\begin{equation}
e^V \mapsto e^{\Lambda^\dagger} e^V e^{\Lambda}
\end{equation}
that reduces to the previous case (\ref{eq:Abelian-gauge-tr}) for Abelian \(G=U(1)\). The exponential of a superfield can be defined through its component field expansion, that stops at finite order for degree reasons:
\begin{equation}
e^V = 1 + v_\mu (\tilde{\theta}\gamma^\mu \theta) + \lambda_a \theta^a \tilde{\theta}^{(2)} + \lambda^\dagger_{\dot{a}} \tilde{\theta}^{\dot{a}}\theta^{(2)} + (D + 2 v_\mu v^\mu) \theta^{(2)} \tilde{\theta}^{(2)} .
\end{equation}
The kinetic term for the chiral superfield can be rewritten as a gauge invariant combination:
\begin{equation}
\int d^4x d^2\theta d^2\tilde{\theta}\ \Phi^\dagger e^V \Phi .
\end{equation}
Generalizing the supersymmetric field-strength \(W_a\) to the non-Abelian case as
\begin{equation}
W_a = \frac{1}{2} \tilde{D}^2 e^{-V} D_a e^V 
\end{equation}
we can write the  full matter-coupled gauge theory action:
\begin{equation}
\label{eq:susy-gauge-action}
S[V,\Phi] = \int d^4x \left\lbrace \int d^2\theta d^2\tilde{\theta}\ \Phi^\dagger e^V \Phi + \left[ \int d^2\theta\ \left( \frac{1}{4}\mathrm{Tr}W^aW_a + W(\Phi) \right) + c.c. \right] \right\rbrace
\end{equation}
where \(W(\Phi)\) is a holomorphic function of \(\Phi\) called \emph{superpotential}. The expansion in terms of component fields and the supersymmetry variations can be calculated with the same procedure we did in the other cases.\footnote{A more detailed treatment can be found in \cite{wess-bagger}, or \cite{figueroa-susylectures}.}

Notice that whenever the center of the Lie algebra \(\mathfrak{g}\) is non-trivial, \textit{i.e.}\ when there is a \(U(1)\) factor in \(G\), we could add another supersymmetric and gauge-invariant term to the action (\ref{eq:susy-gauge-action}). This is the so-called \emph{Fayet-Iliopoulos term}:
\begin{equation}
\int d^4x d^2\theta d^2\tilde{\theta}\ \xi(V) = \int d^4x\ \xi_A D^A
\end{equation}
where \(\xi=\xi_A \tilde{T}^A\) is a constant element in the dual of the center of \(\mathfrak{g}\).

\subsection{R-symmetry}

The subgroup of (outer) automorphisms of the supersymmetry group 
 which fixes the underlying Poincaré (Euclidean) group is called \emph{R-symmetry group}. At the level of the algebra, these are linear transformations that act only on the spin representation \(S\), leaving the brackets of two spinors unchanged. In the complexified case, when different chiral sectors are present, the R-symmetry acts differently on any sector.

For example, in the case of \(\mathcal{N}=1\) in (3+1)-dimensions, there is a \(U(1)_R\) R-symmetry group acting as
\begin{equation}
\label{eq:U(1)R-supercharges}
Q_a \mapsto e^{-i\alpha} Q_a , \qquad \tilde{Q}_{\dot{a}} \mapsto e^{i\alpha} \tilde{Q}_{\dot{a}} ,
\end{equation}
with \(\alpha\in\mathbb{R}\). This clearly leaves the brackets \([Q_a, \tilde{Q}_{\dot{a}}]\) invariant. The odd coordinates \(\theta^a\) on superspacetime, being elements of \(S^*\) transform as
\begin{equation}
\label{eq:U(1)R-coord}
\begin{array}{lcl}
\theta^a \mapsto e^{i\alpha}\theta^a & & \tilde{\theta}^{\dot{a}} \mapsto e^{-i\alpha}\tilde{\theta}^{\dot{a}} \\
d^2\theta \mapsto e^{-2i\alpha}d^2\theta & & d^2\tilde{\theta} \mapsto e^{2i\alpha}d^2\tilde{\theta},
\end{array}
\end{equation}
so that the volume element \(d^4\theta=d^2\theta d^2\tilde{\theta}\) is invariant under R-symmetry. This fixes the \emph{R-charge} of the superpotential \(W(\Phi)\) to be 2, if we want the action to be invariant under R-symmetry:
\begin{equation}
W(\Phi) \mapsto e^{2i\alpha} W(\Phi) .
\end{equation}
In principle we can chose the chiral superfield \(\Phi\) to have any R-charge \(r\), since the combination \(\Phi^\dagger \Phi\) is R-invariant. This, combined with (\ref{eq:U(1)R-coord}) means that the different field components in the chiral multiplet transform differently with respect to R-symmetry:
\begin{equation}
\phi \mapsto e^{ir\alpha}\phi , \quad \psi \mapsto e^{i(r-1)\alpha}\psi , \quad F \mapsto e^{i(r-2)\alpha}F .
\end{equation}
The vector superfield, being real is acted upon trivially by \(U(1)_R\). Its component fields are then forced to transform as
\begin{equation}
v_\mu \mapsto v_\mu , \quad \lambda \mapsto e^{i\alpha}\lambda , \quad D \mapsto D ,
\end{equation}
thus the gauge-invariant supersymmetric field-strength \(W_a\) has R-charge \(1\).

In general, if the spin representation is reducible and we have extended supersymmetry, the R-group is always compact. For \(S=(S_0)^{\mathcal{N}}\), where \(S_0\) is a real representation, it is of the type \(U(\mathcal{N})\), while for \(S= (S^{(+)})^{\mathcal{N}_+} \oplus (S^{(-)})^{\mathcal{N}_-}\), where \(S^{(\pm)}\) are the two real representations of different chirality, it is of the type \(U(\mathcal{N}_+)\times U(\mathcal{N}_-)\) \cite{varadarajan-susy_math}. Notice the isomorphism 
\begin{equation}
U(n) \cong (SU(n)\times U(1))/\mathbb{Z}_n
\end{equation}
\textit{i.e.}\ \(U(n)\) is an n-fold cover of \(SU(n)\times U(1)\). In particular, their Lie algebras are isomorphic. In terms of infinitesimal transformations then, the R-symmetry generators can be decomposed in one R-charge plus \(\mathcal{N}^2-1\) rotation generators. The supercharges are rotated into one another by
\begin{equation}
Q_a^I \mapsto e^{-i\alpha} \mathcal{U}^I_J Q_a^J , \quad \tilde{Q}_{\dot{a}}^I \mapsto e^{i\alpha} \mathcal{(U^\dagger)}^I_J \tilde{Q}_{\dot{a}}^J.
\end{equation}
In QFT, R-symmetry may or may not be present as a symmetry of the theory, and in many cases part of this symmetry may be broken by anomaly at quantum level.

\subsection{Supersymmetry multiplets}

A geometric analysis as the one carried out in the last subsections allows one to find the physical field content of a supersymmetric theory in every dimensions and for any degree of reducibility of the spin representation \(S\) that is used to extend the Poincaré algebra. Another systematic way to obtain the same result, from a more algebraic point of view, is to study the representation of the supersymmetry algebra \(\mathfrak{siso}_S(d)\), in analogy with the Wigner analysis of massive and massless representations of the Poincaré algebra. As the cases encountered above, this study leads to the presence of different \emph{supersymmetry multiplets} for different choices of spin and \(\mathcal{N}\). We will not present this here but refer for example to \cite{dhoker-susy-notes} for a comprehensive review, and list here some results for the multiplets at various \(\mathcal{N}\).

For \(1\leq \mathcal{N}\leq 4\) with spin less or equal to 1, the supersymmetry particle representations simply consists of spin 1 vector particles, spin 1/2 fermions and spin 0 scalars. In the supergeometric approach, these fields are interpreted as components of the same superfield, and thus transform one into another under the supersymmetry algebra. Let \(G\) be the gauge group, and \(\mathfrak{g}\) its Lie algebra. We are interested mainly in two types of multiplets. The first is the (massless) \emph{vector} or \emph{gauge multiplet}, which transforms under the adjoint representation of \(\mathfrak{g}\). For \(\mathcal{N} = 3, 4\), this is the only possible multiplet. It turns out that quantum field theories with \(\mathcal{N} = 3\) supersymmetries coincide with those with \(\mathcal{N} = 4\) in view of CPT invariance, thus we shall limit our discussion to the \(\mathcal{N} = 4\) theories.\footnote{To be more precise, it is possible to construct theories with genuine \(\mathcal{N}=3\) supersymmetry, but they lack of a Lagrangian description in terms of component fields.} For \(\mathcal{N} = 1,2\), we also have (possibly massive) matter multiplets: for \(\mathcal{N} = 1\), this is the \emph{chiral multiplet}, and for \(\mathcal{N} = 2\) this is the \emph{hypermultiplet}, both of which may transform under an arbitrary (unitary, and possibly reducible) representation of \(G\). 

In (3+1)-dimensions, the on-shell field content of these multiplets is:
\begin{itemize}
\item \(\mathcal{N} = 1\) \emph{gauge multiplet} \((A_\mu , \lambda)\): a gauge boson and a Majorana fermion, the gaugino.
\item \(\mathcal{N} = 1\) \emph{chiral multiplet} \((\phi , \psi)\): a complex scalar and a left-handed Weyl fermion.
\item \(\mathcal{N} = 2\) \emph{gauge multiplet} \((A_\mu , \lambda_{\pm}, \phi)\): \(\lambda_{\pm}\) form a Dirac spinor, and \(\phi\) is a complex \emph{gauge scalar}. Under the \(SU(2)_R\) symmetry, \(A_\mu\) and \(\phi\) are singlets, while \(\lambda_+ , \lambda_-\) transform as a doublet.
\item \(\mathcal{N} = 2\) \emph{hypermultiplet} \((\psi_+ , H, \psi_-)\): \(\psi_\pm\) form a Dirac spinor and \(H_\pm\) are complex scalars. Under the \(SU(2)_R\) symmetry, \(\psi_+\) and \(\psi_-\) transform as singlets, while \(H_+, H_-\) transform as a doublet.
\item \(\mathcal{N}=4\) \emph{gauge multiplet} \((A_\mu , \lambda^i, \Phi_A)\): \(\lambda^i\) , \(i=1,2,3,4\) are Weyl fermions (equivalents to two Dirac fermions), and \(\Phi_A\), \(A=1,\cdots,6\) are real scalars (equivalents to three complex scalars). Under the \(SU(4)_R\) symmetry\footnote{The R-symmetry group is actually \(SU(2)\times SU(2)\times U(1)\), as we will see in a practical application in the following.} the gauge field \(A_\mu\) is a singlet, the fermions \(\lambda^i\) transform in the fundamental representation \(\mathbf{4}\), the scalars \(\Phi_A\) transform in the rank-two antisymmetric representation \(\mathbf{6}\).
\end{itemize}

Even though in this thesis we do not work explicitly with gravity theories, we will see in the next section that the introduction of \emph{off-shell} supergravity is necessary in a possible approach to construct globally supersymmetric theories on curved base-spaces. For this purpose, it is useful to remind also the content of massless supersymmetry particle representations with helicity between 1 and 2. These are the \emph{gravitino multiplet} and the \emph{graviton multiplet} (or \emph{supergravity multiplet}, or \emph{metric multiplet}). In general the gravitino multiplet contains degrees of freedom with helicity less or equal than 3/2. Since in a theory without gravity one cannot accept particles with helicity greater than one,\footnote{This comes from the so-called \emph{Weinberg-Witten theorem} \cite{WWtheorem}.} that multiplet cannot appear in a supersymmetric theory if also a graviton, with helicity 2, does not appear. In (3+1)-dimensions, the field content of the relevant multiplets are:
\begin{itemize}
\item \(\mathcal{N}=1\) \emph{gravitino multiplet} \((\Phi_\mu, B_\mu)\): a helicity 3/2 fermionic particle and a vector boson. 

\item \(\mathcal{N}=1\) \emph{graviton multiplet} \((h_{\mu\nu}, \Psi_\mu)\): the \emph{graviton}, with helicity 2, and its supersymmetric partner the \emph{gravitino}, of helicity 3/2.

\item \(\mathcal{N}=2\) \emph{gravitino multiplet}: a spin 3/2 particle, two vectors and one Weyl fermion. 
\item \(\mathcal{N}=2\) \emph{graviton multiplet}: graviton, two gravitinos and a vector boson.
\end{itemize}

For \(\mathcal{N}>4\) it is not possible to avoid gravity since there do not exist representations with helicity smaller than 3/2. Hence, theories with \(\mathcal{N} > 4\) are all supergravity theories.

\subsection[Euclidean 3d N=2 supersymmetric gauge theories]{Euclidean 3d \(\mathcal{N}=2\) supersymmetric gauge theories} 
\label{subsec:3dN2}

As an example, which will be used in some applications of the localization principle in the next chapter, we can look at \(\mathcal{N}=2\) Euclidean supersymmetry in 3-dimensions. First, notice that the rotation algebra for 3d Euclidean space is \(\mathfrak{so}(3)\). The corresponding spin group is thus \(SU(2)\), whose fundamental representation \(\mathbf{2}\) does not admit a real structure. In fact here the charge conjugation can be taken as the totally antisymmetric symbol \(C_{ab}= \varepsilon_{ab}\), and the Majorana condition would be inconsistent:
\begin{equation}
\psi^T C = \psi^\dagger \Leftrightarrow \psi =0 .
\end{equation}
Thus we cannot construct an \(\mathcal{N}=1\) Euclidean supersymmetry algebra in 3 dimensions, in the sense of definition \eqref{defn:susy-algebra}. The problem can be cured considering a reducible spin  representation \(S\), where the spinors and the charge conjugation matrix can be split as 
\begin{equation}
\Psi = (\psi^{a}_I)^{a=1,2}_{I=1,\cdots ,\mathcal{N}},  \qquad \mathcal{C} = (\Omega_{IJ} C^{ab})^{a,b=1,2}_{I,J=1,\cdots ,\mathcal{N}},
\end{equation}
and the same reality condition \(\Psi^\dagger = \Psi^T \mathcal{C}\) now is consistent if also the matrix \(\Omega\) squares to \(-\mathds{1}\) and is anti-orthogonal: 
\begin{equation}
\Omega = - \Omega^T = - \Omega^{-1} .
\end{equation}

If we now fix  \(\mathcal{N}=2\), the resulting spinor representation is analogous to the one of \(\mathcal{N}=1\) in 4-dimensions, but now the two Weyl sectors are independent since they generates the two supersymmetries. To see this corrispondence, we can change basis of \(S=\mathbf{2}^{(1)} \oplus \mathbf{2}^{(2)}\) from the natural one in terms of the generators \(\lbrace Q^1_a, Q^2_a\rbrace\) to 
\begin{equation}
Q_a := \frac{1}{\sqrt{2}}(Q_a^1+iQ_a^2) , \qquad \tilde{Q}_a := \frac{1}{\sqrt{2}}(Q_a^1-iQ_a^2) .
\end{equation}
In this basis, using (\ref{eq:susy-algebra-ext}) the super Lie brackets become
\begin{equation}
\label{eq:susy-algebra-3d}
\begin{array}{lr}
\multicolumn{2}{l}{ [Q_a, \tilde{Q}_b] = 2(\gamma^\mu)_{ab}P_\mu + Z\varepsilon_{ab} } \\
\left[ Q_{a} , Q_{b} \right] = 0  &  [ \tilde{Q}_a , \tilde{Q}_b ] = 0 
\end{array}
\end{equation}
where \(Z\) is a constant central charge, and the gamma matrices in this representation can be chosen to be the Pauli matrices \(\gamma^\mu = \sigma^\mu\) for \(\mu=1,2,3\).\footnote{There is also another inequivalent representation of the Clifford algebra, as in any odd dimensions, in which \(\gamma^3 = -\sigma^3\). We chose the former one.} Note that the 4-dimensional \(Spin(3,1)\) Lorentz group breaks to \(SU(2)\times SU(2)_R\), where \(Spin(3)\cong SU(2)\) is the 3-dimensional Lorentz group, and the remaining \(SU(2)_R\) is an R-symmetry acting on the \(\mathcal{N}=2\) algebra. The generators \(Q_a\) and \(\tilde{Q}_a\) are represented in superspace by odd vector fields whose expressions are formally the same as in \eqref{eq:4dN1-generators}, and the \(\mathcal{N}=2\) supersymmetry variation of a superfield \(\Phi\) is 
\begin{equation}
\delta_{\epsilon,\eta} \Phi = (\epsilon^a Q_a + \eta^a \tilde{Q}_a) \Phi
\end{equation}
where now, as said before, \(\epsilon\) and \(\eta\) are two independent complex spinors.

If we want to construct a supersymmetric gauge theory in 3-dimensions, we consider the vector superfield, now expressed in WZ gauge as
\begin{equation}
V(x,\theta,\tilde{\theta}) = A_\mu (\tilde{\theta} \gamma^\mu \theta) + i\sigma \theta \tilde{\theta} + \lambda_a \theta^a \tilde{\theta}^{(2)} + \lambda^\dagger_a \tilde{\theta}^a \theta^{(2)} + D \theta^{(2)} \tilde{\theta}^{(2)} .
\end{equation}
The off-shell \(\mathcal{N}=2\) gauge multiplet is then composed by a gauge field \(A_\mu\), two real scalars \(\sigma, D\) and a 2-component complex spinor \(\lambda\). Notice that this is just the \emph{dimensional reduction} of the \(\mathcal{N}=1\) multiplet in 4-dimensions, with \(\sigma\) coming from the zero-th component of the gauge field in higher dimensions. The only difference with the 4-dimensional vector multiplet is that this zero-th component has been considered purely immaginary, \emph{i.e.} \(A_0 = i\sigma\) with real \(\sigma\). This ensures the kinetic term for \(\sigma\) to be positive definite and the path integral to converge, matching the would-be dimensional reduction from an Euclidean 4-dimensional theory. If the gauge group is \(G\), all fields are valued in its Lie algebra \(\mathfrak{g}\).

For what we are going to discuss in the next chapter, we now adopt the convention of \cite{itamar-wilson_loop_3dCS, Marino-locCS} for the supersymmetry variations of the vector superfield and the supersymmetric actions. Under a proper rescaling of the component fields and of the supercharges, one can work them out in an analogous way to which we did in the last sections, and get
\begin{equation}
\begin{aligned}
&\delta_{\epsilon,\eta} A_\mu = \frac{i}{2} (\eta^\dagger \gamma_\mu \lambda - \lambda^\dagger \gamma_\mu \epsilon) \\
&\delta_{\epsilon,\eta} \sigma = \frac{1}{2} (\eta^\dagger \lambda - \lambda^\dagger \epsilon) \\
&\delta_{\epsilon,\eta} D = \frac{i}{2} \left( \eta^\dagger \gamma^\mu D_\mu \lambda - (D_\mu \lambda^\dagger) \gamma^\mu \epsilon \right) - \frac{i}{2}\left( \eta^\dagger [\lambda, \sigma] - [\lambda^\dagger, \sigma] \epsilon \right) \\
&\delta_{\epsilon,\eta} \lambda = \left( -\frac{1}{2} \gamma^{\mu\nu} F_{\mu\nu} -D +i \gamma^\mu D_\mu \sigma\right) \epsilon \\
&\delta_{\epsilon,\eta} \lambda^\dagger = \eta^\dagger \left(- \frac{1}{2} \gamma^{\mu\nu} F_{\mu\nu} +D -i \gamma^\mu D_\mu \sigma\right) 
\end{aligned}
\end{equation}
where \(D_\mu = \partial_\mu + [A_\mu, \cdot]\) is the gauge-covariant derivative and \(\gamma^{\mu\nu}:= \frac{1}{2}[\gamma^\mu, \gamma^\nu]\). Up to some prefactors, they can be seen as a dimensional reduction of \eqref{eq:susyvar-gauge}.

We can consider two types of gauge supersymmetric actions constructed from the vector multiplet in 3 Euclidean dimensions: the Super Yang-Mills theory, that is a reduction of \eqref{eq:susy-gauge-action}, and the Super Chern-Simons (SCS) theory. In superspace, the former one is constructed in the same way as the 4-dimensional case from the spinorial superfield \(W_a\), while the SCS term is constructed as
\begin{equation}
S_{CS} = \int d^3x d^2\theta d^2\tilde{\theta}\ \frac{k}{4\pi} \left( \int_0^1 dt\   \mathrm{Tr}\left\lbrace V \tilde{D}^a e^{-tV} D_a e^{tV}\right\rbrace \right) .
\end{equation}
Integrating out the odd coordinates in superspace, these are given by \cite{itamar-wilson_loop_3dCS}:
\begin{align}
\label{eq:3dSYM}
&S_{YM} = \int d^3x \  \mathrm{Tr}\left\lbrace \frac{i}{2}\lambda^\dagger \gamma^\mu D_\mu \lambda + \frac{1}{4} F_{\mu\nu}F^{\mu\nu} + \frac{1}{2} D_\mu \sigma D^\mu \sigma + \frac{i}{2} \lambda^\dagger [\sigma, \lambda] + \frac{1}{2} D^2 \right\rbrace , \\
\label{eq:3dCS}
&S_{CS} = \frac{k}{4\pi} \int d^3x \ \mathrm{Tr}\left\lbrace  \varepsilon^{\mu\nu\rho} \left( A_\mu \partial_\nu A_\rho + \frac{2i}{3} A_\mu A_\nu A_\rho \right) - \lambda^\dagger \lambda + 2\sigma D \right\rbrace .
\end{align}

\subsection[Euclidean 4d N=4,2,2* gauge theories]{Euclidean 4d \(\mathcal{N}=4,2,2^*\) supersymmetric gauge theories}
\label{subsec:susy-4dN4}

We  describe here another example that will be useful in the next chapter, when we will apply the localization principle to supersymmetric QFT. The \(\mathcal{N}=4\) SYM theory on flat space can be derived via dimensional reduction of \(\mathcal{N}=1\) SYM in \((9+1)\) dimensions.\footnote{For convergence of the partition function, it would be nicer to start from the \((10,0)\) Euclidean signature. We follow the convention of \cite{pestun-article} and start from the \((9,1)\) one, Wick rotating a posteriori the path integral when needed, to match the would-be reduction from the \((10,0)\) theory.} The \(\mathcal{N}=2\) and \(\mathcal{N}=2^*\) theories can be derived as modification of the \(\mathcal{N}=4\) theory, as we will see later.

We start recalling the structure of the 10-dimensional Clifford algebra following the conventions of \cite{pestun-article}. This is independent from the choice of the signature, \(Cl(9,1)\cong Cl(1,9) \cong \mathrm{Mat}_{32}(\mathbb{R})\), it is real, and generated by the gamma matrices \((\gamma^M)_{M=0,1,\cdots,9}\) such that \[
\lbrace \gamma^M,\gamma^N\rbrace = 2\eta^{MN}  \]
where \(\eta\) is the 10-dimensional Minkowski metric, that we take with signature \((-,+,\cdots,+)\).
The fundamental representation of the spin group \(Spin(9,1)\hookrightarrow Cl(9,1)\) is then Majorana, and it is moreover reducible under chirality \cite{Figueroa-majorana} \(\gamma^{c}:=-i\gamma^0\gamma^1\cdots\gamma^9\) as \(Spin(9,1) = S^+ \oplus S^- \cong \mathrm{Mat}_{16}(\mathbb{R}) \oplus \mathrm{Mat}_{16}(\mathbb{R})\). Thus fundamental spinors are Majorana-Weyl, and have 16 real components.  In the chiral basis we denote
\begin{equation}
\begin{aligned}
&\gamma^M = \left(\begin{array}{cc}
0 & \tilde{\Gamma}^M \\ \Gamma^M & 0 \end{array} \right) \qquad \tilde{\Gamma}^M, \Gamma^M: S^{\pm} \to S^{\mp} \\
& \gamma^{MN} = \left(\begin{array}{cc}
\tilde{\Gamma}^{[M}\Gamma^{N]} & 0 \\ 0 & \Gamma^{[M}\tilde{\Gamma}^{N]} \end{array} \right) =: 
\left(\begin{array}{cc}
\Gamma^{MN} & 0 \\ 0 & \tilde{\Gamma}^{MN} \end{array} \right)
\end{aligned}
\end{equation}
where \(\Gamma^M, \tilde{\Gamma}^M\) act on the Majorana-Weyl subspaces, exchanging chirality, and are taken to be symmetric.\footnote{In the Euclidean signature, we would use \(\Gamma^M_E = \lbrace \Gamma^1,\cdots,\Gamma^9,i\Gamma^0\rbrace\).}

Let the gauge group \(G\) be a compact Lie group, and \(\mathfrak{g}\) its Lie algebra. The (on-shell) component field content of the gauge multiplet in 10 dimensions is of a gauge field, locally represented as \(A\in\Omega^1(\mathbb{R}^{9,1},\mathfrak{g})\), and a gaugino, a Mayorana-Weyl spinor \(\Psi:\mathbb{R}^{9,1}\to S^+ \otimes \mathfrak{g}\) with values in the Lie algebra \(\mathfrak{g}\). The field strength of the gauge field is locally represented by \(F=dA+[A,A]\), and the associated gauge-covariant derivative on \(\mathbb{R}^{9,1}\) is \(D_M = \partial_M + A_M\). The supersymmetry variations under the action of the 10-dimensional super-Poincaré algebra are
\begin{equation}
\label{pestun-1}
\begin{aligned}
\delta_\epsilon A_M &= \epsilon \Gamma_M \Psi \\
\delta_\epsilon \Psi &= \frac{1}{2}\Gamma^{MN}F_{MN}\epsilon
\end{aligned}
\end{equation}
where \(\epsilon\) is a Majorana-Weyl spinor, analogously to the on-shell version of \eqref{eq:susyvar-gauge} up to the chirality projection and conventional prefactors. The action functional for the \(\mathcal{N}=1\) 10-dimensional theory is \(S_{10d} = \int d^{10}x\ \mathcal{L}\), with Lagrangian
\begin{equation}
\label{pestun-2}
\mathcal{L} = \frac{1}{g_{YM}^2}\mathrm{Tr}\left( \frac{1}{2}F_{MN} F^{MN} - \Psi \Gamma^M D_M \Psi \right)
\end{equation}
where \(\mathrm{Tr}\) denotes a symmetric bilinear pairing in \(\mathfrak{g}\),\footnote{For semisimple \(\mathfrak{g}\), this is the Killing form as usual.} and \(g_{YM}\) is the Yang-Mills coupling constant. As we remarked in Section \ref{sec:susy-actions}, this action is exactly supersymmetric under \eqref{pestun-1} without the addition of auxiliary fields.

To get the Euclidean 4-dimensional theory, we perform dimensional reduction along the directions \(x^0,x^5,\cdots,x^9\), assuming independence of the fields on these coordinates. The fields split as
\begin{equation}
\label{pestun-3}
\begin{aligned}
A_M &\to \left((A_\mu)_{\mu=1,\cdots,4}, (\Phi_A)_{A=5,\cdots,9,0}\right) \\
\Psi &\to \left( \psi^L \ \chi^R \ \psi^R \ \chi^L\right)^T
\end{aligned}
\end{equation}
where \(\psi^{L/R}, \chi^{L/R}\) are four-component real chiral spinors. The spacetime symmetry group \(Spin(9,1)\) is broken to  \(Spin(4)\times Spin(5,1)^\mathcal{R}\hookrightarrow Spin(9,1)\), where \(Spin(4)\cong SU(2)_L\times SU(2)_R\) acts on the \(x^1,\cdots,x^4\) directions, and the R-symmetry group \(Spin(5,1)^\mathcal{R}\) rotates the other ones. It is often convenient to further break the R-symmetry group to \(Spin(4)^\mathcal{R}\times SO(1,1)^\mathcal{R}\hookrightarrow Spin(5,1)^\mathcal{R}\), where the first piece \(Spin(4)^\mathcal{R}\cong SU(2)_L^\mathcal{R}\times SU(2)_R^\mathcal{R}\) rotates the \(x^5,\cdots,x^8\) directions, and \(SO(1,1)^\mathcal{R}\) acts on the \(x^9,x^0\) ones. We thus consider the symmetry group
\begin{equation}
SU(2)_L\times SU(2)_R \times SU(2)_L^\mathcal{R}\times SU(2)_R^\mathcal{R} \times SO(1,1)^\mathcal{R}
\end{equation}
under which the fields behave as
\begin{itemize}
\item \(A_\mu\): vector of \(SU(2)_L\times SU(2)_R\), scalar under R-symmetry;
\item \((\Phi_I)_{I=4,\cdots,8}\): 4 scalars under \(SU(2)_L\times SU(2)_R\), vector of \(SU(2)_L^\mathcal{R}\times SU(2)_R^\mathcal{R}\), scalars under \(SO(1,1)^\mathcal{R}\);
\item \(\Phi_9,\Phi_0\): scalars under \(SU(2)_L\times SU(2)_R \times SU(2)_L^\mathcal{R}\times SU(2)_R^\mathcal{R}\), vector of \(SO(1,1)^\mathcal{R}\);
\item \(\psi^{L/R}\): \( \left(\frac{1}{2},0 \right) / \left(0, \frac{1}{2} \right)\) of \(SU(2)_L\times SU(2)_R\), \(\left(\frac{1}{2},0 \right)\) of \(SU(2)_L^\mathcal{R}\times SU(2)_R^\mathcal{R}\), \(+/-\) of \(SO(1,1)^\mathcal{R}\);
\item \(\chi^{L/R}\): \( \left(\frac{1}{2},0 \right) / \left(0, \frac{1}{2} \right)\) of \(SU(2)_L\times SU(2)_R\), \(\left(0,\frac{1}{2} \right)\) of \(SU(2)_L^\mathcal{R}\times SU(2)_R^\mathcal{R}\), \(-/+\) of \(SO(1,1)^\mathcal{R}\);	
\end{itemize}
here we denoted \(+,-\) the inequivalent Majorana-Weyl representations of \(SO(1,1)^\mathcal{R}\), seen as a subgroup of \(Cl(1,1)\cong \mathrm{Mat}_2(\mathbb{R})\).

The above decomposition of \(Spin(9,1)\) into four subrepresentations of \(Spin(4)\), rotated into each other by the R-symmetry group, gives the \(\mathcal{N}=4\) supersymmetry algebra on \(\mathbb{R}^4\). The supersymmetry variations of the reduced component fields are given by \eqref{pestun-1}, read in terms of the splitting \eqref{pestun-3},
\begin{equation}
\label{eq:susyvar-N4}
\begin{aligned}
\delta_\epsilon A_\mu &= \epsilon \Gamma_\mu \Psi \\
\delta_\epsilon \Phi_A &= \epsilon \Gamma_A \Psi \\
\delta_\epsilon \Psi &= \frac{1}{2}\left(\Gamma^{\mu\nu}F_{\mu\nu} + \Gamma^{AB}[\Phi_A,\Phi_B] + \Gamma^{\mu A}D_\mu \Phi_A \right)\epsilon.
\end{aligned}
\end{equation}
The action of the 4d \(\mathcal{N}=4\) SYM theory is \(S^{\mathcal{N}=4} = \int d^4x\ \mathcal{L}\) with the Lagrangian obtained by the reduction of \eqref{pestun-2}. More explicitly,
\begin{equation}
S^{\mathcal{N}=4} = \int d^4x \frac{1}{g_{YM}^2}\mathrm{Tr}\left(
\frac{1}{2}F_{\mu\nu}F^{\mu\nu} + (D_\mu \Phi_A)^2 - \Psi \Gamma^\mu D_\mu \Psi + \frac{1}{2}[\Phi_A,\Phi_B]^2 - \Psi \Gamma^A [\Phi_A,\Psi] \right) .
\end{equation}
Notice that, since contractions of \(A,B\) indices are done with a reduced Minkowski metric, upon dimensional reduction from the Lorentzian theory the scalar \(\Phi_0\) has a negative kinetic term. Analogously to the last section, we consider it to be purely immaginary, \textit{i.e.}\ \(\Phi_0 =: i\Phi_0^E\) with \(\Phi_0^E\) real. This makes the path integral match with the would-be reduction from the Euclidean \((10,0)\)-dimensional theory. 

The \(\mathcal{N}=4\) algebra closes \emph{on-shell}. In fact, it can be obtained from \eqref{eq:susyvar-N4} that
\begin{equation}
\label{pestun-flatalgebra}
\delta_\epsilon^2  = \frac{1}{2}[\delta_\epsilon,\delta_\epsilon] = - \mathcal{L}_v - G_\Phi
\end{equation}
up to the imposition of the EoM for \(\Psi\), \(\Gamma^M D_M\Psi=0\). Here \(v^M:= \epsilon\Gamma^M \epsilon\), \(\mathcal{L}_v\) is the Lie derivative (the action of the translation algebra) with respect to \(v\sim v^\mu \partial_\mu\), and \(G_\Phi\) is an infinitesimal gauge transformation with respect to \(\Phi:=A_M v^M\). A famous non-renormalization theorem by Seiberg \cite{seiberg-nonrenormalization} states that  the \(\mathcal{N}=4\) theory is actually \emph{superconformal}, \textit{i.e.}\ it has a larger supersymmetry algebra that squares to the \emph{conformal algebra}, whose generators are the Poincaré generators plus the generators of dilatations and special conformal transformations.\footnote{More precisely, the theorem states that the beta function of \(g_{YM}\) is zero non-perturbatively. This means that the theory is fully scale invariant at quantum level.} In fact, one can see that \(S^{\mathcal{N}=4}\) is classically invariant under supersymmetry variations with respect to the non-constant spinor
\begin{equation}
\label{pestun-10}
\epsilon = \hat{\epsilon}_s + x^\mu \Gamma_\mu \hat{\epsilon}_c
\end{equation}
where \(\hat{\epsilon}_s, \hat{\epsilon}_c\) are constant spinors parametrizing supertranslations and superconformal transformations. This enlarged supersymmetry algebra closes now on the superconformal algebra,
\begin{equation}
\label{pestun-7}
\delta_\epsilon^2 = - \mathcal{L}_v - G_{\Phi} - R - \Omega
\end{equation}
where  \(R\) is a \(Spin(5,1)^\mathcal{R}\) rotation, acting on scalars as \((R\cdot \Phi)_A = R_A^B\Phi_B\), and on spinors as \((R\cdot \Psi) = \frac{1}{4}R_{AB} \Gamma^{AB} \Psi\), where \(R_{AB} = 2 \epsilon \tilde{\Gamma}_{AB} \tilde{\epsilon}\). \(\Omega\) is an infinitesimal dilatation with respect to the parameter \(2(\tilde{\epsilon}\epsilon)\), acting on the gauge field trivially, on scalars as \(\Omega \cdot\Phi = -2(\tilde{\epsilon}\epsilon)\Phi\) and on spinors as \(\Omega\cdot\Psi = -3(\tilde{\epsilon}\epsilon) \Psi\). This new bosonic transformations are clearly symmetries of \(S^{\mathcal{N}=4}\).

Now we can restrict the attention to an \(\mathcal{N}=2\) subalgebra, considering the variations with respect to Majorana-Weyl spinors of the form 
\begin{equation}
\label{pestun-N2spinor}
\epsilon = \left( \epsilon^L \ 0 \ \epsilon^R \ 0 \right)^T 
\end{equation}
so in the subrepresentation \(\left( \left(\frac{1}{2},0 \right)\oplus \left(0,\frac{1}{2} \right) \right)\oplus \left(\frac{1}{2},0 \right)^\mathcal{R}\oplus (+ \oplus -)^\mathcal{R}\), the eigenspace of \(\Gamma^{5678}\) with eigenvalue +1. With respect to these supersymmetry variations, the gauge multiplet further splits  in
\begin{itemize}
\item \((A_\mu, \Phi_9, \Phi_0, \psi^L, \psi^R)\): the \(\mathcal{N}=2\) vector multiplet;
\item \((\Phi_I, \chi^L, \chi^R)\): the \(\mathcal{N}=2\) hypermultiplet, with value in the adjoint representation of \(G\).
\end{itemize}
These two multiplets are completely disentangled in the free theory limit \(g_{YM}^2\to 0\).\footnote{Working out the restricted supersymmetry variations taking into account the splitting of the gaugino, some non-linear term, coupling the fermionic sectors of the two multiplets, survives because of the gauge interaction. In the free theory limit, after the rescaling \(A_M \mapsto g_{YM}A_M, \Psi\mapsto g_{YM}\Psi\), these terms go to zero.} The same Lagrangian thus equivalently describes an \(\mathcal{N}=2\) matter-coupled gauge theory. It is also possible to insert a mass for the hypermultiplet, breaking explicitly the conformal invariance, and obtain the so-called \(\mathcal{N}=2^*\) theory. Since the fields of the vector multiplet are all scalars under \(SU(2)_R^{\mathcal{R}}\), and the hypermultiplet fields are all in the \(\frac{1}{2}\) representation, 
 these mass terms can at most rotate the hypermultiplet content with an  \(SU(2)_R^{\mathcal{R}}\) transformation. Thus replacing \( D_0\Phi_I \mapsto [\Phi_0,\Phi_I] + M_I^J\Phi_J\) and \(D_0\Psi \mapsto [\Phi_0,\Psi] + \frac{1}{4}M_{IJ}\Gamma^{IJ}\Psi\), where \((M^I_J)\) represents an \(SU(2)_R^{\mathcal{R}}\) rotation in the vector representation, one obtains the mass terms for the \(\Phi_I\) and \(\chi\) fields. Notice that \(\delta_\epsilon^2\) gets a contribution from the Lie derivative with respect to \(v^0\partial_0\cong 0 \mapsto v^0 M \), so that in the \(2^*\) theory
\begin{equation}
\begin{aligned}
\delta_\epsilon^2 \Phi_I &\mapsto (\delta_\epsilon \Phi_I)_{\mathcal{N}=2} - v^0 M_I^J \Phi_J \\
\delta_\epsilon^2 \chi &\mapsto (\delta_\epsilon \chi)_{\mathcal{N}=2} - \frac{1}{4} v^0 M_{IJ}\Gamma^{IJ} \chi .
\end{aligned}
\end{equation} 
In the limits of infinite or zero mass, the pure \(\mathcal{N}=2\) or \(\mathcal{N}=4\) theory is recovered. Notice that, since we argued that \(\Phi_0\) should be integrated over purely immaginary values for the convergence of the path integral, also \((M_{IJ})\) should be taken purely immaginary.

\section{From flat to curved space}
\label{sec:susy-curved}

Recently, localization theory has been extensively used in the framework of quantum field theories with rigid super-Poincaré symmetry, to compute exactly partition functions or expectation values of certain supersymmetric observables, when the theory is formulated on a \emph{curved} compact manifold. This cures the corresponding partition functions from infrared divergences making the path integral better defined, and is consistent with the requirement of periodic boundary conditions on the fields, that allows to generalize properly the Cartan model on the infinite dimensional field space. We will come back to this last point in the next chapter, when we will study circle localization of path integrals, while we close this chapter reviewing the idea behind some common approaches used to formulate rigid supersymmetry on curved space. 

Following the approach of the last section, we would have to understand what does it mean to have supersymmetry on a generic metric manifold \((\mathcal{M},g)\) (Riemannian or pseudo-Riemannian) of dimension \(\dim{\mathcal{M}}=d\) from a geometric point of view. The supersymmetry of flat space was constructed as a super-extension of Minkowski (or Euclidean) space \(\mathbb{R}^d\), starting from a super-extension of the Lie algebra of its isometry group, the Poincaré group. Now in general the Poincaré group is not an isometry group for \(\mathcal{M}\), so the super Poincaré algebra \(\mathfrak{siso}_S(d)\) with respect to some (real or Majorana) spin representation \(S\) cannot be fully interpreted as a \lq\lq supersymmetry\rq\rq\ algebra for the space at hand. We can nonetheless associate in some way this algebra to a suitable super-extension of \(\mathcal{M}\), and then ask what part of it can be preserved as a supersymmetry of this supermanifold. We follow \cite{Alekseevsky-killing-sp} for this geometric introduction.

Since we want to work with spinors, we assume that \(\mathcal{M}\) admits a spin-structure. In particular, it exists a (real) spinor bundle \(S\to\mathcal{M}\) associated to the spin-structure, with structure sheaf \(\mathcal{S}: \mathcal{S}(U) =\Gamma (U,S), \forall U\subset \mathcal{M}\) open. Analogously to the flat superspace of the last section, we make now a super-extension of \(\mathcal{M}\) through this spinor bundle considering the \emph{odd spinor bundle} \(S\mathcal{M}_S := \Pi S\), with body \(\mathcal{M}\) and structure sheaf \(\bigwedge \mathcal{S}^*: \bigwedge\mathcal{S}^*(U) = C^\infty (\mathcal{U}) \otimes \bigwedge(S_0^*), \forall U\subset \mathcal{M}\) open, where \(S_0\) is the typical fiber of \(S\). From proposition \ref{prop:isom-supervectorbundles}, for any \(p\in \mathcal{M}\), there is an isomorphism of \(\mathbb{Z}_2\)-graded vector spaces \(T_p S\mathcal{M}_S \cong T_p\mathcal{M} \oplus S_p[1]\).

Now, the vector bundle \(V:= T\mathcal{M} \oplus S\) over \(\mathcal{M}\) carries the canonical spin-connection induced by the Levi-Civita connection of the manifold \((\mathcal{M},g)\). Assume that we can pick a parallel non-degenerate \(Spin(d)\)-invariant bilinear form \(\beta\) on \(S\) with respect to this connection.\footnote{This is always true if \(\mathcal{M}\) is simply-connected.} We can think of the \(Spin(d)\)-invariant bilinear form \(\tilde{g}=g+\beta\) as a (pseudo-)Riemannian metric on the supermanifold \(S\mathcal{M}_S\). Moreover, associated to the bilinear form \(\beta\) we have the map \(\Gamma: S^2 \to T\mathcal{M}\), that is a point-wise generalization of the usual symmetric and Spin-equivariant bilinear form for a Spin representation \(S_0 \cong S_p, \forall p\in\mathcal{M}\). This means that we can consider the bundle
\begin{equation}
\mathfrak{p}(V) := \mathfrak{spin}(d) \oplus V 
\end{equation}
as a \emph{bundle of super Poincaré algebras} over \(\mathcal{M}\), with the bracket structure extended through \(\Gamma\).

Having found how to (point-wise) set up the super Poincaré algebra on top of the supermanifold \(S\mathcal{M}_S\) constructed from \((\mathcal{M},g)\), we wish to establish which section of the super Poincaré bundle \(\mathfrak{p}(V)\) produces a suitable generalization of \lq\lq super-isometry\rq\rq\ for \(S\mathcal{M}_S\). In particular, we pay attention to which sections of \(S\), as the odd  subbundle of \(\mathfrak{p}(V)\), generates \lq\lq supersymmetries\rq\rq\ of the generalized metric \(\tilde{g}\). This problem was analyzed in \cite{Alekseevsky-killing-sp}, and connected to the problem of finding solution to the so called \emph{Killing spinor equation} for a section \(\psi\) of \(S\to \mathcal{M}\).
\begin{defn} A section \(\psi\) of the spinor bundle \(S\to \mathcal{M}\) is called a \emph{twistor spinor} (or \emph{conformal Killing spinor}) if it exists another section \(\phi\) such that, for any vector field \(X\in \Gamma(T\mathcal{M})\),
\begin{equation}
\label{eq:killing-sp-eq}
\nabla_X \psi = X \cdot \phi
\end{equation}
where \(X\cdot \phi = X^\mu \gamma_\mu \phi\) is the \emph{Clifford multiplication}. If in particular \(\phi = \lambda \psi\), for some constant \(\lambda\), the spinor \(\psi\) is called \emph{Killing spinor}.
\end{defn}
The equation (\ref{eq:killing-sp-eq}) is called \emph{twistor} or \emph{Killing spinor equation}. Note that \eqref{eq:killing-sp-eq} directly implies \(\phi =\pm (1/\dim(\mathcal{M})) \slashed{\nabla}\psi\), where \(\slashed{\nabla}:=\gamma^\mu \nabla_\mu\) is the Dirac operator, and the sign depends on conventions. The twistor spinor equation is thus equivalently written as
\begin{equation}
\nabla_X \psi = \pm \frac{1}{\dim{(\mathcal{M})}} X \cdot \slashed{\nabla}\psi .
\end{equation}
This characterizes the Killing spinors as those twistor spinors that satisfies also the Dirac equation \(\slashed{\nabla}\psi = m \psi\) for some constant \(m\). The main result proved in \cite{Alekseevsky-killing-sp} is stated in the following theorem.
\begin{thm}
\label{thm:superisometries}
Consider the supermanifold \(S\mathcal{M}_S\) with the bilinear form \(\tilde{g}=g+\beta\), and a section \(\psi\) of \(S\). The odd vector field \(X_\psi\) associated to \(\psi\) is a Killing vector field of \((S\mathcal{M}_S, \tilde{g})\) if and only if \(\psi\) is a twistor spinor.
\end{thm}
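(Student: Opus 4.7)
My approach would be to unpack both sides of the equivalence in local coordinates on $S\mathcal{M}_S$ and compare the resulting differential conditions order by order in the odd coordinates $\theta^a$. First, in local coordinates $(x^\mu,\theta^a)$ adapted to a local trivialization of the spinor bundle $S\to\mathcal{M}$, I would write the odd vector field associated to a section $\psi$ of $S$ in the form analogous to the flat supercharge \eqref{eq:4dN1-generators}, namely
\begin{equation*}
X_\psi = \psi^a(x)\frac{\partial}{\partial\theta^a} + \theta^a(\gamma^\mu)_{ab}\psi^b(x)\,\nabla_\mu
\end{equation*}
where $\nabla_\mu$ involves the Levi-Civita spin-connection so that $X_\psi$ is $Spin(d)$-covariant and reduces to the usual supertranslation generator in the flat limit. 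This is essentially forced by requiring the square $X_\psi^2=\tfrac12[X_\psi,X_\psi]$ to realize, via $\Gamma:S\odot S\to T\mathcal{M}$, an even vector field $V_\psi:=\Gamma(\psi,\psi)$ on $\mathcal{M}$, consistent with the bundle-of-super-Poincaré-algebras structure $\mathfrak{p}(V)$.

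Next I would compute $\mathcal{L}_{X_\psi}\tilde{g}$ with $\tilde{g}=g+\beta$, where $\beta$ lifts to $\Pi S$ as a fiberwise pairing on the $\theta$ directions. Because $X_\psi$ is odd, the Killing condition splits into three independent tensorial equations according to the $\mathbb{Z}_2$-grading of the pair of legs on which $\tilde{g}$ is evaluated: an even-even piece, a mixed even-odd piece, and an odd-odd piece. I expect the odd-odd component to be identically satisfied, since $\beta$ is parallel and $Spin(d)$-invariant by assumption, so the derivative of $\beta$ along $X_\psi$ can be traded for connection terms that cancel. The mixed even-odd component, expanded to leading order in $\theta$, should give an equation of the schematic form $\nabla_\mu\psi = \gamma_\mu\phi$ for a spinor $\phi$ determined by $\psi$ (namely $\phi\propto\slashed{\nabla}\psi$ upon tracing), which is precisely the twistor spinor equation \eqref{eq:killing-sp-eq}. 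This is the core of the equivalence and both implications should emerge from this piece.

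For the final step, the even-even component must be shown to follow automatically once $\psi$ is a twistor spinor. Concretely, $\mathcal{L}_{V_\psi}g=0$ means $\nabla_{(\mu}V_{\nu)}=0$ where $V_\psi=\Gamma(\psi,\psi)$; a direct calculation gives $\nabla_\mu V_\nu = 2\Gamma(\nabla_\mu\psi,\psi)$, and substituting the twistor equation together with the $Spin(d)$-equivariance of $\Gamma$ makes this expression antisymmetric in $(\mu,\nu)$, so $V_\psi$ is automatically a Killing vector of $(\mathcal{M},g)$. One still has to verify that all higher orders in $\theta$ of $\mathcal{L}_{X_\psi}\tilde{g}$ vanish once the first-order conditions are imposed, which should be forced by the integrability conditions of the twistor equation (in particular, $\nabla_\mu\phi$ is algebraic in $\psi$).

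The main obstacle will be the bookkeeping of signs and orderings: Lie derivatives of graded-symmetric tensors along odd vector fields involve sign conventions that are easy to mismanage, and one must be careful that a symmetric pairing on odd legs corresponds to an antisymmetric expression in ordinary components. A related subtle point is justifying rigorously that the leading-in-$\theta$ expansion captures all non-trivial content of the Killing condition; making this rigorous, rather than grinding through the higher-order terms directly, is where most of the care is required and is really what distinguishes the super-geometric statement from a mere formal analogy with the flat-space super-translation algebra.
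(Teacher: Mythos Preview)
The paper does not contain its own proof of this theorem: it is quoted as ``the main result proved in \cite{Alekseevsky-killing-sp}'', and immediately after stating it the text says ``We refer to the above cited article for the details.'' There is therefore no in-paper proof to compare your attempt against.

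Your outline is a reasonable sketch of how such results are typically established --- compute the super-Lie-derivative of $\tilde g$ along $X_\psi$ in adapted coordinates and read off the conditions by graded type --- but two points would need care before it could stand on its own. First, the paper never spells out the assignment $\psi\mapsto X_\psi$, so the specific form you write down is a guess that would have to be checked against the definition actually used in \cite{Alekseevsky-killing-sp}. Second, your treatment of the even-even sector asserts that the twistor equation makes $\nabla_\mu V_\nu$ antisymmetric, hence $V_\psi=\Gamma(\psi,\psi)$ Killing; the standard computation in fact gives $\nabla_{(\mu}V_{\nu)}\propto g_{\mu\nu}$, so $V_\psi$ is in general only a \emph{conformal} Killing vector. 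Whether this is what the super-Killing condition on $(S\mathcal{M}_S,\tilde g)$ actually demands depends on the precise graded definition adopted in the cited reference, which the paper does not reproduce.
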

Here the Killing vector condition on the supermanifold is a conceptually straightforward generalization of the usual concept of Killing vector fields on a smooth manifold. It can be natually stated in terms of superframe fields. We refer to the above cited article for the details. Notice that, in particular, Killing spinors generate infinitesimal isometries of the supermanifold \(S\mathcal{M}_S\), and thus are good candidates to describe the \lq\lq preserved\rq\rq\ supersymmetries of the odd part of the super Poincaré algebra, when this is associated to the generic curved manifold \(\mathcal{M}\) in the way we saw above. See also \cite{baum-killing-sp} for a review on Killing spinors in (pseudo-)Riemannian geometry.

\medskip
From the QFT point of view, it is possible to derive a (generalized) Killing spinor equation, describing the preserved supercharges on the curved space, from a dynamical approach. This idea is based on a procedure also valid in the non-supersymmetric setting, when one aims to deform a certain QFT to redefine it on a generic curved manifold. In this case, one couples the theory to \emph{background} gravity, letting the metric fluctuate.\footnote{Since the metric can fluctuate and the field theory is defined locally, there is no harm in principle in considering different topologies of the base manifold, like requiring it to be compact.} Then the gravitational sector is decoupled from the rest of the theory taking the gravitational constant \(G_N \to 0\), while the metric is linearized around the chosen \emph{off-shell} configuration \(g=\eta + h\) and the higher order corrections disappear in the limit of weak gravitational interaction. It is important that we do not constrain the gravitational field to satisfy the equation of motion, since it is considered as a background (classical) field. The same idea applies when the theory is defined in the supersymmetric setting: in this case to preserve supersymmetry one has to couple to background \emph{supergravity (SUGRA)}. The resulting field theory will contain then more fields belonging to the so-called \emph{supergravity multiplet}. This time, taking the limit \(G_N\to 0\) we fix all the background supergravity multiplet to an allowed off-shell configuration. Note that in particular, the auxiliary fields are not eliminated in terms of the other fields using their equations of motion. If then there are supergravity transformations that leave the given background invariant, we say that the corresponding rigid supercharges are \emph{preserved} on this background. This procedure was systematically introduced in \cite{festuccia-susy-curved}, then many cases and classification were made in different dimensions and with different degree of supersymmetriy (see for example \cite{Zaffaroni-susy-curved-holography,Closset-susy-curved3d,Dumitrescu-curved,Kehagias-susy-curved}).

\subsection{Coupling to background SUGRA} 

Suppose we have a supersymmetric field theory formulated on flat space specified by its Lagrangian \(\mathcal{L}^{(0)}\), whose variation under supersymmetry is a total derivative:\footnote{For simplicity, we consider now the formulation on the even space \(\mathbb{R}^d\) or \(\mathcal{M}\), at the level of component fields of the given supersymmetry multiplets. The supersymmetry variation of these component fields are those coming from the action of the odd supertranslations in superspace.}
\begin{equation}
\delta\mathcal{L}^{(0)} = *d*(\cdots) = \partial_\mu (\cdots)^\mu
\end{equation}
We can introduce supergravity by requiring the action of the super-Poincaré group to be \emph{local}, employing the usual gauge principle and minimal coupling or Noether procedure. 

In the non-supersymmetric setting, this would mean to introduce a gauge symmetry under local coordinate transformations, realized via diffeomorphisms on \(\mathcal{M}\). The Noether current associated to such infinitesimal transformations is the \emph{energy-momentum tensor} \(T^{\mu\nu}\), that we take to be symmetric.\footnote{In general this will not be a symmetric tensor, but there always exists a suitable modification that makes it symmetric, and moreover equivalent to the Hilbert definition of energy momentum tensor as a source of gravitational field. This is the \emph{Belinfante–Rosenfeld tensor} 
\begin{equation}
\tilde{T}^{\mu\nu} := T^{\mu\nu} + \frac{1}{2}\nabla_\lambda ( S^{\mu\nu\lambda} + S^{\nu\mu\lambda} - S^{\lambda\nu\mu})
\end{equation}
where \(S^\mu_{\nu\lambda}\) is the spin part of the Lorentz generators in a given spin representation satisfying \(\nabla_\mu S^\mu_{\nu\lambda} = T_{\nu\mu} - T_{\mu\nu}\), and \(\nabla\) is an appropriate torsion-free spin-covariant derivative induced from the metric (see for example \cite{diFrancesco-CFT}).} The minimal coupling procedure then requires to modify the Lagrangian,
\begin{equation}
\mathcal{L}' = \mathcal{L}^{(0)} + h_{\mu\nu} T^{\mu\nu} + O(h^2)
\end{equation}
where \(h_{\mu\nu}\) is regarded as a variation of the metric from the flat space values \(\eta_{\mu\nu}\), and \(O(h^2)\) are seagull non-linear terms that can be fixed requiring the gauge invariance of \(\mathcal{L}'\). The resulting non-linear coupling is obtainable substituting in the original theory 
\begin{equation}
d \mapsto \nabla , \qquad \eta \mapsto g=\eta + h,
\end{equation}
where \(\nabla\) is the gauge-covariant derivative with respect to a connection \(\Gamma\), that we take as the Levi-Civita connection. The theory is now coupled to a gravitational (classical) \emph{background}. If we want to make the graviton field \(h\) dynamical, we can add an Hilbert-Einstein term to \(\mathcal{L}^{(0)}\),
\begin{equation}
\label{eq:H-Eaction}
\mathcal{L}_{HE} = -\frac{\sqrt{|\det{ g}|}}{2\kappa^2} Ric_g
\end{equation}
where \(Ric_g\) is the Ricci scalar associated to \(g\), and \(\kappa:= 1/M_p = \sqrt{8\pi G_N}\).

In the supersymmetric setting, gauging the super-Poincaré group leads to the introduction of more fields into the theory, since as we know they can be interpreted as components of superfields in superspace, and thus belong to supersymmetry multiplets. In particular, we have to introduce the \emph{graviton multiplet} composed by the metric \(g\), the gravitino \(\Psi\) and other (maybe auxiliary) fields. The particular field content depends on the number \(\mathcal{N}\) of supersymmetries, the dimensionality of the theory, the presence or absence of an R-symmetry and whether the theory is or not superconformal. Consequently, also the energy-momentum tensor will belong to a multiplet, the so-called \emph{supercurrent multiplet}, composed by \(T\), a \emph{supercurrent} \(J\) associated to the local invariance under (odd) supertranslations, and other fields. We can schematically perform the first steps of the Noether procedure to see how the components of these multiplets arise naturally.

We start from the odd part of the super-Poincaré algebra \( \mathfrak{iso}(\mathbb{R}^d)\oplus S[1]\), writing the infinitesimal variation of the Lagrangian in terms of the supercurrent:
\begin{equation}
\delta_\epsilon \mathcal{L}^{(0)} \equiv \epsilon \cdot \mathcal{L}^{(0)} = (\partial_\mu J^\mu ) \epsilon 
\end{equation}
where \(\epsilon\) is now a Majorana spinor field, \textit{i.e.} a section of the spinor bundle with fiber \(S\). The supercurrent \(J\) is an \(S\)-valued vector field, and the spinor contraction is done via the usual charge conjugation matrix. We couple this current to a gauge field \(\Psi\), to be identified with the gravitino, that is locally an \(S\)-valued 1-form such that, at linearized level,
\begin{equation}
\delta_\epsilon \Psi_\mu = \frac{1}{\kappa}\partial_\mu \epsilon
\end{equation}
where the constant \(\kappa\) is introduced for dimensional reasons.\footnote{If we canonically take mass dimensions of scalars to be \((d-2)/2\) and of spinors to be \((d-1)/2\), and since schematically \[
\delta_\epsilon (boson) = (fermion) \epsilon \]
than \([\epsilon] = -1/2\), so \(\kappa\) must be a dimensionfull parameter of mass dimension \([\kappa]=(2-d)/d\). We can so identify this constant as the gravitational constant previously defined.} Then we add a term to the Lagrangian:
\begin{equation}
\mathcal{L}' = \mathcal{L}^{(0)} + \kappa \Psi_\mu J^\mu .
\end{equation}
Now the variation of \(\mathcal{L}'\) is proportional to the variation of the current \(\delta_\epsilon J\). Since the supercurrent is a supersymmetry variation of the original Lagrangian, its variation will be proportional to the action of the translation generators \(P_\mu\):
\begin{equation}
\begin{aligned}
\delta_\epsilon^2 \mathcal{L}^{(0)} &=  \epsilon \cdot (\epsilon \cdot \mathcal{L}^{(0)}) = \partial_\mu (\delta_\epsilon J^\mu ) \epsilon  \\
&= (1/2) [\epsilon , \epsilon ] \cdot \mathcal{L}^{(0)} =  \overline{\epsilon}\gamma^\nu \epsilon (P_\nu \cdot \mathcal{L}^{(0)} ) =  (\overline{\epsilon}\gamma_\nu \epsilon ) \partial_\mu T^{\mu\nu} \\
\Rightarrow \delta_\epsilon J^\mu &= \overline{\epsilon} \gamma_\nu T^{\nu\mu}
\end{aligned}
\end{equation}
where in the second line we wrote the variation under the action of the translation generators in terms of the energy-momentum tensor \(T\). We try to restore the gauge-invariance of the Lagrangian by minimally coupling this new current to a new gauge field \(h\), that we identify as a metric variation, the \emph{graviton}
\begin{equation}
\mathcal{L}'' = \mathcal{L}^{(0)} + \kappa  \Psi_\mu J^\mu + h_{\mu\nu}T^{\mu\nu} ,
\end{equation}
and naturally requiring the supersymmetry variation of the graviton to be
\begin{equation}
\delta_\epsilon h_{\mu\nu} = \kappa \overline{\epsilon}\gamma_{(\mu} \Psi_{\nu)} ,
\end{equation}
making it the superpartner of the gravitino \(\Psi\). 

The Lagrangian \(\mathcal{L}''\) is again not supersymmetric, since the variation \(\delta_\epsilon T^{\mu\nu} \neq 0\) in general, so the Noether procedure is not terminated yet. It is not easy to complete this procedure in this way, but in principle repeating this 
passages we would introduce more linearly coupled currents and gauge fields that, motivated by supersymmetry, we expect to come from the SUGRA supermultiplets mentioned above. To ensure the supersymmetry of the full Lagrangian at non-linear level, as in non supersymmetric gauge theories, non-linear couplings could have to be introduced as well as non-linear terms in the supersymmetry variations. Summarizing, we expect the fully coupled Lagrangian to be schematically of the form
\begin{equation}
\label{eq:min-coupl-sugra-L}
\mathcal{L} = \mathcal{L}^{(0)} + \kappa J^\mu \Psi_\mu + h_{\mu\nu}T^{\mu\nu} + \sum_i \mathcal{B}^i \cdot \mathcal{J}^i + (\mathrm{seagull\ terms})
\end{equation} 
where \(\mathcal{B}\) is the multiplet of background gauge fields \((h,\Psi,\cdots)\), \(\mathcal{J}\) the supercurrent multiplet \((T, J, \cdots)\), and we referred to possible higher-order terms in the background fields as seagull terms. As already said, the particular field content of these multiplets is not unique, so we remain generic for the moment and refer to the next subsections for some examples.  We can absorb the terms proportional to \(h\) as in the non-supersymmetric case, by making the substitutions \( d \mapsto \nabla\) and \( \eta \mapsto g=\eta+h\). If we want to have a full gravitational theory, we can add a kinetic term for the source fields, and complete their supersymmetry variations with possible non-linear terms to ensure gauge invariance. Regarding the metric and the gravitino, the kinetic terms are given by the Hilbert-Einstein action (\ref{eq:H-Eaction}) and the \emph{Rarita-Schwinger} action
\begin{equation}
\mathcal{L}_{RS} = - \frac{\sqrt{|\det{g}|}}{2} \overline{\Psi}_\mu \gamma^{\mu\nu\rho}(\nabla_\nu \Psi)_\rho ,
\end{equation}
where \(\gamma^{\mu\nu\rho}:=\gamma^{[\mu}\gamma^\nu \gamma^{\rho]}\), and \(\nabla\) acts on spinors via the spin connection, \((\nabla_\mu \Psi)_\nu = \partial_\mu \Psi_\nu + \frac{1}{4}\omega_\mu^{ab} \gamma_{ab} \Psi_\nu - \Gamma_{\mu\nu}^\rho \Psi_\rho\). The supersymmetry variations will be generically
\begin{equation}
\begin{aligned}
&\delta_\epsilon h_{\mu\nu} = \kappa \overline{\epsilon} \left\lbrace \gamma_{(\mu}\Psi_{\nu)} + (\cdots)^F \right\rbrace \\
&\delta_\epsilon \Psi_\mu = \frac{1}{\kappa}\left\lbrace \nabla_\mu   + (\cdots)_\mu^B \right\rbrace \epsilon + O(\kappa\Psi^2\epsilon) 
\end{aligned}
\end{equation}
where we stressed that non-linear higher-oreder terms for the gravitino are \(\kappa\)-suppressed, and the ellipses in both cases collect  contributions from the other (fermionic or bosonic, respectively) fields of the supergravity multiplet. Notice that also the supersymmetry variations of the field content of the original \(\mathcal{L}^{(0)}\) get modified with respect to their flat-space version. Once one has the full supergravity theory, their transformation rules follow from the corresponding formulas in the appropriate matter-coupled off-shell supergravity.

We now consider the \emph{rigid limit} \(G_N\to 0\) (or \(\kappa \to 0\), or \(M_P \to \infty\)) together with the choice of a given background gravitational multiplet \(\mathcal{B}\) compatible with the original request \((\mathcal{M},g)\).\footnote{We stress that a rigid supersymmetric background is characterized by a full set of supergravity background fields, \textit{i.e.} specifying only the metric does not determine the background. In particular, there are distinct backgrounds that have the same metric but lead to different partition functions.} Since we think at this classical configuration as a VEV, we require all the fermion fields in the supergravity multiplet to vanish on this background. We also look for those supergravity transformations that leave this background invariant.\footnote{This can be interpreted as a superisometry requirement with respect to the graviton multiplet.} These requirements produce the following effects:
\begin{itemize}
\item Fermionic gravitational fields as well as the kinetic term for the bosonic gravitational sector do not contribute to the lagrangian:
\begin{equation}
\mathcal{L} = \left. \mathcal{L}^{(0)}\right|_{\substack{d\to\nabla \\ \eta\to g }}  + \sum_i \mathcal{B}_B^i \cdot \mathcal{J}_B^i + (\mathrm{seagull\ terms})_B
\end{equation}
At the same time, supersymmetry variations of the bosonic gravitational fields automatically vanish.

\item Requiring the supersymmetry of the background is then equivalent to
\begin{equation}
\delta_\epsilon B_F = 0 .
\end{equation}
In particular, this condition on the gravitino generates the \emph{generalized Killing spinor equation}
\begin{equation}
\delta_\epsilon \Psi_\mu = 0 \quad \Leftrightarrow \quad \nabla_\mu \epsilon = (\cdots)_\mu \epsilon
\end{equation}
where, again, ellipses stand for terms proportional to bosonic fields in the graviton multiplet. The solutions to this equation determine which sections of the spinor bundle \(S\to\mathcal{M}\) generates the preserved supersymmetry transformations on \(\mathcal{M}\).
\end{itemize}

\subsection{Supercurrent multiplets and metric multiplets}

As we wrote before, the field content of supercurrent multiplets depends on the general properties of the theory at hand. In \cite{Dumitrescu-supercurrents} it was given a definition from basic general requirements starting from superfields in superspace, that allows a classification by specializing to the various particular cases. It is shown that the most general supercurrent is a real superfield \(\mathcal{S}_{a\dot{a}}\) satisfying
\begin{equation}
\label{eq:s-multiplet}
\begin{array}{ll}
\multicolumn{2}{l}{\tilde{D}^{\dot{a}} \mathcal{S}_{a\dot{a}} = \chi_a + \mathcal{Y}_a} \\
\tilde{D}^{\dot{a}} \chi_a = 0  &   \tilde{D}^{\dot{a}} \chi^\dagger_{\dot{a}} = D^a \chi_a \\
D_{(a} \mathcal{Y}_{b)} = 0   &   \tilde{D}^2 \mathcal{Y}_a = 0 .
\end{array}
\end{equation}
Every supersymmetric theory has such an \(\mathcal{S}\)-multiplet, containing the stress energy tensor \(T\) and the supercurrent \(S\). They are the only component fields with spin larger than one, since they couple to the graviton and the gravitino in the metric multiplet, that respectively are the only component fields with spin higher than one in this multiplet.  We report here special examples in 4 and 3 dimensions, that can be derived solving the constraints (\ref{eq:s-multiplet}) in cases where additional conditions on the superfields \(\chi_a\) and \(\mathcal{Y}_a\) hold.

For \(\mathcal{N}=1\) in 4-dimensions we have three possible interesting special cases:
\begin{enumerate}

\item The majority of theories admit a reduction of the \(\mathcal{S}\)-multiplet into the so-called \emph{Ferrara-Zumino (FZ) multiplet}
\begin{equation}
\mathcal{J}^{FZ}_\mu \to \left( j_\mu , (S_\mu)_a, x, T_{\mu\nu} \right)
\end{equation}
where \(j_\mu\) is a vector field and \(x\) a complex scalar field.

\item If the theory has a \(U(1)_R\) symmetry, the \(\mathcal{S}\)-multiplet reduces to the so-called \emph{\(\mathcal{R}\)-multiplet}, whose lower degree component is the conserved R-current \(j_\mu^{(R)}\):
\begin{equation}
\mathcal{R}_\mu \to \left( j_\mu^{(R)}, (S_\mu)_a , T_{\mu\nu}, C_{\mu\nu} \right)	
\end{equation}
where \(C_{\mu\nu}\) are the components of a conserved 2-form current, the so-called \emph{brane current}.\footnote{In curved space and in presence of topological defects as strings (1-brane) or domain walls (2-brane), the supersymmetry algebra \eqref{eq:susy-algebra-chiral} can be modified by the presence of \emph{brane charges}, \[ \begin{aligned}
 \left[ Q_a , \tilde{Q}_{\dot{b}}\right] &= 2(\gamma^\mu)_{a\dot{b}} (P_\mu + Z_\mu) \\  [Q_a , Q_b] &= (\gamma^{\mu\nu})_{ab} \tilde{Z}_{\mu\nu}
\end{aligned} \] 
where \(Z_\mu, \tilde{Z}_{\mu\nu}\) are non-zero for strings and domain walls, respectively. The corresponding tensor currents are the \emph{brane currents} \(C_{\mu\nu}, \tilde{C}_{\mu\nu\rho}\), that are topologically conserved. See \cite{Ferrara-central-extension, Gorsky-central-extension, Dumitrescu-supercurrents} for more details.}

\item For a superconformal theory, the \(\mathcal{S}\)-multiplet decomposes into the smaller supercurrent
\begin{equation}
\mathcal{J}_\mu \to \left( j_\mu^{(R)}, (S_\mu)_a , T_{\mu\nu} \right)
\end{equation}
where \(j_\mu^{(R)}\) is a conserved superconformal \(U(1)_R\)-current.
\end{enumerate}
Both the FZ multiplet and the \(\mathcal{R}\)-multiplet contain 12+12 real degrees of freedom out of the initial 16+16 of the general \(\mathcal{S}\)-multiplet, while the superconformal multiplet is reduced to 8+8 real degrees of freedom. The FZ multiplet can be coupled to the so-called \lq\lq old minimal supergravity multiplet\rq\rq\ \cite{stelle-west-old-sugra}:
\begin{equation}
\mathcal{H}_\mu \to \left( 	b_\mu, (\Psi_\mu)_a, M, h_{\mu\nu} \right)
\end{equation}
where \(b_\mu\) is a genuine 1-form field (\textit{i.e.} non gauge), \(M\) is a complex scalar, \((\Psi_\mu)_a\) is the gravitino and \(h_{\mu\nu}\) is the graviton. The variation of the gravitino in this case is given by \cite{festuccia-susy-curved}
\begin{equation}
\delta_\epsilon \Psi_\mu = -2\nabla_\mu \epsilon + \frac{i}{3}\left( 	M \gamma_\mu + 2b_\mu +2b^\nu \gamma_{\mu\nu} \right) \epsilon
\end{equation}
that implies a generalized Killing spinor equation of the form
\begin{equation}
\nabla_\mu \epsilon =  \frac{i}{6}\left(M \gamma_\mu + 2b_\mu +2b^\nu \gamma_{\mu\nu} \right) \epsilon
\end{equation}
in the Majorana spinor \(\epsilon\), given a background multiplet. In theories with an R-symmetry, one can couple the \(\mathcal{R}\)-multiplet to the \lq\lq new minimal supergravity multiplet\rq\rq\cite{sohnius-west-new-sugra}:
\begin{equation}
\mathcal{H}^{(new)}_\mu \to \left( 	A^{(R)}_\mu, (\Psi_\mu)_a, h_{\mu\nu}, B_{\mu\nu} \right)
\end{equation}
where \(A^{(R)}_\mu\) is the Abelian gauge field associated to the \(U(1)_R\) symmetry, and \(B_{\mu\nu}\) is a 2-form gauge field that is often treated through its Hodge dual \(V^\mu := i(\star B)^\mu = (i/2) \varepsilon^{\mu\nu\rho\sigma}\partial_\nu B_{\rho\sigma}\). The variation of the gravitino in this case gives rise to the following Killing spinor equation, that in 2-component notation is \cite{festuccia-susy-curved}
\begin{equation}
\label{eq:Killingsp-4d}
\begin{aligned}
&\left(\nabla_\mu - i A^{(R)}_\mu \right)\epsilon_a = -iV_\mu \epsilon_a - iV^\nu (\gamma_{\mu\nu}\epsilon)_a  \\
&\left(\nabla_\mu + i A^{(R)}_\mu \right)\tilde{\epsilon}_{\dot{a}} = iV_\mu \tilde{\epsilon}_{\dot{a}} + iV^\nu (\gamma_{\mu\nu}\tilde{\epsilon})_{\dot{a}}
\end{aligned}
\end{equation}
where in the parenthesis on the LHS there is a gauge-covariant derivative with respect to the \(U(1)_R\) symmetry, that acts with opposite charge on the two chiral sector of the spin representation, see (\ref{eq:U(1)R-supercharges}).

The \(\mathcal{N}=2\) case in 3 Euclidean dimensions can be derived in superspace by dimensional reduction from the four dimensional case: the supercurrent is reduced to a three dimensional \(\mathcal{S}\)-multiplet with 12+12 real DoF, plus a real scalar superfield \(\hat{\mathcal{J}} = \mathcal{S}_0 \equiv \mathcal{S}_{a\dot{a}}(\sigma_0)^{a\dot{a}}\), that contains 4+4 real DoF. Again, there are special cases analogue of those above: a FZ multiplet, an \(\mathcal{R}\)-multiplet, and a superconformal multiplet. For example, the \(\mathcal{N}=2\) \(\mathcal{R}\)-multiplet in 3 dimensions has the field content 
\begin{equation}
\mathcal{R}_\mu \to \left( j_\mu^{(R)}, j_\mu^{(Z)}, J, (S_\mu)_a, (\tilde{S}_\mu)_a, T_{\mu\nu}  \right)
\end{equation}
where \(j_\mu^{(R)}\) is the conserved R-current, \(j_\mu^{(Z)}\) is the conserved central charge current and \(J\) is a scalar operator, that with the conserved supercurrents and enery-momentum tensor sum up to 8+8 real DoF. This multiplet couples to the tree dimensional \(\mathcal{N}=2\) new minimal supergravity multiplet 
\begin{equation}
\mathcal{H}^{(new)}_\mu \to \left( A^{(R)}_\mu, C_\mu, H, (\psi_\mu)_a, (\tilde{\psi}_\mu)_a, h_{\mu\nu}  \right)
\end{equation}
with the graviton, two gravitini, two gauge 1-forms \(A^{(R)}\) and \(C\), and a scalar \(H\). The 1-form \(C\) is often treated in terms of the vector field \(V^\mu := i(\star dC)^\mu = i\varepsilon^{\mu\nu\rho} \partial_\nu C_\rho\) that is Hodge dual to its field strength. Putting to zero the gravitini and their variations leads to the generalized Killing spinor equations \cite{Closset-susy-curved3d}
\begin{equation}
\label{eq:Killingsp-3d}
\begin{aligned}
&\left(\nabla_\mu - i A^{(R)}_\mu \right)\epsilon = - \left( \frac{1}{2}H\gamma_\mu  + iV_\mu + \frac{1}{2}\varepsilon_{\mu\nu\rho} V^\nu \gamma^\rho \right) \epsilon  \\
&\left(\nabla_\mu + i A^{(R)}_\mu \right)\tilde{\epsilon} = - \left( \frac{1}{2}H\gamma_\mu  - iV_\mu - \frac{1}{2}\varepsilon_{\mu\nu\rho} V^\nu \gamma^\rho \right) \tilde{\epsilon}
\end{aligned}
\end{equation}
where in this case the two spinors \(\epsilon, \tilde{\epsilon}\) have to be treated as independent. Notice that both equations (\ref{eq:Killingsp-4d}) and (\ref{eq:Killingsp-3d}) are linear in the 4 spinor components, so their solutions (if exist) span a vector space of dimension less or equal than 4.

\subsection[N=2 gauge theories on the round 3-sphere]{\(\mathcal{N}=2\) gauge theories on the round 3-sphere}
\label{subsec:3dN2-curved}

It was shown that, in general, solutions of the Killing condition (\ref{eq:Killingsp-4d}) in four dimensions exist if \((\mathcal{M},g)\) is an Hermitian manifold, \textit{i.e.} \(\mathcal{M}\) has an integrable complex structure and \(g\) is a compatible Hermitian metric. Analogously, the existence of a solution to  (\ref{eq:Killingsp-3d}) in three dimensions was shown to be equivalent to the manifold admitting a \emph{transversally holomorphic fibration}.\footnote{This is an odd-dimensional analogue to a complex structure. It means, roughly speaking, that \(\mathcal{M}\) is locally isomorphic to \(\mathbb{R}\times \mathbb{C}\), and its transition functions are holomorphic in the \(\mathbb{C}\)-sector.} If one is interested in the case of maximal number of Killing spinor solutions, a suitable integrability condition (see \cite{Closset-susy-curved3d}) gives
\begin{equation}
\begin{aligned}
&H = \text{const} , \qquad d(A^{(R)}-V)=0 , \qquad  g(V,V)=\text{const} , \\
&(\nabla_\mu V)_\nu = -iH\varepsilon_{\mu\nu\rho}V^\rho ,  \\
&R_{\mu\nu} = -V_\mu V_\nu + g_{\mu\nu}( g(V,V) +2H^2).
\end{aligned}
\end{equation}
In particular, if we take \(A^{(R)}=V=0\), then \(\mathcal{M}\) is of Einstein type and so it has constant sectional curvature. \(H^2\) is then interpreted as a cosmological constant, and \(\mathcal{M}\) can be \(\mathbb{S}^3, \mathbb{T}^3\) or \(\mathbb{H}^3\) if \(H\) is purely immaginary, zero or real. All of them are examples of maximally supersymmetric backgrounds in \(\mathcal{N}=2\), thus we have 2 solutions for \(\epsilon\) and 2 solutions for \(\tilde{\epsilon}\) to the Killing equations
\begin{equation}
\nabla_\mu \epsilon = -\frac{H}{2}\gamma_\mu \epsilon \ ; \qquad \nabla_\mu \tilde{\epsilon} = -\frac{H}{2}\gamma_\mu \tilde{\epsilon} .
\end{equation}
In particular, if we take \(H=-(i/l)\), this solutions are consistent with the \(\mathbb{S}^3\) round metric
\begin{equation}
\label{eq:round-metric-3sph}
g = l^2 \left( d\varphi_1 \otimes d\varphi_1 + \sin^2\varphi_1 \ d\varphi_2 \otimes d\varphi_2 + \sin^2\varphi_1 \sin^2\varphi_2 \ d\varphi_3 \otimes d\varphi_3 \right).
\end{equation}

The action of the supersymmetry algebra on the curved manifold can be derived  by taking the rigid limit of the appropriate algebra of supergravity transformations. In the 3-dimensional case, it can be derived by a \lq\lq twisted\rq\rq\ reduction of the \(\mathcal{N}=1\) supergravity in 4 dimensions. The 4-dimensional supersymmetry algebra realizes on the curved manifold as
\begin{equation}
\label{eq:4dimsusycurved}
[\delta_{\epsilon}, \delta_{\epsilon} ] \phi_{(r)} = [\epsilon, \epsilon] \cdot \phi_{(r)} = 2(\overline{\epsilon}\gamma^\mu \epsilon) P_\mu \cdot \phi_{(r)}
\end{equation}
where \(\epsilon\) is a Majorana Killing spinor, \(\phi_{(r)}\) is a generic field of R-charge \(r\), and the local action of the momentum operator is through the fully covariant derivative
\begin{equation}
P_\mu \to  -\left( i\nabla_\mu + rA^{(R)}_\mu \right) ,
\end{equation}
so that (\ref{eq:4dimsusycurved}) can be written as
\begin{equation}
[\delta_{\epsilon}, \delta_{\epsilon} ] \phi_{(r)} = -2i\left(\mathcal{L}_v - irA^{(R)}(v)\right)\phi_{(r)}
\end{equation}
where \(v^\mu := (\overline{\epsilon}\gamma^\mu \epsilon)\) is a Killing vector field thanks to \(\epsilon\) being a Killing spinor field. This is reduced to the 3-dimensional case, taking \(\epsilon, \tilde{\epsilon}\) now as independent 2-component Killing spinors and \(v^\mu := \tilde{\epsilon}\gamma^\mu \epsilon\) in 3 dimensions, as (see again \cite{Closset-susy-curved3d})
\begin{equation}
\begin{aligned}
&[\delta_{\tilde{\epsilon}}, \delta_{\epsilon}] \phi_{(r,z)} = -2i \left[ \mathcal{L}_v -iv^\mu\left( r(A^{(R)}_\mu - \frac{1}{2}
V_\mu) + zC_\mu  \right) + \tilde{\epsilon} \epsilon (z-rH)\right] \phi_{(r,z)}, \\
&[\delta_{\tilde{\epsilon}}, \delta_{\tilde{\epsilon}}] \phi_{(r,z)} = 0 ,\qquad [\delta_{\epsilon}, \delta_{\epsilon}] \phi_{(r,z)} = 0,
\end{aligned}
\end{equation}
where \(z\) is the charge associated to the action of the central charge \(Z\) in (\ref{eq:susy-algebra-3d}). For the 3-sphere of radius \(l=1\), this is simplified to
\begin{equation}
\begin{aligned}
&[\delta_{\tilde{\epsilon}}, \delta_{\epsilon}] \phi_{(r,z)} = -2i \left[ \mathcal{L}_v + \tilde{\epsilon} \epsilon (z+ ir)\right] \phi_{(r,z)} ,  \\
&[\delta_{\tilde{\epsilon}}, \delta_{\tilde{\epsilon}}] \phi_{(r,z)} = 0 , \qquad [\delta_{\epsilon}, \delta_{\epsilon}] \phi_{(r,z)} = 0 .
\end{aligned}
\end{equation}

We report the resulting supersymmetry variation for the 3-dimensional \(\mathcal{N}=2\) vector multiplet \((A_\mu, \sigma\), \( \lambda_a, \tilde{\lambda}_a, D)\), with respect to two Killing spinors \(\tilde{\epsilon},\epsilon\). This multiplet is uncharged under the action of R-symmetry and of the central charge \(Z\). Following conventions of \cite{Marino-locCS} and \cite{itamar-wilson_loop_3dCS},
\begin{equation}
\begin{aligned}
&\delta A_\mu = \frac{i}{2} (\tilde{\epsilon} \gamma_\mu \lambda - \tilde{\lambda} \gamma_\mu \epsilon) \\
&\delta \sigma = \frac{1}{2}( \tilde{\epsilon}\lambda - \tilde{\lambda}\epsilon ) \\
&\delta \lambda =  \left( - \frac{1}{2} F_{\mu\nu} \gamma^{\mu\nu} - D + i(D_\mu \sigma)\gamma^\mu + \frac{2i}{3} \sigma \gamma^\mu D_\mu \right) \epsilon \\
&\delta \tilde{\lambda} = \left( - \frac{1}{2} F_{\mu\nu} \gamma^{\mu\nu} + D - i(D_\mu \sigma)\gamma^\mu - \frac{2i}{3} \sigma \gamma^\mu D_\mu \right) \tilde{\epsilon} \\
&\delta D = -\frac{i}{2} \left( \tilde{\epsilon} \gamma^\mu D_\mu \lambda - (D_\mu \tilde{\lambda}) \gamma^\mu \epsilon \right) + \frac{i}{2}\left( [\tilde{\epsilon}\lambda, \sigma] - [\tilde{\lambda}\epsilon, \sigma] \right) - \frac{i}{6}\left( \tilde{\lambda} \gamma^\mu D_\mu \epsilon + (D_\mu \tilde{\epsilon}) \gamma^\mu \lambda \right)
\end{aligned}
\end{equation}
where now \(D_\mu = \nabla_\mu - iA_\mu\) is the gauge-covariant derivative. On the 3-sphere, the actions in \eqref{eq:3dSYM} and \eqref{eq:3dCS} acquire a factor \(\sqrt{g}\) in the measure,\footnote{The pure Chern-Simons term \(\left( A\wedge dA + \frac{2i}{3}A^3\right)\) is actually unmodified, being already a 3-form.} and the Super Yang-Mills Lagrangian gets modified to 
\begin{equation}
\mathcal{L}_{YM} = \mathrm{Tr}\left\lbrace \frac{i}{2}\tilde{\lambda} \gamma^\mu D_\mu \lambda + \frac{1}{4} F_{\mu\nu}F^{\mu\nu} + \frac{1}{2} D_\mu \sigma D^\mu \sigma + \frac{i}{2} \tilde{\lambda}[\sigma, \lambda] + \frac{1}{2}\left( D + \frac{\sigma}{l} \right)^2 - \frac{1}{4l}\tilde{\lambda}\lambda \right\rbrace \\
\end{equation} 
where we reinstated the radius \(l\), to see that indeed in the limit \(l\to \infty\) this becomes the standard Euclidean SYM theory in 3 dimensions. We note an important feature of this Lagrangian, that will be important for the application of the localization principle: this can be written as a supersymmetry variation, \textit{i.e.}
\begin{equation}
\tilde{\epsilon}\epsilon \mathcal{L}_{YM} = \delta_{\tilde{\epsilon}} \delta_{\epsilon} \mathrm{Tr}\left\lbrace \frac{1}{2}\tilde{\lambda}\lambda - 2D\sigma \right\rbrace .
\end{equation}
The SCS Lagrangian does not get modified on curved space, since the term depending on the gauge field is topological, and the other ones do not contain derivatives.

It is important to remark that, in general, unbroken supersymmetry is consistent only with Anti-de Sitter geometry (or, in Euclidean signature, hyperbolic geometry) \cite{festuccia-susy-curved}. An exception to this is given by those theories that possess a larger group of symmetries, the \emph{superconformal} group. This is an extension of the super-Poincaré group, to include also conformal transformations of the metric. In this case, supersymmetry can be consistent also on conformally flat backgrounds with positive scalar curvature, of which the \(n\)-spheres are an example. The \(\mathcal{N}=2\) SCS theory of above is an example of superconformal theory.

\subsection[N=4,2,2* gauge theories on the round 4-sphere]{\(\mathcal{N}=4,2,2^*\) gauge theories on the round 4-sphere}
\label{subsec:4dN4-curved}

We continue also the example of the \(\mathcal{N}=4\) 4-dimensional theory, understanding how it can be realized on a different background compatible with \(\mathbb{S}^4\), and what part of the supersymmetry algebra can be preserved on this background. As in Section \ref{subsec:susy-4dN4}, the \(\mathcal{N}=2\) and \(\mathcal{N}=2^*\) cases follow from modifications of the \(\mathcal{N}=4\) theory.

Using stereographic coordinates \(x^1,\cdots,x^4\), such that the North pole is located at \(x^\mu = 0\), the round metric of the 4-sphere of radius \(r\) looks explicitly as a conformal transformation of the flat Euclidean metric,
\begin{equation}
\label{pestun-metric}
g_{\mu\nu}^{(x)} = \delta_{\mu\nu}e^{2\Omega(x)}, \quad \mathrm{where} \ e^{2\Omega(x)} = \frac{1}{\left(1 + \frac{x^2}{4r^2} \right)^2}
\end{equation}
where \(x^2 = \sum_{\mu=1}^4 (x^\mu)^2\). As remarked at the end of the last section, the conformal flatness of \(\mathbb{S}^4\) allows us to deform the superconformal YM theory on it, provided we preserve the conformal symmetry. In order to do this, we modify the kinetic term of the scalars \((\Phi_A)_{A=5,\cdots,9,0}\) adding a conformal coupling to the curvature: \[
(\partial \Phi_A)^2 \to (\partial \Phi_A)^2 + \frac{R}{6}(\Phi_A)^2  \]
where \(R=\frac{12}{r^2}\) is the scalar curvature of the metric \(g\).\footnote{In \(d\)-dimensions, the conformal coupling to the curvature scalar is made adding a term \(\xi R(\Phi)^2\) for the scalar field of canonical mass dimension \(\frac{d-2}{2}\), with \(\xi=(d-2)/4(d-1)\) (see \cite{wald-book}, Appendix D). The scalar curvature of the \(d\)-sphere of radius \(r\) is \(R=d(d-1)/r^2\).} This ensures conformal invariance of the action on the 4-sphere,
\begin{equation}
\label{pestun-4}
S^{\mathcal{N}=4}_{\mathbb{S}^4} = \int_{\mathbb{S}^4} d^4x\sqrt{g}\ \frac{1}{g_{YM}^2} \mathrm{Tr}\left(\frac{1}{2} F_{MN}F^{MN} - \Psi\Gamma^M D_M \Psi + \frac{2}{r^2} \Phi_A \Phi^A \right)
\end{equation}
where the derivatives have been promoted to covariant derivatives with respect to the Levi-Civita connection of \(g\).

Now we have to understand which supersymmetries of the \(\mathcal{N}=4\) algebra can be preserved on the new curved background. From theorem \ref{thm:superisometries}, we know that a necessary condition for a section \(\epsilon\) of the Majorana-Weyl spinor bundle on \(\mathbb{S}^4\), to produce a superisometry for the new background, is that it satisfies the \emph{twistor spinor equation}, or \emph{conformal Killing equation}
\begin{equation}
\label{pestun-5}
\nabla_\mu \epsilon = \tilde{\Gamma}_\mu \tilde{\epsilon}
\end{equation}
for some other section \(\tilde{\epsilon}\). It can be checked that, to ensure supersymmetry of \eqref{pestun-4}, \(\tilde{\epsilon}\) must be also a twistor spinor satisfying
\begin{equation}
\label{pestun-6}
\nabla_\mu \tilde{\epsilon} = -\frac{1}{4r^2} \Gamma_\mu \epsilon ,
\end{equation}
and the variations \eqref{pestun-1} have to be modified as the superconformal transformations
\begin{equation}
\begin{aligned}
\delta_\epsilon A_M &= \epsilon \Gamma_M \Psi \\
\delta_\epsilon \Psi &= \frac{1}{2}\Gamma^{MN}F_{MN}\epsilon + \frac{1}{2} \Gamma^{\mu A}\Psi_A \nabla_\mu \epsilon .
\end{aligned}
\end{equation}
Since \(\mathbb{S}^4\) is conformally flat, the number of solutions to \eqref{pestun-5} is maximal and equal to \(2\dim{(S^\pm)}=32\) \cite{baum-killing-sp}, so the whole \(\mathcal{N}=4\) superconformal algebra is preserved.\footnote{The number of generators of the \(\mathcal{N}=4\) super-Euclidean algebra is \(\dim{(S^\pm)}=16\). The other 16 are the remaining generators of the superconformal algebra.} If one restricts the attention to the \(\mathcal{N}=2\) subalgebra, then half of the generators are preserved. If instead the \(\mathcal{N}=2^*\) theory is considered, the conformal symmetry is broken and only 8 supercharges are preserved on \(\mathbb{S}^4\). With the above modifications, the \(\mathcal{N}=4\) superconformal algebra closes again only on-shell: imposing the EoM for \(\Psi\), one gets (see Appendix of \cite{pestun-article} for the details of the computation)
\begin{equation}
\delta_\epsilon^2 = - \mathcal{L}_v - G_{\Phi} - R - \Omega
\end{equation}
as \eqref{pestun-7}.

In order to prepare the ground for the exploitation of the localization principle on supersymmetric gauge theories, we remark that, if we want to define correctly an equivariant structure with respect to (at least a \(U(1)\) subgroup of) the Poincaré group, we need at least an \(\mathcal{N}=1\) supersymmetry subalgebra to close properly (\textit{i.e.}\ off-shell). If this is the case, we can use the corresponding variation \(\delta_\epsilon\) as a Cartan differential with respect to this equivariant cohomology (we are going to justify better this in the next chapter). It is not possible to close off-shell the full \(\mathcal{N}=2\) algebra on the hypermultiplet, but fixing a conformal Killing spinor \(\epsilon\) satisfying \eqref{pestun-5} and \eqref{pestun-6}, it is possible to close the subalgebra generated by \(\delta_\epsilon\) only. To do this, one has to add auxiliary fields to match the number of off-shell bosonic/fermionic degrees of freedom of the theory \cite{berkovits-10dSYM-offshell}, analogously to what happens for example to the \(\mathcal{N}=1\) vector multiplet in 4-dimensions. In 10-dimensions, we have 16 real fermionic components, and \((10-1)\) real physical bosonic components, so we have to add 7 bosonic (scalar) fields \((K_i)_{i=1,\cdots,7}\). The modified action 
\begin{equation}
S^{\mathcal{N}=4}_{\mathbb{S}^4} = \int_{\mathbb{S}^4} d^4x\sqrt{g}\ \frac{1}{g_{YM}^2} \mathrm{Tr}\left( F_{MN}F^{MN} - \Psi\Gamma^M D_M \Psi + \frac{2}{r^2} \Phi_A \Phi^A - \sum_{i=1}^7 K_i K_i\right)
\end{equation}
is supersymmetric under the modified \(\mathcal{N}=4\) superconformal transformations
\begin{equation}
\label{pestun-susyvarfinal}
\begin{aligned}
\delta_\epsilon A_M &= \epsilon \Gamma_M \Psi \\
\delta_\epsilon \Psi &= \frac{1}{2}\Gamma^{MN}F_{MN}\epsilon + \frac{1}{2} \Gamma^{\mu A}\Psi_A \nabla_\mu \epsilon + \sum_{i=1}^7 K_i \nu_i \\
\delta_\epsilon K_i &= - \nu_i \Gamma^M D_M \Psi .
\end{aligned}
\end{equation}
Here \(\epsilon\) is a fixed conformal Killing spinor, and \((\nu_i)\) are seven spinors satisfying
\begin{equation}
\label{pestun-auxiliary-cond}
\begin{aligned}
&\epsilon \Gamma^M \nu_i = 0 \\
&(\epsilon \Gamma_M \epsilon)\tilde{\Gamma}^{M}_{ab} = 2\left( \sum_i (\nu_i)_a (\nu_i)_b + \epsilon_a \epsilon_b \right) \\
&\nu_i\Gamma^M \nu_j = \delta_{ij} \epsilon \Gamma^M\epsilon .
\end{aligned}
\end{equation}
To ensure convergence of the path integral, as we did with the scalar field \(\Phi_0\), we path integrate the new auxiliary scalars on purely immaginary values, \textit{i.e.}\ \(K_j =: iK_j^E\) with \(K_j^E\) real. For every fixed non-zero \(\epsilon\), there exist seven linearly independent \(\nu_i\) satisfying these constraints, up to an \(SO(7)\) internal rotation, ensuring the closure \eqref{pestun-7} off-shell. Although, if we want \(\delta_\epsilon\) to describe the equivariant cohomology of a subgroup of the Poincaré group (not the conformal one), we should restrict to those \(\epsilon\) that generates only translations and R-symmetries at most (up to unphysical gauge transformations). Thus the dilatation term in \eqref{pestun-7} must vanish, imposing the condition \((\epsilon\tilde{\epsilon})=0\) on the conformal Killing spinors.

\bigskip

We  describe now which modifications to the above discussion have to be made in order to describe the \(\mathcal{N}=2\) and \(\mathcal{N}=2^*\) theories. The pure \(\mathcal{N}=2\) is classically obtained by restricting to the \(\mathcal{N}=2\) supersymmetry algebra generated by \eqref{pestun-N2spinor} and putting all the fields of the hypermultiplet to zero. At quantum level, this theory breaks in general the conformal invariance, so it is equivalent to consider the \(\mathcal{N}=2^*\) with hypermultiplet masses introduced as at the end of Section \ref{subsec:susy-4dN4}, by \[ 
D_0\mapsto D_0 + M\]
where \(M\) is an \(SU(2)_R^\mathcal{R}\) mass matrix. The mass terms for the fermions break the \(SO(1,1)^{\mathcal{R}}\) R-symmetry, so we must restrict the superconformal algebra further to those \(\epsilon\) for which the corresponding piece of the R-symmetry in \eqref{pestun-7} vanish. This imposes \((\tilde{\epsilon}\Gamma^{09}\epsilon)=0\). Moreover, this deformed theory is not invariant under the \(\mathcal{N}=2\) supersymmetry, because of the non-triviality of the conformal Killing spinor. In fact, using the conformal Killing equations it results that
\begin{equation}
\label{pestun-9}
\delta_\epsilon \left( \frac{1}{2} F_{MN}F^{MN} - \Psi\Gamma^M D_M \Psi + \frac{2}{r^2} \Phi_A \Phi^A \right) = -4\Psi \Gamma^i \tilde{\Gamma}^0\tilde{\epsilon}M_i^j\Phi_j
\end{equation}
where \(i,j=5,\cdots,8\), up to a total derivative. If \(\epsilon,\tilde{\epsilon}\) are restricted to the the \(+1\) eigenspace of \(\Gamma^{5678}\), we write \(\tilde{\epsilon} = \frac{1}{2r} \Lambda \epsilon\), where \(\Lambda\) is a generator of \(SU(2)_L^{\mathcal{R}}\). Explicitly \(\Lambda = \frac{1}{4}\Gamma^{ij}R_{ij}\), with components \((R_{ij})\) normalized such that \(R_{ij}R^{ij}=4\).  Then, after some gamma matrix technology, \eqref{pestun-9} gives
\begin{equation}
\delta_\epsilon (\cdots) = \frac{1}{2r} (\Psi\Gamma_i\epsilon)R^{ik} M_k^j \Phi_j = \frac{1}{2r} (\delta_\epsilon \Phi_i)R^{ik} M_k^j \Phi_j .
\end{equation}
Hence, we can modify further the mass-deformed action to get invariance with respect to this subalgebra of the original superconformal algebra on \(\mathbb{S}^4\), adding the new term
\begin{equation}
-\frac{1}{4r}(R^{ki}M_k^j)\Phi_i\Phi_j .
\end{equation}
Finally, the action
\begin{equation}
\begin{aligned}
S^{\mathcal{N}=2^*}_{\mathbb{S}^4} = \int_{\mathbb{S}^4} d^4x\sqrt{g}\ \frac{1}{g_{YM}^2} \mathrm{Tr} \Biggl( F_{MN}F^{MN} - \Psi\Gamma^M D_M \Psi & + \frac{2}{r^2} \Phi_A \Phi^A - \\
&  - \frac{1}{4r}(R^{ki}M_k^j)\Phi_i\Phi_j - \sum_{i=1}^7 K_i K_i\Biggr)
\end{aligned}
\end{equation}
where \( D_0\Phi_i \mapsto [\Phi_0,\Phi_i] + M_i^j\Phi_j\) and \(D_0\Psi \mapsto [\Phi_0,\Psi] + \frac{1}{4}M_{ij}\Gamma^{ij}\Psi\), is invariant under the subalgebra generated by a fixed conformal Killing spinor satisfying the conditions
\begin{equation}
\label{pestun-cKseq-restricted}
\Gamma^{5678}\epsilon=\epsilon ,\qquad \nabla_\mu \epsilon = \frac{1}{8r} \Gamma_\mu \Gamma^{ij}R_{ij} \epsilon .
\end{equation}

\subsection{Trial and error method}
Another method that was extensively used in the physics literature to promote a supersymmetric theory on curved spaces is based on a trial and error procedure \cite{cremonesi}. Suppose to have a supersymmetric QFT formulated in terms of component fields on flat Minkowski (or Euclidean) space \(\mathbb{R}^d\), specified by the Lagrangian density \(\mathcal{L}^{(0)}\) invariant under the supersymmetry variation \(\delta^{(0)}\). The starting point of this approach is to simply \lq\lq covariantize\rq\rq\ the original theory, replacing the flat metric \(\eta\) to the desired metric \(g\) defined on \(\mathcal{M}\) and every derivative \(\partial_\mu\) with the appropriate Levi-Civita or spin covariant derivative \(\nabla_\mu\) corresponding to \(g\). The problem is that in general this does define the theory on the curved space, but it is not guaranteed that the supersymmetry survives:
\begin{equation}
\left[\delta^{(0)}\mathcal{L}^{(0)}\right]_{(\eta, d)\to(g,\nabla)} \neq \nabla_\mu (\cdots)^\mu .
\end{equation}
The idea then is to correct the action of the supersymmetry  and the Lagrangian with an expansion in powers of \(1/r\), where \(r\) is a characteristic length of \(\mathcal{M}\),\footnote{Being \(\mathcal{M}\) compact, we can take it as an embedding in \(\mathbb{R}^n\) for some \(n\), and scale the metric according to some characteristic length \(r\).}
\begin{equation}
\begin{aligned}
\delta &= \delta^{(0)} + \sum_{i\geq 1} \frac{1}{r^i} \delta^{(i)} \\
\mathcal{L} &= \mathcal{L}^{(0)} + \sum_{i\geq 1} \frac{1}{r^i} \mathcal{L}^{(i)}
\end{aligned}
\end{equation}
requiring order by order the symmetry of the Lagrangian \emph{and} the closure of the super-algebra. This \lq\lq trial and error\rq\rq\  terminates if one is able to ensure both conditions at some finite order in \(1/r\), even though a priori the series contains an infinite number of terms. This procedure has the quality to be simple and operational in principle, but can be very cumbersome in practice to apply.

\section{BRST cohomology and equivariant cohomology}
\label{sec:BRST-cohom-equiv}

In gauge theories, BRST cohomology is a useful device to provide an algebraic description of the path integral quantization procedure, and the renormalizability of non-Abelian Yang-Mills theory in 4 dimensions. This formalism makes use of Lie algebra cohomology, while the BRST model of Section \ref{sec:BRST-model} describes equivariant cohomology, that is what we use in topological or supersymmetric field theories. It is natural to ask whether there is a relation between these two cohomology theories, and in fact there is. It turns out that equivariant cohomology of a Lie algebra \(\mathfrak{g}\) is the same as a \lq\lq supersymmetrized\rq\rq\ Lie algebra cohomology of a corresponding graded Lie algebra \(\mathfrak{g[\epsilon]}:=\mathfrak{g}\otimes \bigwedge \epsilon\) \cite{Cordes-2dYMTFT}.

Let us first see how the Weil model
\begin{equation}
W(\mathfrak{g}) = S(\mathfrak{g}^*)\otimes \bigwedge(\mathfrak{g}^*)
\end{equation} 
for the equivariant cohomology of \(\mathfrak{g}\) can be seen in more supergeometric terms. Notice that the space \(S(\mathfrak{g}^*)\) may be identified with the (commutative) algebra of functions on the Lie algebra \(\mathfrak{g}\), and thus us we can see the Weil algebra \(W(\mathfrak{g}^*)\) as the space of functions on a supermanifold built from the tangent bundle of \(\mathfrak{g}\), that is exactly the odd tangent bundle \(\Pi T\mathfrak{g}\equiv \Pi \mathfrak{g}\).\footnote{Notice that since \(\mathfrak{g}\) is a vector space, \(T^*\mathfrak{g} \cong \mathfrak{g}^*\).} Denoting \(\lbrace\tilde{c}^i\rbrace\) and \(\lbrace c^i\rbrace\) the generators of \(S(\mathfrak{g}^*)\) and \(\bigwedge(\mathfrak{g}^*)\) respectively, indeed a function on this superspace is trivialized as
\begin{equation}
\Phi = \Phi^{(0)}(\tilde{c})  + \Phi_j^{(1)}(\tilde{c})c^j + \Phi_{jk}^{(2)}(\tilde{c})c^j c^k + \cdots
\end{equation}
Introducing generators \(\lbrace\tilde{b}_i\rbrace\) and \(\lbrace b_i\rbrace\) of \(\mathfrak{g}[1]\) and \(\mathfrak{g}\), such that\footnote{Sometimes the action of this generators is denoted as a graded bracket structure, like \([b_i,c^j]=[\tilde{b}_i,\tilde{c}^j]_+ = \delta^j_i\).}
\begin{equation}
\label{brst-eq-4}
b_i(c^j) := c^j(b_i) = \delta^j_i ,\qquad \tilde{b}_i(\tilde{c}^j) := c^j(b_i) = \delta^j_i ,
\end{equation}
the Weil differential (\ref{eq:weil-differential}) can be written as
\begin{equation}
\label{brst-eq-1}
d_W = \tilde{c}^i b_i + f^i_{jk} c^j\tilde{c}^k b_i - \frac{1}{2}f^i_{jk} c^j c^k \tilde{b}_i ,
\end{equation}
that is very reminiscent of the form of a \lq\lq BRST operator\rq\rq.

In Lie algebra cohomology, the Chevalley-Eilenberg differential on \(\bigwedge(\mathfrak{g}^*)\) is defined on 1-forms \(\alpha \in \mathfrak{g}^*\) as 
\begin{equation}
\delta\alpha = - \alpha\left([\cdot ,\cdot] \right) = \alpha_i \left( -\frac{1}{2} f^i_{jk} c^j c^k \right) ,
\end{equation}
and then extended as an antiderivation on the whole complex. If we consider a \(\mathfrak{g}\)-module \(V\), such as the target space of a given field theory of gauge group \(G\), with a representation \(\rho : \mathfrak{g}\to \mathrm{End}( V)\),\footnote{If \(V\) is a field space \(C^\infty(\mathcal{M})\) over some (super)manifold \(\mathcal{M}\), \(\mathfrak{g}\) acts as usual as a Lie derivative with respect to the fundamental vector field, \(\rho(X) = \mathcal{L}_{X}\).} then the CE differential is extended to \(\bigwedge(\mathfrak{g}^*) \otimes V\) as
\begin{equation}
\begin{aligned}
&\delta v(X) := \rho(X)v \qquad \forall v\in V, X\in \mathfrak{g} \\
&\delta (\alpha \otimes v) = \delta \alpha \otimes v + (-1)^{k} \alpha \otimes \delta v \qquad \forall v\in V, \alpha\in \bigwedge\nolimits^{\!k}(\mathfrak{g}^*).
\end{aligned}
\end{equation}
This, expressed with respect to a basis \(\lbrace b_i\rbrace\) of \(\mathfrak{g}\), coincide with the BRST operator
\begin{equation}
\label{brst-eq-2}
\delta  = c^i \rho(b_i) - \frac{1}{2}f^i_{jk} c^j c^k b_i
\end{equation}
that satisfies \(\delta ^2=0\). The \(c^i\) are \emph{ghosts}, while the \(b_i\) are \emph{anti-ghosts}. The zero-th cohomology group of the complex \(\bigwedge(\mathfrak{g}^*) \otimes V\) with respect to the differential (\ref{brst-eq-2}) contains those states that have \emph{ghost number} 0 and are \(\mathfrak{g}\)-invariant,
\begin{equation}
H^0(\mathfrak{g},V) \cong V^{\mathfrak{g}}
\end{equation}
so the interesting \lq\lq physical\rq\rq\ states.

We see that there is a difference between the BRST operator (\ref{brst-eq-2}) and the Weil differential (\ref{brst-eq-1}), but we can connect these differentials as follows. To the Lie algebra \(\mathfrak{g}\) we can associate the \emph{differential graded Lie algebra} \(\mathfrak{g}[\epsilon] := \mathfrak{g}\otimes \bigwedge \epsilon\). Here \(\epsilon\) is a single generator taken in odd degree, \(\mathrm{deg}(\epsilon) := -1\), while \(\mathrm{deg}(\mathfrak{g}):=0\). A differential \(\partial: \mathfrak{g}[\epsilon] \to \mathfrak{g}[\epsilon]\) is defined as
\begin{equation}
\partial \epsilon := 1\in \mathfrak{g}, \qquad \qquad \partial X := 0 \quad \forall X\in \mathfrak{g} .
\end{equation}
This superalgebra has generators \(b_i := b_i \otimes 1\) and \(\tilde{b}_i := b_i \otimes \epsilon\), and the superbracket structure coming from the Lie brackets on \(\mathfrak{g}\) and the (trivial) wedge product on \(\bigwedge\epsilon\):
\begin{equation}
\label{brst-eq-3}
\begin{aligned}
&[b_i,b_j] = f^k_{ij} b_k \\
&[b_i, \tilde{b}_j] = f^k_{ij} \tilde{b}_k \\
&[\tilde{b}_i,\tilde{b}_j] = 0
\end{aligned}
\end{equation}
making it into a Lie superalgebra. The differential on the generators is rewritten as
\begin{equation}
\partial b_i = 0, \qquad \quad \partial \tilde{b}_i = b_i .
\end{equation} 
To this \lq\lq supersymmetrized\rq\rq\ algebra we can associate the Lie algebra cohomology with respect to the complex \(\bigwedge(\mathfrak{g}[\epsilon]^*)\), that is generated by \(\lbrace c^i, \tilde{c}^i\rbrace\) of degrees \(\mathrm{deg}(c^i)=1, \mathrm{deg}(\tilde{c}^i)=2\), such that
\begin{equation}
c^i(b_j) = \tilde{c}^i(\tilde{b}_j) = \delta^i_j .
\end{equation}
The BRST differential for the \(\mathfrak{g}[\epsilon]\)-Lie algebra cohomology is naturally defined analogously to before as
\begin{equation}
Q\alpha := -\alpha\left( [\cdot,\cdot] \right)
\end{equation}
on 1-forms \(\alpha\in \mathfrak{g}[\epsilon]^*\). But now, because of the Lie algebra extension (\ref{brst-eq-3}), its expression in terms of the generators results
\begin{equation}
Q = -\frac{1}{2} f^k_{ij} c^i c^j b_k + f^k_{ij} c^i \tilde{c}^j \tilde{b}_k .
\end{equation}
Moreover, the dual \(\partial^*\) acts as
\begin{equation}
\partial^* = \tilde{c}^i b_i
\end{equation}
with \(b_i\) acting as in \eqref{brst-eq-4}. Then the total differential on this complex coincides with the Weil differential,
\begin{equation}
d_W \cong \partial^* + Q
\end{equation}
and we identify an isomorphism of dg algebras
\begin{equation}
\left(\bigwedge(\mathfrak{g}[\epsilon]^*), \partial^* + Q\right) \cong \left( W(\mathfrak{g}), d_W \right).
\end{equation}

If we bring into the game the \(\mathfrak{g}\)-module \(V\) as before, we can work with \(\Omega(V)\) as a \(\mathfrak{g}[\epsilon]\)-dg algebra, defining the \(\mathfrak{g}[\epsilon]\) action as\footnote{Again, if \(V=C^\infty(\mathcal{M})\), then \(\Omega(\mathcal{M})\) is the space we considered when we constructed the equivariant cohomology of a \(G\)-manifold.}
\begin{equation}
(X\otimes 1) \to \mathcal{L}_X , \qquad (X\otimes \epsilon) \to \iota_X .
\end{equation} 
On the complex \(\bigwedge(\mathfrak{g}[\epsilon]^*) \otimes \Omega(V)\), the total differential inherited from the Weil differential and the \(\mathfrak{g}[\epsilon]\)-action is
\begin{equation}
d_B := c^k \otimes \mathcal{L}_k + \tilde{c}^k \otimes \iota_k + Q \otimes 1 + \partial^* \otimes 1 + 1 \otimes d
\end{equation}
and it coincides with the one of the BRST model of equivariant cohomology \eqref{eq:BRST-diff}!

This demonstrates how the BRST quantization formalism and equivariant cohomology are intimately related, and suggests that indeed BRST symmetry operators are good candidates to represent equivariant differentials in QFT, that can be used to employ the localization principle in these kind of physical systems.

\chapter{Localization for circle actions in supersymmetric QFT}
\label{cha:loc-susy}

In this chapter we describe how the equivariant localization principle can be carried out in the infinite dimensional case of path integrals in QM or QFT. In this setting, the first object of interest is the \emph{partition function}
\begin{equation}
\label{eq:path-integral}
Z = \int_{\mathcal{F}} D\phi \ e^{iS[\phi]}
\end{equation}
where \(\mathcal{F}=\Gamma(M,E)\) is the space of fields, \textit{i.e.} sections of some fiber bundle \(E\to M\) with typical fiber (the \emph{target space}) \(V\) over the (Lorentzian) \(n\)-dimensional spacetime \(M\), and \(S\in C^\infty(\mathcal{F})\) is the action functional.\footnote{In the (common) case of a trivial bundle, this is equivalent to considering \(\mathcal{F}=C^\infty(M,V)\), \textit{i.e.}\ \(V\)-valued functions over \(M\). Often \(V\) is a vector space, otherwise the theory describes a so-called \emph{non-linear \(\sigma\)-model}. In supersymmetric field theories, \(V=\bigwedge(S^*)\) for some vector space \(S\), and the field space acquires a natural graded structure.} The fields are supposed to satisfy some prescribed boundary conditions on \(\partial M\). In the Riemannian case, the corresponding object has the form
\begin{equation}
\label{eq:path-integral-euclidean}
Z = \int_{\mathcal{F}} D\phi\ e^{-S_E[\phi]} 
\end{equation}
where we denoted \(S_E\) the Euclidean action. If the spacetime has the form \(M=\mathbb{R}_t \times \Sigma\), this last expression can be reached from the Lorentzian theory via \emph{Wick rotation} of the time direction \(t \mapsto \tau:=it\). If the \(\tau\) direction is compactified to a circle of length \(T\), we can interpret the Euclidean path integral as the canonical ensemble partition function describing the original QFT at a finite temperature \(1/T\). If needed, we are always free to set the length \(T\) of the circle to be very large, and find the zero temperature limit when \(T\to\infty\). Given an observable \(\mathcal{O}\in C^\infty(M)\), its \emph{expectation value} is given by
\begin{equation}
\label{eq:expectation-value}
\left\langle \mathcal{O}\right\rangle = \frac{1}{Z}\int_\mathcal{F} D\phi \ \mathcal{O}[\phi]e^{iS[\phi]} \quad \mathrm{or} \quad \left\langle \mathcal{O}\right\rangle_E = \frac{1}{Z}\int_\mathcal{F} D\phi \ \mathcal{O}[\phi]e^{-S_E[\phi]} .
\end{equation}

The path integral measure \(D\phi\) on the infinite dimensional space \(\mathcal{F}\) is not rigorously defined,\footnote{In fact, it does not exist in general.} but it is usually introduced as 
\begin{equation}
\int D\phi = \mathcal{N} \prod_{x\in M} \int_V d\phi(x)
\end{equation}
where \(\mathcal{N}\) is some (possibly infinite) multiplicative factor, and at every point \(x\in M\) we have a standard integral over the fiber \(V\). Notice that the infinite factors \(\mathcal{N}\) cancel in ratios in the computations of expectation values, so we can still make sense of such objects and formally manipulate them to obtain physical information. Another convergence issue comes with the prescription of boundary conditions in computing the action \(S[\phi]\). If \(M\) is non compact, this often requires a specific regularization,\footnote{For example, one can first assume that the spacetime just extends up to some large but finite typical lenght \(r_0\), and then send this value to infinity at the end of the calculations.} while taking compact spacetimes ensure better convergence properties.

Very few quantum systems have an exactly solvable path integral. When this functional integral method was introduced, the only examples where \eqref{eq:path-integral} could be directly evaluated were the free particle and the harmonic oscillator.\footnote{Later on, ad hoc methods for other particular systems were developed, like the solution of the Hydrogen atom by Duru and Kleinert \cite{duru-Hatom-path-integral}, and others.} Both these theories are quadratic in the fields and their derivatives, thus the partition function can be computed using the formal functional analog of the classical Gaussian integration formula
\begin{equation}
\int_{-\infty}^{\infty}  d^n x \ e^{-\frac{1}{2} x^k M_{kl} x^l + A_k x^k} = \frac{(2\pi)^{n/2}}{\sqrt{\det{M}}} e^{\frac{1}{2}A_i (M^{-1})^{ij}A_j}
\end{equation}
where \(M=[M_{ij}]\) is an \(n\times n\) non singular matrix. The analogue in field theory has \(n\to\infty\) and a functional determinant at the denominator, that must be properly regularized in order to be a meaningful convergent quantity (see any standard QFT book, like \cite{peskin-schroeder, srednicki}). 

In perturbative QFT, one almost never has to explicitly perform such a functional integration. Suppose that the action has the generic form 
\begin{equation}
S = S_0 + S_{int}
\end{equation}
where \(S_0\) is the free term containing up to quadratic powers of the fields and their derivatives, and the rest is collected in \(S_{int}\). Then the expectation value of an observable \(\mathcal{O}\), expressible as a combination of local fields, is computed expanding the exponential of the interacting part in Taylor series, and exploiting \emph{Wick's theorem}\footnote{Again, see any standard QFT book.} for the vacuum expectation values in the free theory:
\begin{equation}
\left\langle \mathcal{O}\right\rangle = \sum_{k} \frac{1}{k!} \left\langle (iS_{int})^k \mathcal{O}\right\rangle_0 .
\end{equation}

Another perturbative approach, especially useful to compute the effective action in a given theory, is the so-called \emph{background field method}, where the action \(S\) is expanded around a classical \lq\lq background\rq\rq element \(\overline{\phi}\in \mathcal{F}\),
\begin{equation}
S[\overline{\phi}+\eta] = S[\overline{\phi}]+ 
\int_M d^nx \left( \frac{\delta S}{\delta \phi(x)}\right)_{\overline{\phi}} \eta(x) + 
\frac{1}{2} \int_M d^nx d^ny \left(\frac{\delta^{(2)}S}{\delta \phi(x) \delta \phi(y)}\right)_{\overline{\phi}} \eta(x)\eta(y)  + \cdots
\end{equation}
If we chose \(\overline{\phi}\) to be a solution of the classical equation of motion \( \left(\frac{\delta S}{\delta \phi(x)}\right)_{\overline{\phi}} =0 \), the first order term disappears from the expansion. If we also neglect the terms of order higher than quadratic in \(\eta\), and substitute the resulting expression in \eqref{eq:path-integral} or \eqref{eq:path-integral-euclidean}, we get the equivalent of the saddle point approximation, or \lq\lq one-loop approximation\rq\rq\, of the partition function
\begin{equation}
\label{eq:saddle-point-path-int}
Z \approx e^{-S[\overline{\phi}]} Z_{1-loop}[\overline{\phi}] ,
\end{equation}
where 
\begin{equation}
\begin{aligned}
Z_{1-loop}[\phi] &:= \int_{\mathcal{F}} D\eta\ e^{-\frac{1}{2} \eta \cdot \Delta[\phi] \cdot \eta} \equiv \left[ \det{\left(\Delta[\phi] \right)}\right]^{-1/2}   \\
\Delta[\phi](x,y) &:= \left(\frac{\delta^{(2)}S}{\delta \phi(x) \delta \phi(y)}\right)_{\phi} ,
\end{aligned}
\end{equation}
and we denoted convolution products over \(M\) with \((\cdot)\) for brevity.

We are interested in those cases in which such a \lq\lq semiclassical\rq\rq\ approximation of the partition function turns out to give the exact result for the path integral in the full quantum theory. This is possible if some symmetry of the field theory, \textit{i.e.}\ acting on the space \(\mathcal{F}\), allows us to formally employ the equivariant localization principle and reduce the path integration domain from \(\mathcal{F}\) to a (possibly finite-dimensional) subspace. In the next part of the chapter we will see some cases in which it is possible to interpret \(\mathcal{F}\), or a suitable extension of it, as a Cartan model with some (super)symmetry operator acting as the Cartan differential. As we already anticipated, this is possible if \(\mathcal{F}\) has a graded structure that can both arise from the supersymmetry of the underlying spacetime, or can be introduced via an extension analogous to what happens in the BRST formalism.

We will first describe the application of the Duistermaat-Heckman theorem in the case of Hamiltonian QM, where the equivariant structure can be constructed from the symplectic structure of the underlying theory. Then we will be concerned with more general applications of the localization principle in supersymmetric QFT, where the super-Poincaré group action allows for an equivariant cohomological interpretation. In both frameworks, we present examples of localization under the action of a single supersymmetry, whose \lq\lq square\rq\rq\ generates a bosonic \(U(1)\) symmetry. Supersymmetric localization was recently applied to many cases of QFT on curved spacetimes, so we must consider those curved background that preserve at least one supersymmetry, as discussed in Section \ref{sec:susy-curved}.

\section{Localization principle in Hamiltonian QM}
\label{sec:QM}

We consider now, following \cite{szabo} and refernces therein, the path integral quantization of an Hamiltonian system \((M,\omega ,H)\), of the \(2n\)-dimensional phase space \(M\) with symplectic form \(\omega\), and an Hamiltonian function \( H \in C^\infty(M)\). This is simply QM viewed as a (0+1)-dimensional QFT over the base \lq\lq spacetime\rq\rq\ \(\mathbb{R}\) or \(\mathbb{S}^1\), that now is only \lq\lq time\rq\rq, and with target space \(M\), that physically represents the phase space of the system. The fields of the theory are the paths \(\gamma: \mathbb{R} (\mathbb{S}^1) \to M\), that means we consider a trivial total space \(E=\mathbb{R}(\mathbb{S}^1)\times M\). In principle the time axis can be chosen to be the real line (or an interval with some prescribed boundary conditions), or the circle (that corresponds to periodic boundary conditions), but we will soon see that it is much convenient technically to chose the latter possibility, so consider the \lq\lq loop space\rq\rq\ \( \mathcal{F} = C^\infty(\mathbb{S}^1,M)\). A field for us is so a closed curve \(\gamma:[0,T]\to M\) such that \(\gamma(0)=\gamma(T)\). Since we make the periodicity explicit in \(t\), we interpret this parameter as an \lq\lq Euclidean\rq\rq\ Wick-rotated time, so that \(T\) is the inverse temperature of the canonical ensemble. 

We set up now some differential geometric concept on the loop space that we are going to use in the following. If \(\lbrace x^\mu\rbrace\) are coordinates on \(M\), on the loop space we can choose an infinite set of coordinates \(\lbrace x^\mu (t) \rbrace\) for \(\mu\in{1,\cdots,2n}\) and \(t\in[0,T]\), such that for any \(\gamma\in \mathcal{F}\) \[
x^\mu(t)[\gamma] := x^\mu (\gamma(t)). \]
Using the standard rules of functional derivation, a vector field in \(\Gamma(T\mathcal{F})\) can be thus expressed locally with respect to these coordinates as
\begin{equation}
X = \int_0^T dt\ X^\mu(t) \frac{\delta}{\delta x^\mu(t)}
\end{equation}
where \(X^\mu(t)\) are functions over \(\mathcal{F}\), and \((\delta/\delta x^\mu(t))_\gamma\) is a basis element of the tangent space \(T_\gamma \mathcal{F}\) at \(\gamma\). Many other geometric objects can be lifted from \(M\) to \(\mathcal{F}\) following this philosophy. For example, for any function in \(C^\infty(M)\) as the Hamiltonian \(H\), we can define \(H(t)\in C^\infty(\mathcal{F})\) at a given time \(t\), such that \(H(t)[\gamma] := H(\gamma(t))\). The action functional instead is the function over the loop space such that
\begin{equation}
\begin{aligned}
S[\gamma] &= \int_0^T dt\ \left[ \dot{q}^a(t) p_a(t) - H(p(t),q(t)) \right] \\
&= \int_0^T dt\ \left[ \theta_{\gamma(t)}(\dot{\gamma}) - H(\gamma(t)) \right]
\end{aligned}
\end{equation}
where in the first line we expressed \(\gamma\) through its trivialization in Darboux coordinates \(\lbrace q^a,p_a\rbrace\) with \(a\in\lbrace 1,\cdots,n\rbrace\), and in the second line we expressed the same thing more covariantly using the (local) symplectic potential \(\theta\) of \(\omega\) and the velocity vector field \(\dot{\gamma}\) of the curve. Concerning differential forms, if we consider the basis set \(\lbrace \eta^\mu (t) := dx^\mu(t)\rbrace\), a \(k\)-degree element of \(\Omega(\mathcal{F})\) can be expressed locally as 
\begin{equation}
\alpha = \int_0^T dt_1 \cdots \int_0^T dt_k \ \frac{1}{k!} \alpha_{\mu_1\cdots \mu_k}(t_1,\cdots,t_k) \eta^{\mu_1}(t_1)\wedge \cdots \wedge \eta^{\mu_k}(t_k)
\end{equation}
and we recall that \(\Omega(\mathcal{F})\) can be considered as the space of functions over the \emph{super-loop space} \(\Pi T \mathcal{F}\), of coordinates \(\lbrace x^\mu(t), \eta^\mu(t)\rbrace\). The de Rham differential on the loop space can be expressed as the cohomological vector field on \(\Pi T \mathcal{F}\)
\begin{equation}
d_{\mathcal{F}} = \int_0^T dt\ \eta^\mu(t) \frac{\delta}{\delta x^\mu(t)} .
\end{equation}
Finally, it is natural to lift on the loop space the symplectic structure of \(M\), as well as a choice of Riemannian metric, as 
\begin{equation}
\label{QM:metric-symplecticform}
\begin{aligned}
\Omega &= \int_0^T dt \ \frac{1}{2}\omega_{\mu\nu}(t) \eta^\mu(t)\wedge \eta^\nu(t) \\
G &= \int_0^T dt \ g_{\mu\nu}(t) \eta^\mu(t)\otimes \eta^\nu(t) ,
\end{aligned}
\end{equation}
\textit{i.e.}\ \(\Omega_{\mu\nu}(t,t'):= \omega_{\mu\nu}(t)\delta(t-t')\) and \(G_{\mu\nu}:=g_{\mu\nu}(t)\delta(t-t')\). \(\Omega\) is closed under the loop space differential \(d_{\mathcal{F}}\).\footnote{Strictly speaking, the 2-form \(\Omega\) should be called \lq\lq pre-symplectic\rq\rq, since although it is certainly closed, it is not necessarily non-degenerate on the loop space.}

Considering the standard Liouville measure on \(M\)
\begin{equation}
\frac{\omega^n}{n!} = d^{2n}x\ \mathrm{Pf}(\omega(x)) = d^nq d^np ,
\end{equation}
we can write now the path integral measure for QM on the loop space as an infinite product of the Liouville one for any time \(t\in[0,T]\), and get
\begin{equation}
\int_{\mathcal{F}} \frac{\Omega^n}{n!} = \int_{\mathcal{F}} D^{2n}x\ \mathrm{Pf}(\Omega[x]) = \int_{\Pi T\mathcal{F}} D^{2n}x D^{2n}\eta\ \frac{\Omega^n}{n!} .
\end{equation}
Here in the last equality we rewrote the integral over \(\mathcal{F}\) as an integral over the super-loop space, analogously to \eqref{eq:integral-odd-tangent-bundle}. The path integral for the quantum partition function is thus
\begin{equation}
\begin{aligned}
Z(T) &= \int_\mathcal{F} D^{2n}x\ \mathrm{Pf}(\Omega[x]) e^{-S[x]} \\
&= \int_{\Pi T\mathcal{F}} D^{2n}x D^{2n}\eta\ \frac{\Omega^n[x,\eta]}{n!}  e^{-S[x]} \\
&=  \int_{\Pi T\mathcal{F}} D^{2n}x D^{2n}\eta\ e^{-(S[x]+\Omega[x,\eta])}
\end{aligned}
\end{equation}
where in the last line we exponentiated the loop space symplectic form, making explicit the formal analogy with the Duistermaat-Heckman setup. In particular, we associated to the \lq\lq classical\rq\rq\ Hamiltonian system \((M,\omega,H)\) an Hamiltonian system \((\mathcal{F},\Omega,S)\) on the loop space. Here the loop space Hamiltonian function \(S\) generates  an Hamiltonian \(U(1)\)-action on \(\mathcal{F}\), that can be used to formally apply the same equivariant localization principle as in the finite-dimensional case. As for the proof of the ABBV formula, we introduced a graded structure on field space formally rewriting the path integration on the super-loop space \(\Pi T\mathcal{F}\). This graded structure is now simply given by the form-degree on the extended field space \(\Omega(\mathcal{F})\).

Let now \(X_S\) be the Hamiltonian vector field associated to \(S\in C^\infty(\mathcal{F})\), such that \(d_{\mathcal{F}}S = -\iota_{X_S}\Omega\), or equivalently \(X_S = \Omega(\cdot,d_{\mathcal{F}}S)\). Explicitly, in coordinates \(\lbrace x^\mu(t)\rbrace\)
\begin{equation}
\begin{split}
X_S^\mu(t) &= \int_0^T dt'\ \Omega^{\mu\nu}(t,t')\frac{\delta S}{\delta x^\nu(t)} \\
&= \omega^{\mu\nu}(x(t))\bigl( \omega_{\nu\rho}(x(t))\dot{x}^\rho(t) - \partial_\nu H(x(t))  \bigr) \\
&= \dot{x}^\mu(t) - X_H^\mu(x(t))
\end{split}
\end{equation}
where \(\dot{x}(t)\) is the vector field on \(\mathcal{F}\) with components such that \(\dot{x}^\mu(t)[\gamma] := (x^\mu \circ \gamma)'(t) \equiv \dot{\gamma}^\mu(t)\). The flow of \(X_S\) defines the Hamiltonian \(U(1)\)-action on \(\mathcal{F}\) and the infinitesimal action of the Lie algebra \(\mathfrak{u}(1)\) on \(\Omega(\mathcal{F})\) through the Lie derivative \(\mathcal{L}_{X_S}\). The Cartan model for the \(U(1)\)-equivariant cohomology of \(\mathcal{F}\) is then defined by the space of equivariant differential forms
\begin{equation}
\Omega_{S}(\mathcal{F}) := \left(\mathbb{R}[\phi]\otimes \Omega(\mathcal{F})\right)^{U(1)} \cong \Omega(\mathcal{F})^{U(1)}[\phi]
\end{equation}
and the equivariant differential
\begin{equation}
\label{QM-Cartan-differential}
\begin{aligned}
Q_S &:= \mathds{1} \otimes d_{\mathcal{F}} - \phi \otimes \iota_{X_S} \equiv d_{\mathcal{F}} +  \iota_{X_S} \\
&= \int_0^T dt \bigl( \eta^\mu(t) + \dot{x}^\mu(t) - X^\mu_H(x(t)) \bigr) \frac{\delta}{\delta x^\mu(t)} ,
\end{aligned}
\end{equation}
where as usual we localized algebraically setting \(\phi=-1\) to ease the notation.  The square of this operator gives, after some simplifications
\begin{equation}
Q_S^2 = \int_0^T dt \left( \frac{d}{dt} - \left. \mathcal{L}_{X_H}\right|_{x(t)} \right)
\end{equation}
where the second term is the Lie derivative on \(M\) with respect to \(X_H\), lifted on \(\mathcal{F}\) at every value of \(t\). The first term, when evaluated on a field, gives only contributions from the values at \(t=0,T\), and so it vanishes thanks to the fact that we chose periodic boundary conditions! This means that, with this choice, the Cartan model on field space is completely determined by the lift of the \(U(1)\)-invariant forms on \(M\), for which \(\mathcal{L}_{X_H}\alpha =0\). Consistently, if we restrict to this subspace of \(\Omega(M)\) where the energy is preserved, \(Q_S \equiv Q_{\dot{x}} = d_{\mathcal{F}} +  \iota_{\dot{x}}\) acts as the \emph{supersymmetry} operator generating time-translations on the base \(\mathbb{S}^1\):
\begin{equation}
Q_{\dot{x}}^2 = \frac{1}{2} [Q_{\dot{x}},Q_{\dot{x}}] = \int_0^T dt \frac{d}{dt}
\end{equation}
resembling the supersymmetry algebra \eqref{eq:susy-algebra-ext} for \(\mathcal{N}=1\) and 1-dimensional spacetime. We will see in the next section that this is not just a coincidence, but we can relate this model to a supersymmetric version of QM. This restricted differential acts on coordinates of the super-loop space as
\begin{equation}
Q_{\dot{x}} x^\mu(t) = \eta^\mu(t), \qquad Q_{\dot{x}} \eta^\mu(t) = \dot{x}^\mu(t),
\end{equation}
while the full equivariant differential acts as
\begin{equation}
Q_S x^\mu(t) = \eta^\mu(t), \qquad Q_S \eta^\mu(t) = X_S^\mu(t) ,
\end{equation}
both exchanging \lq\lq bosonic\rq\rq\ with \lq\lq fermionic\rq\rq\ degrees of freedom.

We remark that we started from a standard (non supersymmetric) Hamiltonian theory on \(\mathcal{F}\), and from this we constructed a supersymmetric theory on \(\Pi T\mathcal{F}\), whose supersymmetry is encoded in the Hamiltonian symmetry (so in the symplectic structure) of the original theory. This \lq\lq hidden\rq\rq\ supersymmetry is thus interpretable, in the spirit of Topological Field Theory, as a BRST symmetry, and the differential \(Q_S\) as a \lq\lq BRST charge\rq\rq\, under which the augmented action \((S+\Omega)\) is supersymmetric:
\begin{equation}
Q_S (S+\Omega) = d_{\mathcal{F}}S + d_{\mathcal{F}}\Omega + \iota_{X_S}\Omega = d_{\mathcal{F}}S + 0 - d_{\mathcal{F}}S = 0.
\end{equation}
In other words, \((S+\Omega)\) is an \emph{equivariantly closed extension} of the symplectic 2-form \(\Omega\), analogously to the finite-dimensional Hamiltonian geometry discussed in Chapter \ref{sec:e-cohom-symplectic}. The same argument cannot be straightforwardly applied to any QFT, since in general we do not have a symplectic structure on the field space,\footnote{A symplectic structure can be induced from the action principle on the subspace of solutions of the classical EoM, but it does not lift on the whole field space in general.} but in the presence of a gauge symmetry we know that a BRST supersymmetry can be used to define the equivariant cohomology on the field space and exploit the localization principle. We will expand this a bit in the next chapter.

\medskip
It is now possible to mimic the procedure of Section \ref{sec:ABBVproof} to localize the supersymmetric path integral
\begin{equation}
Z(T) = \int_{\Pi T\mathcal{F}} D^{2n}x D^{2n}\eta \ e^{-(S[x]+\Omega [x, \eta ] )}
\end{equation}
seen as an integral of an equivariantly closed form. We modify the integral introducing the \lq\lq localizing action\rq\rq\ \(S_{loc}[x,\eta] :=  Q_S \Psi[x,\eta]\), with localization 1-form \(\Psi\in \Omega^1(\mathcal{F})^{U(1)}\), the so-called \lq\lq gauge-fixing fermion\rq\rq:
\begin{equation}
\label{QM-1}
Z_T(\lambda) = \int_{\Pi T\mathcal{F}} D^{2n}x D^{2n}\eta \ e^{-(S[x]+\Omega [x, \eta ] - \lambda Q_S \Psi[x,\eta])}
\end{equation}
where \(\lambda\in \mathbb{R}\) is a parameter. The resulting integrand is again explicitly equivariantly closed, and we can check that this path integral is formally independent on the parameter \(\lambda\). Indeed, after a shift \(\lambda \mapsto \lambda + \delta \lambda\), \eqref{QM-1} becomes
\begin{equation}
Z_T(\lambda + \delta \lambda) = \int_{\Pi T\mathcal{F}} D^{2n}x D^{2n}\eta \ e^{-(S[x]+\Omega [x, \eta ] - \lambda Q_S \Psi[x,\eta] - \delta \lambda Q_S \Psi[x,\eta])} ,
\end{equation}
and we can make a change of integration variables to absorb the resulting shift at the exponential. Since the exponential is supersymmetric, we change variables using a supersymmetry transformation:
\begin{equation}
\begin{split}
&x^\mu(t) \mapsto x'^\mu(t) := x^\mu(t) + \delta \lambda \Psi Q_S x^\mu(t) = x^\mu(t) + \delta \lambda \eta^\mu(t) \\
&\eta^\mu(t) \mapsto \eta'^\mu(t) := \eta^\mu(t) + \delta \lambda \Psi Q_S \eta^\mu(t) = \eta^\mu(t) + \delta \lambda X_S^\mu(t)
\end{split}
\end{equation}
so that the only change in the integral comes from the integration measure,
\begin{equation}
\begin{aligned}
D^{2n}x D^{2n}\eta \mapsto D^{2n}x' D^{2n}\eta' 
&= \mathrm{Sdet} \left[\begin{array}{cc}
\partial x'/\partial x & \partial x'/\partial \eta \\
\partial \eta'/\partial x & \partial \eta'/\partial \eta 
\end{array} \right] D^{2n}x D^{2n}\eta \\
&= e^{-\delta\lambda Q_S \Psi} D^{2n}x D^{2n}\eta .
\end{aligned}
\end{equation}
Putting all together,
\begin{equation}
\begin{aligned}
Z_T(\lambda + \delta \lambda) &= \int_{\Pi T\mathcal{F}} D^{2n}x' D^{2n}\eta' \ e^{-(S[x']+\Omega [x', \eta' ] - \lambda Q_S \Psi[x',\eta'] - \delta \lambda Q_S \Psi[x',\eta'])} \\
&= \int_{\Pi T\mathcal{F}} D^{2n}x D^{2n}\eta \ e^{-(S[x]+\Omega [x, \eta ] - \lambda Q_S \Psi[x,\eta])} = Z_T(\lambda) .
\end{aligned}
\end{equation}
The same property can be seen less rigorously  by exploiting some sort of (arguable) infinite-dimensional version of Stokes' theorem:
\begin{equation}
\begin{aligned}
\frac{d}{d\lambda}Z_T(\lambda) &= \int_{\Pi T\mathcal{F}} D^{2n}x D^{2n}\eta\ (Q_S \Psi) e^{-(S[x]+\Omega [x, \eta ] - \lambda Q_S \Psi[x,\eta])} \\
&= \int_{\Pi T\mathcal{F}} D^{2n}x D^{2n}\eta \left(Q_S \Psi e^{-(S[x]+\Omega [x, \eta ] - \lambda Q_S \Psi[x,\eta])}\right) =0,
\end{aligned}
\end{equation}
that holds if we assume the path integration measure to be non-anomalous under \(Q_S\).

Since the path integral \eqref{QM-1} is independent on the parameter, we can take the limit \(\lambda \to \infty\) and obtain the localization formula
\begin{equation}
\label{QM-2}
Z(T) = \lim_{\lambda\to\infty} \int_{\Pi T\mathcal{F}} D^{2n}x D^{2n}\eta \ e^{-(S[x]+\Omega [x, \eta ] - \lambda Q_S \Psi[x,\eta])}
\end{equation}
that \lq\lq localizes\rq\rq\ \(Z(T)\) onto the zero locus of \(S_{loc}\). Of course different choices of gauge-fixing fermion induce different final localization formulas for the path integral, but at the end they should all give the same result. We now present two different localization formulas derived from \eqref{QM-2} with different choices of localizing term.

\medskip
The fist canonical choice of gauge fixing fermion we can make mimics the same procedure we used in the finite-dimensional case. Under the same assumptions we made in Section \ref{sec:eq_loc_princ}, we pick a \(U(1)_H\)-invariant metric \(g\) on \(M\), and lift it to \(\mathcal{F}\) using \eqref{QM:metric-symplecticform}, so that the resulting \(G\) is \(U(1)_S\)-invariant: \(\mathcal{L}_S G = Q_S^2 G = 0\). Then the localization 1-form is taken to be
\begin{equation}
\Psi[x,\eta] := G(X_S,\cdot) = \int_0^T dt\ g_{\mu\nu}(x(t))\Bigl( \dot{x}^\mu(t) - X^\mu_H(x(t))\Bigr)\eta^\nu(t) ,
\end{equation}
so that the localization locus is the subspace where \(X_S = 0\), \textit{i.e.}\ the moduli space of classical trajectories \cite{niemi-equiv-morse}:
\begin{equation}
\mathcal{F}_S = \left\lbrace \gamma\in\mathcal{F} : \left( \frac{\delta S}{\delta x^\mu(t)}\right)_\gamma =0 \right\rbrace .
\end{equation}
If this space consists of isolated, non-degenerate trajectories, we can apply the non-degenerate version of the ABBV formula for a circle action, and get
\begin{equation}
Z(T) = \sum_{\gamma\in \mathcal{F_S}} \frac{\mathrm{Pf}\left[ \omega(\gamma(t))\right]}{\sqrt{\det{[dX_S[\gamma]/2\pi]}}} e^{-S[\gamma]}
\end{equation}
where the pfaffian and the determinant are understood in the functional sense, spanning both the phase space indices \(\mu\in\lbrace 1,\cdots,2n\rbrace\) and the time continuous index \(t\in[0,T]\). In general, for non isolated classical trajectories we have to decompose any \(\gamma\in\mathcal{F}\) near to the fixed point set, splitting the classical component and normal fluctuations, as we did in Section \ref{sec:ABBVproof}. Then, rescaling the normal fluctuations by \(\sqrt{\lambda}\) and thanks to the Berezin integration rules on the super-loop space, the same argument of the finite-dimensional case leads to
\begin{equation}
\label{QM-3}
Z(T) = \int_{\mathcal{F_S}} D^{2n}x\ \frac{\mathrm{Pf}\left[\omega(x(t))\right]}{\left.\mathrm{Pf}\left[\delta^\mu_\nu \partial_t -(B+R)^\mu_\nu (x(t))\right]\right|_{N\mathcal{F}_S}} e^{-S[x]}
\end{equation}
where \(B^\mu_\nu = g^{\mu\rho} (\nabla_{[\rho}X_{H})_{\nu]}\), while \(\nabla\) and \(R\) are the connection and curvature of the metric \(g\) on \(M\), evaluated on \(\mathcal{F}\) time-wise as usual. Notice that this localization scheme makes  the contribution from the classical configurations explicit, resembling the exactness of the saddle point approximation \eqref{eq:saddle-point-path-int}, with 1-loop determinant given by the pfaffian at the denominator. However, even if the integration domain has been reduced, one must still perform a difficult infinite-dimensional path integration to get the final answer, whose \(T\)-dependence for example looks definitely non-trivial from \eqref{QM-3}. 

We can consider another choice of localizing term to simplify the final result, setting the gauge-fixing fermion to
\begin{equation}
\Psi[x,\eta] := G(\dot{x},\cdot) = \int_0^T dt\ g_{\mu\nu}(x(t)) \dot{x}^\mu(t) \eta^\nu(t) .
\end{equation}
With this choice, the gauge-fixed action reads
\begin{equation}
\label{QM-action-NT}
\begin{aligned}
S&[x]+\Omega[x,\eta]+\lambda Q_S \Psi[x,\eta] = \\ 
&= \int_0^T dt\biggl( \dot{x}^\mu \theta_\mu - H + \frac{1}{2}\omega_{\mu\nu}\eta^\mu \eta^nu + \lambda \left( 
g_{\mu\nu,\sigma} \dot{x}^\mu \eta^\sigma \eta^\nu + \eta^\mu \partial_t( g_{\mu\nu} \eta^\nu) + g_{\mu\nu} \dot{x}^\mu \dot{x}^\nu - g_{\mu\nu}\dot{x}^\mu X_H^\nu \right) \biggr) \\
&= \int_0^T dt\biggl(
\lambda g_{\mu\nu} \dot{x}^\mu \dot{x}^\nu + \lambda  \eta^\mu \nabla_t \eta^\nu + \dot{x}^\mu \theta_\mu + \frac{1}{2}\omega_{\mu\nu} \eta^\mu \eta^\nu - H - \lambda g_{\mu\nu} \dot{x}^\mu X_H^\nu \biggr)
\end{aligned}
\end{equation}
where in the second line the time-covariant derivative acts as \(\nabla_t \eta^\nu = \partial_t \eta^\nu + \Gamma^\nu_{\rho\sigma}\dot{x}^\rho \eta^\sigma\), and the localization locus is the subset of constant loops 
\begin{equation}
\mathcal{F}_0 := \left\lbrace \gamma\in\mathcal{F} : \dot{\gamma}=0 \right\rbrace \cong M
\end{equation}
that is, points in \(M\). Splitting again the loops near this subspace in constant modes plus fluctuations, and rescaling the latter as we did before, the path integral is reduced to a finite-dimensional integral over \(M\), the Niemi-Tirkkonen localization formula \cite{niemi-tirkkonen}
\begin{equation}
Z(T) = \int_{\Pi TM} \sqrt{g}d^{2n}x d^{2n}\eta\ \frac{e^{-T(H(x)-\omega(x,\eta))}}{\sqrt{\det'{\left[ g_{\mu\nu}\partial_t - (B_{\mu\nu}+R_{\mu\nu}) \right]}}}
\end{equation}
where the prime on the determinant means it is taken over the normal fluctuation modes, excluding the constant ones, giving exactly the equivariant Euler form of the normal bundle to \(\mathcal{F}_0\). This formula is much more appealing since it contains no refernce to \(T\)-dependent submanifolds of \(\mathcal{F}\), and the evaluation of the action on constant modes simply gives the Hamiltonian at those points multiplied by \(T\). The functional determinant at the denominator requires a specific regularization: using the \(\zeta\)-function method, it can be simplified to
\begin{equation}
\frac{1}{\sqrt{\det'{\left[ g_{\mu\nu}\partial_t - (B_{\mu\nu}+R_{\mu\nu}) \right]}}} = \sqrt{\det{\left[\frac{\frac{T}{2}(B+R)_{\mu\nu}}{\sinh{\left(\frac{T}{2}(B+R)_{\mu\nu} \right)}} \right]}} = \hat{A}_H(TR)
\end{equation}
where in the last equality we rewrote, by definition, the determinant as the \emph{\(U(1)_H\)-equivariant Dirac \(\hat{A}\)-genus} of the curvature \(R\) up to a constant \(T\), that is the Dirac \(\hat{A}\)-genus of the equivariant extension of the curvature, \(R+B\) \cite{berline-heatkernels}. Note that the exponential can be rewritten as the \(U(1)_H\)-equivariant Chern character of the symplectic form,
\begin{equation}
e^{-H+\omega} = \mathrm{ch}_H(\omega)
\end{equation}
and so the partition function can be nicely rewritten as
\begin{equation}
\label{QM-NT-formula}
Z(T) = \int_M \mathrm{ch}_H(T\omega)\wedge \hat{A}_H(TR)
\end{equation}
in terms of equivariant characteristic classes of the phase space \(M\) with respect to the \(U(1)_H\) Hamiltonian group action, that are determined by the initial classical system. The only remnant of the quantum theory is in the, now very explicit, dependence on the inverse temperature \(T\). This form of the partition function emphasizes the fact that if we put \(H=0\), we end up with a \emph{topological} theory. In this case there are no propagating physical degrees of freedom, and the partition function only describes topological properties of the underlying phase space. In the next section we will report a non-trivial example of this kind.

\section{Localization and index theorems}
\label{sec:index}

A famous and important application of the localization principle to loop space path integrals is the supersymmetric derivation of the \emph{Atiyah-Singer index theorem} \cite{Friedan-AS_index_thm} \cite{alvarez-susyAS} \cite{hietamaki-susyQM-AS}. The theorem relates the \emph{analytical index} of an elliptic differential operator on a compact manifold to a topological invariant, connecting the local data associated to solutions of partial differential equations to global properties of the manifold. Supersymmetric localization allowed to prove this statement, in a new way with respect to the original proof, for different examples of classical differential operators. We describe here the application to the index of the Dirac operator acting on the spinor bundle \(S\) on an even-dimensional compact manifold \(M\), in presence of a gravitational and electromagnetic background, specified by the metric \(g\) and a \(U(1)\) connection 1-form \(A\) on a \(\mathbb{C}\)-line bundle \(L_{\mathbb{C}}\) over \(M\).\footnote{This can be extended to non-Abelian gauge groups as well, but for simplicity we report the Abelian case.}

The Dirac operator is defined as the fiber-wise Clifford action of the covariant derivative on the twisted spinor bundle \(TM\otimes S\otimes L_{\mathbb{C}}\) over \(M\):
\begin{equation}
i\slashed{\nabla} = i\gamma^\mu \left( \partial_\mu +\frac{1}{8} \omega_{\mu ij}[\gamma^i,\gamma^j] + iA_\mu\right)
\end{equation}
where \(\lbrace \gamma^\mu \rbrace\) are the gamma-matrices generating the Clifford algebra in the given spin representation, satisfying the anticommutation relation
\begin{equation}
\lbrace \gamma^\mu,\gamma^\nu\rbrace = 2g^{\mu\nu},
\end{equation}
and \(\omega\) is the spin connection related to the metric.\footnote{See Appendix \ref{app:spin-geom}.} The \lq\lq curved\rq\rq\ and \lq\lq flat\rq\rq\ indices are related through the vielbein \(e^i_\mu(x)\),
\begin{equation}
g_{\mu\nu}(x) = e^i_\mu(x) e^j_\nu(x) \eta_{ij}, \qquad  \gamma^i = e^i_\mu(x) \gamma^\mu(x),
\end{equation}
with \(\eta\) the flat metric. The analytical index of the Dirac operator is defined as \cite{atiyah-singer-thm}
\begin{equation}
\mbox{index}(i\slashed{\nabla}) := \dim\mathrm{Ker}(i\slashed{\nabla}) - \dim\mathrm{coKer}(i\slashed{\nabla}) = \dim\mathrm{Ker}(i\slashed{\nabla}) - \dim\mathrm{Ker}(i\slashed{\nabla}^\dagger).
\end{equation}
We have thus to find the number of zero-energy solutions of the Dirac equation 
\begin{equation}
i\slashed{\nabla}\Psi = E\Psi ,
\end{equation}
where \(\Psi\) is a Dirac spinor. In even dimensions, we can decompose the problem in the chiral basis of the spin representation, where 
\begin{equation}
\label{AS-1}
\gamma^i = \left(\begin{array}{cc}
0 & \sigma^i \\  \overline{\sigma}^i & 0
\end{array} \right) , \quad \gamma^c = \left(\begin{array}{cc}
\mathds{1} & 0 \\ 0 & -\mathds{1}
\end{array} \right) ,
\quad i\slashed{\nabla} = \left(\begin{array}{cc}
0 & D \\ D^\dagger & 0
\end{array} \right) , \quad \Psi = \left(\begin{array}{c}
\psi_- \\ \psi_+
\end{array} \right),
\end{equation}
and the chirality matrix is denoted by \(\gamma^c\). In this representation, we see that the index counts the number of zero-energy modes with positive chirality minus the number of zero-energy modes of negative chirality,
\begin{equation}
\mbox{index}(i\slashed{\nabla}) = \dim\mathrm{Ker}(D) - \dim\mathrm{Ker}(D^\dagger) .
\end{equation}

It is possible to give a path integral representation of this index, a key ingredient to apply the localization principle. In order to do that, we first prove that it can be rewritten as a \emph{Witten index},
\begin{equation}
\label{AS-3}
\mbox{index}(i\slashed{\nabla}) = \mathrm{Tr}_{\mathcal{H}} \left(\gamma^c e^{-T\Delta}\right)
\end{equation}
where \(\Delta := (i\slashed{\nabla})^2\) is the Shr\"{o}edinger operator (the covariant Laplacian) and the parameter \(T>0\) is a regulator for the operator trace, taken over the space \(\mathcal{H}\) of Dirac spinors, sections of the twisted spinor bundle over \(M\).
\begin{proof}[Proof of \eqref{AS-3}]
First, we notice that \((i\slashed{\nabla})\) is symmetric and elliptic, and since \(M\) is compact it is also essentially self-adjoint \cite{wolf-chern-Dirac_op}. Thus, it has a well defined spectrum \(\lbrace \Psi^E\rbrace\) that forms a basis of the function space at hand. The same modes diagonalize also the Schr\"odinger operator,
\begin{equation}
\Delta \Psi^E = E^2 \Psi^E
\end{equation}
so we can shift the attention to solutions of the Schr\"odinger equation with eigenvalue satisfying \(E^2=0\). It is easy to see that the Dirac operator anticommutes with the chirality matrix \(\gamma^c\), and so the Schr\"odinger operator commutes with it. Thus we can split the field space in complementary subspaces \(\mathcal{S}^E_{\pm}:=\left\lbrace \Psi\in\mathcal{H}: \Delta\Psi=E^2\Psi, \gamma^c \Psi = \pm \Psi\right\rbrace \), for every eigenvalue \(E^2\) and chirality \((\pm)\). For every non-zero energy, we establish an isomorphism \(\mathcal{S}^E_+ \cong \mathcal{S}^E_-\): using the fact that \((i\slashed{\nabla})\) and \(\gamma^c\) anticommute, starting from a solution with eigenvalue \(E^2\) and definite chirality \(\Psi_{\pm}\) we can construct another one with opposite chirality \((i\slashed{\nabla}\Psi_{\pm})\),
\begin{equation}
\begin{aligned}
&\Delta(i\slashed{\nabla}\Psi_{\pm}) = i\slashed{\nabla}i\slashed{\nabla}i\slashed{\nabla} \Psi_{\pm} = E^2 (i\slashed{\nabla}\Psi_{\pm}) \\
&\gamma^c(i\slashed{\nabla}\Psi_{\pm}) = -i\slashed{\nabla}\gamma^c\Psi_{\pm} = -(\pm)(i\slashed{\nabla}\Psi_{\pm}) .
\end{aligned}
\end{equation}
Thus the maps \(\phi_{\pm} : \mathcal{S}^E_{\pm} \to \mathcal{S}^E_{\mp}\) such that
\begin{equation}
\phi_{\pm}(\Psi) := \frac{i\slashed{\nabla}}{|E|}\Psi
\end{equation}
are both well defined and are right and left inverse of each other, giving the bijective correspondence for every \(|E|\neq 0\). This is the well known fact that particles and antiparticles are created in pairs with opposite energy and chirality. 

If we take the trace over the field space \(\mathcal{H}\), this is splitted into the sum of the traces over every subspace \(\mathcal{S}^E\) of definite energy squared. In every one of them the restricted Witten index gives
\begin{equation}
\mathrm{Tr}_{\mathcal{S}^E} \left(\gamma^c e^{-T\Delta}\right) = e^{-TE^2}\left( \mathrm{Tr}_{\mathcal{S}^E_+}(\mathds{1}) - \mathrm{Tr}_{\mathcal{S}^E_-}(\mathds{1})\right) = 0
\end{equation} 
for every \(E\neq 0\). So the whole trace is formally independent on \(T\), resolving on the subspace of zero-energy, where the number of chirality + and - eigenstates is different in general:
\begin{equation}\begin{aligned}
\mathrm{Tr}_{\mathcal{H}} \left(\gamma^c e^{-T\Delta}\right) &= \mathrm{Tr}_{E=0}(\gamma^c) \\
&= \#^{E=0}(\mathrm{chirality\ (+)\ modes}) - \#^{E=0}(\mathrm{chirality\ (-)\ modes}) . \end{aligned}
\end{equation}
\end{proof}

The Witten index representation \eqref{AS-3} and the chirality decomposition of the operators of interest \eqref{AS-1} permit to see the current problem as a \(\mathcal{N}=1\) supersymmetric QM on the manifold \(M\), identifying chirality +(-) spinors with bosonic(fermionic) states. Here, the supersymmetry algebra \eqref{eq:susy-algebra-chiral} is simply\footnote{Often this is called \(\mathcal{N}=1/2\) supersymmetry, leaving the name \(\mathcal{N}=1\) for the complexified algebra with generators \( Q,\tilde{Q}\), and imposition of Majorana condition.} (choosing an appropriate normalization)
\begin{equation}
[Q,Q] = 2H ,
\end{equation}
and corresponds to the Schr\"odinger operator above, if we make the following identifications:
\begin{equation}
\begin{aligned}
i\slashed{\nabla} \quad &\leftrightarrow\quad Q \\
\Delta = (i\slashed{\nabla})^2 \quad &\leftrightarrow\quad H = Q^2 .
\end{aligned}
\end{equation}
The chirality matrix \(\gamma^c\) is identified with the operator \((-1)^F\), where \(F\) is the fermion number operator, that assigns eigenvalue \(+1\) to bosonic states and \(-1\) to fermionic states. The Witten index representation in the quantum system is thus
\begin{equation}
\label{AS-2}
\begin{aligned}
\mbox{index}(i\slashed{\nabla}) &= \mathrm{Tr}\left( (-1)^F e^{-TH} \right) \\
&= n^{E=0}(\mathrm{bosons}) - n^{E=0}(\mathrm{fermions}) .
\end{aligned}
\end{equation}
The proof above, translated in terms of the quantum system, shows that in supersymmetric QM eigenstates of the Hamiltonian have non-negative energy, and are present in fermion-boson pairs for every non-zero energy.  Since \(Q^2\) is a positive-definite Hermitian operator, the zero modes \(|0\rangle\) of \(H\) are supersymmetric, \(Q|0\rangle = 0\) (they do not have supersymmetric partners). Thus the non-vanishing of the Witten index \eqref{AS-2} is a sufficient condition to ensure that there is at least one supersymmetric vacuum state available, whereas its vanishing is a necessary condition for spontaneous braking of supersymmetry by the vacuum.

The Witten index has a super-loop space path integral representation \cite{cecotti-functional}, so that we can rewrite \eqref{AS-2} as
\begin{equation}
\label{AS-path-integral}
\mbox{index}(i\slashed{\nabla}) = \int D^{2n}\phi D^{2n}\psi\ e^{-T S[\phi,\psi]}
\end{equation}
where \(S\) is the Euclidean action corresponding to the Hamiltonian \(H\) and the fields are defined on the unit circle. The appropriate supersymmetric theory which describes a spinning particle on a gravitational background is the 1-dimensional \emph{supersymmetric non-linear \(\sigma\) model}. The superspace formulation of this model considers the base space as a super extension of the 1-dimensional spacetime with coordinates \((t,\theta)\), and \(M\) as the target space with covariant derivative given by the Dirac operator. A superfield, trivialized with respect to coordinates \((t,\theta)\) and \((x^\mu)\) on \(M\) is then
\begin{equation}
\Phi^\mu (t,\theta) \equiv (x^\mu \circ \Phi)(t,\theta) = \phi^\mu(t) + \psi^\mu(t) \theta ,
\end{equation}
and supersymmetry transformations are given by the action of the odd vector field \(\underline{Q}=\partial_\theta + \theta \partial_t\), that in terms of component fields reads
\begin{equation}
\label{AS-5}
\delta \phi^\mu = \psi^\mu, \qquad \delta \psi^\mu = \dot{\phi}^\mu .
\end{equation}
Denoting the superderivative as \(D=-\partial_\theta + \theta \partial_t\), the action of the non-linear \(\sigma\) model coupled to the gauge field \(A\) can be given as
\begin{equation}
S[\Phi] = \int dt \int d\theta\ \frac{1}{2} g_{\Phi(t,\theta)}\left( D\Phi, \dot{\Phi}\right) + A(D\Phi)
\end{equation}
where \(D\Phi\) and \(\dot{\Phi}\) are thought as (super)vector fields on \(M\) such that, for any function \(f\in C^\infty(M)\), \(D\Phi(f):=D(f\circ\Phi)\) and \(\dot{\Phi}(f):=\partial_t (f\circ\Phi)\). Inserting the trivialization for the metric components \(g_{\mu\nu}(\Phi(t,\theta))=g_{\mu\nu}(\phi)+\theta\psi^\sigma(t) g_{\mu\nu,\sigma}(\phi)\), and the component expansion for \(\Phi\), the action is simplified to
\begin{equation}
\label{AS-4}
S[\phi,\psi] = \int dt \left( \frac{1}{2}g_{\mu\nu}(\phi)\dot{\phi}^\mu \dot{\phi}^\nu + \frac{1}{2}g_{\mu\nu}(\phi)\psi^\mu (\nabla_t \psi)^\nu + A_\mu(\phi)\dot{\phi}^\mu - \frac{1}{2}\psi^\mu F_{\mu\nu}\psi^\nu \right)
\end{equation}
where we suppressed the \(t\)-dependence, \(F_{\mu\nu} = \partial_{[\mu}A_{\nu]}\) are the components of the electromagnetic field strength, and the time-covariant derivative \(\nabla_t\) acts as \[
(\nabla_t V)^\sigma(\phi(t)) = \dot{V}^\sigma (\phi(t)) + \Gamma^\sigma_{\mu\nu}\dot{\phi}^\mu(t)V^\nu(\phi(t)) .\]

The action \eqref{AS-4} is formally equivalent to the model-independent action \eqref{QM-action-NT} of the last section, with \(T\) behaving like the localization parameter \(\lambda\), if we identify \(\theta\) with the electromagnetic potential \(A\) and \(\omega\) with the field strength \(F\), and if we set the Hamiltonian and its associated vector field \(H, X_H\) to zero. This means that we can give to it an equivariant cohomological interpretation on the super-loop space \(\Pi T\mathcal{F}\) over \(M\), with coordinates identified with \((\phi^\mu,\psi^\mu)\), as was pointed out first by Atiyah and Witten \cite{Atiyah-circular-symmetry}. Moreover, since the Hamiltonian vanishes, this action describes no propagating physical degrees of freedom, and thus the model is topological. Indeed, its value has to give the index of the Dirac operator, expected to be a topological quantity. To emphasize the equivariant cohomological nature of the model, we notice that the action functional can be split in the loop space (pre-)symplectic 2-form
\begin{equation}
\Omega[\phi,\psi] := \int dt\ \frac{1}{2}\psi^\mu \left( g_{\mu\nu}\nabla_t - F_{\mu\nu}\right) \psi^\nu
\end{equation}
and the loop space Hamiltonian
\begin{equation}
\mathcal{H} = \int dt\ \left(\frac{1}{2}g_{\mu\nu}\dot{\phi}^\mu \dot{\phi}^\nu + A_\mu \dot{\phi}^\mu \right).
\end{equation}
They satisfy \(d_{\mathcal{F}}\mathcal{H} = -\iota_{\dot{\phi}}\Omega\), so  the supersymmetry transformation \eqref{AS-5} is rewritten in terms of the Cartan differential \(Q_{\dot{\phi}} = d_{\mathcal{F}}+\iota_{\dot{\phi}}\), and \(Q_{\dot{\phi}}S = Q_{\dot{\phi}}(\mathcal{H}+\Omega) = 0\). Moreover, we can find a loop space symplectic potential \(\Sigma\) such that \(S\) is equivariantly exact:
\begin{equation}
\label{index-Q-exact-S}
\begin{aligned}
S[\phi,\psi] &= Q_{\dot{x}}\Sigma[\phi,\psi] \\
\mathrm{where}\quad \Sigma[\phi,\psi] &:= \int dt \left( g_{\mu\nu}(\phi)\dot{\phi}^\mu + A_\nu(\phi) \right) \psi^\nu .
\end{aligned}
\end{equation}
Notice that, since the Hamiltonian \(H\) vanishes, the localizing \(U(1)\) symmetry is the one generated by time-translation with respect to the base space \(\mathbb{S}^1\), an intrinsic property of the geometric structure that underlies the model.

We can now apply the Niemi-Tirkkonen formula, localizing the path integral \eqref{AS-path-integral} into the moduli space of constant loops, \textit{i.e.}\ as an integral over \(M\). The result is the same as equation \eqref{QM-NT-formula}, but now since the Hamiltonian vanishes, the Chern class and the Dirac \(\hat{A}\)-genus (see Appendix \ref{app:char-classes}) are not equivariantly extended by the presence of an Hamiltonian vector field, giving the topological formula
\begin{equation}
\mbox{index}(i\slashed{\nabla}) = \int_M \mbox{ch}(F)\wedge \hat{A}(R) .
\end{equation}
This is the result of the Atiyah-Singer index theorem for the Dirac operator on the twisted spinor bundle over \(M\). Similar applications of the localization principle to variations of the non-linear \(\sigma\) model give correct results for other classical complexes as well (de Rham, Dolbeault for example), in terms of different topological invariants \cite{alvarez-susyAS}.

\section{Equivariant structure of supersymmetric QFT and supersymmetric localization principle}
\sectionmark{Equivariant structure of supersymmetric QFT}
\label{sec:susy-loc}

In the last section we saw how to give an equivariant cohomological interpretation to a model exhibiting Poincaré-supersymmetry, in terms of the super-loop space symplectic structure introduced before. In the following we are interested in applying the same kind of supersymmetric localization principle to higher dimensional QFT on a, possibly curved, \(n\)-dimensional spacetime \(M\), where there is some preserved supersymmetry. 

In \cite{morozov-supersymplectic-QFT}\cite{palo} it is argued that any generic quantum field theory with at least an \(\mathcal{N}=1\) Poincaré supersymmetry admits a field space Hamiltonian (symplectic) structure and a corresponding \(U(1)\)-equivariant cohomology responsible for localization of the supersymmetric path integral. The key feature is an appropriate off-shell component field redefinition which defines a splitting of the fields into loop space \lq\lq coordinates\rq\rq\ and their associated \lq\lq differentials\rq\rq. In general, unlike the simplest case of the last section where bosonic fields were identified with coordinates and fermionic fields with 1-forms, loop space coordinates and 1-forms involve both bosonic and fermionic fields. It is proven that, within this field redefinition on the super-loop space, any supersymmetry charge \(Q\) can be identified with a Cartan differential
\begin{equation}
Q = d_{\mathcal{F}} + \iota_{X_+}
\end{equation}
analogously to \eqref{QM-Cartan-differential}, whose square generates translations in a given \lq\lq light-cone\rq\rq\ direction \(x_+\),
\begin{equation}
Q^2 = \mathcal{L}_{X_+} \sim \int_M \frac{\partial}{\partial x^+}
\end{equation}
that corresponds to the \(U(1)\) symmetry that can be used to exploit the localization principle. Taking the base spacetime to be compact in the light-cone direction, the periodic boundary conditions ensure \(Q^2=0\), analogously to the loop space assumption of the one dimensional case. Also, it is argued that the supersymmetric action can be generally split into the sum of a loop space scalar function and a (pre-)symplectic 2-form,
\begin{equation}
S_{susy} = \mathcal{H} + \Omega
\end{equation}
related by \(d_{\mathcal{F}}\mathcal{H} = - \iota_{X_+}\Omega\). Thus, the supersymmetry of the action can be seen in general as the \(U(1)\)-equivariant closeness required to the application of the localization principle, and the path integral localizes onto the locus of constant loops, \textit{i.e.}\ zero-modes of the fields. 

Even without entering in the details of this construction in terms of auxiliary fields redefinition, we feel now allowed to translate in full generality the circle localization principle in the framework of Poincaré-supersymmetric QFT. In the component-field description, we consider a rigid supersymmetric background over the given compact\footnote{This ensures the loop space interpretation of above.} spacetime \(M\) and a graded field space \(\mathcal{F}\) that plays the role of the super-loop space over \(M\), whose even-degree forms are bosonic fields and the odd-degree forms are fermionic fields. The (infinitesimal) supersymmetry action of a preserved supercharge \(Q\) plays the role of the Cartan differential \(d_{\mathcal{F}}\), squaring to a bosonic symmetry that corresponds to the (infinitesimal) action of a \(U(1)\) symmetry group, \(Q^2 \sim \mathcal{L}_X\) with \(X\) an \emph{even} vector field. The Cartan model for the \(U(1)\)-equivariant cohomology of \(\mathcal{F}\) is defined by the subcomplex of \(Q\)-invariant (or supersymmetric, or \lq\lq BPS\rq\rq) observables, where the supercharge squares to zero.

We consider a supersymmetric model specified by the (Euclidean) action functional \(S\in C^\infty(\mathcal{F})\) such that \(\delta_{Q}S = 0\), and a BPS observable \(\mathcal{O}\) such that \(\delta_{Q} \mathcal{O}=0\). Now the partition function \eqref{eq:path-integral-euclidean} and the expectation value \eqref{eq:expectation-value} are seen as integrations of equivariantly closed forms with respect to the differential \(Q\). The supersymmetric localization principle then tells us that we can modify the respective integrals adding an equivariantly exact localizing term to the action,
\begin{equation}
\lambda S_{loc}[\Phi] := \lambda \delta_{Q} \mathcal{V}[\Phi]
\end{equation}
where \(\mathcal{V}\in \Omega^1(\mathcal{F})^{U(1)}\) is a \(Q^2\)-invariant fermionic functional, the \lq\lq gauge-fixing fermion\rq\rq\ of Section \ref{sec:QM}, and \(\lambda\in\mathbb{R}\) is a parameter. Assuming the supersymmetry \(\delta_Q\) to be non anomalous, the partition function and the expectation value are not changed by this modification, \textit{i.e.}\ the the new integrand lies in the same equivariant cohomology class,
\begin{equation}
\begin{aligned}
&\frac{d}{d\lambda} Z(\lambda) = \int_{\mathcal{F}} D\Phi\ (-\delta_{Q} \mathcal{V}[\Phi]) e^{-(S+\lambda \delta_{Q} \mathcal{V})[\Phi]} = - \int_{\mathcal{F}} D\Phi\ \delta_{Q}\left( \mathcal{V}[\Phi] e^{-(S+\lambda \delta_{Q} \mathcal{V})[\Phi]}\right) = 0 \\
&\frac{d}{d\lambda} \langle \mathcal{O}\rangle_\lambda = -\frac{1}{Z(\lambda)}\left(\frac{d}{d\lambda}Z(\lambda)\right)\langle \mathcal{O}\rangle_\lambda + \frac{1}{Z(\lambda)}\int_{\mathcal{F}}D\Phi\ \delta_Q \left( \mathcal{V}[\Phi] \mathcal{O}[\Phi] e^{-(S+\lambda\delta_Q \mathcal{V})[\Phi]}\right) = 0
\end{aligned}
\end{equation}
by the same argument of Section \ref{sec:QM}. Assuming the bosonic part of \( \delta_Q \mathcal{V}\) to be positive-semidefinite, and using the \(\lambda\)-independence of the path integral, we can evaluate the partition function or the expectation value in the limit \(\lambda\to+\infty\), getting the localization formulas
\begin{equation}
\label{eq:susy-loc-limit}
Z = \lim_{\lambda\to\infty} \int_{\mathcal{F}}D\Phi\ e^{-(S+\lambda S_{loc})[\Phi]} ,\qquad \langle\mathcal{O}\rangle = \frac{1}{Z} \lim_{\lambda\to\infty} \int_{\mathcal{F}}D\Phi\ \mathcal{O}[\Phi]e^{-(S+\lambda S_{loc})[\Phi]} .
\end{equation}
The path integrals localize then onto the locus \(\mathcal{F}_0\) of saddle points of \(S_{loc}\). Following again the same argument of Section \ref{sec:ABBVproof} we can in fact expand the fields about these saddle point configurations, rescale the normal fluctuations as 
\begin{equation}
\Phi = \Phi_0 + \frac{1}{\sqrt{\lambda}} \tilde{\Phi} ,
\end{equation}
and the augmented action functional as
\begin{equation}
(S+\lambda S_{loc})[\Phi] = S[\Phi_0] + \frac{1}{2}\int_M d^nx \int_M d^ny \left(\frac{\delta^2 S_{loc}}{\delta \Phi(x) \delta \Phi(y)}\right)_{\Phi_0} \tilde{\Phi}(x) \tilde{\Phi}(y) + O(\lambda^{-1/2}).
\end{equation}
The functional measure on the normal sector \(D\tilde{\Phi}\) is not affected by the rescaling, since the supersymmetric model contains the same number of bosonic and fermionic physical component fields,\footnote{Note that this has to be true \emph{off-shell}, \textit{i.e.}\ without imposing any EoM.} and the corresponding Jacobians cancel by Berezin integration rules. The integral over this fluctuations is Gaussian and can be performed, giving the \lq\lq 1-loop determinant\rq\rq\ analogous to the equivariant Euler class that appeared in Theorem \ref{thm:ABBV}. The leftover integral corresponds to the saddle point formula \eqref{eq:saddle-point-path-int}, but as an exact equality:
\begin{equation}\begin{aligned}
Z &= \int_{\mathcal{F}_0} D\Phi_0\ e^{-S[\Phi_0]} Z_{1-loop}[\Phi_0]  \\
\langle\mathcal{O}\rangle &= \frac{1}{Z} \int_{\mathcal{F}_0} D\Phi_0\ \mathcal{O}[\Phi_0] e^{-S[\Phi_0]} Z_{1-loop}[\Phi_0]
\end{aligned}
\end{equation}
where 
\begin{equation}
Z_{1-loop}[\Phi_0] := \left( \mathrm{Sdet} \left[ \frac{\delta^2 S_{loc}}{\delta \Phi(x) \delta \Phi(y)} [\Phi_0] \right]\right)^{-1}
\end{equation}
and the super-determinant denotes collectively the result of Gaussian intergrations over bosonic or fermionic fields.

Although the choice of localizing term \(\mathcal{V}\) is arbitrary, and different choices give in principle different localization loci, the final result must be the same for every choice. At the end of Section \ref{sec:susy-curved} we remarked that the \(\mathcal{N}=2\) supersymmetric Yang-Mills Lagrangian on a 3-dimensional maximally supersymmetric background is \(Q\)-exact, and thus can be used as a localizing term for supersymmetric gauge theories on this type of 3-dimensional spacetimes. It turns out that also the \(\mathcal{N}=2\) matter (chiral) Lagrangian is \(Q\)-exact in three dimensions \cite{willet-loc3d}. A canonical choice of localizing action can be, schematically \cite{cremonesi}
\begin{equation}
\label{eq:localizing-term-canonical}
S_{loc}[\Phi] := \int_M \delta_{Q} \sum_f \left( (\delta_Q \Phi_f)^\dagger \Phi_f + \Phi_f^\dagger (\delta_Q \Phi_f^\dagger)^\dagger \right)
\end{equation}
where the sum runs over the fermionic fields of the theory. Its bosonic part is
\begin{equation}
\left. S_{loc}[\Phi]\right|_{bos} = \sum_f \left( |\delta_Q \Phi_f|^2 + |\delta_Q \Phi_f^\dagger|^2 \right) ,
\end{equation}
that is indeed positive semidefinite. The corresponding localization locus is the subcomplex of BPS configurations,
\begin{equation}
[\mathrm{fermions}]=0 ,\qquad \delta_Q [\mathrm{fermions}] = 0.
\end{equation}

\bigskip
Concluding this general and schematic discussion, there are a couple of remarks we wish to point out. Firstly, in the above formulas we always considered generic BPS observables that are expressed through (local or non-local) combinations of the fields. In other words, their quantum expectation values are defined as insertions in the path integral of corresponding (classical) functionals on the field space. Examples of local objects of this kind are correlation functions of fundamental fields. A famous class of non-local quantum operators that are expressible as classical functionals are the so-called \emph{Wilson loops}. In gauge theory with gauge group \(G\) and local gauge field \(A\), a Wilson loop in the representation \(	R\) of \(Lie(G)\) over the closed curve \(C:\mathbb{S}^1 \to M\) is defined by
\begin{equation}
W_R(C) := \frac{1}{\dim R} \mathrm{Tr}_R\left(\mathcal{P}\exp{i\oint_{\mathbb{S}^1} C^*(A)} \right) 
\end{equation} 
where the trace \(\mathrm{Tr}_R\) is taken in the given representation.\footnote{In the adjoint representation, this denotes an invariant inner product in \(\mathfrak{g}\), for example the Killing form for a semisimple Lie algebra.} This is gauge invariant, and represents physically the phase acquired by a charged probe particle in the representation \(R\) after a tour on the curve \(C\), in presence of the gauge potential \(A\). Mathematically, if \(R\) is the adjoint representation, the Wilson loop represents the parallel transport map between the fibers of the principal \(G\)-bundle defining the gauge theory. These operators have many interesting applications in physics: depending on the chosen curve \(C\) their expectation value can be interpreted as an order parameter for the confinement/deconfinement phase transitions in QCD or the Bremsstrahlung function for an accelerated particle \cite{Maldacena-wilson-loops,Correa-bremsstr,Aharony-deconfinement}. In the case of 3-dimensional Chern-Simons theory, they can be used to study topological invariants in knot theory \cite{witten-knot}. In supersymmetric theories, they are particularly relevant for tests of the AdS/CFT correspondence \cite{roadmap-wilson-loops}. In the next sections we will review some interesting cases in which expectation values of this type of operators can be evaluated exactly using the supersymmetric localization principle. There exists another class of interesting operators in the quantum theory, that cannot be expressed as classical functionals on the field space. These are the so-called \emph{disorder operators}, and their expectation values are defined by a restriction of the path integral to those field configurations which have prescribed boundary conditions around some artificial singularity introduced in spacetime. An example of these are the \emph{'t Hooft operators}, which introduce a Dirac monopole singularity along a path in a 4-dimensional space \cite{gomis-tHooft}. These kind of operators can be also studied non-perturbatively with the help of localization techniques \cite{Gomis-loc-tHooft}. For a great review of different examples of localization computations in supersymmetric QFT, see \cite{LocQFT}.

The second remark we wish to make is that, in presence of a gauge symmetry, the action functionals in the above formulas have to be understood as the quantum (\textit{i.e.}\ gauge-fixed) action in order to give meaning to the corresponding partition function. That is, one has to introduce Faddeev-Popov ghost fields in the theory and the associated BRST transformations \(\delta_{BRST}\). We have seen in Section \ref{sec:BRST-cohom-equiv} that it is always possible to see the BRST complex in terms of equivariant cohomology on the field space, so this supersymmetry transformation have to be incorporated in the equivariant structure of the supersymmetric theory. In this case, the field space acquires a \(\mathbb{Z}\)-grading corresponding to the ghost number, on top of the \(\mathbb{Z}_2\) one from supersymmetry, and the appropriate Cartan differential with respect to which the equivariant cohomological structure is defined is then the total supersymmetry variation
\(
Q = \delta_{susy} + \delta_{BRST} \).

\section[Localization of N=4,2,2* gauge theory on the 4-sphere]{Localization of \(\mathcal{N}=4,2,2^*\) gauge theory on the 4-sphere}

In this section we review, following the seminal work of Pestun \cite{pestun-article}, how to exploit the supersymmetric localization principle in \(\mathcal{N}=4\) Euclidean Super Yang-Mills theory on the four-sphere \(\mathbb{S}^4\). The \(\mathcal{N}=2\) and \(\mathcal{N}=2^*\) theories can be also treated with the same technique. In particular, it was possible to solve exactly the partition function of the theory and the expectation value of the Wilson loop defined by
\begin{equation}
W_R(C) := \frac{1}{\dim R}\mathrm{Tr}_R\left(\mathcal{P}\exp{i\oint_{C(\mathbb{S}^1)} (A_\mu \dot{C}^\mu + |\dot{C}| \Phi_0) dt} \right) 
\end{equation}
where \(C\) is a closed equatorial curve on \(\mathbb{S}^4\) of tangent vector \(\dot{C}\), and the scalar field \(\Phi_0\) is required by supersymmetry, as will become clear later. The localization procedure makes the path integral reduce to a finite-dimensional integral over the Lie algebra of the gauge group, a so-called \lq\lq matrix model\rq\rq.

\subsection{The action and the supersymmetric Wilson loop}

We will consider the theories revisited in Sections \ref{subsec:susy-4dN4} and \ref{subsec:4dN4-curved}. We report the action of the \(\mathcal{N}=2^*\) theory on the 4-sphere,
\begin{equation}
\begin{aligned}
S^{\mathcal{N}=2^*}_{\mathbb{S}^4} = \int_{\mathbb{S}^4} d^4x\sqrt{g}\ \frac{1}{g_{YM}^2} \mathrm{Tr}\left( F_{MN}F^{MN} - \Psi\Gamma^M D_M \Psi \right.& + \frac{2}{r^2} \Phi_A \Phi^A - \\
&\left. - \frac{1}{4r}(R^{ki}M_k^j)\Phi_i\Phi_j - \sum_{i=1}^7 K_i K_i\right)
\end{aligned}
\end{equation}
where \( D_0\Phi_i \mapsto [\Phi_0,\Phi_i] + M_i^j\Phi_j\) and \(D_0\Psi \mapsto [\Phi_0,\Psi] + \frac{1}{4}M_{ij}\Gamma^{ij}\Psi\), \(i,j=5,\cdots,8\). In the limit of zero mass \(M\) we get the \(\mathcal{N}=4\) YM theory, while in the limit of infinite mass the \(\mathcal{N}=2\) hypermultiplet decouples and the pure \(\mathcal{N}=2\) YM theory is recovered. This is invariant under the superconformal transformations \eqref{pestun-susyvarfinal} that we report here,
\begin{equation}
\label{pestun-variationscopy}
\begin{aligned}
\delta_\epsilon A_M &= \epsilon \Gamma_M \Psi \\
\delta_\epsilon \Psi &= \frac{1}{2}\Gamma^{MN}F_{MN}\epsilon + \frac{1}{2} \Gamma^{\mu A}\Psi_A \nabla_\mu \epsilon + \sum_{i=1}^7 K_i \nu_i \\
\delta_\epsilon K_i &= - \nu_i \Gamma^M D_M \Psi
\end{aligned}
\end{equation}
with \((\nu_i)_{i=1,\cdots,7}\) satisfying \eqref{pestun-auxiliary-cond}, and \(\epsilon\) being a conformal Killing spinor satisfying \eqref{pestun-5} and \eqref{pestun-6}. When the mass is non-zero, the Killing condition is restricted to \eqref{pestun-cKseq-restricted}, or equivalently
\begin{equation}
\label{pestun-15}
\tilde{\epsilon} = \frac{1}{2r}\Lambda \epsilon
\end{equation} 
where \(\Lambda\) is an \(SU(2)_L^R\) generator. The superconformal algebra closes schematically as
\begin{equation}
\delta_\epsilon^2 = -\mathcal{L}_v - G_\Phi - (R+M) - \Omega .
\end{equation}
To obtain a Poincaré-equivariant differential interpretation of this variation, we want \(\delta_\epsilon\) to generate rigid supersymmetry, \textit{i.e.}\ square only to the Poincaré algebra (plus R-symmetry, up to gauge transformations). Thus, to eliminate the dilatation contribution, we impose also the condition 
\begin{equation}
\label{pestun-11}
\epsilon\tilde{\epsilon} = 0 .
\end{equation}
If the mass is non-zero, the \(SU(1,1)^\mathcal{R}\) is broken, so also its contribution should be eliminated, imposing further the condition
\begin{equation}
\label{pestun-12}
\tilde{\epsilon}\Gamma^{09}\epsilon=0 .
\end{equation}

Solutions to \eqref{pestun-5} and \eqref{pestun-6} are easy to compute in the flat space limit \(r\to\infty\): here \(\nabla_\mu = \partial_\mu\), and \(\partial_\mu \tilde{\epsilon}=0\) imposes \(\tilde{\epsilon}(x) = \hat{\epsilon}_c\) constant. Thus, the conformal Killing spinor in flat space is just the one considered in \eqref{pestun-10},
\begin{equation}
\label{pestun-13}
\epsilon(x) = \hat{\epsilon}_s + x^\mu \Gamma_\mu  \hat{\epsilon}_c
\end{equation}
where the first constant term generates supertranslations, while the term linear in \(x\) generates superconformal transformations. The constant spinors \(\hat{\epsilon}_s, \hat{\epsilon}_c\) parametrize in general the space of solutions of the conformal Killing spinor equation. For a finite radius \(r\), using stereographic coordinates and the round metric \eqref{pestun-metric}, the covariant derivative acts as \(\nabla_\mu \epsilon = \left( \partial_\mu + \frac{1}{4}\omega_{ij\mu}\Gamma^{ij} \right)\epsilon\), where \(\omega\) is the spin connection
\begin{equation}
\omega^i_{j\mu} = \left( e^i_\mu e^\nu_j  - e_{j\mu}e^{i\nu} \right) \partial_\nu \Omega
\end{equation}
and \(e\) is the vielbein corresponding to the metric.\footnote{Here we use latin indices as \lq\lq flat\rq\rq\ indices and greek indices as \lq\lq curved\rq\rq\ indices, so that as \( g_{\mu\nu} = e^i_\mu e^j_\nu \delta_{ij}\).} The general solution in this coordinate system is
\begin{equation}
\epsilon(x) = \frac{1}{\sqrt{1+\frac{x^2}{4r^2}}}\left(\hat{\epsilon}_s + x^\mu\Gamma_\mu \hat{\epsilon}_c \right) \qquad \tilde{\epsilon}(x) = \frac{1}{\sqrt{1+\frac{x^2}{4r^2}}}\left(\hat{\epsilon}_c - \frac{x^\mu\Gamma_\mu}{4r^2} \hat{\epsilon}_s\right) 
\end{equation}
that indeed simplifies to \eqref{pestun-13} in the limit of infinite radius.
The conditions \eqref{pestun-11}, \eqref{pestun-12}, \eqref{pestun-15} are rewritten in terms of the constant spinors as 
\begin{equation}
\label{pestun-14}
\hat{\epsilon}_s\hat{\epsilon}_c = \hat{\epsilon}_s \Gamma^{09}\hat{\epsilon}_c = 0 \qquad \hat{\epsilon}_s \Gamma^{\mu}\hat{\epsilon}_s = \frac{1}{4r^2} \hat{\epsilon}_c \Gamma^{M}\hat{\epsilon}_c  \qquad \hat{\epsilon}_c = \frac{1}{2r} \Lambda \hat{\epsilon}_s .
\end{equation} 
The second condition is solved if the two constant spinors are taken to be chiral with respect to the 4-dimensional chirality operator \(\Gamma^{1234}\), so that both terms vanish automatically. In Pestun's conventions, they are chosen to have the same definite chirality and orthogonal to each other (to satisfy the first condition), so that \(\epsilon\) is chiral only at the North and South poles, where \(x^2=0,\infty\).\footnote{There cannot be chiral spinor fields on \(\mathbb{S}^4\) without zeros, because a chiral spinor defines an almost complex structure at each point, but \(\mathbb{S}^4\) has no almost complex structure. Since \(\mathbb{S}^4\) has constant scalar curvature, it can be proved that the conformal Killing condition on \(\epsilon\) actually implies that \(\epsilon\) is also a Killing spinor, \[
\nabla_\mu \epsilon = \mu \Gamma_\mu \epsilon \]
for some constant \(\mu\). This condition implies that the spinor is never zero, since it has constant norm. Thus \(\epsilon\) cannot be chiral. \cite{Zaffaroni-susy-curved-holography}}

The Wilson loop under consideration is of the type considered in \cite{Drukker_2007,Drukker_2008},
\begin{equation}
\label{pestun-wilson}
W_R(C) := \frac{1}{\dim R} \mathrm{Tr}_R\left(\mathcal{P}\exp{i\oint_C dt (A_\mu \dot{C}^\mu + |\dot{C}| \Phi_0 )} \right) 
\end{equation}
where \(C:[0,1]\to \mathbb{S}^4\) is an equatorial closed curve, parametrized in stereographic coordinates as \((x\circ C)(t) = 2r(\cos{(t)}, \sin{(t)},0,0)\), spanning a great circle of radius \(r\). Its tangent vector is \(\dot{C}(t) = 2r(-\sin(t),\cos(t),0,0)\), and the normalization \(|\dot{C}|=2r\) in front of \(\Phi_0\) is needed for the reparametrization invariance of the line integral. We argue now that this Wilson loop preserves some supersymmetry under the action of \(\delta_\epsilon\). In fact, its variation is proportional to
\begin{equation}
\delta_\epsilon W_R(C) \propto \epsilon \left( \Gamma_\mu \dot{C}^\mu + 2r \Gamma_0 \right) \Psi
\end{equation}
and for this to vanish for every value of the gaugino \(\Psi\), it must be that
\begin{equation}
\begin{aligned}
0 = &\epsilon \left( \Gamma_\mu \dot{C}^\mu + 2r \Gamma_0\right) \propto \left(\hat{\epsilon}_s + C^\mu\Gamma_\mu \hat{\epsilon}_c \right) \left( \Gamma_\mu \dot{C}^\mu + 2r \Gamma_0\right) \\
\Leftrightarrow  0 = &\sin(t)\left( -\hat{\epsilon}_s \Gamma_1 + 2r \hat{\epsilon}_c \Gamma_2 \Gamma_0 \right) + \cos(t) \left( \hat{\epsilon}_s \Gamma_2 + 2r \hat{\epsilon}_c \Gamma_1 \Gamma_0 \right) + \left( 2r  \hat{\epsilon}_c \Gamma_1\Gamma_2 + \hat{\epsilon}_s \Gamma_0\right)
\end{aligned}
\end{equation}
where we inserted the values for \(C^\mu, \dot{C}^\mu\) and simplified some trivial terms. For this to vanish at all \(t\),  the three parentheses have to vanish separately, giving the condition
\begin{equation}
\hat{\epsilon}_c = \frac{1}{2r} \Gamma_0 \Gamma_1\Gamma_2 \hat{\epsilon}_s .
\end{equation}
This condition halves the number of spinors that preserve the Wilson loop under supersymmetry, so this is called a \emph{1/2-BPS} operator.\footnote{This is just common terminology, that does not refer to any BPS condition between mass and central charges in the supersymmetry algebra (see \cite{dhoker-susy-notes}). It only means that the observable under consideration preserves half of the supercharges.} In the \(\mathcal{N}=4\) case, it preserves 16 supercharges.

If the mass of the hypermultiplet is turned on, the third condition in \eqref{pestun-14} has a non-zero solution for \(\hat{\epsilon}_s\) if \(\det(\Lambda - \Gamma_0 \Gamma_1\Gamma_2)=0\), that fixes \(\Lambda\) up to a sign.

\subsection{Quick localization argument}

Without considering the unphysical redundancy in field space given by the gauge symmetry of the theory, we can give a quick argument for the localization of the \(\mathcal{N}=4\) SYM, using the procedure outlined in Section \ref{sec:susy-loc}. We consider the \(U(1)\)-equivariant cohomology generated by the action of a fixed supersymmetry \(\delta_{\epsilon}\).\footnote{
\(\delta_\epsilon\) squares to the Poincaré algebra up to an gauge transformation, so we are really considering an \((U(1)\rtimes G)\)-equivariant cohomology, because \(\delta_\epsilon\)-closed equivariant forms are supersymmetric and gauge invariant observables. Ignoring the gauge fixing procedure, we are really not considering the complete field space, since the BRST procedure teaches us that in presence of a gauge symmetry this is automatically extended to include ghosts, that may contribute to the localization locus. It turns out that ghosts contribution is trivial, so the rough argument already gives the correct localization locus. Here we continue with this simplified procedure, and in the next section we are going to argue the above claim.
} 
Since \(\delta_\epsilon S^{\mathcal{N}=4}_{\mathbb{S}^4} = 0\) is equivariantly closed (off-shell) with respect to the variations \eqref{pestun-variationscopy}, we can perform the usual trick and add the localizing term 
\begin{equation}
\begin{aligned}
\lambda S_{loc} &:= \lambda \delta_\epsilon \mathcal{V} \\
\mathrm{where}\quad \mathcal{V}&:= \mathrm{Tr}\left(\Psi \overline{\delta_\epsilon \Psi} \right)
\end{aligned}
\end{equation}
where \(\overline{\delta_\epsilon \Psi}\) is defined by complex conjugation in the Euclidean signature,
\begin{equation}
\overline{\delta_\epsilon \Psi} = \frac{1}{2}\tilde{\Gamma}^{MN}F_{MN}\epsilon + \frac{1}{2} \Gamma^{\mu A}\Psi_A \nabla_\mu \epsilon - \sum_{i=1}^7 K_i \nu_i .
\end{equation}
The bosonic part of the localizing action is 
\begin{equation}
\left. S_{loc}\right|_{bos} = \mathrm{Tr}\left(\delta_\epsilon \Psi \overline{\delta_\epsilon \Psi} \right)
\end{equation}
that is positive semi-definite. The localization locus is then the subspace of fields such that
\begin{equation}
\label{pestun-8}
[\mathrm{fermions}]=0 ;\qquad \delta_\epsilon [\mathrm{fermions}]\ s.t.\ \left. S_{loc}\right|_{bos} = 0.
\end{equation}

The solution to \eqref{pestun-8} is found by inserting the relevant supersymmetry variation in \(\left. S_{loc}\right|_{bos}\), collecting the terms as a sum of positive semi-definite contributions and requiring them to vanish separately. Under the assumption of smooth gauge field, this is given by the field configurations such that, up to a gauge transformation (see \cite{pestun-article} for the details)
\begin{equation}
\label{pestun-locus}
\left\lbrace \begin{array}{ll}
A_\mu = 0 & \mu=1,\cdots,4 \\ 
\Phi_i = 0 & i=5,\cdots,9 \\
\Phi^E_0 = a\in \mathfrak{g} & \mathrm{constant} \\
K_i^E = -2(\nu_i\tilde{\epsilon})a & i=5,6,7 \\
K_I = 0 & I=1,\cdots,4
\end{array} \right. .
\end{equation}
So the physical sector of the theory localizes onto the zero-modes of \(\Phi_0\). If also singular gauge field configurations are allowed, \eqref{pestun-8} receives contributions from instanton solutions, where \(F_{\mu\nu} = 0\) everywhere except from the North or the South pole. These configurations can contribute non-trivially to the partition function. Computing the action on the smooth solutions one gets
\begin{equation}
S^{\mathcal{N}=4}_{\mathbb{S}^4}[a] = \frac{1}{g_{YM}^2}\int_{\mathbb{S}^4} d^4 x \sqrt{g} \mathrm{Tr}\left( \frac{2}{r^2}(\Phi^E_0)^2 + (K_i^E)^2 \right) = \frac{1}{g_{YM}^2} \mbox{vol}(\mathbb{S}^4) \frac{3}{r^2}\mathrm{Tr}\left(a^2\right) = \frac{8\pi^2 r^2}{g_{YM}^2}\mathrm{Tr}(a^2)
\end{equation}
where we used \(\mbox{vol}(\mathbb{S}^4) = \frac{8}{3}\pi^2 r^4\) and \((\nu_i\tilde{\epsilon})^2 = \frac{1}{4r^2}\). This last equation can be derived from the conditions \eqref{pestun-auxiliary-cond} and the form of the conformal Killing spinor \(\epsilon\). The action is given by constant field contributions, so the path integral is expected to be reduced to a finite-dimensional integral over the Lie algebra \(\mathfrak{g}\), of the form
\begin{equation}
Z \sim \int_{\mathfrak{g}} da\ e^{-\frac{8\pi^2 r^2}{g_{YM}^2}\mathrm{Tr}(a^2)} |Z_{inst}[a]|^2 Z_{1-loop}[a] .
\end{equation}
Here \(Z_{inst}\) is the instanton partition function, coming from the singular gauge field contributions of above. Since \(G\) is often considered to be a matrix group, this partition function is said to describe a \emph{matrix model}. The Wilson loop \eqref{pestun-wilson}, evaluated on this locus is given by
\begin{equation}
W_R(C) = \frac{1}{\dim R} \mathrm{Tr}_{R} e^{2\pi r a} .
\end{equation}
This type of matrix models can be approached by reducing the integration over \(\mathfrak{g}\) to an integration over its Cartan subalgebra \(\mathfrak{h}\) (we will discuss this better later, in Section \ref{subsec:MMN4}).\footnote{For finite-dimensional semi-simple complex Lie algebras, this is the maximal Abelian subalgebra. In general it is the maximal Lie subalgebra such that there exists a basis extension \(\mathfrak{h}\oplus span(e_\alpha)\cong \mathfrak{g}\), and it holds the eigenvalue equation \([h,e_\alpha] = \rho_\alpha(h)e_\alpha\) for any \(h\in\mathfrak{h}\) and a certain eigenvalue \(\rho_\alpha(h)\). \(\rho_\alpha:\mathfrak{h}\to\mathbb{C}\) are called \emph{roots} of \(\mathfrak{g}\).} Assuming the zero-mode \(a\in\mathfrak{h}\), we can conveniently rewrite the trace in the representation \(R\) as the sum of all the \emph{weights} \(\rho(a)\) of \(a\) in \(R\),\footnote{Analogously to the definition of root in the adjoint representation of \(\mathfrak{g}\), if the Lie algebra acts on the representation \(R\), the \emph{weight space} \(R_\rho\) of weight \(\rho:\mathfrak{h}\to\mathbb{C}\) is defined	 as the subspace of elements \(A\in R\) such that \(h\cdot A = \rho(h)A\).}
\begin{equation}
\label{pestun-weights}
W_R(C) = \frac{1}{\dim R} \sum_{\rho\in\Omega(R)}n(\rho) e^{2\pi r \rho(a)} ,
\end{equation}
where \(n(\rho)\) is the multiplicity of the weight \(\rho\), and \(\Omega(R)\) is the set of all the weights in the representation \(R\).

We finally notice that the same result for the localization locus works also for the \(\mathcal{N}=2\) and the \(\mathcal{N}=2^*\) theories, and both theories localize to the same matrix model. If the mass term for the hypermultiplet is considered, this can of course give a non-trivial contribution to the 1-loop determinant.

\subsection{The equivariant model}

As remarked at the end of Section \ref{sec:susy-loc}, in presence of a gauge symmetry the path integral has to be defined with respect to a gauge-fixed action. To do so, one has to enlarge the field space to include the appropriate Faddeev-Popov ghosts in a BRST complex with the differential \(\delta_{B}\). We consider then the total differential 
\begin{equation}
Q = \delta_\epsilon + \delta_{B}
\end{equation} 
where \(\epsilon\) is a fixed conformal Killing spinor that closes off-shell the superconformal algebra, so that the gauge-invariant SYM action is \(Q\)-closed. From the equivariant cohomology point of view, this operator is an equivariant differential with respect to the \( U(1)_{\epsilon} \rtimes G\) symmetry group acting on the enlarged field space. To gauge fix the path integral, following the BRST procedure with respect to the differential \(Q\), the action has to be extended as
\begin{equation}
S_{phys}[A,\Psi,K,ghosts] = S_{SYM}[A,\Psi,K] + Q\mathcal{O}_{g.f.}[A,ghosts]
\end{equation}
with a gauge-fixing fermion \(\mathcal{O}[A,ghosts]\). Upon path integration over ghosts, this new term has to give the gauge-fixing action and Fadee-Popov determinant. The localization principle is then exploited augmenting again the action with a \(Q\)-exact term, \(\lambda Q\mathcal{V}\) with again
\begin{equation}
\mathcal{V} := \mathrm{Tr}(\Psi \overline{\delta_\epsilon\Psi}).
\end{equation}
This effectively gives the same localization term of the previous paragraph, since \(\mathcal{V}\) is gauge-invariant.

The BRST-like complex considered in \cite{pestun-article} is given by the following ghost and auxiliary field extension. The ghost \(c\), anti-ghost \(\tilde{c}\) and standard Lagrange multiplier for the \(R_\xi\)-gauges \(b\) (\lq\lq Nakanishi-Lautrup\rq\rq\ field) are introduced, respectively odd, odd and even with respect to Grassmann parity. Since the path integral is expected to localize on zero-modes, constant fields \(c_0,\tilde{c}_0\) (odd) and \(a_0,\tilde{a}_0,b_0\) (even) are also introduced. On the original fields of the SYM theory, the BRST differential acts as a gauge transformation parametrized by \(c\). On the gauge field \(A_\mu\)
\begin{equation}
\delta_{B}A_\mu = -[c,D_\mu].
\end{equation}
On ghosts and zero-modes the BRST transformation is defined by
\begin{equation}
\begin{array}{llll}
\delta_{B} c = -a_0 - \frac{1}{2}[c,c] & \delta_{B}\tilde{c} = b & \delta_{B}\tilde{a}_0 = \tilde{c}_0 & \delta_{B}b_0 = c_0 \\
\delta_{B}a_0 = 0 & \delta_{B}b = [a_0,\tilde{c}] & \delta_{B}\tilde{c}_0  = [a_0,\tilde{a}_0] & \delta_{B}c_0 = [a_0,b_0]
\end{array}
\end{equation}
and its square generates a gauge transformation with respect to the (bosonic) constant field \(a_0\),
\begin{equation}
\delta_B^2 = [a_0, \cdot] .
\end{equation}

The supersymmetry complex, constituted by the original fields, is reparametrized with respect to the basis \(\lbrace \Gamma^M \epsilon, \nu^i\rbrace\), with \(M=1,\cdots,9; i=1,\cdots,7\), of the 10-dimensional Majorana-Weyl bundle over \(\mathbb{S}^4\). Expanding \(\Psi\) over such a basis we have
\begin{equation}
\Psi = \sum_{M=1}^9 \Psi_M (\Gamma^M\epsilon) + \sum_{i=1}^7 \Upsilon_i \nu^i
\end{equation}
and the superconformal transformations are rewritten as 
\begin{equation}
\begin{aligned}
&\delta_\epsilon A_M = \Psi_M \\
&\delta_\epsilon \Psi_M = -(\mathcal{L}_{v} + R + G_\Phi)A_M \\
&\delta_\epsilon \Upsilon_i = H_i \\
&\delta_\epsilon H_i = -(\mathcal{L}_{v} + R + G_\Phi) \Upsilon_i ,
\end{aligned}
\end{equation}
where 
\begin{equation}
H_i := K_i + 2(\nu_i \tilde{\epsilon})\Phi_0 + \frac{1}{2}F_{MN}\nu_i \Gamma^{MN} \epsilon + \frac{1}{2}\Phi_A \nu_i \Gamma^{\mu A} \nabla_\mu \epsilon .
\end{equation}
With this field redefinition, we see that the supersymmetry transformations can be schematized in the form
\begin{equation}
\delta_\epsilon X = X' \qquad \delta_\epsilon X' = [\phi + \epsilon, X] \qquad \left( \delta_\epsilon \phi = 0 \right)
\end{equation}
where \(\phi:=-\Phi=v^M A_M\), \([\phi,X']:= -G_{\Phi}X'\) denotes a gauge transformation, \([\epsilon, X']:= -(\mathcal{L}_v + R)X'\) denotes a Lorentz transformation. Here \(X=(A_M(x),\Upsilon_i(x))\), \(X'=(\Psi_M(x), H_i(x))\) are the coordinates in the super-loop space interpretation of the supersymmetric model, of opposite statistics. As we pointed out in Section \ref{sec:susy-loc}, we espect every Poincaré-supersymmetric theory to have such super-loop equivariant structure, and this is an example of the fact that in higher dimensional QFT the reparametrization of the fields necessary to make this apparent can be non-trivial. Indeed, the loop space coordinates \(X\) and the corresponding 1-forms \(X'\) mix the bosonic/fermionic field components of the original parametrization! 

\medskip
Combining the two complexes, and giving supersymmetry transformation properties to the ghost sector, the equivariant differential \(Q\) is taken to act as
\begin{equation}
\begin{array}{lll}
QX = X' - [c,X] & Qc = \phi - a_0 - \frac{1}{2}[c,c] & Q\tilde{c}=b  \\
QX' = [\phi + \epsilon, X] - [c,X'] & Q\phi = - [c,\phi+\epsilon] & Qb = [a_0 + \epsilon, \tilde{c}] \\
Q\tilde{a}_0 = \tilde{c}_0 &  Qb_0 = c_0 & \\
Q\tilde{c}_0 = [a_0,\tilde{c}_0] & Qc_0 = [a_0,b_0]. & 
\end{array}
\end{equation}
Moreover, \(Qa_0 = Q\epsilon = 0 \). This differential squares to a constant gauge transformation generated by \(a_0\) and the Lorentz transformation generated by \(\epsilon\),
\begin{equation}
Q^2 = [a_0 + \epsilon, \cdot ].
\end{equation}
Notice that to make explicit the super-loop structure when the combined complex is taken into account, one needs another non-trivial reparametrization of the fields,
\begin{equation}
\tilde{X}':= X' - [c,X] \qquad \tilde{\phi} := \phi - a_0 -\frac{1}{2}[c,c].
\end{equation}
This makes the tranformations look like
\begin{equation}
Q(\mbox{field}) = \mbox{field}' \qquad Q(\mbox{field}') = [a_0+\epsilon, \mbox{field}]
\end{equation}
and the new pairs of coordinate/1-form in the extended super-loop space are \((c,\tilde{\phi})\), \((\tilde{c}, b)\), \((\tilde{a}_0, \tilde{c}_0)\), \((b_0,c_0)\).

The gauge-fixing term considered for the extended quantum action is, schematically
\begin{equation}
S_{g.f.} = \int_{\mathbb{S}^4} Q\left( \tilde{c}\left( \nabla^\mu A_\mu + \frac{\xi_1}{2}b + b_0\right) - c\left( \tilde{a}_0 - \frac{\xi_2}{2}a_0\right) \right)
\end{equation}
where the bilinear product in \(\mathfrak{g}\) is suppressed in the notation, assuming contraction of Lie algebra indices. Upon integration of the auxiliary field, this term produces the usual gauge fixing term for the Lorentz gauge \(\nabla^\mu A_\mu = 0\), and the ghost term of the action. Moreover, the path integral is independent of the parameters \(\xi_1, \xi_2\) (we refer to \cite{pestun-article} for the proof). We finally claim that the localization principle for the gauge-fixed theory remains the same, with the additional condition of vanishing ghosts in the localization locus, and identifying the zero-mode of \(\Phi_0\) with \(a_0\). In fact from the gauge-fixing term, 
\begin{equation}
S_{g.f.}\supset - \int_{\mathbb{S}^4} \left( \phi - a_0 - \frac{1}{2}[c,c]\right) \tilde{a}_0
\end{equation}
and integrating over \(\tilde{a}_0\), we have the condition \(\phi=a_0+ \frac{1}{2}[c,c]\), that in the localization locus where \(c=0\) and \(\phi=-v^M A_M = \Phi_0\), becomes precisely \(\Phi_0 = a_0\).

\subsection{Localization formulas}

We stated that the path integral localizes (apart from instanton corrections) to the zero-modes of the bosonic constant \(a_0\in \mathfrak{g}\), that correspond to the zero-modes of \(\Phi_0\). For the same reason of the scalar field corresponding to the reduced time-direction of the (9,1)-theory, we integrate over immaginary \(a_0 = ia_0^E\), where \(a_0^E\) is real. The application of the localization principle is now straightforward in principle, although very cumbersome in practice. In particular, integrating out the Gaussian fluctuations around the localization locus, the arising one-loop determinant in the partition function results of the form
\begin{equation}
Z_{1-loop} = \left(\frac{\det{K_f}}{\det{K_b}} \right)^{1/2}
\end{equation}
where \(K_f, K_b\) are the kinetic operators acting on the fermionic and bosonic fluctuation modes after the usual expansion of \(Q\mathcal{V}\). This factor requires in general a regularization, and it has been computed for the \(\mathcal{N}=2, \mathcal{N}=2^*\) and \(\mathcal{N}=4\) theory, using an appropriate generalization of the Atiyah-Singer theorem seen in Section \ref{sec:index} applied to transversally elliptic operators. The instanton partition functions have been also simplified for the theories under consideration. We refer to  \cite{nekrasov-instantons,pestun-article,okuda-instantonsN4} for the explicit form of instanton contributions in the cases of \(\mathcal{N}=2,2^*\).

For the maximally supersymmetric \(\mathcal{N}=4\) SYM theory, the results for the 1-loop determinant and the instanton partition function are of the very simple form
\begin{equation}
Z_{1-loop}^{\mathcal{N}=4} = 1 , \qquad Z_{inst}^{\mathcal{N}=4} = 1 ,
\end{equation}
so that the resulting localization formulas for the partition function and the expectation value of the supersymmetric Wilson loop presented before become
\begin{equation}
\label{pestun-MMN4}
\begin{aligned}
Z_{\mathbb{S}^4} &= \frac{1}{\mbox{vol}(G)}\int_{\mathfrak{g}} da\ e^{-\frac{8\pi^2 r^2}{g_{YM}^2}\mathrm{Tr}(a^2)} ,\\
\langle W_R(C) \rangle &= \frac{1}{\dim R} \frac{1}{Z\ \mbox{vol}(G)} \int_{\mathfrak{g}} da\ e^{-\frac{8\pi^2 r^2}{g_{YM}^2}\mathrm{Tr}(a^2)} \mathrm{Tr}_R\left( e^{2\pi r  a} \right) .
\end{aligned}
\end{equation}
This result proved a previous conjecture, based on a perturbative analysis by Erickson-Semenoff-Zarembo \cite{erickson-wilsonloop}. Their calculation for \(\langle W_R(C)\rangle\) with \(G=U(N)\) showed that the Feynman diagrams with internal vertices cancel up to order \(g^4N^2\), and that the sum of all ladder diagrams (planar diagrams with no internal vertices) exponentiate to a matrix model. The result of this exponentiation gives an expectation value that coincides with the strong-coupling prediction of the AdS/CFT correspondence for \(\mathcal{N}=4\) SYM,\footnote{This \lq\lq correspondence\rq\rq\ conjectures a duality between the \(\mathcal{N}=4\) SYM in 4 dimensions and type IIB superstring theory in an \(AdS_5\times \mathbb{S}^5\) background. In particular, when the parameters of the gauge theory are taken to be such that \(N\to\infty\) and \(g_{YM}^2 N \to \infty\) (namely, in the planar and strong ’t Hooft coupling limit), \(\mathcal{N}=4\) SYM is dual to classical type IIB supergravity on \(AdS_5 \times \mathbb{S}^5\) and the computation of the Wilson loop in this limit is mapped to the evaluation of a minimal surface in this space \cite{Maldacena-wilson-loops,Rey_2001}.}
thus they conjectured that the diagrams with vertices have to vanish at all orders. Later this conjecture was supported by Drukker-Gross \cite{drukker-wilsonloop}, and finally proven with the exact localization technique described above. 

We quote now the results for the 1-loop determinants in the \(\mathcal{N}=2,2^*\) theories. For this, it is useful to introduce the notation
\begin{equation}
\label{eq:notation-detR}
\begin{aligned}
\mbox{det}_R f(a) &:= \prod_\rho f(\rho(a)) \\
H(z) &:= e^{-(1+\gamma)z^2} \prod_{n=1}^\infty \left(1-\frac{z^2}{n^2} \right)^n \prod_{n=1}^\infty e^{z^2/n}
\end{aligned}
\end{equation}
with \(\rho\) running over the weights of \(R\) (if \(R=Ad\), the weights are the roots of \(\mathfrak{g}\)), \(\gamma\) being the Euler-Mascheroni constant. Let also be \(m^2:=\frac{1}{4} M_{ij}M^{ij}\), and recall that \(m\) (as well as \(a_0\)) should take immaginary values. From \cite{pestun-article} we have
\begin{align}
Z_{1-loop}^{\mathcal{N}=2^*}[a_0;M] &= \exp{\left(- r^2m^2 \left( (1+\gamma)-\sum_{n=1}^\infty\frac{1}{n} \right) \right)} 
\mbox{det}_{Ad}\left[ \frac{ H(ra_0)}{\left[ H(r(a_0+m)) H(r(a_0-m)) \right]^{-1/2}}\right] , \\
Z_{1-loop}^{\mathcal{N}=2, pure}[a_0] &= \mbox{det}_{Ad}H(ra_0) , \\
Z_{1-loop}^{\mathcal{N}=2, W}[a_0] &= \frac{\det_{Ad} H(ra_0)}{\det_W H(ra_0)} ,
\end{align}
where the first result is for the massive \(\mathcal{N}=2^*\) theory, the second one is derived putting \(m=0\) in the first line, and describes the pure \(\mathcal{N}=2\) SYM, the third  one is for the matter-coupled theory to a massles hypermultiplet in the representation \(W\). Notice that the exponential prefactor in the first line diverges, but is independent of \(a_0\), and thus simplifies in ratios during the computation of expectation values. Also, the third line holds literally if the (\(a_0\)-independent) divergent factors are the same for the vector and the hypermultiplet.

\subsection[The Matrix Model for N=4 SYM]{The Matrix Model for \(\mathcal{N}=4\) SYM}
\label{subsec:MMN4}

As an example, we include here an explicit computation for the Gaussian matrix model \eqref{pestun-MMN4} in the case of \(\mathcal{N}=4\) SYM \cite{gomis-tHooft,marino-MM,drukker-wilsonloop}. We will take in particular the case of the compact matrix group \(G=U(N)\) with the Wilson loop in the fundamental representation \(R=\mathbf{N}\), but first we analyze generically how to simplify such an integration over the Lie algebra \(\mathfrak{g}\). We normalize the invariant volume element \(da\) on \(\mathfrak{g}\) such that 
\begin{equation}
\label{pestun-measure}
\int_\mathfrak{g}da\ e^{-\frac{2}{\xi^2} \mathrm{Tr}(a^2)} = \left( \frac{\xi^2\pi}{2} \right)^{\dim(G)/2}
\end{equation}
for any parameter \(\xi\). In the \(U(N)\) case, \(\mathfrak{g}= \mathfrak{u}(N) = \lbrace\text{Hermitian}\ N\times N\ \text{matrices}\rbrace\), so
this means taking
\begin{equation}
da = 2^{N(N-1)/2} \prod_{i=1}^N da_{ii} \prod_{1\leq j<i\leq N} d\text{Re}(a_{ij}) d\text{Im}(a_{ij}) .
\end{equation}
Setting  \(\xi^2 := g_{YM}^2/(4\pi^2 r^2)\) we have
\begin{equation}
Z_{\mathbb{S}^4} = \frac{1}{\mbox{vol}(G)}\left( \frac{g_{YM}^2}{8\pi r^2} \right)^{\dim(G)/2} .
\end{equation}

To simplify the integration of the Wilson loop expectation value, we notice that the matrix model has a leftover gauge symmetry under constant gauge transformations, since both the measure and the traces are invariant under the adjoint action of \(G\). Thus we can \lq\lq gauge-fix\rq\rq\ the integrand to depend only on the Cartan subalgebra \(\mathfrak{h}\subset \mathfrak{g}\), setting
\begin{equation}
\label{MM-1}
a = Ad_{g*}(X)
\end{equation}
for some \(g\in G/T\) and \(X\in \mathfrak{h}\), with \(T\) being the maximal torus in \(G\) generated by \(\mathfrak{h}\). 
There is more than one \(X\) related to \(a\) by conjugation, but they are related via the action of the \emph{Weyl group} of \(G\), that we call \(\mathcal{W}\). Taking this into account, after the gauge-fixing we can perform the integral over the orbits obtaining a volume factor
\begin{equation}
\frac{\mbox{vol}(G/T)}{|\mathcal{W}|}.
\end{equation}
The gauge-fixing can be done with the usual Faddeev-Popov (FP) procedure, that is inserting the unity decomposition
\begin{equation}
1=\int dg\ \Delta^2(X)\delta(F(a^{(g)})) ,
\end{equation}
where the delta-function fixes the condition \eqref{MM-1}, and the FP determinant is given by 
\begin{equation}
\Delta(X)^2 = \prod_\alpha |\alpha(X)| = \prod_{\alpha >0} \alpha(X) ,
\end{equation}
where \(\alpha:\mathfrak{h}\to\mathbb{C}\) are the roots of \(\mathfrak{g}\), and in the second equality we used that roots come in pairs \((\alpha,-\alpha)\). We can rewrite the expectation value of the circular Wilson loop as
\begin{equation}
\begin{aligned}
\langle W_R(C) \rangle &= \left( \frac{\xi^2\pi}{2} \right)^{\dim(G)/2} \frac{\mbox{vol}(G/T)}{|\mathcal{W}|\dim R} \int_\mathfrak{h} dX\ \Delta(X)^2 e^{-\frac{2}{\xi^2} \mathrm{Tr}(X^2)} \mathrm{Tr}_R\left( e^{2\pi r  X} \right) \\
&= \left( \frac{\xi^2\pi}{2} \right)^{\dim(G)/2} \frac{\mbox{vol}(G/T)}{|\mathcal{W}|\dim R} \sum_{\rho\in\Omega(R)} n(\rho) \int_\mathfrak{h} dX\ \Delta(X)^2 e^{-\frac{2}{\xi^2} \mathrm{Tr}(X^2)}  e^{2\pi r \rho( X)}
\end{aligned}
\end{equation}
where in the second line we used \eqref{pestun-weights}.

We specialize now to the case \(G=U(N)\),  \(\mathfrak{h}=\lbrace X=\mathrm{diag}(\lambda_1,\cdots,\lambda_N) | \lambda_i \in \mathbb{R}\rbrace\), and we take \(r=1\). The adjoint action  of \(U(N)\) is the conjugation \(a\mapsto g a g^\dagger\), so the FP determinant is defined by
\begin{equation}
1 = \int dg\ \Delta(X)^2 \prod_{ij} \delta((g X g^\dagger)_{ij}) ,
\end{equation}
that imposes the off-diagonal terms to vanish in the given gauge. Expressing \(g=e^M\) with \(M\in \mathfrak{u}(N)\),
\begin{equation}
\Delta(X)^2 = \prod_{ij} \det_{kl}\left| \frac{\delta (e^MXe^{-M})_{ij}}{\delta M_{kl}} \right| = \prod_{ij} \det_{kl}\left| \delta_{ki}\delta_{lj}( \lambda_j - \lambda_i) \right| = \prod_{i>j} (\lambda_i-\lambda_j)^2
\end{equation}
is the so called \emph{Vandermonde determinant}, that can be related to the following matrix 
\begin{equation}
\label{MM-vanderm}
\Delta(\lambda) = \det||\lambda_i^{j-1}|| = 
\det\left(\begin{array}{ccccc}
1 & \lambda_1 & \lambda_1^2 & \cdots & \lambda_1^{N-1} \\
1 & \lambda_2 & \lambda_2^2 & \cdots & \lambda_2^{N-1} \\
 & & \vdots & & \\
1 & \lambda_N & \lambda_N^2 & \cdots & \lambda_N^{N-1} \\
\end{array} \right) .
\end{equation}
The partition function can thus be expressed as
\begin{equation}
Z_{\mathbb{S}^4} = \frac{1}{N!} \frac{1}{(2\pi)^N} \int \left(\prod_i d\lambda_i\right) \left(\prod_{i>j} (\lambda_i-\lambda_j)^2\right) e^{-\frac{2}{\xi^2} \sum_i \lambda_i^2} ,
\end{equation}
where \(N!\) is the order of the Weyl group \(\mathcal{W}=S_N\) and \((2\pi)^N\) is the volume of the \(N\)-torus \(U(1)^N\),
while the Wilson loop in the fundamental representation inserts in the path integral a factor
\begin{equation}
\label{MM-5}
\frac{1}{N}\sum_{j=1}^N e^{2\pi  \lambda_j} .
\end{equation}
There are two main approaches to the evaluation of this matrix model and the computation of the Wilson loop expectation value, at least in the limit \(N\to\infty\).

\paragraph*{\(1^{st}\) method: saddle-point}

The first method that we present is based on a suitable saddle-point approximation in the large-\(N\) limit. To see the possibility for this interpretation, we rewrite the partition function as 
\begin{equation}
\label{MM-partition}
\begin{aligned}
Z_{\mathbb{S}^4} &= \frac{1}{N!} \int \prod_i \frac{d\lambda_i}{2\pi}\ e^{-N^2S_{eff}(\lambda)} \\
\text{with}\quad S_{eff}(\lambda) &:= \frac{8\pi^2}{t N}\sum_{i=1}^N \lambda_i^2 - \frac{2}{N^2}\sum_{i>j} \log|\lambda_i-\lambda_j| ,
\end{aligned}
\end{equation}
where \(t:= g_{YM}^2 N\) is the 't Hooft coupling constant. This can be viewed as an effective action of a zero-dimensional QFT describing \(N\) sites (the eigenvalues \(\lambda_i\)), where the first piece is a \lq\lq one-body\rq\rq\ harmonic potential, and the second one is a repulsive \lq\lq two-body\rq\rq\ interaction. Notice that every sum is roughly of order  \(\sim N\), so \(S_{eff}\sim O(1)\) in \(N\). The limit \(N\to\infty\), with \(t\) fixed, can be regarded as a semi-classical approximation (we could compare it to \lq\lq \(1/\sqrt{\hbar}\to \infty\)\rq\rq), and in that limit we can solve the integral using a saddle-point approximation.
The saddle points are those values of \(\lambda_i\) that solve the classical EoM
\begin{equation}
\label{MM-2}
0=\frac{\delta S_{eff}}{\delta \lambda_i} \qquad \Rightarrow \qquad 0 = \frac{16\pi^2}{tN} \lambda_i - \frac{2}{N^2}\sum_{j\neq i} \frac{1}{\lambda_i-\lambda_j} .
\end{equation}
In the large-\(N\) limit we can study this equation in the continuum approximation, assuming the eigenvalues \(\lambda_i\) to take values in a compact interval \(I=[a,b]\), so that the (normalized) \emph{eigenvalue distribution}
\begin{equation}
\rho(\lambda) = \frac{1}{N}\sum_{i=1}^N \delta(\lambda-\lambda_i)
\end{equation}
is regarded as a continuous function of compact support on \(I\). Then every sum can be replaced by an integration over the reals,
\begin{equation}
\frac{1}{N}\sum_{i=1}^N f(\lambda_i) \to \int d\lambda\ f(\lambda)\rho(\lambda) ,
\end{equation}
and \eqref{MM-2} becomes
\begin{equation}
\label{MM-3}
\frac{8\pi^2}{t} \lambda = \mathcal{P}\int \frac{\rho(\lambda')d\lambda'}{\lambda-\lambda'} ,
\end{equation}
where we took the principal value of the integral to avoid the pole at \(\lambda_i = \lambda_j\). This is an integral equation in \(\rho(\lambda)\), whose solution gives the distribution of the eigenvalues at the saddle-point locus of the partition function.

It is useful to introduce an auxiliary function on the complex plane, the \lq\lq resolvent\rq\rq\
\begin{equation}
\omega(z) := \int \frac{\rho(\lambda) d\lambda}{z-\lambda} ,
\end{equation}
that has three important properties for our purposes:
\begin{enumerate}[label=(\roman*)]
\item it is analytic on \(\mathbb{C}\setminus I\), since there are poles for \(z=\lambda\) when \(z\in I\);
\item thanks to the normalization of \(\rho\), asymptotically for \(|z|\to\infty\) it goes as \(\omega(z)\sim \frac{1}{z}\);
\item using the residue theorem and the delta-function representation
\begin{equation}
\frac{\epsilon}{z^2 + \epsilon^2} \xrightarrow{\epsilon\to 0^+} \pi \delta(z) ,
\end{equation}
it relates to the eigenvalue distribution by the discontinuity equation
\begin{equation}
\label{MM-4}
\rho(\lambda) = -\frac{1}{2\pi i}\lim_{\epsilon\to 0^+} \left[ \omega(\lambda+i\epsilon) - \omega(\lambda-i\epsilon) \right] .
\end{equation}
\end{enumerate}
Knowing the resolvent we can easily compute the eigenvalue distribution by this last property, so we rewrite the saddle-point equation in terms of it.
To compute \(\omega\), we can start again from \eqref{MM-3}, multiply by \(1/(\lambda-z)\) and integrate over \(\lambda\) with the usual measure \(\rho(\lambda)d\lambda\):
\begin{equation}
\frac{8\pi^2}{t} \int d\lambda\ \rho(\lambda)\frac{\lambda}{\lambda-z} = \int d\lambda \frac{\rho(\lambda)}{\lambda-z}\ \mathcal{P}\int d\lambda' \frac{\rho(\lambda')}{\lambda-\lambda'} .
\end{equation}
We can add \(\pm z\) at the numerator of the LHS, and use the formula  (\emph{Sokhotski–Plemelj theorem})
\begin{equation}
\label{MM-SP}
\mathcal{P}\int \frac{f(z)}{z}dz = \lim_{\epsilon\to 0^+} \frac{1}{2}\left( \int \frac{f(z)}{z+i\epsilon}dz + \int \frac{f(z)}{z-i\epsilon}dz \right) 
\end{equation}
to break the principal value on the RHS. Inserting the definition of the resolvent and using the residue theorem, this gives
\begin{equation}
\frac{8\pi^2}{t}-\frac{8\pi^2}{t}\lambda \omega(\lambda) = -\frac{1}{2}\omega(\lambda)^2 ,
\end{equation}
that is solved for
\begin{equation}
\omega(\lambda) = \frac{8\pi^2}{t}\left( \lambda \pm \sqrt{\lambda^2 - \frac{t}{4\pi^2}}\right) .
\end{equation}
In order to match the right asymptotic behavior \(\omega(z\to\infty)\sim 1/z\), we have to chose the minus sign. With this choice, we can compute the saddle-point eigenvalue distribution using the discontinuity equation \eqref{MM-4},
\begin{equation}
\begin{aligned}
\rho(\lambda) &= -\frac{1}{2\pi i}\frac{8\pi^2}{t} \lim_{\epsilon\to 0^+} \left[ \omega(\lambda+i\epsilon) - \omega(\lambda-i\epsilon) \right]\\
&= \frac{4\pi}{it} \lim_{\epsilon\to 0^+} \left[ \sqrt{\lambda^2-\frac{t}{4\pi^2}+2i\epsilon\lambda} - \sqrt{\lambda^2 - \frac{t}{4\pi^2} - 2i\epsilon\lambda} \right]  \\
&= \frac{4\pi}{it} \left(2\sqrt{\lambda^2-\frac{t}{4\pi^2}}\right) \\
&= \frac{8\pi}{t}\sqrt{\frac{t}{4\pi^2}- \lambda^2}
\end{aligned}
\end{equation}
where we used that the principal square root has a branch cut on the real line. This function is called \emph{Wigner semi-circle distribution}, it has support on the interval \(I=[-\sqrt{t}/2\pi, \sqrt{t}/2\pi]\), and here it is correctly normalized to 1.

Now that we have the saddle-point locus in terms of the eigenvalue distribution, we can compute the expectation value for the circular Wilson loop in the fundamental representation. Since the exponential factor \eqref{MM-5} is of order \(\sim N^0\), this does not contribute to the saddle-point equation in the \(N\to\infty\) limit. We can thus still use the Wigner distribution at zero-order in \(1/N^2\), and insert in the path integral the trace in the continuum limit,
\begin{equation}
\label{MM-9}
\begin{aligned}
\langle W_{\mathbf{N}}(C)\rangle &= \int d\lambda\ \langle \rho(\lambda)\rangle e^{2\pi \lambda} \\
&= \frac{8\pi}{t} \int_{-\sqrt{t}/2\pi}^{\sqrt{t}/2\pi} d\lambda\ e^{2\pi \lambda} \sqrt{\frac{t}{4\pi^2}-\lambda^2}  +  O\left(1/N^2\right) \\
&= \frac{2}{\sqrt{t}} I_1\left(\sqrt{t}\right) + O\left(1/N^2\right)
\end{aligned}
\end{equation}
where \(I_1(z)\) is a modified Bessel function of the first kind. In the weak and strong coupling limits \(t\gg,\ll 1\) the expectation value gives
\begin{align}
t\ll 1 : \qquad \langle W_{\mathbf{N}}(C)\rangle &\sim 1 + \frac{t^2}{8} + \frac{t^4}{192} + \cdots \\
t\gg 1 : \qquad \langle W_{\mathbf{N}}(C)\rangle &\sim \sqrt{\frac{2}{\pi}} t^{-3/4} e^{\sqrt{t}} ,
\end{align}
so it explodes in the strong coupling limit, with an \emph{essential} singularity.\footnote{Interestingly, the strong coupling limit can be checked independently using holography, where Wilson loops are given by minimal surfaces in AdS \cite{Maldacena-wilson-loops,Rey_2001}.}

\paragraph*{\(2^{nd}\) method: orthogonal polynomials}

Another technique to solve matrix models involve the use of orthogonal polynomials \cite{drukker-wilsonloop}. Our starting point is again the partition function,
\begin{equation}
Z = \frac{1}{N!}\int \prod_{i=1}^N \left( \frac{d\lambda_i}{2\pi} e^{-\frac{8\pi^2 N}{t}\lambda_i^2}\right) \Delta(\lambda)^2 .
\end{equation}
Introducing the \(L^2(\mathbb{R})\) measure
\begin{equation}
d\mu(x) := dx\ e^{-\frac{8\pi^2 N}{t}x^2} ,
\end{equation}
we can write the partition function as
\begin{equation}
\label{MM-6}
Z = \frac{1}{N!} \int \prod_{i=1}^N d\mu(\lambda_i) \Delta(\lambda)^2 . 
\end{equation}
Recalling that the Vandermonde determinant is evaluated from the matrix \eqref{MM-vanderm}, expressed in terms of the polynomials \(\lbrace 1,x,x^2,\cdots\rbrace\), we notice that we can equivalently express it in terms of another set of \emph{monic} polynomials,
\begin{equation}
p_k(x) = x^k + \sum_{j=0}^{k-1} a_j^{(k)}x^j
\end{equation}
since by elementary row operations
\begin{equation}
\Delta(\lambda) = \det||\lambda_i^{j-1}|| = \det||p_{j-1}(\lambda_i)||.
\end{equation}
It is useful to chose the set \(\lbrace p_k \rbrace_{k\geq 0}\) to be \emph{orthogonal} with respect to the matrix model measure,
\begin{equation}
\int d\mu(\lambda) p_n(\lambda)p_m(\lambda) = h_n \delta_{nm} 
\end{equation}
since the knowledge of this set, and in particular of the normalization constants \(h_n\), allows to compute the partition function. Writing the determinant as \[\Delta(\lambda) = \sum_{\sigma\in S_N} (-1)^{\mbox{sign}(\sigma)}\prod_{k=1}^N p_{\sigma(k)-1}(\lambda_k),\] then \eqref{MM-6} reduces to
\begin{equation}
Z = \prod_{k=0}^{N-1} h_k .
\end{equation}

In our case the matrix model is Gaussian, and the corresponding set of orthogonal polynomials are the \emph{Hermite polynomials},
\begin{equation}
H_n(x) := e^{x^2} \left(-\frac{d}{dx}\right)^n e^{-x^2} , \qquad \int_{-\infty}^{+\infty} dx\ e^{-x^2}H_n(x) H_m(x) = \delta_{nm} 2^n n! \sqrt{\pi}
\end{equation}
so, normalizing \(h_n=1\) and inserting the correct prefactors, we consider the set of \emph{orthonormal} polynomials with respect to the measure \(d\mu(\lambda)\)
\begin{equation}
P_n(\lambda) :=\sqrt{\sqrt{\frac{8\pi N}{t}} \frac{1}{2^n n!}}\ H_n\left(\frac{\sqrt{8\pi^2 N}}{t}\lambda\right) .
\end{equation}

The expectation value of any observable of the type \(\mathrm{Tr}(f(X)) = \sum_k f(\lambda_k)\) can be simplified as
\begin{equation}
\begin{aligned}
\langle \mathrm{Tr}f(X)\rangle &= \frac{1}{N!Z}\int \left(\prod_{i=1}^N d\mu(\lambda_i) \right)\Delta(\lambda)^2 \sum_{k=1}^N f(\lambda_k) \\
& \begin{aligned} = \frac{1}{N!} \sum_k \sum_{\sigma\in S_N} \int d\mu(\lambda_1)\ P_{\sigma(1)-1}(\lambda_1)^2 \cdots \int d\mu(\lambda_k)\ P_{\sigma(k)-1}(\lambda_k)^2f(\lambda_k) \cdots & \\
\cdots \int d\mu(\lambda_N)\ P_{\sigma(N)-1}(\lambda_N)^2 &
\end{aligned} \\
&= \sum_{j=0}^{N-1}\int d\mu(\lambda)\ P_j(\lambda)^2 f(\lambda) .
\end{aligned}
\end{equation}
Applying this formula to the expectation value of the circular Wilson loop in the fundamental representation we have
\begin{equation}
\label{MM-7}
\begin{aligned}
\langle W_{\mathbf{N}}(C)\rangle &= \frac{1}{N}\left\langle \mathrm{Tr}\exp(2\pi X)\right\rangle \\
&= \frac{1}{N}\sum_{j=0}^{N-1} \int d\lambda\ P_j(\lambda)^2e^{-\frac{8\pi^2 N}{t}\lambda^2 + 2\pi \lambda}.
\end{aligned}
\end{equation}
A useful formula to simplify this integral is
\begin{equation}
\label{MM-8}
\int_{-\infty}^{+\infty} dx\ H_n(x)^2 e^{-(x-c)^2} = 2^n n! \sqrt{\pi} L_n(-2c^2)
\end{equation}
where \(c\) is a constant and \(L_n(x)\) are the \emph{Laguerre polynomials}, satisfying the properties
\begin{align}
L_n^{(m)}(x) &= \frac{1}{n!} e^x x^m \left(\frac{d}{dx}\right)^n \left(e^{-x}x^{n+m}\right) , \\
L_n(x) &\equiv L_n^{(0)}(x) , \\
L_n^{(m+1)} (x) &= \sum_{j=0}^n L_j^{(m)} (x) ,\\
L_n^{(m)}(x) &= \sum_{k=0}^n \binom{n+m}{n-k} \frac{(-x)^k}{k!}. 
\end{align}
Substituting \eqref{MM-8} in \eqref{MM-7}, and expanding in series we have 
\begin{equation}
\begin{aligned}
\langle W_{\mathbf{N}}(C)\rangle &= \frac{1}{N} e^{c^2} L_{N-1}^{(1)}(-2c^2) \qquad \text{with}\ c:= \sqrt{\frac{t}{8N}} \\
&\begin{aligned} = &\frac{1}{N} \sum_{k=0}^\infty \frac{1}{k!}\left(\frac{t}{8N}\right)^k \sum_{j=0}^{N-1}\frac{N!}{(j+1)!(N-1-j)!}\frac{1}{j!} \left(\frac{t}{4N}\right)^j - \\
& - \frac{1}{N}\sum_{j=1}^{N-1} \frac{1}{j!(j+1)!}\left(\frac{t}{4}\right)^j \frac{j(j+1)}{2} + \frac{1}{N}\sum_{j=0}^{N-1} \frac{1}{j!(j+1)!}\left(\frac{t}{4}\right)^j\frac{1}{2} + O(1/N^2) \end{aligned} \\
&= \sum_{j=0}^{N-1} \frac{1}{j!(j+1)!}\left(\frac{t}{4}\right)^j + O\left(1/N^2\right)
\end{aligned}
\end{equation}
where we expanded the first terms with respect to powers of \(1/N\), and already noticed that for \(N\gg 1\) the odd-power terms cancel. We can thus examine the large-\(N\) limit, and inserting the definition of the modified Bessel function \(I_n(2x) = \sum_{k=0}^{\infty} \frac{x^{n+2k}}{k!(n+k)!}\) the expectation value gives
\begin{equation}
\langle W_{\mathbf{N}}(C)\rangle = \frac{2}{\sqrt{t}} I_1\left(\sqrt{t}\right) + O(1/N^2) ,
\end{equation}
matching the result obtained with the saddle point technique in \eqref{MM-9}. In general, the expansion is in powers of \(1/N^2\) rather than \(1/N\), as expected from the analogy \lq\lq \(N^2 \leftrightarrow 1/\hbar\)\rq\rq\ that we noticed in  \eqref{MM-partition}. Solutions to the matrix model for higher representations have also been found, see \cite{gomis-tHooft,okuda-trancanelli-spectral-curves,zarembo-ads/cft-loc}.

\section[Localization of N=2 Chern-Simons theory on the 3-sphere]{Localization of \(\mathcal{N}=2\) Chern-Simons theory on the 3-sphere}

In this section we review another example of supersymmetric localization applied to the computation of Wilson loop expectation values, in an \(\mathcal{N}=2\) matter-coupled Euclidean Super Chern-Simons (SCS) theory on the 3-sphere \(\mathbb{S}^3\). We follow the derivation of Kapustin-Willet-Yaakov \cite{itamar-wilson_loop_3dCS}, and  Mari\~{n}o \cite{Marino-locCS}, inspired in part by the work discussed in the previous section. We consider a generic compact Lie group \(G\) as the gauge group, with Lie algebra \(\mathfrak{g}\).

\subsection[Matter-coupled N=2 Euclidean SCS theory on the 3-sphere]{Matter-coupled \(\mathcal{N}=2\) Euclidean SCS theory on \(\mathbb{S}^3\)}
The case of \(\mathcal{N}=2\) Euclidean supersymmetry on \(\mathbb{S}^3\) was discussed as an example in Sections \ref{subsec:3dN2} and \ref{subsec:3dN2-curved} for the gauge sector. We report the action for the SCS theory
\begin{equation}
S_{CS} = \frac{k}{4\pi} \int_{\mathbb{S}^3} d^3x\sqrt{g} \ \mathrm{Tr}\left\lbrace \frac{ \varepsilon^{\mu\nu\rho}}{\sqrt{g}} \left( A_\mu \partial_\nu A_\rho + \frac{2i}{3} A_\mu A_\nu A_\rho \right) - \tilde{\lambda} \lambda + 2\sigma D \right\rbrace
\end{equation} 
and the supersymmetry variations, already considered in curved space
\begin{equation}
\label{itamar-susy-gauge}
\begin{aligned}
&\delta A_\mu = \frac{i}{2} (\tilde{\epsilon} \gamma_\mu \lambda - \tilde{\lambda} \gamma_\mu \epsilon) \\
&\delta \sigma = \frac{1}{2}( \tilde{\epsilon}\lambda - \tilde{\lambda}\epsilon ) \\
&\delta \lambda =  \left( - \frac{1}{2} F_{\mu\nu} \gamma^{\mu\nu} - D + i(D_\mu \sigma)\gamma^\mu + \frac{2i}{3} \sigma \gamma^\mu D_\mu \right) \epsilon \\
&\delta \tilde{\lambda} = \left( - \frac{1}{2} F_{\mu\nu} \gamma^{\mu\nu} + D - i(D_\mu \sigma)\gamma^\mu - \frac{2i}{3} \sigma \gamma^\mu D_\mu \right) \tilde{\epsilon} \\
&\delta D = -\frac{i}{2} \left( \tilde{\epsilon} \gamma^\mu D_\mu \lambda - (D_\mu \tilde{\lambda}) \gamma^\mu \epsilon \right) + \frac{i}{2}\left( [\tilde{\epsilon}\lambda, \sigma] - [\tilde{\lambda}\epsilon, \sigma] \right) - \frac{i}{6}\left( \tilde{\lambda} \gamma^\mu D_\mu \epsilon + (D_\mu \tilde{\epsilon}) \gamma^\mu \lambda \right) .
\end{aligned}
\end{equation}
where the \(D_\mu\) are gauge-covariant derivatives with respect to the metric and spin connection induced by the round metric \eqref{eq:round-metric-3sph}, that in stereographic coordinates \(x^{\mu=1,2,3}\) is given by
\begin{equation}
g_{\mu\nu} = e^{2\Omega(x)}\delta_{\mu\nu} \qquad e^{2\Omega(x)} = \left( 1+ \frac{x^2}{4r^2} \right)^{-2}
\end{equation}
with \(r\) being the radius of the embedding \(\mathbb{S}^3 \hookrightarrow \mathbb{R}^4\). We remark again that this supersymmetric action is actually superconformal, thus can preserve supersymmetry on this conformally flat background, even with positive scalar curvature. The new background preserves all the original \(\mathcal{N}=2\) algebra, generated by conformal Killing spinors \(\epsilon, \tilde{\epsilon}\), taken to satisfy
\begin{equation}
\nabla_\mu \epsilon = \frac{i}{2r}\gamma_\mu \epsilon , \qquad \nabla_\mu \tilde{\epsilon} = \frac{i}{2r}\gamma_\mu , \tilde{\epsilon}
\end{equation}
where every equation has two possible solutions.

We consider also coupling the theory to matter fields, adding them in chiral multiplets in a representation \(R\) of the gauge group, to preserve supersymmetry. The 3-dimensional \(\mathcal{N}=2\) chiral multiplet (or hypermultiplet) is, as for the gauge multiplet, given by dimensional reduction of the \(\mathcal{N}=1\) chiral multiplet in 4 dimensions: a complex scalar \(\phi\), a 2-component Dirac spinor\footnote{Recall that \(Spin(3)=SU(2)\) has no Majorana spinors. We consider the reduced 4-dimensional Majorana  spinor \(\psi\) as a 3-dimensional Dirac (complex) spinor, since they have the same number of real components.} \(\psi\) and an auxiliary complex scalar \(F\).
Every field comes with its complex conjugate from the corresponding anti-chiral multiplet. The supersymmetric action for the matter multiplet coupled to the gauge multiplet is given by
\begin{equation}
\label{itamar-matter-action}
S_m = \int_{\mathbb{S}^3} d^3x\sqrt{g}\ \left( 
D_\mu \tilde{\phi} D^\mu \phi + \frac{3}{4r^2}\tilde{\phi}\phi + i\tilde{\psi}\slashed{D}\psi + \tilde{F}F + \tilde{\phi}\sigma^2\phi + i\tilde{\phi}D\phi + i\tilde{\psi}\sigma\psi + i \tilde{\phi}\tilde{\lambda}\psi - i\tilde{\psi}\lambda\phi
\right)
\end{equation}
where the \(\mathfrak{g}\)-valued fields in the gauge multiplets act on the chiral multiplet in the representation \(R\). This is the \lq\lq covariantization\rq\rq\ of the flat space action for the matter multiplet (see for example \cite{gaiotto-CStheory}), with the addition of the conformal coupling of the scalar field to the curvature, \(\frac{3}{4r^2}\tilde{\phi}\phi\). The supersymmetry transformations for the chiral multiplet, with respect to the conformal Killing spinors \(\epsilon,\tilde{\epsilon}\), are
\begin{equation}
\begin{aligned}
\delta \phi &= \tilde{\epsilon}\psi \qquad \delta \tilde{\phi} = \tilde{\psi}\epsilon \\
\delta \psi &= (-i\gamma^\mu D_\mu \phi-i\sigma\phi)\epsilon - \frac{i}{3}\gamma^\mu (\nabla_\mu \epsilon)\phi + \tilde{\epsilon} F \\
\delta \tilde{\psi} &= \tilde{\epsilon}(i\gamma^\mu D_\mu \tilde{\phi} +i\sigma\tilde{\phi}) + \frac{i}{3} (\nabla_\mu \tilde{\epsilon}) \gamma^\mu \tilde{\phi}   + \epsilon \tilde{F} \\
\delta F &= \epsilon ( -i \gamma^\mu D_\mu \psi + i\lambda \phi +i\sigma\psi) \\
\delta \tilde{F} &= ( i  D_\mu \tilde{\psi} \gamma^\mu - i\tilde{\lambda} \tilde{\phi} +i\sigma\tilde{\psi} )\tilde{\epsilon} .
\end{aligned}
\end{equation}

The above variations generates a superconformal algebra that closes off-shell:
\begin{equation}
[ \delta_\epsilon, \delta_{\tilde{\epsilon}} ] = -i ( \mathcal{L}_{v} + G_\Lambda + R_\alpha + \Omega_f )
\end{equation}
where \(\mathcal{L}_v\) is the Lie derivative (translation) along the Killing vector field \(v = (\tilde{\epsilon}\gamma^\mu \epsilon)\partial_\mu\), acting on one forms as \(\mathcal{L}_v(A)_\mu = v^\nu \partial_\nu A_\mu + A_\nu \partial_\mu v^\nu\), and on spinors as \(\mathcal{L}_v \psi = \nabla_\nu \psi - \frac{1}{4} (\nabla_\mu v_\nu)\gamma^{\mu\nu} \psi\). \(G_\Lambda\) is a gauge transformation with respect to the parameter \(\Lambda:= A(v) + \sigma (\tilde{\epsilon}\epsilon)\). \(R_\alpha\) is a \(U(1)^\mathcal{R}\) R-symmetry transformation, and \(\Omega_f\) is a dilatation \cite{Marino-locCS}. The matter coupled action \(S_{CS} + S_m\) is known to be superconformal at quantum level, but one could also add a superpotential for the matter multiplet. This choice is restricted by the condition of unbroken superconformal symmetry both at classical and at quantum level, since the localization principle works only if the supersymmetry algebra closes off-shell. It turns out that the localization locus is at trivial configurations of the matter sector, thus the precise choice of superpotential does not influence the computation.

\subsection{The supersymmetric Wilson loop}

The Wilson loop under consideration, in the representation \(R\) of the gauge group, is defined as \cite{gaiotto-CStheory}
\begin{equation}
\label{itamar-wilson}
W_R(C) = \frac{1}{\dim{R}} \mathrm{Tr}_R \left( \mathcal{P}\exp{\oint_{C}dt\ (iA_\mu \dot{C}^\mu + \sigma)} \right)
\end{equation}
with \(C:\mathbb{S}^1\to \mathbb{S}^3\) a closed curve of tangent vector \(\dot{C}\), normalized such that \(|\dot{C}|=1\). In order to localize its expectation value, we have to consider those curves such that this operator preserves some supersymmetry on the 3-sphere. Its variation under \eqref{itamar-susy-gauge} is proportional to 
\begin{equation}
\delta W_R(C) \propto - \tilde{\epsilon}(\gamma_\mu \dot{C}^\mu + 1 )\lambda + \tilde{\lambda} (\gamma_\mu \dot{C}^\mu - 1)  \epsilon . 
\end{equation}
Imposing the vanishing of this expression for all gauginos, we get the following conditions on the conformal Killing spinors,
\begin{equation}
\label{itamar-1}
\tilde{\epsilon}(\gamma_\mu \dot{C}^\mu + 1 ) = 0  ,\qquad (\gamma_\mu \dot{C}^\mu - 1)  \epsilon = 0 .
\end{equation}
We have two more conditions on the conformal Killing spinors, thus the maximum number of solutions is reduced by half. The Wilson loop can at most be invariant under two of the four possible supersymmetry variations, and for that it is called \emph{1/2-BPS}.

We can find explicitly one family of supersymmetric Wilson loops and one supersymmetry variation with respect to which we are going to perform the localization procedure. In order to solve the conformal Killing equations and the conditions \eqref{itamar-1}, we chose explicitly an orthonormal basis and a corresponding vielbein on \(\mathbb{S}^3\). Since as a manifold \(\mathbb{S}^3 \cong SU(2)\), we can use Lie theory to describe the geometry on the 3-sphere. In particular, the vielbein can be chosen proportional to the Maureer-Cartan form \(\Theta\in T^* (SU(2))\otimes \mathfrak{su}(2)\),\footnote{Again, we use Roman letters as \lq\lq flat\rq\rq\ indices, and Greek letters as \lq\lq curved\rq\rq\ indices.}
\begin{equation}
e^i_{\mu} := \frac{r}{2} e^i(\Theta(\partial_\mu))
\end{equation}
where \(\lbrace e^i\rbrace\) is a basis of \(\mathfrak{su}(2)^*\), dual to a basis \(\lbrace T_i\rbrace\) of \(\mathfrak{su}(2)\).\footnote{Say, the standard basis given by the Pauli matrices, \(T_i := \sigma_i/\sqrt{2}\).} One can check that this vielbein is consistent with the round metric, giving \(g_{\mu\nu} = e_\mu^i e_\nu^j \delta_{ij}\) (see \cite{Marino-locCS}). Using this orthonormal basis, the spin connection components are
\begin{equation}
(\omega_\mu)_{ij} = \frac{1}{r}e^k_\mu \varepsilon_{ijk}
\end{equation}
where \(\varepsilon_{ijk}\) is the Levi-Civita symbol. In this basis the conformal Killing spinor equation for \(\epsilon\) looks particularly simple,
\begin{equation}
\left( \partial_\mu + \frac{1}{8}(\omega_\mu)_{ij}[\gamma^i,\gamma^j] \right)\epsilon = \frac{i}{2r}\gamma_\mu \epsilon \quad \Leftrightarrow \quad \partial_\mu \epsilon = 0
\end{equation}
where we used the commutator \([\gamma^i,\gamma^j] = 2i\left.\varepsilon^{ij}\right._k \gamma^k\). We see that the components of \(\epsilon\) are constants. The corresponding condition for the supersymmetry of the Wilson loop then requires \(\gamma_\mu \dot{C}^\mu\) to be constant too, as the components of the vector field \(\dot{C}^i\) in the orthonormal frame. This means that the Wilson loop has to describe grat circles on \(\mathbb{S}^3\). Following \cite{itamar-wilson_loop_3dCS}, we take \(\dot{C}\) parallel to one of the \(e^i\), say \(e^3\), and the conformal Killing spinor to satisfy
\begin{equation}
(\gamma_3 - 1)\epsilon = 0 .
\end{equation}
We will consider the one dimensional subalgebra generated by such restricted spinor, and put \(\tilde{\epsilon}=0\).

\subsection{Localization: gauge sector}

We focus now on the localization of the Chern-Simons path integral, without coupling to the matter multiplet. Ignoring the issue of gauge fixing, we would add to the action the localizing term \(t S_{loc} = t \delta \mathcal{V}\), with \(t\in \mathbb{R}^+\) a parameter, \(\delta\) being the supersymmetry transformation generated by the conformal Killing spinor \(\epsilon\) described in the last section, and \(\mathcal{V}\) some fermionic functional whose bosonic part is positive semi-definite. At the end of Section \ref{sec:susy-curved}, we pointed out that the Super Yang-Mills Lagrangian is an example of \(\delta\)-exact term, so we put
\begin{equation}
\begin{aligned}
S_{loc} := 2 S_{YM} = \int_{\mathbb{S}^3} d^3x \sqrt{g}\ \mathrm{Tr} \left( i\tilde{\lambda} \gamma^\mu D_\mu \lambda + \frac{1}{2} F_{\mu\nu}F^{\mu\nu} + D_\mu \sigma D^\mu \sigma + i \tilde{\lambda}[\sigma, \lambda] + \right.\\
+\left.  \left( D + \frac{\sigma}{r} \right)^2 - \frac{1}{2r}\tilde{\lambda}\lambda  \right)
\end{aligned}
\end{equation}
whose bosonic part is indeed positive semi-definite. This localizing term can be derived also from the functional \cite{itamar-wilson_loop_3dCS}
\begin{equation}
\mathcal{V} = \int_{\mathbb{S}^3} d^3x \sqrt{g}\ \mathrm{Tr}\left( (\delta \tilde{\lambda})\lambda \right)
\end{equation}
analogously to the one used in the previous chapter for the gauge multiplet. \(S_{YM}\) being supersymmetric means that \(\delta^2 = 0\) on \(\mathcal{V}\), making the localization principle applicable. As usual, the limit \(t\to\infty\) localizes the path integral on the configurations that make this term vanish: the terms involving bosonic fields are separately non-negative, while the gaugino and its conjugate have to vanish identically. Summarizing, the localization locus is given by 
\begin{equation}
\label{itamar-locus-gauge}
\left\lbrace \begin{aligned}
&\lambda = \tilde{\lambda} = 0 \\
&F = 0 \Rightarrow A = 0\ \text{(up  to a gauge transformation)} \\
&\sigma = a\in\mathfrak{g}\ (\mathrm{constant}) \\
&D = - \frac{1}{r} a
\end{aligned}\right.
\end{equation}

Keeping into account the gauge-fixing procedure (as we should), the ghost \(c\), anti-ghost \(\tilde{c}\) and Lagrange multiplier \(b\) are added to the theory, taking value in the Lie algebra \(\mathfrak{g}\), together with the BRST differential \(\delta_B\) that acts as
\begin{equation}
\delta_B X = - [c,X] \qquad \delta_B c = -\frac{1}{2}[c,c] \qquad \delta_B \tilde{c} = b \qquad \delta_B b = 0
\end{equation}
where \(X\) is any field in the original theory, acted by a gauge transformation parametrized by \(c\). The BRST differential is nilpotent, \(\delta_B^2 = 0\).  The total differential 
\begin{equation}
Q := \delta_\epsilon + \delta_B
\end{equation}
acts now as the equivariant differential for the \((U(1)\rtimes G)\)-equivariant cohomology in the BRST-augmented field space. The original CS action is automatically \(Q\)-closed since it is gauge invariant, so we can combine the localization principle with the gauge-fixing procedure adding to the Lagrangian the term
\begin{equation}
Q \left( ( \delta \tilde{\lambda} ) \lambda   - \tilde{c} \left( \frac{\xi}{2} b - \nabla^\mu A_\mu \right) \right)
\end{equation}
where we suppressed the Lie algebra bilinear \(\mathrm{Tr}\) for notational convenience. Since the first term is gauge invariant, \(\delta_B \left( ( \delta \tilde{\lambda} ) \lambda \right)=0\), this gives the same localization term as before. If \(\delta[\mbox{ghosts}]=0\) on the gauge-fixing subcomplex, the second term gives
\begin{equation}
\label{itamar-2}
Q \left( \tilde{c} \left( \frac{\xi}{2} b - \nabla^\mu A_\mu \right) \right) = \frac{\xi}{2}b^2 - b \nabla^\mu A_\mu + \tilde{c} \nabla^\mu D_\mu  c + \tilde{c}\nabla^\mu \delta A_\mu .
\end{equation}
The first two terms give, upon path integration over \(b\), the usual gauge-fixing Lagrangian in the \(R_{\xi}\)-gauge; the third term is the ghost Lagrangian. The fourth term \(\propto \left( \tilde{c}\nabla^\mu \tilde{\lambda}\gamma_\mu\right)\) does not change the partition function: if we see this term as a perturbation of the gauge-fixed action, all diagrams with insertion of \(\left( \tilde{c}\nabla^\mu \tilde{\lambda}\gamma_\mu\right)\) will vanish, since \(\tilde{c}\) is coupled only to \(c\) via the propagator but there are no vertices containing \(c\). In other words, the fermionic determinant arising from the path integration over ghosts is not changed by this term. The modified localizing term \eqref{itamar-2} is \(Q\)-closed: the old localizing term because of gauge invariance and supersymmetry, while the gauge-fixing and ghost terms follows by \(Q^2 A_\mu=0\) that is easy to check. After path integration over the auxiliary \(b\) the limit \(t\to\infty\) finally localizes the theory to the same locus \eqref{itamar-locus-gauge}, with ghosts put to zero.

Evaluating the classical action at the saddle point configuration, we get 
\begin{equation}
S_{CS}[a] = \frac{k}{4\pi}\int_{\mathbb{S}^3} d^3x\sqrt{g}\ \mathrm{Tr}\left( -\frac{2}{r} a^2\right) = -k\pi r^2 \mathrm{Tr}(a^2)
\end{equation}
where we used \(\mbox{vol}(\mathbb{S}^3) = 2\pi^2 r^3\). The supersymmetric Wilson loop observable \eqref{itamar-wilson} localizes to
\begin{equation}
W_R(C) = \frac{1}{\dim{R}}\mathrm{Tr}_R \left( e^{2\pi r a}\right)
\end{equation}
since the curve \(C\) is a great circle of radius \(r\). Integrating as usual the rescaled fluctuations above the localization configuration, and taking the limit \(t\to\infty\) as in \eqref{eq:susy-loc-limit}, the partition function and the Wilson loop expectation value are thus given by a finite-dimensional integral over \(\mathfrak{g}\) with Gaussian measure, the \lq\lq matrix model\rq\rq\,
\begin{equation}
\begin{aligned}
Z &= \int_\mathfrak{g} da\ e^{-k\pi r^2 \mathrm{Tr}(a^2)} Z_{1-loop}^g[a] \\
\langle W_R(C)\rangle &= \frac{1}{Z\dim{R}}\int_\mathfrak{g} da\ e^{-k\pi r^2 \mathrm{Tr}(a^2)} Z_{1-loop}^g[a]\mathrm{Tr}_R \left( e^{2\pi r a}\right) .
\end{aligned}
\end{equation}
As we pointed out in the last section, the integration over the Lie algebra \(\mathfrak{g}\) can be reduced over its Cartan subalgebra \(\mathfrak{h}\), exploiting the gauge invariance of the matrix model under the adjoint action of \(\mathfrak{g}\) itself. This for example means, in the case of a matrix gauge group, that we integrate over the diagonalized matrices \lq\lq fixing the gauge\rq\rq\ of the matrix model. The corresponding Faddeev-Popov determinant is also called \emph{Vandermonde determinant}, \begin{equation}
\label{itamar-vanderm}
\prod_\alpha \left( \rho_\alpha(a)\right)
\end{equation}
where the product runs over the roots of \(\mathfrak{g}\). There is left an overcounting given by the possible permutations of the roots, the action of the \emph{Weyl group} \(\mathcal{W}\) of \(\mathfrak{g}\), cured dividing by its order \(|\mathcal{W}|\). The path integrals are thus rewritten as
\begin{equation}
\begin{aligned}
Z &= \frac{1}{|\mathcal{W}|} \int_\mathfrak{\mathfrak{h}} da\ \prod_\alpha \left( \rho_\alpha(a)\right) e^{-k\pi r^2 \mathrm{Tr}(a^2)} Z_{1-loop}^g[a] \\
\langle W_R(C)\rangle &= \frac{1}{Z|\mathcal{W}|\dim{R}}\int_\mathfrak{h} da\ \prod_\alpha \left( \rho_\alpha(a)\right) e^{-k\pi r^2 \mathrm{Tr}(a^2)} Z_{1-loop}^g[a]\mathrm{Tr}_R \left( e^{2\pi r a}\right) .
\end{aligned}
\end{equation}

Here we summarize the computation of the 1-loop determinant from \cite{itamar-wilson_loop_3dCS}. For convenience, we put \(r=1\) and \(\xi=1\). Inserting the contribution of ghosts, the Lagrangian for the localizing term is given by (suppressing the \(\mathrm{Tr}\))
\begin{equation}
\mathcal{L}_{loc} = \frac{1}{2} F_{\mu\nu}F^{\mu\nu} + D_\mu \sigma D^\mu \sigma + \left( D + \sigma \right)^2 + i\tilde{\lambda} \slashed{D} \lambda +  i [\tilde{\lambda},\sigma] \lambda - \frac{1}{2}\tilde{\lambda}\lambda + \partial_\mu \tilde{c} D^\mu c - \frac{1}{2}b^2 + b\nabla^\mu A_\mu .
\end{equation}
Considering the limit \(t\to\infty\), we rescale as usual the fields around the configuration \eqref{itamar-locus-gauge}:
\begin{equation}
\sigma = a + \sigma'/\sqrt{t} , \qquad D = -a + D'/\sqrt{t} , \qquad X= X'/\sqrt{t} ,
\end{equation}
where \(X\) are all the fields without zero modes, and then rename \(\sigma'\to\sigma\), \(D'\to D\), \(X'\to X\). In the limit, only quadratic terms in the fluctuations survive,
\begin{equation}
\mathcal{L}_{loc} \sim \frac{1}{2}\partial_{[\mu}A_{\nu]}\partial^{[\mu}A^{\nu]} - [A_\mu,a]^2 + (\partial\sigma)^2 + (D+\sigma)^2 + i\tilde{\lambda}\slashed{\nabla}\lambda + i [\tilde{\lambda},a]\lambda - \frac{1}{2}\tilde{\lambda}\lambda + |\partial \tilde{c}|^2 - \frac{1}{2}b^2 + b\nabla^\mu A_\mu  .
\end{equation}
The resulting theory is free, and we can integrate it giving the corresponding 1-loop determinant. We will neglect all overall normalization constant from the Gaussian integrations. The integral over the auxiliary field \(b\) gives the gauge fixing term \(-\frac{1}{2}(\nabla^\mu A_\mu)^2\). The contribution from \(D\) is purely Gaussian and can be integrated out removing the corresponding term. The integration over \(\sigma\) gives a determinant \(\det{(\nabla^2)}^{-1/2}\), and the (Grassman) integral over the ghosts gives \(\det{(\nabla^2)}\). It is useful to separate the gauge field as (Helmolz-Hodge decomposition) \[
A_\mu = B_\mu + \partial_\mu \phi \]
with \(\phi\) scalar and \(B_\mu\) divergenceless, \(\nabla^\mu B_\mu = 0\). With this decomposition, the Lorentz gauge condition becomes \(\nabla^2\phi=0\), and we can integrate \(\phi\) giving a determinant \(\det{(\nabla^2)}^{-1/2}\), that cancels the above two other contributions. We are left with
\begin{equation}
-B_\mu \Delta B^\mu - [a,B_\mu]^2 + i\tilde{\lambda}\slashed{\nabla}\lambda + i [\tilde{\lambda}, a]\lambda - \frac{1}{2}\tilde{\lambda}\lambda
\end{equation}
where \(\Delta\) is the vector Laplacian. Now we use the fact that the path integral can be reduced over the Cartan subalgebra of \(\mathfrak{g}\), considering \(a\in\mathfrak{h}\), and 
\begin{equation}
B_\mu = B_\mu^{(\mathfrak{h})} + B_\mu^\alpha e_\alpha
\end{equation}
where \(B_\mu^{(\mathfrak{h})}\) is the component of \(B_\mu\) along \(\mathfrak{h}\), and similarly for the gaugino. This component does not enter in the Lie brackets with \(a\), so its contribution to the path integral is independent of \(a\), and we drop it. The remaining interesting terms are
\begin{equation}
\sum_\alpha \left( B_\mu^{-\alpha} (-\Delta + \rho_\alpha(a)^2) B_\mu^{\alpha}  + \tilde{\lambda}^{-\alpha}\left( i\slashed{\nabla} +i \rho_\alpha(a) - \frac{1}{2}\right) \lambda^\alpha \right)
\end{equation}
where the \(a\)-dependent kinetic terms are clearly identified, and the component fields appearing are real or complex valued scalars and spinors. The Gaussian integration over these fields lead to the determinant factors
\begin{equation}
Z^g_{1-loop}[a] = \prod_\alpha \frac{\det\left( i\slashed{\nabla} +i \rho_\alpha(a) - \frac{1}{2}\right)}{\det{\left(-\Delta+\rho_\alpha(a)^2\right)}^{1/2}} .
\end{equation}

Now, using the fact that the eigenvalues of the Laplacian on divergenceless vectors are \((l+1)^2\) with degeneracy \(2l(l+2)\), and the eigenvalues of \(i\slashed{\nabla}\) are \(\pm \left(l+\frac{1}{2}\right)\) with degeneracy \(l(l+1)\), where \(l\in\mathbb{Z}^+\), the corresponding determinants can be written as infinite products
\begin{equation}
\prod_\alpha \prod_{l=1}^\infty \frac{(l+i\rho_\alpha(a))^{l(l+1)} (-l-1+i\rho_\alpha(a))^{l(l+1)}}{((l+1)^2+\rho_\alpha(a)^2)^{l(l+2)}} = \prod_\alpha \prod_{l=1}^\infty \frac{(l+i\rho_\alpha(a))^{(l+1)}}{(l-i\rho_\alpha(a))^{(l-1)}}
\end{equation}
where the equality follows after some simplifications. Since roots come in pairs \((\rho_\alpha, -\rho_\alpha)\), taking the square of this one gets
\begin{equation}
\left(Z^g_{1-loop}[a]\right)^2 = \prod_\alpha \prod_{l=1}^\infty \frac{(l^2 + \rho_\alpha(a)^2)^{(l+1)}}{(l^2 + \rho_\alpha(a)^2)^{(l-1)}} = \prod_\alpha \prod_{l=1}^\infty \left( l^2 + \rho_\alpha(a)^2 \right)^2 .
\end{equation}
Collecting a factor \(l^4\) the product splits in the factorization formula for the hyperbolic sine,
\begin{equation}
\frac{\sinh(\pi z)}{\pi z} = \prod_{l=1}^\infty \left(1+\frac{z^2}{l^2}\right)
\end{equation}
and an \(a\)-independent divergent part that can be regularized with the zeta-function method,
\begin{equation}
\prod_{l=1}^\infty l^4 = e^{4\sum_{l=1}^\infty \log(l)} = e^{-4\zeta '(0)}= e^{2\log(2\pi)} .
\end{equation}
Up to an overall normalization constant, the \(a\)-dependence of the 1-loop determinant is finally given by
\begin{equation}
Z_{1-loop}^g[a] = \prod_\alpha \left( \frac{2\sinh(\pi \rho_\alpha(a))}{\pi \rho_\alpha(a)} \right)
\end{equation}
where we see cancellation between the denominator and the Vandermonde determinant \eqref{itamar-vanderm}.

Collecting the above results, the localization formulas for the partition function and the expectation value of the supersymmetric Wilson loop in the pure CS theory are
\begin{equation}
\label{itamar-localized-gauge}
\begin{aligned}
Z &\sim \frac{1}{|\mathcal{W}|} \int_\mathfrak{\mathfrak{h}} da\  e^{-k\pi \mathrm{Tr}(a^2)} \prod_\alpha \left( 2\sinh(\pi \rho_\alpha(a))\right) \\
\langle W_R(C)\rangle &= \frac{1}{Z|\mathcal{W}|\dim{R}}\int_\mathfrak{h} da\ e^{-k\pi  \mathrm{Tr}(a^2)} \mathrm{Tr}_R \left( e^{2\pi a}\right)\prod_\alpha \left( 2\sinh(\pi \rho_\alpha(a))\right) \\
&= \frac{1}{\dim{R}} \frac{\int_\mathfrak{h} da\ e^{-k\pi  \mathrm{Tr}(a^2)} \mathrm{Tr}_R \left( e^{2\pi a}\right)\prod_\alpha \left( 2\sinh(\pi \rho_\alpha(a))\right)}{\int_\mathfrak{\mathfrak{h}} da\  e^{-k\pi \mathrm{Tr}(a^2)} \prod_\alpha \left( 2\sinh(\pi \rho_\alpha(a))\right)} .
\end{aligned}
\end{equation}

\bigskip
These general localization formulas can be tested comparing their results for specific choices of \(G\) to perturbative calculations, for example. In the case of \(U(N)\) gauge group, the integral over the Cartan subalgebra is an integral over diagonal matrices \(a=\mathrm{diag}(\lambda_1,\cdots,\lambda_N)\), and the roots are given by \(\rho_{ij}(a) = \lambda_i-\lambda_j\) for \(i\neq j\). The Weyl group is \(S_N\), thus \(|\mathcal{W}|=N!\). If we take the Wilson loop in the fundamental representation, from \eqref{itamar-localized-gauge} we get
\begin{equation}
\begin{aligned}
Z &\sim \frac{1}{N!} \int \left( \prod_i d\lambda_i\ e^{-k\pi \lambda_i^2}\right) \prod_{i\neq j} 2\sinh(\pi(\lambda_i - \lambda_j)) , \\
\langle W_\mathbf{N}(C)\rangle &= \frac{1}{Z N! N}\int \left( \prod_i d\lambda_i\ e^{-k\pi \lambda_i^2}\right) \left(e^{2\pi\lambda_1}+\cdots+e^{2\pi\lambda_N} \right) \prod_{i\neq j}  2\sinh(\pi(\lambda_i - \lambda_j))  ,
\end{aligned}
\end{equation}
that are sums of Gaussian integrals, and can be computed exactly. The result for the Wilson loop expectation value is
\begin{equation}
\langle W_\mathbf{N}(C)\rangle = \frac{1}{N}e^{-Ni\pi/k}\frac{\sin\left( \frac{\pi N}{k} \right)}{\sin\left( \frac{\pi }{k} \right)} ,
\end{equation}
which is known as the exact result \cite{witten-knot}, up to the overall phase factor \(e^{-Ni\pi/k}\). This kind of phase factors arise in perturbative calculations in the so-called \emph{framing} of the Wilson loop. A perturbative calculation of the Wilson loop involves computations of correlators of the type \(\langle A_{\mu_1} (x_1) A_{\mu_2} (x_2) \cdots \rangle\), where \(x_1,x_2,\cdots\) are coordinates of points on the image of the curve \(C\). This contribution diverges when \(x_1=x_2\), so it is necessary to choose some regularization scheme to perform the computations. For example, considering the 2-point function \(\langle A_{\mu_1} (x) A_{\mu_2} (y)\rangle\), this clashing of points can be avoided requiring that \(y\) is integrated over a shifted curve \(C_f\) such that
\begin{equation}
C^\mu_{f}(\tau) = C^\mu(\tau) + \alpha\ n^\mu(\tau)
\end{equation}
where \(n\) is orthogonal to \(\dot{C}\). The choice of such an orthogonal component (frame) at every point on the curve is called framing. Even if at the end of the calculation one takes \(\alpha\to 0\), this procedure leaves a deformation-dependent
term, that in pure \(U(N)\) CS is 
\begin{equation}
e^{\frac{i\pi N}{k} \chi(C,C_{f})}
\end{equation}
where \(\chi(C,C_{f})\) is a topological invariant that takes integer values corresponding to the number of times the path \(C_f\) winds around \(C\). We see that localization produces an expectation value at framing -1 (see also \cite{Bianchi-framing-loc-CS} for a detailed discussion about framing).

\subsection{Localization: matter sector}

We turn now to the result for the localization of the matter-coupled theory. This is of course gauge invariant, so the equivariant differential acts effectively as \(Q\sim\delta\), since the ghost sector has been already considered in the previous paragraph. This means that, following the localization principle, we have to extend the matter action with a \(\delta\)-exact term. We are free to consider the canonical choice \eqref{eq:localizing-term-canonical} as in \cite{itamar-wilson_loop_3dCS}, or using the fact that \cite{Marino-locCS} the matter action \eqref{itamar-matter-action} is actually given by a supersymmetry variation, as the case of the YM action. This means that we can consider the localizing terms \[
tS_m + tS_{YM+ghosts}\]
or, schematically \[
t\int \delta \left( (\delta\psi)^\dagger \psi + \tilde{\psi} ( \delta \tilde{\psi})^\dagger\right) + tS_{YM+ghosts} \]
that have positive semi-definite bosonic parts. The second term in both choices is the one analyzed in the previous paragraph, and gives the same localization locus for the gauge and ghost sector, while both the first terms  vanishes for the field configurations
\begin{equation}
\psi = 0 , \qquad \phi = 0 , \qquad F=0 .
\end{equation}
This means that the classical action of the matter sector does not contribute to the partition function, but only in the 1-loop determinant. Expanding the fields around this configuration and scaling the fluctuations with the usual \(1/\sqrt{t}\) factor, we see that there are no couplings to the gauge sector fluctuations that survive in the \(t\to\infty\) limit, but only to the zero mode \(a\) of \(\sigma\). Thus the determinant factorizes as
\begin{equation}
Z_{1-loop}[a] = Z_{1-loop}^g[a]Z_{1-loop}^m[a].
\end{equation}
If matter is present in different copies of chiral multiplets, in maybe different representations of the gauge group, the determinant factorizes in the same way for each multiplet.

The determinant for the matter sector can be computed diagonalizing the the kinetic operators acting on the scalar and the fermion field, after having integrated out the auxiliary \(F\), and considering the path integration over the Cartan subalgebra with \(a\in\mathfrak{h}\). In particular, the relevant kinetic operators that have to be diagonalized are
\begin{equation}
K_b^{(\rho)} = \left(-\nabla^2 + \rho(a)^2 - i\rho(a) + \frac{3}{4}\right) , \qquad
K_f^{(\rho)} = \left(i\slashed{\nabla} + i\rho(a)\right) ,
\end{equation}
for the (complex) bosonic and fermionic parts, where \(a\) is regarded as acting on the representation \(R\) with weights \(\lbrace \rho\rbrace\). The eigenvalues of \(-\nabla^2\) are \(4j(j+1)\) with \(j=0,\frac{1}{2},\cdots\) with degeneracy \((2j+1)^2\), that we can rewrite as \(l(l+2)\) with degeneracy \((l+1)^2\) and \(l=0,1,\cdots\). The eigenvalues of \(i\slashed{\nabla}\) are \(\pm\left(l+\frac{1}{2}\right)\) with degeneracy \(l(l+1)\), with \(l=1,2,\cdots\). Thus the one loop determinant results, after a change of dummy index and some simplifications
\begin{equation}
\begin{aligned}
Z_{1-loop}^m[a] &= \prod_\rho \frac{\det(K_f)}{\det(K_b)} \\
&= \prod_\rho \prod_{l=1}^\infty \frac{\left( l+ \frac{1}{2} + i\rho(a)\right)^{l(l+1)} \left( l+\frac{1}{2}-i\rho(a) \right)^{l(l+1)}}{\left( l+ \frac{1}{2} + i\rho(a)\right)^{l^2} \left( l-\frac{1}{2}-i\rho(a) \right)^{l^2}} \\
&=  \prod_\rho \prod_{l=1}^\infty \left(\frac{ l+ \frac{1}{2} + i\rho(a)}{ l-\frac{1}{2}-i\rho(a) }\right)^l
\end{aligned}
\end{equation}
This product can be regularized using the zeta-function. We refer to \cite{Marino-locCS} for the details of the computation, and report here the result in the case the fields take value in a self-conjugate representation \(R\) of the gauge group:\footnote{For example, if \(R=S\oplus S^*\).}
\begin{equation}
Z_{1-loop}^m[a] = \prod_\rho \left( 2\cosh{(\pi \rho(a))} \right)^{-1/2}
\end{equation}
where now \(a\in R(\mathfrak{h})\) and \(\rho(a)\) is the weight of the Cartan element in the representation \(R\).

Summarizing, we have seen that the application of the supersymmetric localization principle to the matter-coupled SCS theory on \(\mathbb{S}^3\) reduces the path integral to a finite-dimensional integral describing a matrix model over the Lie algebra of the theory. Using the notation \eqref{eq:notation-detR}, the localization formulas for the partition function and the supersymmetric Wilson loop expectation value, with matter multiplets coming in self-conjugate representations \(R_1\oplus R_1^*, R_2\oplus R_2^*, \cdots\) are
\begin{equation}
\label{itamar-finalMM}
\begin{aligned}
Z &= \frac{1}{|\mathcal{W}|} \int_\mathfrak{\mathfrak{h}} da\  e^{-k\pi \mathrm{Tr}(a^2)} \frac{\det_{ad} 2\sinh(\pi a)}{\left(\det_{R_1}2\cosh{(\pi a)}\right)\left(\det_{R_2}2\cosh{(\pi a)}\right)\cdots} \\
\langle W_R(C)\rangle &= \frac{1}{Z|\mathcal{W}|\dim{R}}\int_\mathfrak{h} da\ e^{-k\pi  \mathrm{Tr}(a^2)} \mathrm{Tr}_R \left( e^{2\pi a}\right)\frac{\det_{ad} 2\sinh(\pi a)}{\left(\det_{R_1}2\cosh{(\pi a)}\right)\left(\det_{R_2}2\cosh{(\pi a)}\right)\cdots}
\end{aligned}
\end{equation}

\subsection{The ABJM matrix model}

ABJM theory is a special type of matter-coupled SCS theory in 3-dimensions constructed in \cite{ABJM_2008}, that has the interesting property to be dual under the AdS/CFT conjecture to a certain orbifold background in M-theory. It consists of two copies of \(\mathcal{N}=2\) SCS theory, each one with gauge group \(U(N)\), and opposite levels \(k, -k\). In addition, the are four matter (chiral and anti-chiral) supermultiplets \(\Phi_i, \tilde{\Phi}_i\), with \(i=1,2\), in the bi-fundamental representation of \(U(N)\times U(N)\), \((\mathbf{N},\bar{\mathbf{N}})\) and \((\bar{\mathbf{N}}, \mathbf{N})\). This field content can be represented as the \emph{quiver} in Fig.\ \ref{fig:ABJMquiver}.
\begin{figure}[ht]
\centering
\includegraphics[width=0.49\textwidth]{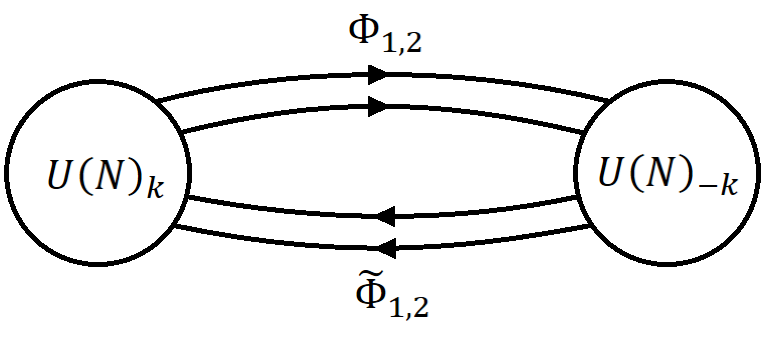}
\caption{The quiver for ABJM theory. The two nodes represent the gauge multiplets, with the convention of specifying the level of the CS term. The oriented links represent the matter multiplets in the bi-fundamental and anti-bi-fundamental representations.}
\label{fig:ABJMquiver}
\end{figure} 
The superpotential for the matter part is given by
\begin{equation}
W = \frac{4\pi}{k}(\Phi_1\tilde{\Phi}_1\Phi_2\tilde{\Phi}_2 - \Phi_1\tilde{\Phi}_2\Phi_2\tilde{\Phi}_1) ,
\end{equation}
and this structure actually enhance the supersymmetry of the resulting theory to \(\mathcal{N}=6\).\footnote{This is not apparent from the original action, but can be realized noticing that the superpotential has an \(SU(2)\times SU(2)\) symmetry that rotates separately the \(\Phi_i\) and the \(\tilde{\Phi}_i\). This, combined with the original \(SU(2)^\mathcal{R}\) symmetry of the theory, gives an \(SU(4)\cong Spin(6)\) symmetry that acts non-trivially on the supercharges. Thus the final theory has to have an enhanced \(\mathcal{N}=6\) supersymmetry.} If now \(a=\mathrm{diag}(\lambda_1,\cdots,\lambda_N,\hat{\lambda}_1,\cdots,\hat{\lambda}_N)\), the weights in the bi-fundamental representations are
\begin{equation}
\rho_{i,j}^{(N,\bar{N})}(a)= \lambda_i - \hat{\lambda}_j ,\qquad \rho_{i,j}^{(\bar{N},N)}(a)= \hat{\lambda}_j - \lambda_i .
\end{equation}

Plugging this information into \eqref{itamar-finalMM}, the partition function in this case localizes to the following matrix model,
\begin{equation}
Z \sim \frac{1}{N!N!} \int \left(\prod_i d\lambda_i d\hat{\lambda}_i\ e^{-k\pi(\lambda_i^2 - \hat{\lambda}_i^2)}\right) \frac{\prod_{i\neq j} \left( 2\sinh(\pi(\lambda_i-\lambda_j)) 2 \sinh(\pi(\hat{\lambda}_i - \hat{\lambda}_j)) \right)}{\prod_{i,j} \left( 2 \cosh(\pi(\lambda_i - \hat{\lambda}_j)) \right)} .
\end{equation}
The circular Wilson loop under consideration can be called now \emph{1/6 BPS} with respect to the enhanced supersymmetry of the model. Its expectation value in the fundamental representation is obtained by plugging a factor \((1/N)\sum_i e^{2\pi\lambda_i}\) as before. This matrix model cannot be solved exactly as in the case of the pure CS discussed above, but can be studied in the \(N\to\infty\) limit with the saddle-point technique showed in Section \ref{subsec:MMN4} \cite{Marino-locCS,marino-MM}. We also mention that, in this particular theory with enhanced \(\mathcal{N}=6\) supersymmetry, it was possible to construct a \emph{1/2 BPS} Wilson loop (so invariant under half of the \(\mathcal{N}=6\) supersymmetry algebra). The latter can be solved applying the same localization scheme that brings to the matrix model describing the 1/6 BPS Wilson loop presented above \cite{trancanelli-ABJM-wilson-loop,marino-ABJM}. A compact review introducing the state of the art on recent results about supersymmetric Wilson loops in ABJM and related theories can be found in \cite{roadmap-wilson-loops}.

\chapter{Non-Abelian localization and 2d YM theory}
\label{cha:non Abelian YM}

In this chapter we are going to summarize the result obtained mainly in \cite{witten-2dYM} by Witten. This was the first attempt in the physics literature of extending the equivariant localization formalism to possibly non-Abelian group actions. In that work, a modified definition of equivariant integration was defined, and this allowed for an extension of the same procedure discussed in Chapter \ref{cha:loc theorems} to show the localization property of integrals computed over spaces with generic symmetry group \(G\). This new formalism was applied to the study of 2-dimensional Yang-Mills (YM) theory over a Riemann surface, a relatively simple model from the physical point of view, but with a very rich underlying mathematical structure. In the following, we are going first to review the geometry of this special model, in connection with the symplectic geometry introduced in Section \ref{sec:e-cohom-symplectic}, as a motivation for the more mathematical discussion about the Witten's equivariant integration and non-Abelian localization principle that will follow. Next, we will review the ideas underlying the application of this new localization principle to the YM theory, and how this application results in a \lq\lq mapping\rq\rq\ between this model and a suitable topological theory, establishing the topological nature of the YM theory in the weak coupling limit. In the final section, we will summarize the interpretation given by the localization framework to the already existing solution for the partition function of this model.

As we pointed out in the Introduction, other generalizations of the Duistermaat-Heckman theorem to non-Abelian Hamiltonian systems also appeared in the mathematical literature, as the result obtained by Jeffrey and Kirwan in \cite{Jeffrey-nonabLoc}.
Other applications of this extended formalism followed, and Witten's approach was used for example more recently to describe Chern-Simons theories over a special class of 3-manifolds in \cite{witten-CS}.

\section{Prelude: moment maps and YM theory}
\label{sec:YM-prelude}

In the next section we are going to review  Witten's extension of the equivariant localization principle to possibly non-Abelian group actions, and a generalization of the DH formula in this direction. In \cite{witten-2dYM} this was applied to reinterpret the weak coupling limit of pure YM theory on a Riemann surface. This theory is exactly solvable, in the sense that its partition function  can be expressed in closed form, and its zero-coupling limit is known to describe a topological field theory. These features make 2-dimensional YM theory very appealing from the mathematical structure it carries, and make it possible to compare results or interpretations obtained via this \lq\lq new\rq\rq\ localization method with already existing solutions of the problem.

We are going to discuss more about the topological interpretation of 2d YM theory later, while in this section we review some results introduced by Atiyah and Bott \cite{atiyah-bott-YM} about the symplectic structure underlying this special QFT. This can be useful to contextualize the generic discussion of the next section, and it prepares the ground for the formal application of the non-Abelian localization principle.

We start by considering the partition function of YM theory on a compact orientable Riemannian manifold \(\Sigma\) of arbitrary dimension,
\begin{equation}
\begin{aligned}
Z(\epsilon) &= \frac{1}{\mathrm{vol}(\mathcal{G}(P))} \left(\frac{1}{2\pi\epsilon}\right)^{\dim(\mathcal{G})/2} \int_{\mathcal{A}(P)}DA\ e^{-S[A]} , \\
S[A] &= -\frac{1}{2\epsilon} \int_{\Sigma} \mathrm{Tr}(F^A\wedge\star F^A).
\end{aligned}
\end{equation}
Here \(\epsilon := g_{YM}^2\) is the square of the YM coupling constant. To describe the rest of the ingredients, let us recall the geometry underlying the gauge theory (to fill some of the details, see Appendix \ref{app:principal-bundles}). The dynamical field here is the \emph{connection} \(A\in \Omega(P;\mathfrak{g})\) on a principal \(G\)-bundle \(P\xrightarrow{\pi} \Sigma\), where \(G\) is a compact connected Lie group with Lie algebra \(\mathfrak{g}\). The path integral is thus taken over the space \(\mathcal{A}(P)\) of \(G\)-equivariant vertical 1-forms with values in \(\mathfrak{g}\), that is naturally an \emph{affine space} modeled on the infinite-dimensional vector space \(\mathfrak{a}\) of \(G\)-equivariant horizontal 1-forms with values in \(\mathfrak{g}\). This gives to \(\mathcal{A}(P)\) the structure of an infinite-dimensional manifold, whose tangent spaces are \(T_A\mathcal{A}(P) \cong \mathfrak{a} \cong \Omega^1(\Sigma; \mathrm{ad}(P))\), where we identified horizontal forms over \(P\) with forms over the base \(\Sigma\).\footnote{Recall that horizontality means essentially to have components only in the \lq\lq directions\rq\rq\ of the base space, and the \(G\)-equivariance ensures the right transformation behavior as forms valued in the \emph{adjoint bundle} \(\mathrm{ad}(P)\), the associated bundle to \(P\) that has \(\mathfrak{g}\) as typical fiber. Thus \(\mathfrak{a}\cong \Omega^1(\Sigma;\mathrm{ad}(P))\).} In other words, any vector field \(\alpha\in \Gamma(T\mathcal{A}(P))\) can be expanded locally as
\begin{equation}
\alpha = \alpha_\mu^a T_a\otimes dx^\mu , \qquad \alpha_\mu^a \in C^\infty(\Sigma\times \mathcal{A}(P)) ,
\end{equation}
with coefficients that depend on the point \(A\in \mathcal{A}(P)\) and \(p\in\Sigma\).
The curvature \(F^A = dA + \frac{1}{2}[A\stackrel{\wedge}{,}A]\) of the connection \(A\) is a horizontal 2-form over \(P\), so we can identify it as a 2-form on the adjoint bundle without loss of information, \(F^A \in \Omega^2(\Sigma;\mathrm{ad}(P))\). As such, it can be integrated as a differential form over \(\Sigma\). In the action \(S[A]\), \lq\lq \(\mathrm{Tr}\)\rq\rq\ represents a (negative definite) invariant inner product on \(\mathfrak{g}\), and \(\star\) is the Hodge dual operation, that is identified by the presence of a metric on \(\Sigma\).\footnote{The definition of the Hodge star is, implicitly, \(\alpha\wedge\star\beta = g^{-1}(\alpha,\beta)\omega\) for any \(\alpha,\beta\in \Omega^k(\Sigma)\). Here \(g^{-1}\) is the \lq\lq inverse\rq\rq\ metric on \(\Sigma\), that extends multi-linearly its action on every tangent space as \(g^{-1}(\alpha,\beta) = g^{\mu_1\nu_1}\cdots g^{\mu_k\nu_k} \alpha_{\mu_1\cdots\mu_k}\beta_{\nu_1\cdots\nu_k}\). \(\omega\) is a volume form (that can be induced by \(g\), for example). The Hodge star satisfies the property \(\star^2 \alpha = (-1)^{k(\dim(\Sigma)-k)}\alpha\).}

\(\mathcal{G}(P)\cong \Omega^0(\Sigma;\mathrm{Ad}(P)\) is the group of gauge transformations, that is locally equivalent to the space of \(G\)-valued functions over \(\Sigma\), and acts naturally on \(\mathcal{A}(P)\). If \(\phi\in Lie(\mathcal{G}(P))\cong \Omega^0(\Sigma; \mathrm{ad}(P))\) is an element of the Lie algebra of infinitesimal gauge transformations, its associated fundamental vector field at the point \(A\in \mathcal{A}(P)\) is 
\begin{equation}
\label{YM-fund-vf}
\underline{\phi}_A \equiv \delta_\phi A = \nabla^A\phi = d\phi + [A,\phi] .
\end{equation}

The path integral measure \(DA\) can be defined formally as the Riemannian measure induced by a metric on the affine space \(\mathcal{A}(P)\). The latter can be induced by the metrics on \(\Sigma\) and on \(\mathfrak{g}\), and defined pointwise in \(\mathcal{A}(P)\) as
\begin{equation}
\label{YM-metric}
(\alpha,\beta)_A := - \int_\Sigma \mathrm{Tr}(\alpha^A\wedge\star\beta^A)
\end{equation}
for every \(\alpha^A,\beta^A\in \Omega^1(\Sigma;\mathrm{ad}(P))\). With this definition, the YM action can be rewritten as
\begin{equation}
S[A] = \frac{1}{2\epsilon} (F,F)_A .
\end{equation}

We can now specialize the discussion to the case in which \(\dim(\Sigma)=2\), \textit{i.e.}\ the base space is a Riemann surface. It is a well-known fact in geometry that any Riemann surface is a \emph{K\"{a}hler manifold}: it admits a Riemannian metric \(g\), a symplectic form \(\omega\) (that can be a choice of volume form), and a complex structure \(J\) such that the compatibility condition \(g(\cdot,\cdot) = \omega(\cdot,J(\cdot))\) is satisfied.\footnote{A complex structure on a vector space \(V\) is an isomorphism \(J:V\to V\) such that \(J^2=-id_V\). It intuitively plays the role of \lq\lq multiplication by \(i\)\rq\rq\ when one considers the complexified \(V^{\mathbb{C}} := V\otimes \mathbb{C}\), allowing for a decomposition of \(V^{\mathbb{C}}\) in a holomorphic subspace (generated by the eigenvectors with eigenvalue \(+i\)) and anti-holomorphic subspace (generated by the eigenvectors with eigenvalue \(-i\)). A manifold \(M\) has  \emph{almost complex structure} if there is a tensor \(J\in \Gamma(T^1_1 M)\) that acts as a complex structure in every tangent space. If the holomorphic decomposition can be extended on an entire neighborhood of every point by a suitable choice of coordinates, \(M\) has \emph{complex structure}, and admits an atlas of holomorphic coordinates. Riemann surfaces can thus be thought as 2-dimensional real manifolds, or 1-dimensional complex manifolds.} This special property holds also for \(\mathcal{A}(P)\), since in addition to the metric \eqref{YM-metric} we can define the symplectic form \(\Omega\in \Omega^2(\mathcal{A}(P))\) such that
\begin{equation}
\Omega_A(\alpha,\beta) := - \int_\Sigma \mathrm{Tr}(\alpha^A\wedge\beta^A),
\end{equation}
and the complex structure on \(T\mathcal{A}(P)\) is provided by the Hodge duality, \(\star: \Omega^1(\Sigma;\mathrm{ad}(P)) \to \Omega^1(\Sigma;\mathrm{ad}(P))\) such that \(\star^2=-1\). Then the compatibility condition is immediately satisfied, since \((\cdot,\cdot) = \Omega(\cdot, \star(\cdot))\). The fact that \(\Omega\) is symplectic can be seen by noticing that, in any basis, it has constant components (\textit{i.e.}\ independent from \(A\in \mathcal{A}(P)\)):
\begin{equation}
\Omega_{ab}^{\mu\nu}(A) = \Omega_A(T_a\otimes dx^\mu , T_b\otimes dx^\nu) = - \mathrm{Tr}(T_a T_b) \varepsilon^{\mu\nu} \left(\int_\Sigma dx^1dx^2\right) \quad \in \mathbb{R}.
\end{equation}
The non-degeneracy follows from the non-degeneracy of \(\mathrm{Tr}\) and of \(\int_\Sigma\), and the skew-symmetry is obvious from the definition. Thus \(\mathcal{A}(P)\) is K\"{a}hler.

For our applications, we focus on the fact that \(\mathcal{A}(P)\) has now a canonical symplectic structure. It is natural to wonder if it possible to extend all the machinery that we introduced in Section \ref{sec:e-cohom-symplectic} also to this case, and in particular if the \(\mathcal{G}(P)\)-action on \(\mathcal{A}(P)\) results to be symplectic or Hamiltonian with respect to \(\Omega\). The answer was given by in \cite{atiyah-bott-YM}, and we state it in the following theorem.
\begin{thm}[Atiyah-Bott] In 2-dimensions, the group \(\mathcal{G}(P)\) of gauge transformations  acts in an Hamiltonian way on \(\mathcal{A}(P)\), with a moment map identified by the curvature \(F\).
\end{thm}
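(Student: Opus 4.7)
The plan is to produce an explicit candidate for the moment map, directly inspired by the theorem's statement, and verify the two defining properties of the Hamiltonian action: the Hamilton equation $d\mu_\phi = -\iota_{\underline{\phi}} \Omega$ for every $\phi \in \mathrm{Lie}(\mathcal{G}(P))$, and equivariance with respect to the coadjoint action. The candidate moment map is the linear functional
\[
\mu : \mathcal{A}(P) \to \mathrm{Lie}(\mathcal{G}(P))^* , \qquad \langle \mu(A), \phi \rangle \equiv \mu_\phi(A) := -\int_\Sigma \mathrm{Tr}(\phi\, F^A) ,
\]
which makes sense precisely because $\dim(\Sigma)=2$: only in this dimension is $\phi\, F^A$ a top form on $\Sigma$ and hence integrable without any extra geometric data, and this is what singles out the 2-dimensional case in the statement. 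With this ansatz the identification $\mu \cong F$ is simply the statement that, modulo the Ad-invariant pairing and integration over $\Sigma$, the moment map \emph{is} the curvature.

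Before checking the Hamilton equation, I would verify that $\mathcal{G}(P)$ acts symplectically. Since the symplectic form $\Omega_A(\alpha,\beta) = -\int_\Sigma \mathrm{Tr}(\alpha\wedge\beta)$ has constant components in the affine coordinates on $\mathcal{A}(P)$ and the gauge group acts by the affine transformation $A \mapsto g^{-1}Ag + g^{-1}dg$ whose linearization on tangent vectors is the pointwise adjoint action $\alpha \mapsto \mathrm{Ad}_{g^{-1}}\alpha$, the Ad-invariance of $\mathrm{Tr}$ immediately gives $g^*\Omega = \Omega$, so $\mathcal{L}_{\underline{\phi}}\Omega = 0$ for every $\phi$.

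Next I would compute the differential of $\mu_\phi$ at the point $A$ in an arbitrary direction $\alpha \in T_A\mathcal{A}(P) \cong \Omega^1(\Sigma; \mathrm{ad}(P))$. Since $F^{A+t\alpha} = F^A + t\,\nabla^A\alpha + O(t^2)$ with $\nabla^A \alpha = d\alpha + [A,\alpha]$, one has
\[
(d\mu_\phi)_A(\alpha) = -\int_\Sigma \mathrm{Tr}(\phi\, \nabla^A\alpha) .
\]
The core of the computation is then an integration-by-parts argument of Stokes' type. Writing $\mathrm{Tr}(\phi\, \nabla^A \alpha) = \mathrm{Tr}(\phi\, d\alpha) + \mathrm{Tr}(\phi\,[A,\alpha])$, Stokes' theorem (with $\partial \Sigma = \emptyset$) turns the first piece into $-\mathrm{Tr}(d\phi\wedge\alpha)$, while the Ad-invariance of $\mathrm{Tr}$ together with the graded-commutator rules for Lie-algebra valued forms turns the second piece into $-\mathrm{Tr}([A,\phi]\wedge\alpha)$. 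Combining the two,
\[
-\int_\Sigma \mathrm{Tr}(\phi\, \nabla^A \alpha) = \int_\Sigma \mathrm{Tr}\bigl(\nabla^A\phi \wedge \alpha\bigr) = -\Omega_A\bigl(\nabla^A\phi,\, \alpha\bigr) = -\Omega_A(\underline{\phi}_A, \alpha) ,
\]
where in the last equality I used the formula \eqref{YM-fund-vf} for the fundamental vector field of $\phi$. This is exactly $-\iota_{\underline{\phi}}\Omega$ evaluated on $\alpha$, so $\mu_\phi$ is the required Hamiltonian.

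Finally, I would check the equivariance property $\mu \circ \rho_g = \mathrm{Ad}^*_g \circ \mu$. This is immediate from the covariance law $F^{g\cdot A} = \mathrm{Ad}_g F^A$ and the Ad-invariance of $\mathrm{Tr}$: $\mu_\phi(g\cdot A) = -\int_\Sigma \mathrm{Tr}(\phi\, \mathrm{Ad}_g F^A) = -\int_\Sigma \mathrm{Tr}(\mathrm{Ad}_{g^{-1}}\phi \cdot F^A) = \mu_{\mathrm{Ad}_{g^{-1}}\phi}(A)$, which is the definition of the coadjoint action. The main obstacle I anticipate is purely notational bookkeeping: carefully tracking the signs coming from the graded bracket of $\mathfrak{g}$-valued forms and from Stokes' theorem, and making precise the dual pairing used to identify $\mu$ with $F^A$ on the infinite-dimensional space $\mathrm{Lie}(\mathcal{G}(P))^*$. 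No new analytic input is needed once one accepts the formal differential geometry on $\mathcal{A}(P)$; the entire content of the theorem is the happy coincidence, special to $\dim \Sigma = 2$, that the natural pairing $(\phi, F^A) \mapsto \int_\Sigma \mathrm{Tr}(\phi F^A)$ is simultaneously well defined and related to the canonical symplectic form via exactly the integration-by-parts identity above.
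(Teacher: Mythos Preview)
Your proposal is correct and follows essentially the same route as the paper: define $\mu_\phi(A) = -\int_\Sigma \mathrm{Tr}(\phi F^A)$, vary $F^A$ to get $d\mu_\phi(\alpha) = -\int_\Sigma \mathrm{Tr}(\phi\,\nabla^A\alpha)$, and integrate by parts to match $-\iota_{\underline{\phi}}\Omega$. Your version is in fact slightly more complete than the paper's, which checks only the Hamilton equation explicitly while you also verify the symplectic action and the coadjoint equivariance.
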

\begin{proof}
To see this, let us introduce the moment map as \(\mu : Lie(\mathcal{G}(P))\to C^\infty(\mathcal{A}(P))\) such that
\begin{equation}
\label{YM-moment-map}
\mu_\phi(A) := \langle F^A, \phi \rangle = - \int_\Sigma \mathrm{Tr}(F^A\phi) ,
\end{equation}
and check that the Hamiltonian property is satisfied. For every \(\alpha\in \Gamma(T\mathcal{A}(P))\) and \(\phi\in Lie(\mathcal{G}(P))\), we compute
\begin{align}
(\iota_\phi \Omega_A )(\alpha) &= \Omega_A(\underline{\phi}, \alpha) = -\int_\Sigma \mathrm{Tr}(\nabla^A\phi\wedge\alpha^A) = \int_\Sigma \mathrm{Tr}(\phi \nabla^A\alpha^A)  ,\\
\left. \delta \mu_\phi \right|_A(\alpha) &= - \int_\Sigma \mathrm{Tr}\left( F^{A+\alpha}\phi - F^A\phi\right) = - \int_\Sigma \mathrm{Tr}\left(\phi \nabla^A\alpha^A\right) ,
\end{align}
where \(\delta\) is the de Rham differential on \(\mathcal{A}(P)\), that acts in the usual sense of variational calculus. We see that \(\iota_\phi \Omega = - \delta\mu_\phi\), thus \(\mu\) provides a correct moment map for the \(\mathcal{G}(P)\)-action. If we identify \(Lie(\mathcal{G}(P))\) with \(Lie(\mathcal{G}(P))^*\) through the pairing \(\langle\cdot,\cdot\rangle\) introduced above, and regard the curvature \(F:\mathcal{A}(P)\to \Omega^2(\Sigma;\mathrm{ad}(P))\)  as an element of \(C^\infty(\mathcal{A}(P))\otimes Lie(\mathcal{G}(P))^*\), we can simply write that \(\mu \equiv F\).
\end{proof}

Another corollary of \(\mathcal{A}(P)\) being K\"{a}hler is that the path integral measure \(DA\) is formally equivalent to the Liouville measure induced from \(\Omega\), since by compatibility of the structures the latter is equivalent to the Riemannian measure induced by \((\cdot,\cdot)\). Since we are working on an infinite-dimensional space, we can write this measure formally as
\begin{equation}
DA = \exp(\Omega) ,
\end{equation}
as we did in \eqref{hamiltonian-1} but with \(n=\infty\). With this identification, we see that the path integral of the 2-dimensional YM theory acquires the very suggestive form
\begin{equation}
Z(\epsilon) \propto \int_{\mathcal{A}(P)} \exp\left(\Omega -\frac{1}{2\epsilon}(\mu,\mu)\right) .
\end{equation}
This path integral resembles very much an infinite-dimensional version of the type of integrals we treated when discussing the Duistermaat-Heckman localization formula in Section \ref{sec:e-cohom-symplectic}, but with the fundamental difference that now the exponent of the integrand is not the moment map, but its square. We will return to this point in the next section.

Here we notice that in the weak coupling limit \(\epsilon\to 0\) the path integral will receive contributions from the saddle points of the action \(S = \frac{1}{2\epsilon}(\mu,\mu)\), that is the space of solutions of the classical equations of motion \(\nabla^A\star F^A = 0\). Every one of these contributions brings roughly a term that decays as \(\sim \exp\left(-1/\epsilon\right)\) to the partition function, the main one being determined by the absolute minimum at \(\mu=0\), the subspace of flat connections \(\mu^{-1}(0)\subset\mathcal{A}(P)\). Eliminating the redundancy from the gauge freedom of the theory, the most interesting piece of the \emph{physical} field space, especially  in the weak coupling limit, is thus determined by the quotient
\begin{equation}
\mathcal{A}_0 := \faktor{\mu^{-1}(0)}{\mathcal{G}(P)},
\end{equation}
or in other words, when computing the path integral one is interested in the \(\mathcal{G}(P)\)-equivariant cohomology of \(\mu^{-1}(0)\), \(H_{\mathcal{G}}^*(\mu^{-1}(0)) \cong H^*(\mathcal{A}_0)\). The quotient \(\mathcal{A}_0\) is the \emph{moduli space of flat connections}. It turns out that this space has a nice interpretation in symplectic geometry in terms of \emph{symplectic reduction}. A theorem by Marsden-Weinstein-Meyer (MWM) \cite{marsden-weinstein,meyer,dasilva} in fact states that, in a generic Hamiltonian \(G\)-space \((M,\omega,G,\mu)\), if the zero-section of the moment map \(\mu^{-1}(0)\subset M\) is acted on \emph{freely} by \(G\), then the base space \(M_0 := \mu^{-1}(0)/G\) of the principal \(G\)-bundle \(\mu^{-1}(0)\xrightarrow{\pi} M_0\) is a symplectic manifold, with symplectic form \(\omega_0\in \Omega^2(M_0)\) satisfying 
\begin{equation}
\left.\omega\right|_{\mu^{-1}(0)} = \pi^*(\omega_0).
\end{equation}
In other words, the restriction of \(\omega\) to \(\mu^{-1}(0)\) is a basic form, completely determined by a symplectic form \(\omega_0\) on the base space. The space \((M_0,\omega_0)\) is called Marsden–Weinstein quotient, symplectic quotient or symplectic reduction of \(M\) by \(G\).\footnote{Symplectic reduction in classical mechanics on \(M=\mathbb{R}^{2n}\) occurs when one of the momenta is an integral of motion, \(0=\dot{p}_n=-\partial_n H\). In that case, one can solve the system in the reduced coordinates \((q^1,\cdots,q^{n-1},p_1,\cdots,p_{n-1})\) and then solve for the \(n^{th}\) coordinate separately. The MWM theorem essentially generalizes this process in a fully covariant setting.} Returning to the case of YM theory, this means that in the limit \(\epsilon\to 0\), when the path integral is reduced to \(\mathcal{A}_0\) by gauge fixing, the symplectic form can be reduced without loss of information on this base space. In the next section we will see that, applying localization, this is extended to the whole exponential.

\section{A localization formula for non-Abelian actions}
\label{sec:YM-localization-formula}

In the last section we found an Hamiltonian interpretation of the system \((\mathcal{A}(P),\Omega,\mathcal{G}(P),\mu\equiv F)\) for YM theory on a 2-dimensional Riemann surface \(\Sigma\). Here we would like to make contact with the DH formula, that we described for analogous systems in finite-dimensional geometry. We notice that the main differences with the case treated in Section \ref{sec:e-cohom-symplectic} are essentially two: \(\mathcal{G}(P)\) is non-Abelian in general for non-Abelian gauge groups \(G\), and the path integral is not in the form of an oscillatory integral of the DH type. Indeed, schematically we have
\begin{equation}
\label{YM-1}
\text{DH} \to \int \exp{\left(\omega+i\mu\right)} , \qquad\quad \text{YM}\to \int \exp{\left(\Omega-\frac{1}{2}|\mu|^2\right)} .
\end{equation}

In the following, we will describe the solution proposed in \cite{witten-2dYM} to generalize the DH formula to the non-Abelian case starting from the first integral in \eqref{YM-1}, and how this procedure can be used to recover the second one, of the YM type. We consider a generic Hamiltonian system \((M,\omega,G,\mu)\) with compact semisimple Lie group \(G\) of dimension \(\dim(G)=s\), and the associated Cartan model defined by the space \(\Omega_G(M) = (S(\mathfrak{g}^*)\otimes \Omega(M))^G\) of equivariant forms, on which we defined the action of the extended operators\footnote{We adopt Witten's conventions and substitute \(\phi^a\mapsto i\phi^a\) in the definition of the Cartan differential, analogously to the DH case of Section \ref{sec:e-cohom-symplectic}.}
\begin{equation}
\begin{aligned}
d_C &= 1\otimes d - i \phi^a \otimes \iota_a ,\\
\mathcal{L}_a &\equiv 1\otimes \mathcal{L}_a + \mathcal{L}_a\otimes 1 \qquad \text{with}\quad \mathcal{L}_a\phi^b = f^b_{ac}\phi^c ,\\
\iota_a &\equiv 1 \otimes \iota_a .
\end{aligned}
\end{equation}
An element \(\alpha\in \Omega_G(M)\) is an invariant polynomial in the generators \(\phi^a\) of \(S(\mathfrak{g}^*)\), with differential forms on \(M\) as coefficients. This means that integration over \(M\) provides a map in equivariant cohomology of the type
\begin{equation}
\int_M : H^*_G(M) \to S(\mathfrak{g}^*)^G ,
\end{equation}
or in other words that the integral of an equivariant form is in general a polynomial in the \(\phi^a\). This is not quite satisfactory, as we would like an integration that generalizes the standard de Rham case, giving a map \(H^*_G(M) \to \mathbb{R}\) (or \(\mathbb{C}\)). In the case of \(G=U(1)\) we often solved this problem by setting the unique generator \(\phi=-1\) (or \(\phi=i\) in this conventions), thus constructing a map in the localized cohomology, \(\int_M: H_{U(1)}^*(M)_\phi \to \mathbb{R}\).  Here in the non-Abelian case, the trick of algebraic localization is not so trivial in practice, and we avoid it.

An alternative and fruitful idea to saturate the \(\phi\)-dependence is to make them dynamical variables, and integrate over them too. Since \(\phi^a\) can be regarded as an Euclidean coordinate over \(\mathfrak{g}\), this means defining an integration over \(M\times\mathfrak{g}\). As a vector space, the Lie algebra has a natural measure \(d^s\phi\) that is unique up to a multiplicative factor. We fix that factor by choosing a (positive-definite) inner product \((\cdot,\cdot)\) on \(\mathfrak{g}^*\) and setting\footnote{The inner product on \(\mathfrak{g}^*\) is induced from an inner product on \(\mathfrak{g}\), and when we write \((\phi,\phi)\) we really mean \(\sum_a (\phi^a,\phi^a)\). This can be stated more formally defining \(\phi := \phi^a\otimes T_a\in \mathfrak{g}^*\otimes \mathfrak{g}\), and then letting act the inner-product on the \(T_a\)'s, normalized in order to produce a Kronecker delta. We avoid this cumbersome notation, since the action of the various inner products is always clear from the context.}
\begin{equation}
\int_\mathfrak{g} d^s\phi\ e^{-\frac{\epsilon}{2}(\phi,\phi)} = \left(\frac{2\pi}{\epsilon}\right)^{s/2} ,
\end{equation}
essentially as we did in \eqref{pestun-measure}. Since our goal is to integrate equivariant forms that have polynomial dependence on the \(\phi^a\), or at most expressions of the form \(\exp(\phi^a\otimes \mu_a)\) that have exponential dependence, integrating over \(\mathfrak{g}\) with the bare measure \(d^s\phi\) would produce possible divergences. To ensure convergence of these class of functions, the \emph{equivariant integration} is defined as \cite{witten-2dYM}
\begin{equation} \label{eq:equivariant-integral}\boxed{
\int_{M\times \mathfrak{g}} \alpha := \frac{1}{\mathrm{vol}(G)}\int_M \int_{\mathfrak{g}} \frac{d^s\phi}{(2\pi)^s} e^{-\frac{\epsilon}{2}(\phi,\phi) }\alpha 
}
\end{equation}
where \(\epsilon\) is inserted as a regulator. Notice that in general the limit \(\epsilon\to 0\) is not well-defined, for what we said above.

With this enhanced definition of equivariant integration of elements of \(\Omega_G(M)\), we can apply the equivariant localization principle to the present case. Let \(\alpha\in \Omega_G(M)\) be an equivariantly closed form, so that \(d_C\alpha = 0\), and choose an equivariant 1-form \(\beta\in\Omega_G^1(M)=\Omega^1(M)^G\). The latter is independent on \(\phi^a\), and plays the role of the localization 1-form. By the same arguments of Section \ref{sec:eq_loc_princ}, \(\alpha\) and \(\alpha e^{d_C\beta}\) are representatives of the same equivariant cohomology class in \(H^*_G(M)\), and we can deform the integral of \(\alpha\) as
\begin{equation}
\label{YM-2}
I[\alpha;\epsilon] := \int_{M\times \mathfrak{g}}\alpha = \int_{M\times \mathfrak{g}}\alpha e^{td_C\beta} \qquad \forall t\in \mathbb{R}.
\end{equation}
In particular, taking the limit \(t\to \infty\), this integral localizes on the critical point set of the localization 1-form \(\beta\). This can be seen simply by expanding the definition of equivariant integration from \eqref{YM-2},
\begin{equation}
\label{YM-3}
\begin{aligned}
I[\alpha;\epsilon] &= \frac{1}{\mathrm{vol}(G)}\int_M \int_\mathfrak{g} \frac{d^s\phi}{(2\pi)^s}\ \alpha \exp\left(-\frac{\epsilon}{2}(\phi,\phi) -it\phi^a(\iota_a\beta) + td\beta \right) \\
&=  \frac{1}{\mathrm{vol}(G)}\int_M \int_\mathfrak{g} \frac{d^s\phi}{(2\pi)^s}\ \alpha \exp\left(-\frac{\epsilon}{2}(\phi,\phi) -\frac{t^2}{2\epsilon} \sum_a (\iota_a\beta)^2 + td\beta \right) ,
\end{aligned}
\end{equation}
where in the second line we completed the square and shifted variable in the \(\phi\)-integral. Since the term \(td\beta\) gives a polynomial dependence on \(t\) (by degree reasons, it is expanded up to a finite order), the limit \(t\to\infty\) converges and makes the integral localize on the critical points of \(\iota_a\beta\). This shows the localization property of equivariant integrals in the non-Abelian setting. If for example we suppose \(\alpha\) to be independent on the \(\phi^a\), we can perform the Gaussian integration to further simplify \(I[\alpha;\epsilon]\),
\begin{equation}
I[\alpha;\epsilon] =  \frac{1}{\mathrm{vol}(G)(2\pi\epsilon)^{s/2}} \int_M \alpha \exp\left( -\frac{t^2}{2\epsilon} \sum_a (\iota_a\beta)^2 + td\beta \right) .
\end{equation}

\medskip
We now apply the above non-Abelian localization principle to the special case in which \(\alpha = \exp(\omega - i\phi^a\otimes\mu_a)\), \textit{i.e.}\ generalizing the DH formula of Section \ref{sec:e-cohom-symplectic}. We will suppress tensor products in the following, for notational convenience. First of all, it is straightforward to see that this  form is equivariantly closed, 
\begin{equation}
d_C e^{\omega - i\phi^a\mu_a} \propto d_C(\omega - i\phi^a\mu_a) = -i\phi^a\iota_a\omega - i\phi^a d\mu_a = -i\phi^a\iota_a\omega + i\phi^a\iota_a\omega =0.
\end{equation}
The DH oscillatory integral becomes, following the same steps of \eqref{YM-2} and \eqref{YM-3},
\begin{equation}
\label{YM-4}
\begin{aligned}
Z(\epsilon) &= \int_{M\times \mathfrak{g}} \frac{\omega^n}{n!}e^{-i\phi^a\mu_a} \\
&= \frac{1}{\mathrm{vol}(G)} \int_M \int_\mathfrak{g} \frac{d^s\phi}{(2\pi)^s}  \exp\left(\omega  -i\phi^a \mu_a -\frac{\epsilon}{2}(\phi,\phi) + t(d\beta - i\phi^a (\iota_a\beta)) \right) \\
&= \frac{1}{\mathrm{vol}(G)(2\pi\epsilon)^{s/2}} \int_M \exp\left(\omega -\frac{1}{2\epsilon} (\mu,\mu) - \frac{t^2}{2\epsilon}\sum_a (\iota_a\beta)^2 + \frac{it}{\epsilon} \sum_a \mu_a (\iota_a\beta)  \right) ,
\end{aligned}
\end{equation} 
and it is independent of \(t\). Specializing to the case \(t=0\), we get 
\begin{equation}
\label{YM-6}
\boxed{
\int_{M\times \mathfrak{g}} \frac{\omega^n}{n!}e^{-i\phi^a\mu_a} = \frac{1}{\mathrm{vol}(G)(2\pi\epsilon)^{s/2}} \int_M \exp\left(\omega -\frac{1}{2\epsilon} (\mu,\mu)\right)
},
\end{equation}
that shows the equivalence of the YM type partition function and the equivariant integral of the DH type!  If instead we take the limit \(t\to\infty\), we see that the integral localizes on the critical points of \((\iota_a\beta)\). With a smart choice of localization 1-form, we can show that this localization locus coincides with the critical point set of the function \(S:= \frac{1}{2}(\mu,\mu)\).
\begin{proof}
Since \((M,\omega)\) is symplectic, it admits an almost complex structure \(J\in \Gamma(T^1_1M)\) and a Riemannian metric \(G\in \Gamma(T^0_2 M)\) such that \(\omega(\cdot,J(\cdot))=G(\cdot,\cdot)\) (see \cite{dasilva}, proposition 12.6).\footnote{In the case of 2-dimensional YM theory, we recall that \(J\equiv\star\) is the Hodge duality operator.} We pick the localization 1-form \[
\beta := dS\circ J = J^\sigma_\nu \partial_\sigma S\ dx^\nu = J^\sigma_\nu \sum_a \mu_a (\partial_\sigma \mu_a)\ dx^\nu ,\]
and the localization condition \(\iota_a \beta = 0\). Now we use the compatible metric \(G\), that has components \(G_{\mu\nu}=\omega(\partial_\mu,J(\partial_nu)) = \omega_{\mu\sigma}J^\sigma_\nu\). We consider its \lq\lq inverse\rq\rq\ \(G^{-1}\) acting on \(T^*M\) with components \(G^{\mu\nu} = J^\mu_\sigma \omega^{\nu\sigma}\), where \(\omega^{\nu\sigma}\) are the components of the \lq\lq inverse\rq\rq\ symplectic form, and compute the norm of the 1-form \(dS\), \[
G^{-1}(dS,dS) = (\partial_\mu S J^\mu_\sigma) \omega^{\nu\sigma} \partial_\nu S = \beta_\sigma \sum_a \mu_a (\omega^{\nu\sigma}\partial_\nu \mu_a ) = - \sum_a \mu_a \beta_\sigma (T_a)^\sigma = - \sum_a \mu_a (\iota_a\beta) = 0 ,
\]
where we used the Hamiltonian equation \(d\mu_a = -\iota_a\omega\) and the localization condition \(\iota_a\beta=0\). By the non-degeneracy of \(G\), this condition is equivalent to \(dS = 0\), that precisely identifies the critical points of \(S\).
\end{proof}

Rephrasing the above result in the language of the last section, we just showed in general terms that  the 2-dimensional YM partition function localizes on the moduli space of solutions of the EoM, meaning that this  theory is essentially classical. We remark again that this localization locus consists of two qualitatively different types of points: those that minimize absolutely \(S\), that is \(\mu^{-1}(0)\subset M\), and the higher extrema with \(\mu\neq 0\). The former ones in the gauge theory are the flat connections, and they give the dominant contribution to the partition function. The latter ones decays exponentially in the limit \(\epsilon\to 0\) as \(\sim\exp(-S/\epsilon)\). In general thus the partition function can be written as a sum of terms coming from all these disconnected regions of \(M\),
\begin{equation}
Z(\epsilon) = \sum_n Z_n(\epsilon) .
\end{equation}
Let us consider the dominant piece \(Z_0(\epsilon)\) coming from \(\mu^{-1}(0)\), that we interpret in the gauge theory as the rough answer in the weak coupling limit, and that we can select by restricting the integration over \(M\) to a suitable neighborhood \(N\) of \(\mu^{-1}(0)\), 
\begin{equation}
Z_0(\epsilon) = \frac{1}{\mathrm{vol}(G)}\int_N \int_\mathfrak{g} \frac{d^s\phi}{(2\pi)^s} \exp\left( \omega - i\phi^a\mu_a - \frac{\epsilon}{2}(\phi,\phi) + td_C(dS\circ J) \right),
\end{equation}
where we inserted the localization 1-form such that the \(t\to\infty\) limit identifies the critical locus \(\mu^{-1}(0)\). Cohomological arguments show that, if \(G\) acts freely on \(\mu^{-1}(0)\), this integral retracts on the symplectic quotient \(M_0 := \mu^{-1}(0)/G\), giving
\begin{equation}
\label{YM-5}
Z_0(\epsilon) = \int_{M_0} \exp\left(\omega_0 + \epsilon\Theta \right)
\end{equation}
for some 4-form \(\Theta\in \Omega^4(M_0)\). In particular, we see that in the weak coupling limit \(\epsilon\to 0\), \(Z_0(0)\) gives the volume of the symplectic quotient \(M_0\).
\begin{proof}[Argument for \eqref{YM-5}]
The precise proof is technical and it can be found in \cite{witten-2dYM}, we only sketch the main instructive ideas here. The neighborhood \(N\) is chosen small enough to be preserved by the \(G\)-action, and represents the split with respect to the normal bundle we used in Section \ref{sec:ABBVproof}. Thus it retracts equivariantly onto \(\mu^{-1}(0)\), meaning that it is homotopic to \(\mu^{-1}(0)\) and that the homotopy commutes with the \(G\)-action.

First of all we recall what we noticed at the end of the last section: if \(G\) acts freely on \(\mu^{-1}(0)\) the MWM theorem tells us that the symplectic form retracts on the symplectic quotient \( M_0 := \mu^{-1}(0)/G \), so it does not contribute to the integration over the \lq\lq normal directions\rq\rq\ to \( M_0\) in \(N\). Here we are not considering a simple integration over \(N\), but an equivariant integration, that provides a map \(H^*_G(N)\to \mathbb{C}\). So in this case we consider the equivariantly closed extension  \(\tilde{\omega}=\omega-i\phi^a\mu_a\) as representative of a cohomology class \([\tilde{\omega}]\in H^2_G(M)\). When restricted over \(N\), this class is the pull-back of a cohomology class \([\omega_0]\in H^2(M_0)\), since \(H^*_G(N)\cong H^*_G(\mu^{-1}(0))\cong H^*(M_0)\), the first equivalence following from the retraction of \(N\) onto \(\mu^{-1}(0)\) and the second from the fact that the \(G\)-action is free on \(\mu^{-1}(0)\) (these properties were explained in Chapter \ref{cha:equivariant cohomology}). Thus we can substitute in the integral \(\tilde{\omega}\mapsto\omega_0\) without changing the final result.

The same kind of argument works for the term \( \frac{1}{2}(\phi,\phi)\). It is easy to check that this is both \(G\)-invariant and equivariantly closed, so it represents an element \( \left[ \frac{1}{2}(\phi,\phi)\right]\in H^4_G(M)\). When we restrict it to \(N\), as above, this class is the pull-back of some class \([\Theta]\in H^4(M_0)\), and we can make the substitution \(\frac{1}{2}(\phi,\phi) \mapsto \Theta\) in the integral without changing the final result.

Since both \(\omega_0\) and \(\Theta\) are standard differential forms over \(M_0\) and thus independent of \(\phi\), the integration over \(\mathfrak{g}\) goes along only with the remaining term \(d_C(dS\circ J)\). We already know that on \(\mu^{-1}(0)\subset N\) this term is zero, so one has to show that its integral over the normal directions to \(\mu^{-1}(0)\) in \(N\) produces a trivial factor of 1. In \cite{witten-2dYM} it is proven that \[
\frac{1}{\mathrm{vol}(G)}\int_{F}\int_\mathfrak{g}\frac{d^s\phi}{(2\pi)^s} \exp\left(t d_C (dS\circ J) \right) = 1,\]
where \(F\) is any fiber of the normal bundle to \(\mu^{-1}(0)\) in \(N\). From this \eqref{YM-5} follows.
\end{proof}

\begin{ex}[The height function on the 2-sphere, again]
Beside the main application of  Witten's localization principle to non-Abelian gauge theories, we try now to apply this new formalism to the old and simple example of the height function on the 2-sphere, to compare it with the results obtained in Chapter \ref{cha:loc theorems}. Setting \(G=U(1)\), \(M=\mathbb{S}^2\), \(\mu = \cos(\theta)\) and \(\omega = d\cos(\theta)\wedge d\varphi\), the equivariant integration \eqref{eq:equivariant-integral} of the DH oscillatory integral gives \[
\begin{aligned}
Z(\epsilon) &= \int_{\mathbb{S}^2\times \mathfrak{g}} \omega e^{-i\phi\mu} = \frac{1}{2\pi} \int_0^{2\pi}d\varphi \int_{-1}^{+1}d\cos\theta \int_{-\infty}^{+\infty} \frac{d\phi}{2\pi} \exp\left( -i\phi \cos\theta -\frac{\epsilon}{2}\phi^2\right) \\ 
&= \frac{1}{\sqrt{2\pi\epsilon}} \int_{-1}^{+1} dx\ e^{-\frac{x^2}{2\epsilon}} = 1 - 2 I(\epsilon) ,
\end{aligned}
\]
where \(I(\epsilon) := \int_{1}^{\infty} \frac{dx}{\sqrt{2\pi\epsilon}}\exp(-x^2/2\epsilon)\) is a trascendental error function. The three terms in the final result for \(Z(\epsilon)\)  (two of which are equal to \(-I(\epsilon)\)) correspond to the contributions of the extrema of \((\cos\theta)^2\): the two maxima at \(\theta=0,\pi\) contribute with \(-I(\epsilon)\) and the minimum at \(\theta=\pi/2\) contributes with \(1\). The latter is the dominant piece when \(\epsilon\to 0\), since \(I(0)=0\). We see that in general the modified equivariant integration of this new formalism gives an incredibly complicated answer, when compared to the simple result of Example \ref{ex:S2-ABBV} obtained via the usual equivariant localization principle.
\end{ex}

\begin{rmk} We notice that we could have   expressed equivalently the whole dissertation above in supergeometric language, since we discussed in Section \ref{sec:superspaces} that integration over \(M\) is equivalent to integration over \(\Pi TM\). In these terms, maybe more common in QFT, we can introduce coordinates \((x^\mu, \psi^\mu, \phi^a)\) over \(\Pi TM\times \mathfrak{g}\), where \(\psi^\mu := dx^\mu\) are Grassmann-odd, and interpret elements of \(\Omega_G(M)\) as elements of \(C^\infty(\Pi TM\times \mathfrak{g})^G\). An equivariant form \(\alpha\) is thus (locally) a \(G\)-invariant function of \((x,\psi,\phi)\). For example, the Cartan differential and the definition of equivariant integration become \begin{equation}
\label{YM-susy-language}
\begin{aligned}
d_C &= \psi^\mu \frac{\partial}{\partial x^\mu} - i\phi^a T_a^\mu \frac{\partial}{\partial \psi^\mu} ,\\
\int_{M\times \mathfrak{g}} \alpha &:= \frac{1}{\mathrm{vol}(G)}\int d^{2n}x d^{2n}\psi \frac{d^s\phi}{(2\pi)^s}\ \alpha(x,\psi,\phi)e^{-\frac{\epsilon}{2}(\phi,\phi)} .
\end{aligned}
\end{equation}
\end{rmk}

\section{\lq\lq Cohomological\rq\rq\ and \lq\lq physical\rq\rq\ YM theory}
\label{sec:YM-cohomologicalFT}

In this section we are going to review the relation between 2-dimensional YM theory that we described in Section \ref{sec:YM-prelude} and a topological field theory (TFT) that can be viewed as its \lq\lq cohomological\rq\rq\ counterpart. We can translate almost verbatim the general principles that we discussed in the last section, setting 
\begin{equation}
M \mapsto \mathcal{A}(P), \quad \omega \mapsto \Omega, \quad G \mapsto\mathcal{G}(P), \quad \mu \mapsto F,
\end{equation}
while we regard \(G\) as a compact connected Lie group that acts as the gauge group on the principal bundle \(P\to\Sigma\) over a Riemann surface \(\Sigma\). The moment map is formally equivalent to the curvature \(F\) if we identify \(Lie(\mathcal{G}(P)) \cong Lie(\mathcal{G}(P))^*\) through an inner product on \(\mathfrak{g}\), as in \eqref{YM-moment-map}.

The non-Abelian localization principle of the last section, if used in reverse, already showed that an equivalent way to express the standard YM theory is through a \lq\lq first-order formulation\rq\rq\
\begin{equation}
S[A,\phi] = -\int_\Sigma \mathrm{Tr}\left(i\phi F^A + \frac{\epsilon}{2} \phi\star\phi \right)
\end{equation}
where we consider \(\phi\in \Omega^0(\Sigma;\mathrm{ad}(P)) \cong Lie(\mathcal{G}(P))\), and \(F^A\in \Omega^2(\Sigma;\mathrm{ad}(P))\) is the curvature of \(A\).\footnote{More precisely, we should say that \(A^a_\mu\) and \(\phi^a\) are \emph{coordinates} functions on \(\mathcal{A}(P)\times Lie(\mathcal{G}(P))\), so they effectively are elements of \(C^\infty(\mathcal{A}(P))\otimes Lie(\mathcal{G}(P))^*\). This caveat will be logically important in the following, and it goes along with the \emph{functor of points} approach we used in the supergeometric discussion of Chapter \ref{cha:susy}.} This is essentially what  is written in \eqref{YM-6}, where on the LHS we have the first-order action (the \(\epsilon\) dependence is contained in the equivariant integration), and on the RHS we have the standard YM action \(S[A]=\frac{1}{2\epsilon}(F,F)_A\). The first-order formulation has the quality of showing very clearly the weak coupling limit behavior when \(\epsilon\to 0\), that is less obvious in the standard formulation. In this limit, the theory becomes topological, in the sense that the action does not depend on the metric anymore (the metric appears in the Hodge duality \(\star\)),
\begin{equation}
S_{\epsilon\to 0}[A,\phi] = -\int_\Sigma \mathrm{Tr}(i\phi F^A).
\end{equation}
This theory is called \lq\lq BF model\rq\rq, and it is the prototype of a TFT of Schwarz-type. The YM theory can thus be seen as a \lq\lq regulated version\rq\rq\ of a truly topological field theory.\footnote{Notice that, although YM theory is clearly dependent on the metric of \(\Sigma\), in 2 dimensions it shows a very \lq\lq weak\rq\rq\ dependence to it. In fact, in this dimensionality the action can be simplified as (suppressing the constants and the Lie algebra inner product) \[
\begin{aligned}
\int F\wedge\star F &= \int g^{\mu\rho}g^{\nu\sigma}F_{\mu\nu}F_{\rho\sigma} \sqrt{|g|}d^2x = \int (F_{01})^2 g^{\mu\rho}g^{\nu\sigma}\varepsilon_{\mu\nu}\varepsilon_{\rho\sigma} \sqrt{|g|}d^2x = \\ &= \int (F_{01})^2 |g^{-1}|\sqrt{|g|}d^2x = \int (F_{01})^2 |g|^{-1/2}d^2x  ,
\end{aligned}\]
so the metric does not appear through its components, but only in the invariant quantity \(\det(g)\).} 
At least classically, it is intuitive from the EoM with respect to \(\phi\) that the only contribution to the classical solutions comes from the moduli space of flat connections, where \(F^A=0\) up to gauge transformations. It is not trivial, though, to infer that this is all the theory has to offer also at the quantum level, that is essentially the result we showed in general terms in the last section, via the localization principle applied to the path integral \(Z(\epsilon\to 0)\).

In this section we discuss, following \cite{witten-2dYM}, how the localization principle can be translated in the language of TFT, in order to give a more physical interpretation of the abstract mathematical results that we discussed in finite dimensions. In particular, we will see that the BF model partition function can be recovered as an expectation value in a TFT, and that this ensures its localization properties onto the moduli space of flat connections. The regulated version at \(\epsilon\neq 0\) will not follow precisely this behavior, as we already know that higher extrema of the YM action contribute to the partition function \(Z(\epsilon)\), but these will have a nice interpretation in terms of the moduli space.

\subsection*{Intermezzo: TFT}

This is a good moment to explain briefly in more general terms what one usually means by TFT, and how localization enters in this subject. Traditionally, TFT borrows the language of BRST formalism for quantization of gauge theories, as many examples of topological theories arise from that context. Recall that the standard BRST quantization procedure is based on the definition of a differential, the \lq\lq BRST charge\rq\rq\ \(Q\), that acts on an extended graded field space, whose grading counts the \lq\lq ghost number\rq\rq\ (in other words, \(Q\) is an operator of degree \(\mathrm{gh}(Q)=+1\)). The BRST charge represents an infinitesimal supersymmetry transformation, that squares to zero in the gauge-fixed theory. On the Hilbert space, physical states are those of ghost number zero, and that are annihilated by the BRST charge,
\begin{equation}
Q|\text{phys}\rangle = 0, \quad \mathrm{gh}|\text{phys}\rangle = 0 .
\end{equation}
The action of \(Q\) on the field space is often denoted with a Poisson bracket-like notation, \(\Phi \mapsto \delta_Q\Phi := - \lbrace Q,\Phi\rbrace\) for any field \(\Phi\). By gauge invariance of the vacuum, \(Q|0\rangle = 0\), and so for any operator \(\mathcal{O}\) one has that \(\langle 0|\lbrace Q,\mathcal{O}\rbrace |0\rangle = 0\).

For a QFT being \lq\lq topological\rq\rq, in physics one usually means that all its quantum properties are independent from a choice of a metric on the base space \(M\).\footnote{Here the word \lq\lq topological\rq\rq\ is somewhat overused. Mathematically, a topological space consists of a set \(M\) and a \emph{topology} \(\mathcal{O}_M\), that is roughly the set of all \lq\lq open neighborhoods\rq\rq\ in \(M\). In QFT one almost always works on base spaces that have the structure of a \emph{manifold} of some kind (smooth, complex, ...), so that it allows for the presence of an \emph{atlas} \(\mathcal{A}_M\) of charts that identifies it locally as \(\mathbb{R}^n\) for some (constant) \(n\). A manifold is thus a triple \((M,\mathcal{O}_M,\mathcal{A}_M)\), and the choice of a metric is only on top of this structure. So metric-independence does not generically mean that the QFT describes only the topology of \(M\), but it can depend on the choice of a smooth (or complex, ...) structure on \(M\).} This is rephrased in the requirement that the partition function of the theory should be metric-independent. Assuming the path integral measure to be metric-independent and \(Q\)-invariant (so that the BRST symmetry is non anomalous), the variation with respect to the metric of the partition function is
\begin{equation}
\delta_g Z \propto \int_\mathcal{F} [D\Phi] e^{-S[\Phi]} \delta_g S[\Phi] = \int_\mathcal{F} [D\Phi] e^{-S[\Phi]}\left( \int_M d^n x \sqrt{|g|} \delta g^{\mu\nu} T_{\mu\nu}\right) \propto \langle 0|T_{\mu\nu}|0\rangle ,
\end{equation}
so a suitable definition of TFT is the one that requires the energy-momentum tensor \(T_{\mu\nu}\) to be a BRST variation, \(T_{\mu\nu} = \lbrace Q,V_{\mu\nu}\rbrace\) for some operator \(V_{\mu\nu}\). This would ensure \(\delta_g Z = 0\) for what we said above.

Collecting the above remarks, we can give the following \lq\lq working definition\rq\rq\ \cite{birmingham-TFT}. A \emph{Topological Field Theory} is a QFT defined over a \(\mathbb{Z}\)-graded field space \(\mathcal{F}\), with a nilpotent operator \(Q\) (\textit{i.e.}\ a \emph{cohomological vector field} on \(\mathcal{F}\)), and a \(Q\)-exact energy-momentum tensor \(T_{\mu\nu} = \lbrace Q,V_{\mu\nu}\rbrace\), for some \(V_{\mu\nu}\in C^\infty(\mathcal{F})\). \emph{Physical states} are defined to be elements of the \(Q\)-cohomology of \(\mathcal{F}\) in degree zero, \(|\text{phys}\rangle \in H^0(\mathcal{F}, Q)\).\footnote{We stress that, if we see \(\mathcal{F}\) as a graded extension of an original field space \(\mathcal{F}_0\) acted upon by gauge transformations, the \(Q\)-cohomology of \(\mathcal{F}\) is exactly the analogous of the gauge-equivariant cohomology \(H_{gauge}^*(\mathcal{F}_0)\), computed in the Cartan model with Cartan differential \(Q\).}

\begin{rmk}\begin{itemize}
\item \(Q\) is called \lq\lq BRST charge\rq\rq\ or \lq\lq operator\rq\rq, but in general it can be every supersymmetry charge (as it is a cohomological vector field). We saw examples in the last chapter where \(Q = \delta_{susy} + \delta_{BRST}\), where \(\delta_{BRST}\) is the actual gauge-supersymmetry, and \(\delta_{susy}\) is a Poincaré-supersymmetry.

\item The one above is a good \lq\lq working definition\rq\rq\ for most examples, but it is not completely adequate in all cases. Indeed, there are examples of QFT where \(T_{\mu\nu}\) fails to be BRST-exact, but nonetheless one can still establish the topological nature of the model. We do not need to treat any example of this kind, so we refer to \cite{birmingham-TFT} for more details.

\item We already encountered an example of TFT in Section \ref{sec:index}, \textit{i.e.}\ supersymmetric QM. There, we called it topological because its partition function was determined by topological invariants of the base space, here we point out that it indeed fits into this general discussion. In fact, in \eqref{index-Q-exact-S} we saw that its action can be expressed as \(S = \lbrace Q, \Sigma\rbrace\), where \(Q\equiv Q_{\dot{x}}\) was the \(U(1)\)-Cartan differential on the loop space, and \(\Sigma\) some other loop space functional. The \(Q\)-exactness of the energy-momentum tensor follows from this.

\item If the energy-momentum tensor is \(Q\)-exact, in particular the Hamiltonian satisfies 
\begin{equation}
\langle \text{phys}| H |\text{phys}\rangle = \langle \text{phys}| T_{00} |\text{phys}\rangle = 0 .
\end{equation}
This means that the energy of any physical state is zero, and thus the TFT does not contain propagating degrees of freedom. It can only describe \lq\lq topological\rq\rq\ properties of the base space.
\end{itemize}
\end{rmk}

TFTs fall in two main broad categories. The first one is constituted by the so-called TFT of \emph{Witten-type} (or also \emph{cohomological TFT}). Their defining property is to have a \(Q\)-exact quantum action,
\begin{equation}
S_q = \lbrace Q, V\rbrace
\end{equation}
for some operator \(V\). As the case of supersymmetric QM, the energy-momentum tensor is automatically \(Q\)-exact too,
\begin{equation}
T_{\mu\nu} = \frac{2}{\sqrt{|g|}} \left\lbrace Q, \delta V/\delta g_{\mu\nu}\right\rbrace .
\end{equation}
From the equivariant point of view, these  theories have a very simple localization property. The action is representative of the trivial  \(Q\)-equivariant cohomology class, so the partition function can be in fact written as
\begin{equation}
\label{YM-8}
Z \propto \int_\mathcal{F}[D\Phi] e^{-tS_q[\Phi]}
\end{equation}
for any \(t\in \mathbb{R}\), by the standard argument of supersymmetric localization. In this case \(\mathcal{F}\) is non-compact and so simply taking \(t=0\) is not really allowed, but the limit \(t\to\infty\) is perfectly defined, producing the path integral localization onto the space of solutions of the classical EoM of \(S_q\). This means that all TFT of Witten-type are completely determined by their semiclassical approximation!

The second main class of TFT is called of \emph{Schwarz-type} (or also \emph{quantum TFT}). In this case one starts with a classical action \(S\) that is metric-independent, so that the classical energy-momentum tensor is zero. The usual BRST quantization of this action produces a quantum action of the type \(S_q = S + \lbrace Q,V\rbrace\) for some \(V\), and again
\begin{equation}
T_{\mu\nu} = \frac{2}{\sqrt{|g|}} \left\lbrace Q, \delta V/\delta g_{\mu\nu}\right\rbrace ,
\end{equation}
since the classical piece does not contribute. Chern-Simons theory and BF theory are of this type. Regarding the second remark above, we mention that some Schwarz-type TFTs fail to respect the \(Q\)-exactness property of \(T_{\mu\nu}\), for example in the case of non-Abelian BF theories in dimension \(d>3\) (this happens essentially because one can ensure nilpotency of the BRST charge only on-shell, and moreover \(Q\) has to be defined in a metric-dependent way).

As we anticipated before we are interested in BF theories, an example of Schwarz-type TFT, and in particular in their 2-dimensional realization. On a base space \(\Sigma\) of generic dimension \(d\), the classical action of the BF theory is
\begin{equation}
S_{BF}[A,B] := -\int_\Sigma \mathrm{Tr}( B\wedge F^A ),
\end{equation}
where \(F^A\) is the curvature 2-form of a connection \(A\), and \(B\) is a \((d-2)\)-form with values in the adjoint bundle. The BRST charge is the usual gauge-supersymmetry. In the 2-dimensional case, \(B\equiv i\phi\in \Omega^0(\Sigma;\mathrm{ad}(P))\), and the naive BRST quantization of this theory has no problems (as we mentioned, in more than 3 dimensions the definition of \(Q\) have to be modified and things complicate). The quantum action is 
\begin{equation}
\begin{aligned}
S_q[A,B,\pi,b,c] &= -\int_\Sigma \mathrm{Tr}\left( BF^A + \pi(\nabla^A\star A) + b(\nabla^A\star\nabla^A)c \right) \\
&= S_{BF}[A,B] + \lbrace Q,V\rbrace ,  \\
\text{with} \quad V &:= -\int_\Sigma \mathrm{Tr}\left( b(\nabla^A\star A)\right) ,
\end{aligned}
\end{equation}
where we introduced a ghost \(c\), an anti-ghost \(b\) and an auxiliary field \(\pi\) with the BRST transformations properties
\begin{equation}
\delta_Q A = \nabla^A c , \quad \delta_Q c = -\frac{1}{2}[c,c] , \quad \delta_Q b = \pi , \quad \delta_Q \pi =0, \quad \delta_Q B = -[B,c] .
\end{equation}
Even though this theory is not of Witten-type, so it has no direct localization onto the subspace of classical solutions, it turns out that this is still the case. Traditionally, this is shown finding a suitable redefinition of coordinates in field space that trivializes the bosonic sector of the action (meaning that there are no derivatives acting on bosonic fields), and whose Jacobian cancels in the path integral with the 1-loop determinant obtained by integrating out the fermions (in this case, the ghosts fields). This is called \emph{Nicolai map}, and for the 2-dimensional BF model is given by the redefinitions \cite{birmingham-TFT,blau-TFT}
\begin{equation}
\xi(A) := F^A, \qquad \eta(A):= \nabla^{A_c} \star A_q ,
\end{equation}
where \(A:= A_c + A_q\) is the expansion of the gauge field around a classical (on-shell) solution \(A_c\). Assuming the fermions being integrated out, the path integrals over \(B\) and \(\pi\) exactly identify the space of zeros of \(\xi,\eta\), that is the moduli space of solutions to \(F^A=0\) up to gauge transformations, \(\mathcal{A}_0\). We do not pursue this direction further, but summarize the approach taken in \cite{witten-2dYM}, more related to localization.

\subsection*{Cohomological approach}

Thinking in equivariant cohomological terms, one can expect to show the localization of the BF model (and of its \lq\lq regulated\rq\rq\ version, \emph{i.e.}\ YM theory) finding a suitable \lq\lq localization 1-form\rq\rq\ \(V'\). This has to be such that the deformation of the action given  by \(t\lbrace Q, V'\rbrace\) induces the path integral to localize in this subspace when \(t\to\infty\), analogously to what we did for example in Section \ref{sec:QM}, and also to the discussion of the last section in the finite-dimensional case. This is exactly what is shown in \cite{witten-2dYM}, where the partition function of the model of interest is found as an expectation value inside a cohomological TFT, proving automatically its localization behavior.

This cohomological TFT is constructed in a way such that the action of the BRST operator \(Q\) coincides with the Cartan differential \(d_C\) arising in the symplectic formulation of 2-dimensional YM theory. Using the more common supergeometric language of Sections \ref{sec:QM}, \ref{sec:index} and recalled in \eqref{YM-susy-language} (in finite dimensions), we move from the field space \(\mathcal{A}(P)\times Lie(\mathcal{G}(P))\) to \(\mathcal{F}:= \Pi T\mathcal{A}(P)\times Lie(\mathcal{G}(P))\), introducing the graded coordinates \((A_\mu^a, \psi_\mu^a, \phi^a)\) and regarding \(d_C\) as a supersymmetry (BRST) transformation,
\begin{equation}
\label{YM-7}
d_C \equiv - \lbrace Q,\cdot\rbrace = \int_\Sigma d\Sigma \left( \psi_\mu^a \frac{\delta}{\delta A^a_\mu} - i(\underline{\phi}^a)^\mu \frac{\delta}{\delta \psi_\mu^a} \right) ,
\end{equation}
where we recall from \eqref{YM-fund-vf} that the fundamental vector field associated to the action of a Lie algebra element \(\phi\in Lie(\mathcal{G}(P))\) is \(\underline{\phi} = \nabla \phi\). The BRST transformation of every field follows the rule \(\delta_Q \Phi = -\lbrace Q,\Phi\rbrace \equiv d_C \Phi\), and  on the coordinates we have
\begin{equation}
\delta_Q A = \psi , \qquad \delta_Q \psi = -i\nabla\phi , \qquad \delta_Q \phi = 0.
\end{equation}
The ghost numbers of the elementary fields are \(\mathrm{gh}(A,\psi,\phi)=(0,1,2)\). Other multiplets could be added as well, but this is the basic one we start with.

Before describing in more detail the particular cohomological theory and its relation with the physical YM theory, we summarize the strategy that has been followed. One starts with a suitable cohomological TFT with action
\begin{equation}
S_c = \lbrace Q, V\rbrace ,
\end{equation}
for some operator \(V\), that ensures the localization onto the moduli space \(\mathcal{A}_0\) of flat connections.
Then the mapping to the physical theory is done by the common equivariant localization procedure. The TFT is deformed adding a cohomologically trivial localizing action,
\begin{equation}
\label{YM-10}
S(t) = S_c + t \lbrace Q, V'\rbrace = \lbrace Q, V + tV'\rbrace ,
\end{equation}
for some gauge invariant operator \(V'\) that forces only the interesting YM multiplet \((A,\psi,\phi)\) to survive. Since the field space here is non-compact, some additional care must be taken in claiming the \(t\)-independence of the deformed theory. In particular, the new term must not introduce new fixed points of the \(Q\)-symmetry, that would contribute to the localization locus of the resulting theory. These fixed points, if presents in the theory at \(t\neq 0\), can be interpreted as \lq\lq flowing from infinity\rq\rq\ in the moduli space (since this is, as just remarked, non-compact).\footnote{This will be exactly the case in going to the YM theory with \(\epsilon\neq 0\), where the new fixed points are just the higher extrema of the action \(S=\frac{1}{2}(F,F)\).} If this is not the case, one can infer properties of the \lq\lq physical\rq\rq\ theory at \(t=\infty\) by making computations in the cohomological one at \(t=0\).\footnote{Notice that this is logically the opposite of what one does usually in using localization. Here the \lq\lq easy\rq\rq\ theory is the cohomological one at \(t=0\), while the more difficult (but more interesting) is the one at \(t\neq 0\).}

\paragraph*{The cohomological theory}

The cohomological theory considered in \cite{witten-2dYM} makes use of the additional multiplets \((\lambda,\eta)\) and \((\chi,H)\) with the transformation properties
\begin{equation}
\delta_Q \lambda = \eta , \qquad \delta_Q \eta = -i[\phi,\lambda], \qquad \delta_Q \chi = -i H , \qquad \delta_Q H = [\phi,\chi] .
\end{equation}
The extended field space and ghost numbers are
\begin{equation}
\begin{array}{rccccc}
\mathcal{F} = & \underbrace{\Pi T\mathcal{A}(P)\times Lie(\mathcal{G}(P))} & \times & \underbrace{\left(\Omega^0(\Sigma;\mathrm{ad}(P))\right)^2} & \times & \underbrace{\left(\Omega^0(\Sigma;\mathrm{ad}(P))\right)^2} \\
 & (A,\psi,\phi) & & (\lambda,\eta) & & (\chi,H) \\
\mathrm{gh} = & (0,1,2) & & (-2,-1) & & (-1,0) .
\end{array}
\end{equation}
Of course the definition of the BRST operator will be extended from \eqref{YM-7}, but we do not need it explicitly. The operator \(V\) defining the TFT is chosen to be
\begin{equation}
\label{YM-9}
V = \frac{1}{h^2} \int_\Sigma d\Sigma\ \mathrm{Tr}\left( \frac{1}{2}\chi(H-2(\star F^A)) + g^{\mu\nu} (\nabla^A_\mu \lambda)\psi_\nu \right) ,
\end{equation}
where \(h\in \mathbb{R}\) is a parameter from which the theory is completely independent (it is analogous to \(1/t\) appearing in \eqref{YM-8}), that can be interpreted as the \lq\lq coupling constant\rq\rq\ of the TFT. Computing the cohomological action \(S_c\) one sees that the \(H\) field plays an auxiliary role, and can be eliminated setting \(H=\star F\). Analyzing then the theory in the \(h\to 0\) limit (its semiclassical \lq\lq exact\rq\rq\ approximation), the localization locus is identified by the BRST-fixed point (we refer to \cite{witten-2dYM} for the details)
\begin{equation}
\delta_Q \chi = 0 \quad (\Rightarrow F=0), \qquad \delta_Q \psi = 0 \quad (\Rightarrow \nabla\phi=0) .
\end{equation}
This is analogous to the usual situation in Poincaré-supersymmetric theories, as discussed in Chapter \ref{cha:loc-susy}, where the localization locus was always identified by the subcomplex of BPS configurations, given by the vanishing of the variation of the fermions. Here the \lq\lq fermions\rq\rq\ are the fields with odd ghost number. This means that the final moduli space contains \(\mathcal{A}_0\), plus maybe some contributions from the zero-modes of the other bosonic (even ghost number) fields, \(\lambda\) and \(\phi\).

\paragraph*{\(t\neq 0\) deformation}

The deformation \eqref{YM-10} can be made in order to reduce effectively the field content of the theory to the YM multiplet only, \((A,\psi,\phi)\). In particular, to eliminate the non-trivial presence of the field \(\lambda\) from the contributions to the localization locus one can consider
\begin{equation}
V' = -\frac{1}{h^2}\int_\Sigma d\Sigma\ \mathrm{Tr}\left(\chi\lambda \right).
\end{equation}
Computing the deformed action \(S(t)\), one sees that for \(t\neq 0\) all the additional fields \(H,\chi,\lambda,\eta\) can be integrated out (again, some more details on the technical passages can be found in \cite{witten-2dYM}). In particular, the EoM for \(H\) and \(\lambda\) are
\begin{equation}
\label{YM-11}
H = 0, \qquad \lambda = -\frac{1}{t}(\star F) .
\end{equation}
This is already the sign that the localizing term \(V'\) qualitatively changed the localization property of the theory. In fact, we see that for \(t=0\) we do not have an algebraic equation for \(\lambda\), but \eqref{YM-11} reduces to the solution \(F=0, H=0\) of the cohomological theory. The theories defined  by \(S(t)\) and \(S_c\) may be thus different, but the failure of their equivalence can only come from new components of the moduli space that flow in from infinity for \(t \neq 0\); the contribution of the \lq\lq old\rq\rq\ component must be independent of \(t\). Taking the limit \(t\gg 1\), the dominant contribution to the deformed action is (suppressing the \(A\)-dependence)
\begin{equation}
\begin{aligned}
S(t\gg 1) &= -\frac{1}{t}\left\lbrace Q, \int_\Sigma d\Sigma\ \mathrm{Tr}\left( \psi^\mu \nabla_\mu (\star F)\right) \right\rbrace \\
&= \frac{1}{t}\int_\Sigma d\Sigma\ \mathrm{Tr}\Bigl( i\nabla_\mu \phi \nabla^\mu (\star F) - (\star F)[\psi_\mu,\psi^\mu] + \nabla_\mu \psi^\mu \epsilon^{\nu\sigma}\nabla_\nu\psi_\sigma \Bigr) .
\end{aligned}
\end{equation}
The main point is that the \(\phi\)-EoM is actually equivalent to the YM equation \(\nabla\star F =0\). This means that the moduli space of the deformed theory contains the moduli space of the standard YM theory, and indeed includes all the higher extrema corresponding to non-flat connections. These solutions with \(F\neq 0\) have \(\lambda \sim - 1/t\), and thus their contribution to the path integral goes roughly as \(\exp(-1/t)\), as expected. When \(t=0\), the cohomological theory is recovered and the only contribution to the moduli space is given by the flat connections.

\paragraph*{Connection with 2-dimensional YM theory}

We already argued that the deformed TFT gained all the YM spectrum \lq\lq flowing from infinity in the moduli space\rq\rq, but it remains to see how one can get practically the YM (and BF) partition function from the theory defined by \(S(t)\). This is simply obtained by another deformation of the exponential in the partition function: we notice that the YM action is gauge invariant, so it is meaningful to compute the expectation value of \(e^{S_{YM}}\) in the TFT. Thus we consider an exponential operator of the form
\begin{equation}
\begin{gathered}
 \exp\left( \omega_0 + \epsilon\Theta \right)  \\
\text{with}\quad \omega_0 := \int_\Sigma \mathrm{Tr}\left( i \phi F + \frac{1}{2}\psi\wedge\psi\right), \qquad \Theta := \frac{1}{2}\int_\Sigma \mathrm{Tr}(\phi\star\phi) .
\end{gathered}
\end{equation}
Since the quantity
\begin{equation}
\left\langle \exp\left( \omega_0 + \epsilon\Theta\right) \right\rangle_t \propto \int_{\Pi T\mathcal{A}(P)\times Lie(\mathcal{G}(P))} DA D\psi D\phi\ \exp\left( \omega_0 + \epsilon\Theta - S(t)\right)
\end{equation}
is well defined for \(t\to\infty\), we can actually take \(t=\infty\) and drop \(S(\infty)=0\) (recall that the path integral is independent on the actual value of \(t\)), getting exactly the YM partition function
\begin{equation}
\left\langle \exp\left( \omega_0 + \epsilon\Theta\right) \right\rangle_t \propto \int_{\Pi T\mathcal{A}(P)\times Lie(\mathcal{G}(P))} DA D\psi D\phi\ \exp\left( \omega_0 + \epsilon\Theta\right) \propto Z(\epsilon) ,
\end{equation}
up to some normalization constant.

In the limit \(\epsilon \to 0\), when only the BF model survives, the path integral over \(\phi\) produces the constraint \(\delta(F)\), so localizing the expectation value onto the space of flat connections. This means that, although we started from different theories \(S_c \cong S(0)\) and \(S(t)\), this particular expectation value satisfies
\begin{equation}
\left\langle \exp\left( \omega_0 \right) \right\rangle_t = \left\langle \exp\left( \omega_0 \right) \right\rangle_{t=0}
\end{equation}
and the BF model is recovered as an expectation value in the cohomological theory. This gives another interpretation to the topological behavior of the BF model, and a measure of the failure of 2-dimensional YM theory in being topological.

Concluding, we only point out that the the operators \(\omega_0\) and \(\Theta\) are precisely the infinite-dimensional realization in this example of the general expressions in \eqref{YM-5}. In fact, the symplectic 2-form on \(\mathcal{A}(P)\)
\begin{equation}
\Omega = \int_\Sigma \mathrm{Tr}(\psi\wedge\psi)
\end{equation}
only serves to have a formal interpretation of the measure \(DAD\psi e^{\Omega}\), since the field \(\psi\) is really a spectator in the action
\begin{equation}
S[A,\psi,\phi] = - \int_\Sigma \mathrm{Tr}\left( i \phi F +\frac{\epsilon}{2}\phi\star\phi + \frac{1}{2}\psi\wedge\psi \right).
\end{equation}

\section{Localization of 2-dimensional YM theory} 

As we said at the beginning of the chapter, 2-dimensional YM theory is an exactly solvable theory, whose partition function can be expressed in closed form, for example by group characters expansion methods \cite{Cordes-2dYMTFT,witten-2dYM}. This makes it possible to compare results from the localization formalism, and obtain a new geometric interpretation of the already present solution of the theory. In general, its partition function on a Riemann surface \(\Sigma\) of genus \(g\), with a simply-connected gauge group \(G\), is given as
\begin{equation}
\label{YM-char-expansion}
Z(\epsilon) = (\mathrm{vol}(G))^{2g-2} \sum_R \frac{1}{\dim(R)^{2g-2}}e^{-\epsilon \tilde{C}_2(R)} ,
\end{equation}
where the sum runs over the representations \(R\) of \(G\), and \(\tilde{C}_2(R)\) is related to the quadratic Casimir \(C_2(R) := \sum_a \mathrm{Tr}_R(T_aT_a)\) of the representation \(R\) by some normalization constant. For not simply-connected \(G\) this formula has to be slightly modified (see \cite{witten-2dYM}).\footnote{Simply-connectedness implies that the principal \(G\)-bundle \(P\to\Sigma\) has to be trivial. When one drops this condition, the triviality is not ensured and contributions to the formula appear due to singular points in \(\Sigma\) for the connection.} We are not interested in reviewing the proof of \eqref{YM-char-expansion} in general, but we present a quick argument for a simple example, that already contains the logic behind it.

\begin{ex}[YM theory with genus \(g=1\)] 
We quickly motivate the result for the YM partition function on a genus 1 surface. We can think of this surface as a disk with boundary, that is homeomorphic to a sphere with one hole. The radial direction is identified with the interval \([0,T]\), and the angular coordinate as \([0,L]\) with the edges identified.

It is more natural to compute the partition function of the theory in the Hamiltonian formulation,
\begin{equation}
Z = \mathrm{Tr}_{\mathcal{H}}\mathcal{P}\exp\left( -\int_0^T dt\ H(t) \right)
\end{equation}
up to possible normalization factors, where \(\mathcal{H}\) is the Hilbert space of the system. To this end, let us consider the canonical quantization of the YM action. To make sense of the partition function we must fix a gauge, and we do this  setting \(A_t = 0\) (temporal gauge). In this gauge the action simplifies as
\begin{equation}
S[A] = -\frac{1}{2\epsilon} \int dxdt\ \mathrm{Tr}(F_{01}^2) = \frac{1}{2\epsilon} \int dxdt\ (\partial_t A^a_x)^2 ,
\end{equation}
where we expanded \(A_\mu = A^a_\mu T_a\) with respect to the generators of \(\mathfrak{g}\), and suppressed the inner product implicitly summing over the Lie algebra indices. We see that the only non-zero canonical momentum is \(\Pi^a_x = (1/\epsilon)\partial_t A^a_x\), acting on the Hilbert space as \(\Pi^a_x(t,x) \mapsto \frac{\delta}{\delta A_x^a(t,x)}\). The canonical Hamiltonian in temporal gauge is thus
\begin{equation}
H(t) = \epsilon \int_0^L dx\ (\Pi^a_x(t,x))^2 \mapsto \epsilon \int_0^L dx\ \frac{\delta}{\delta A_x^a(t,x)} \frac{\delta}{\delta A_x^a(t,x)} .
\end{equation}

The Hilbert space \(\mathcal{H}\) can be considered to consist of gauge invariant functions \(\Psi(A)\). The only gauge invariant data obtained from the gauge field at any point \(p\in \Sigma\) is its holonomy,
\begin{equation}
U_p[A] := \mathcal{P}\exp \left(\oint_{C(p)} A\right) \quad\in G ,
\end{equation}
where \(C(p)\) is a loop about \(p\). In the case of the one holed-sphere, all loops are homotopic to the one on the boundary,  so \(\Psi\in \mathcal{H}\) must be an invariant function of 
\begin{equation}
U = \mathcal{P}\exp \int_0^L dx A_1 ,
\end{equation}
and independent of \(t\in[0,T]\). Any invariant function must be expandible in characters of representations of \(G\), so \(\Psi(U) = \sum_R \Psi_R \mathrm{Tr}_R(U)\), where \(\Psi_R\in \mathbb{C}\) and \(\mathrm{Tr}_R(U)\) is the Wilson loop in the representation \(R\) of \(G\). We notice that the basis functions \(\chi_R(U) := \mathrm{Tr}_R(U)\) diagonalize the Hamiltonian, since 
\begin{equation}
H\chi_R(U) = \epsilon \mathrm{Tr}_R\int_0^1 dx\ T_a T_a\ \mathcal{P}\exp \int_0^L dx A_1 = \epsilon L C_2(R) \chi_R(U) ,
\end{equation}
where \(C_2(R) := \mathrm{Tr}_R(T_a T_a)\) is the quadratic Casimir in the representation \(R\), a time-independent eigenvalue of \(H\). Via this diagonalization the partition function is easily computed,
\begin{equation}
Z = \sum_R e^{-TL\epsilon c(R)} ,
\end{equation}
matching \eqref{YM-char-expansion} for \(g=1\). All the geometric information about \(\Sigma\) that enters in \(Z\) is its total area \(TL\), and any other local property. Notice that in the topological limit \(\epsilon\to 0\), the Hamiltonian vanishes (as the theory has no propagating degrees of freedom) and the partition function simplifies further.
\end{ex}

From the localization formalism discussed in the last sections, we expect the partition function to be of the type
\begin{equation}
\label{YM-12}
Z(\epsilon) = Z_0(\epsilon) + \sum_n Z_n(\epsilon) ,
\end{equation}
with \(Z_0(\epsilon)\) representing the contribution from the moduli space \(\mathcal{A}_0\) of flat connections, such that \(Z_0(0)\sim \mathrm{vol}(\mathcal{A}_0)\), and the other \(Z_n(\epsilon)\) coming from contributions of the higher extrema of the YM action, such that \(Z_n(\epsilon)\sim \exp(-1/\epsilon)\) in the weak coupling limit \(\epsilon \ll 1\). Using cohomological arguments, in \cite{witten-2dYM} (also nicely reviewed in \cite{witten-CS}) it was shown how to recover the general features of \eqref{YM-char-expansion}, and in particular how to interpret it in terms of an \(\epsilon\)-expansion at weak coupling, in relation to the expected form \eqref{YM-12}. The detailed derivation is cumbersome and requires some more technical background, so we refer to the article for it, but the logic is essentially the same as for the discussion at the end of Section \ref{sec:YM-localization-formula}. The strategy is the following. Any solution to the YM EoM identifies a disconnected region \(\mathcal{S}_n\subset \mathcal{A}(P)\). For every such region, one fixes a small neighborhood \(N_n\) around \(\mathcal{S}_n\) that equivariantly retracts onto it. The technically difficult passage is to perform the integral over the \lq\lq normal directions\rq\rq\ to \(\mathcal{S}_n\) in \(N_n\), and then reduce it on the moduli space \(\mathcal{A}_n := \mathcal{S}_n/\mathcal{G}(P)\). The main difficulty is that in general the MWM theorem (or its equivariant counterpart) does not work, since the action of \(\mathcal{G}(P)\) is not generally free on \(\mathcal{S}_n\) (also for \(n=0\)), as we assumed in writing down \eqref{YM-5} for the \(\mathcal{S}_0 \equiv \mu^{-1}(0)\) component. For the higher extrema, this is readily seen by the fact that the equation 
\begin{equation}
\nabla (\star F) = 0 \qquad \text{with}\ F\neq 0
\end{equation}
identifies a vacuum \(f:=(\star F)\) as a preferred element of \(\mathfrak{g}\) (being it covariantly constant over \(\Sigma\)), and thus the gauge group is spontaneously broken to a subgroup \(G_f\subseteq G\). The action of the whole gauge group thus cannot be free on this subspace, and the quotient \(\mathcal{S}_n/\mathcal{G}(P)\) is singular. Via a suitable choice of localization 1-form one is still able to extract information by this integral over the normal directions, and in particular to compute the \(\epsilon\)-dependence of the higher extrema contributions.

We limit ourselves now to the comparison of the exact result \eqref{YM-char-expansion} applied to the case \(G=SU(2)\),
with the expectation \eqref{YM-12} obtained by cohomological arguments. For this gauge group, the character expansion of the partition function results
\begin{equation}
Z(\epsilon) = \frac{1}{(2\pi^2)^{g-1}}\sum_{n=1}^{\infty} \frac{\exp(-\epsilon \pi^2 n^2)}{n^{2g-2}} .
\end{equation}
Simply taking \(\epsilon=0\), we see that this is finite and proportional to a Riemann zeta-function, but to explore better the \(\epsilon\)-dependence it is convenient to consider
\begin{equation}
\label{YM-13}
\frac{\partial^{g-1}Z}{\partial \epsilon^{g-1}} = \left(-\frac{1}{2}\right)^{g-1} \sum_{n=1}^{\infty} \exp\left(-\epsilon\pi^2 n^2\right) = \frac{(-1)^{g-1}}{2^g} \left( -1 + \sum_{n\in\mathbb{Z}} \exp\left(-\epsilon\pi^2 n^2\right)\right) .
\end{equation}
This is not quite in the expected form, since the exponentials in the sum go to zero for \(\epsilon\to 0\) but not as \(\exp(-1/\epsilon)\). We can bring this expression closer to the desired result using the Poisson summation formula
\begin{equation}
\sum_{n\in\mathbb{Z}} f(n) = \sum_{k\in\mathbb{Z}} \hat{f}(k) \qquad \text{with}\ \hat{f}(k) := \int_{-\infty}^{+\infty} f(x)e^{-2\pi ikx} , 
\end{equation}
where \(f\) is a function and \(\hat{f}\) its Fourier transform, and rewriting the sum of exponentials in \eqref{YM-13} as
\begin{equation}
\frac{\partial^{g-1}Z}{\partial \epsilon^{g-1}} =
\frac{(-1)^{g-1}}{2^g} \left( -1 + \sqrt{\frac{1}{\pi\epsilon}} \sum_{k\in\mathbb{Z}} \exp\left(-\frac{k^2}{\epsilon}\right)\right) .
\end{equation}
This is exactly the result that could be obtained via integration over normal coordinates in the localization framework (see \cite{witten-CS}, eq.\ (4.102)), but fundamentally differs from our expectation, since for \(\epsilon\to 0\) the contribution from the flat connections (with \(k=0\))  is singular for the presence of the square root. This means that the partition function is not really a polynomial in \(\epsilon\) for small couplings, but an expression of the form
\begin{equation}
Z(\epsilon) = \sum_{m=0}^{g-2} a_m \epsilon^m + a_{g-3/2}\epsilon^{g-3/2} + \text{exponentially small terms} . 
\end{equation}
The singularity in \(Z(\epsilon\to 0)\) arises because, for gauge group \(SU(2)\), the subspace \(\mu^{-1}(0)\) is singular and the MWM theorem does not apply. 

A simpler situation would occur considering the gauge group \(SO(3)\) (which is not simply connected) and a non-trivial principal bundle over \(\Sigma\). In this case, the character expansion of the partition function requires some modifications with respect to \eqref{YM-char-expansion}, the result being
\begin{equation}
Z(\epsilon) = \frac{1}{(8\pi^2)^{g-1}}\sum_{n=1}^\infty \frac{(-1)^{n+1} \exp(-\pi^2 \epsilon n^2)}{n^{2g-2}} .
\end{equation}
Following the same idea as above, we look at the \((g-1)^{th}\) derivative
\begin{equation}
\frac{\partial^{g-1}Z}{\partial \epsilon^{g-1}} = \frac{(-1)^g}{8^{g-1}} \sum_{n=1}^\infty (-1)^{n} \exp(-\pi^2 \epsilon n^2) = \frac{(-1)^g}{8^{g-1}}\frac{1}{2}\left( -1 + \sum_{n\in\mathbb{Z}} (-1)^{n} \exp(-\pi^2 \epsilon n^2)\right) ,
\end{equation}
and we rewrite the sum using the Poisson summation formula, getting
\begin{equation}
\frac{\partial^{g-1}Z}{\partial \epsilon^{g-1}} = \frac{(-1)^g}{2\cdot 8^{g-1}} \left( -1 + \sqrt{\frac{1}{\pi\epsilon}} \sum_{k\in\mathbb{Z}}  \exp\left(-\frac{\left(k+1/2\right)^2}{\epsilon} \right)\right).
\end{equation}
This time we see that the contribution for \(k=0\) from the moduli space of flat connections is finite for \(\epsilon\to 0\), and the whole \(\partial^{g-1}Z / \partial \epsilon^{g-1}\) is constant up to exponentially small terms. This means that the partition function at weak coupling \(Z(\epsilon\to 0)\) is a regular polynomial of degree \(g-1\) in \(\epsilon\), up to exponentially decaying terms,
\begin{equation}
Z(\epsilon) = \sum_{m=0}^{g-2} a_m \epsilon^m + O(\epsilon^{g-1}),
\end{equation}
and it reflects the fact that, for a non-trivial \(SO(3)\)-bundle, \(\mu^{-1}(0)\) is smooth and acted on freely by \(G\). These two quick examples capture the way this localization framework can give a very geometric interpretation to the \( \epsilon \)-expansion of the partition function, and its dependence on the classical geometry of the moduli space.

Finally, we point out that an analogous treatment was done more recently in \cite{witten-CS} to analyze in this cohomological framework Chern-Simons theory on 3-dimensional \emph{Seifert manifolds}. A Seifert manifold is a smooth object that can be described as an \(\mathbb{S}^1\)-bundle over a 2-dimensional orbifold, and this feature makes it possible to dimensionally reduce the Chern-Simons theory along the direction of the circle \(\mathbb{S}^1\) to a 2-dimensional YM theory over a singular base space. It turns out that  the localization locus of the resulting theory receives contributions only from the flat connections over the total space. This is in accordance with the fact that Chern-Simons theories are by themselves TFT (of Schwarz-type). 2-dimensional YM theories have been studied extensively in the past years, and many interesting results were obtained thanks to their non-perturbative solvability. For example, exact results for Wilson loops expectation values and their relation with higher dimensional supersymmetric theories were studied in \cite{griguolo-bassetto-1998,griguolo-bassetto-2009,bassetto-2011}. A relation with certain topological string theories and supersymmetric black hole entropy computations were analyzed in \cite{caporaso-YM-topological-string}. A duality between higher-dimensional supersymmetric gauge theories and deformations of 2-dimensional YM theory was revisited in \cite{szabo-qdeformation,szabo-YMloc}. Localization techniques play an important role in all those cases.

\chapter{Conclusion}

In this thesis we reviewed and summarized the main features of the formalism of equivariant cohomology, the powerful localization theorems first introduced by Atiyah-Bott and Berline-Vergne, and the principles that allow to formally apply these integration formulas to QFT. From the physical point of view, the equivariant (or supersymmetric) localization principle gives a systematic approach to understand when the \lq\lq semiclassical\rq\rq\ approximation of the path integral, describing the partition function or an expectation value in QFT, can give an exact result for the full quantum dynamics. We discussed the applicability of these techniques in the context of \emph{supersymmetric} theories. These are characterized by a space of fields that is endowed with a graded structure and the presence of some symmetry operator whose \lq\lq square\rq\rq\ gives a standard \lq\lq bosonic\rq\rq\ symmetry of the action functional. This supersymmetry operator is interpreted as a differential acting on the subspace of symmetric configurations in field space, and its cohomology describes the field theoretical analog of the \(G\)-equivariant cohomology of a \(G\)-manifold. 

After having introduced the general features of the mathematical theory of equivariant cohomology and equivariant localization, we reviewed the concepts in supergeometry that allow for the construction of supersymmetric QFT, and that constitute the correct framework to translate the mathematical theory in the common physical language. Since many recent applications of the localization principle aimed at the computations of path integrals in supersymmetric QFT on curved spaces, we included a discussion of the main tools needed to define supersymmetry in such instances. Then we collected  some examples from the literature of application of the supersymmetric localization principle to path integrals in QFT of diverse dimensions. The common feature of these examples is that, via a suitable \lq\lq cohomological\rq\rq\ deformation of the action functional, it is possible to reduce the infinite-dimensional path integral to a finite-dimensional one that represents its semiclassical limit, as stressed above. We described cases in which this reduction relates the partition function to topological invariants of the geometric structure underlying the theory, namely the cases of supersymmetric QM (a 1-dimensional QFT) and the weak coupling limit of 2-dimensional Yang-Mills theory (its \lq\lq topological\rq\rq\ limit). We also reviewed the more recent applications to the computations of the expectation values of supersymmetric Wilson loops in 3- and 4-dimensional gauge theories, namely Supersymmetric Chern-Simons theory and Supersymmetric Yang-Mills theory defined on the 3- and the 4-sphere. In these cases, the path integral results to be equivalently described by some 0-dimensional QFT with a Lie algebra as target space, called \lq\lq matrix model\rq\rq.

In the last few decades, the literature concerning the applications of supersymmetric localization has grown exponentially, and many other advanced examples of its use in the physics context have been found. From the point of view of supersymmetric QFT, a consistent slice of the state-of-the-art on the subject can be found in \cite{LocQFT}, including computations analogue to the one we showed for Wilson loop expectation values or topological invariants over more complicated geometries. From the point of view of Quantum (Super)Gravity, these techniques have found applications in the computations of the Black Hole quantum entropy \cite{Dabholkar_2011,zaffaroni2019lectures}. In many circumstances, localization allows for the analysis of properties of QFT at strong coupling, an otherwise prohibited region of study with conventional perturbative techniques. This feature can be used also to test a class of conjectural dualities between some types of gauge theories and string theories, the so-called AdS/CFT correspondences \cite{zarembo-ads/cft-loc}. Concerning the subject of Wilson loops in 3-dimensional Chern-Simons theories and their relations to matrix models and holography, for which localization has played an important role, a recent reference that concisely reviews the state-of-the-art is \cite{roadmap-wilson-loops}.

\appendix

\chapter{Some differential geometry}
\label{app:diff_geo}

\section{Principal bundles, basic forms and connections}
\label{app:principal-bundles}

Here we recall some notions about principal bundles that can be useful to follow the discussion, especially of the first chapters of this thesis. Principal bundles are the geometric construction behind the concepts of \emph{covariant derivatives} and \emph{connections} in gauge theory or General Relativity, for example. If \(G\) is a Lie group, a principal \(G\)-bundle is a smooth bundle \(P\xrightarrow{\pi} M\) such that
\begin{enumerate}[label=(\roman*)]
\item \(P\) is a (right) \(G\)-manifold;
\item the \(G\)-action on \(P\) is \emph{free};
\item as a bundle, \(P\to M\) is isomorphic to \(P\to P/G\), where the projection map is canonically defined as \(p\mapsto [p]\).
\end{enumerate}
Notice that since the \(G\)-action is free, a principal \(G\)-bundle is a fiber bundle with typical fiber \(G\), and by the third property it is at least \emph{locally trivial}, \textit{i.e.}\ over every open set \(U\subseteq M\) it looks like \(G\times U\). Morphisms of principal bundles are naturally defined as maps between bundles that preserve the \(G\)-structure, so \(G\)-equivariant maps. A principal bundle is \emph{trivial} if it is isomorphic through a principal bundle isomorphism to the trivial product bundle \(G\times M\). A useful fact is that the triviality of a principal bundle is completely captured by the existence of a \emph{global} section \(\sigma:M\to P\) such that \(\pi\circ \sigma = id_M\). Since every principal bundle is locally trivial, than local sections can always be chosen and they constitute a so-called \emph{local trivialization of the bundle}.

The main example of principal bundle that occurs in the geometric construction of spacetime is the \emph{frame bundle} \(LM\) over some \(n\)-dimensional smooth manifold \(M\). At every point \(p\in M\), the elements of the fiber \(L_pM\) are the \emph{frames} at \(p\), \textit{i.e.}\ all the possible bases \(e=(e_1,\cdots,e_n)\) for the tangent space \(T_pM\). \(LM\) has a natural \(GL(n,\mathbb{R})\) right action that corresponds to the rotation of the basis, \(e\cdot g := (e_k g^k_1,\cdots, e_k g^k_n)\) for \(g\in GL(n,\mathbb{R})\). In gauge theories, the \emph{structure group} (or sometimes \emph{gauge group}) \(G\) of the theory is the Lie group acting on the right on a principal \(G\)-bundle.

Since the fibers of the principal bundle are essentially the Lie group \(G\), tangent vectors on \(P\) can come from its Lie algebra \(\mathfrak{g}\). This leads to the following definition.
\begin{defn}
The \emph{vertical sub-bundle} \(VP\) of the tangent bundle \(TP\) is the disjoint union \[
VP := \bigsqcup_{p\in P} V_p P , \qquad \text{with} \
V_p P := \mathrm{Ker}(\pi_{*p}) = \lbrace X\in T_pP | \pi_*(X) = 0 \rbrace \subset T_p P . \]
Analogously but for differential forms, the \emph{basic forms} inside \(\Omega(P)\) are those forms \( \omega \in \mathrm{Im}(\pi^*)\), so that it exists an \(\alpha\in \Omega(M)\) such that \(\omega = \pi^* \alpha\). The space of basic forms is denoted \(\Omega(P)_{bas}\).
\end{defn}
The vertical vectors in every \(V_p P\) are in one to one correspondence with the Lie algebra elements in \(\mathfrak{g}\), through the Lie algebra homomorphism
\begin{equation}
X\in \mathfrak{g} \mapsto \underline{X} := \left.\frac{d}{dt}\right|_{t=0} \left(\cdot e^{tX}\right)^*
\end{equation}
that maps Lie algebra elements to the corresponding \emph{fundamental vector fields}.\footnote{The choice of the sign at the exponential differs from the one in \eqref{eq:fundam-vf} because here we are considering a right action.} Fundamental vector fields satisfy the following properties:
\begin{enumerate}[label=(\roman*)]
\item \([\underline{X},\underline{Y}] = \underline{[X,Y]}\);
\item the integral curve of \(\underline{X}\) through \(p\in M\) is 
\begin{equation}
\begin{aligned}
     \gamma_p :  \mathbb{R} &\to M \\
                 t &\mapsto \gamma_p (t) = p \cdot e^{tX} ;
     \end{aligned}
\end{equation}
\item denoting with \(r_g\) the right action of \(g\in G\), 
\begin{equation}
\left(r_g\right)_* (\underline{X}_p) = \left( \underline{Ad_{g^{-1}*}(X)} \right)_{r_g(p)} .
\end{equation}
\end{enumerate}

As for the vertical vector fields being encoded in the Lie algebra \(\mathfrak{g}\), also the basic forms can be characterized in terms on the (infinitesimal) action of \(\mathfrak{g}\) on \(\Omega(P)\). This can be seen introducing the following definitions.
\begin{defn}
A differential form \(\omega\in \Omega(P)\) is said to be \emph{\(G\)-invariant} if it is preserved by the \(G\) action: \[
\omega = (r_g)^* \omega \qquad \forall g\in G.\]
The space of \(G\)-invariant forms is commonly denoted \(\Omega(P)^G\). A differential form is called \emph{horizontal} if it is annihilated by vertical vector fields,
\[ \iota_X \omega = 0 \qquad \forall X\in \Gamma(VP).\]
\end{defn}
The properties of being invariant and horizontal can be also stated infinitesimally with respect to the action of the Lie algebra \(\mathfrak{g}\). If we define
\begin{equation}
\mathcal{L}_X := \mathcal{L}_{\underline{X}} , \qquad \iota_X := \iota_{\underline{X}} , \qquad \forall X\in \mathfrak{g},
\end{equation}
then an invariant form is characterized by \(\mathcal{L}_X \omega = 0\) for every \(X\in \mathfrak{g}\), and a horizontal form by \(\iota_X \omega = 0\) for every \(X\in \mathfrak{g}\). This makes the concepts of invariant and horizontal elements independent from the principal bundle structure, so that they can be defined by this characterization for every \(\mathfrak{g}\)-dg algebra, as in Section \ref{sec:Weil-model}. Also basic forms can be defined for every \(\mathfrak{g}\)-dg algebra, combining the definitions of invariant and horizontal forms, thanks to the following theorem:
\begin{thm}[Characterization of basic forms]
Schematically,
\[ \text{invariant}+\text{horizontal} \Leftrightarrow \text{basic}.\]
\end{thm}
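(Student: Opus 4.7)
My plan is to prove the two implications separately, with the forward direction (basic implies invariant and horizontal) being essentially immediate, and the reverse direction (invariant plus horizontal implies basic) requiring a construction.

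For the easy direction, I would start from $\omega = \pi^* \alpha$ with $\alpha \in \Omega(M)$ and verify the two properties directly. Invariance follows from the fact that the projection is $G$-equivariant with trivial action on the base, so $\pi \circ r_g = \pi$ and hence $(r_g)^* \omega = (r_g)^* \pi^* \alpha = (\pi \circ r_g)^* \alpha = \pi^* \alpha = \omega$. Horizontality follows because for any $X \in \Gamma(VP)$ we have $\pi_* X = 0$ by definition of vertical, so $(\pi^*\alpha)(X, \cdot, \ldots, \cdot)$ vanishes at the first slot, giving $\iota_X \omega = 0$.

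The substantive direction is the converse, and here I would construct $\alpha \in \Omega(M)$ pointwise. Given $m \in M$ and tangent vectors $v_1, \ldots, v_k \in T_m M$, I would pick any $p \in \pi^{-1}(m)$ and any lifts $\tilde{v}_i \in T_p P$ with $\pi_* \tilde{v}_i = v_i$ (such lifts exist because $\pi_*$ is a surjection onto $T_m M$, its kernel being $V_p P$), and set
\begin{equation*}
\alpha_m(v_1,\ldots,v_k) := \omega_p(\tilde{v}_1,\ldots,\tilde{v}_k).
\end{equation*}
The two things to check for well-definedness are independence of the choice of lift and independence of the choice of basepoint $p$ in the fiber. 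The former will use horizontality: any two lifts of the same $v_i$ differ by an element of $V_p P$, and $\omega$ vanishes whenever a vertical entry is inserted. The latter will use $G$-invariance: a different choice of basepoint $p' = p \cdot g$ lets us take as lifts the push-forwards $(r_g)_* \tilde{v}_i$ (which still project to $v_i$ since $\pi \circ r_g = \pi$), and then $\omega_{p'}((r_g)_*\tilde{v}_i) = ((r_g)^*\omega)_p(\tilde{v}_i) = \omega_p(\tilde{v}_i)$ by invariance.

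Finally, I would verify smoothness of $\alpha$ and the relation $\omega = \pi^* \alpha$. Both are most efficiently handled by working in a local trivialization: pick a local section $\sigma: U \to P$ of the principal bundle (which exists since $P$ is locally trivial over any sufficiently small $U \subseteq M$), and observe that by construction $\alpha|_U = \sigma^* \omega$, which is manifestly smooth. The identity $\omega = \pi^* \alpha$ on $\pi^{-1}(U)$ then follows because both sides agree on any frame obtained by lifting vectors at $\sigma(m)$, and both sides are determined by their values on lifts of base vectors, using again horizontality and invariance to extend the equality across the entire fiber. The main conceptual obstacle is just keeping the bookkeeping of lifts and their equivariance straight; no subtle analytic point intervenes.
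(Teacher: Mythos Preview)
Your proof is correct and follows essentially the same approach as the paper: both directions match (invariance from $\pi\circ r_g=\pi$, horizontality from $\pi_*$ killing vertical vectors; the converse by defining $\alpha$ pointwise via lifts and checking well-definedness from horizontality and invariance). You even go slightly further than the paper by explicitly addressing smoothness via a local section and verifying $\omega=\pi^*\alpha$, which the paper leaves implicit.
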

\begin{proof} For notational convenience only, let us consider 1-forms.
\begin{itemize}
\item[\((\Leftarrow)\)] If \(\omega=\pi^*\alpha\) is basic, then
\[ (r_g)^*\omega = (r_g)^*\pi^*\alpha = (\pi \circ r_g)^* \alpha = \omega ,\] 
since the principal bundle is locally trivial. So \(\omega\) is also invariant. For a vertical vector \(X\in VP\), \[
\iota_X \omega = (\pi^*\alpha)(X) = \alpha (\pi_*(X)) = 0 .\] 
So \(\omega\) is also horizontal.
\item[\((\Rightarrow)\)] Let \(\omega\in \Omega^1(P)\) be horizontal and invariant. Since \(\pi\) is surjective, for every vector \(X\in T_pP\) there exists \(Y=\pi_*(X)\in T_{\pi(p)}M\). We can define \(\alpha\in \Omega^1(M)\) such that, at every \(x\in M\) \[
\alpha_{x}(Y) := \omega_p(X) \qquad \text{for} \ p\in \pi^{-1}(x),\]
and thanks to the horizontality and invariance of \(\omega\) we can check that this form is well defined, \textit{i.e.}\ independent from the choice of point \(p\) in the fiber \(\pi^{-1}(x)\) and from the choice of vector \(X\) such that \(\pi_*(X)=Y\). In fact, if \(X'\in T_p P\) is another vector such that \(\pi_*(X')=Y\), then \(\pi_*(X-X')=0\) so \((X-X')\in V_p P\). By horizontality, \(\omega(X-X')=0 \Rightarrow \omega(X)=\omega(X')\), so \(\alpha\) is independent from the choice of vector. Moreover, if \(p'\in \pi^{-1}(x)\) is another point in the fiber, there exists a \(g\in G\) such that \(r_g(p) = p'\), so by \(G\)-invariance \(\omega_{p'} = \omega_p\) and thus \(\alpha\) is independent from the choice of point in the fiber.
\end{itemize}
\end{proof}

\begin{prop} The differential \(d\) closes on the subspace \(\Omega(P)_{bas}\) of basic forms, defining a proper subcomplex. This extends to any \(\mathfrak{g}\)-dg algebra. \end{prop}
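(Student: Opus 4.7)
My plan is to verify the two defining conditions of being basic, namely invariance and horizontality, for $d\omega$ whenever $\omega$ is basic. Since the infinitesimal characterization of \textit{basic} in terms of $\mathcal{L}_X$ and $\iota_X$ makes sense on any $\mathfrak{g}$-dg algebra, I will carry out the argument directly in that abstract setting; specializing to $\Omega(P)$ with the Lie derivative and interior multiplication induced by the fundamental vector fields then gives the principal-bundle statement for free.

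First I would handle invariance. Given $\omega \in \Omega(P)_{bas}$, I want $\mathcal{L}_X(d\omega) = 0$ for every $X \in \mathfrak{g}$. The key input is that $d$ and $\mathcal{L}_X$ commute. This can be derived directly from the three $\mathfrak{g}$-dg axioms: from Cartan's magic formula $\mathcal{L}_X = d\iota_X + \iota_X d$ and $d^2 = 0$, one computes $[d,\mathcal{L}_X] = d(d\iota_X + \iota_X d) - (d\iota_X + \iota_X d)d = d\iota_X d - d\iota_X d = 0$. Then $\mathcal{L}_X(d\omega) = d(\mathcal{L}_X \omega) = d(0) = 0$, so $d\omega$ is invariant.

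Next I would handle horizontality. I want $\iota_X(d\omega) = 0$ for every $X \in \mathfrak{g}$. This is an immediate consequence of Cartan's magic formula applied to $\omega$:
\begin{equation*}
\iota_X(d\omega) = \mathcal{L}_X \omega - d(\iota_X \omega).
\end{equation*}
The first term on the right vanishes because $\omega$ is invariant, and the second vanishes because $\omega$ is horizontal. Hence $d\omega$ is horizontal.

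Combining the two steps, $d\omega$ is basic, so $d$ restricts to a map $\Omega(P)_{bas} \to \Omega(P)_{bas}$. Nilpotency $d^2 = 0$ is inherited from the ambient complex, so $(\Omega(P)_{bas}, d)$ is a genuine subcomplex. I do not anticipate a real obstacle here: the whole argument is a two-line manipulation of the $\mathfrak{g}$-dg axioms, and the only thing to be mildly careful about is that the same three axioms are what we used in Section~\ref{sec:Weil-model} to define a $\mathfrak{g}$-dg algebra, which is exactly why the proof extends verbatim from $\Omega(P)$ to the abstract setting (in particular to $W(\mathfrak{g})$ and to the tensor products appearing in the Weil and Cartan models).
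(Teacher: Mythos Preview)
Your proof is correct and follows essentially the same route as the paper: both use $[d,\mathcal{L}_X]=0$ to get invariance and Cartan's magic formula $\iota_X d\omega = \mathcal{L}_X\omega - d\iota_X\omega$ to get horizontality. Your explicit derivation of $[d,\mathcal{L}_X]=0$ from the axioms is a nice bit of extra detail that the paper only states, and your remark that the argument works verbatim in any $\mathfrak{g}$-dg algebra is exactly the extension the statement calls for.
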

\begin{proof}
Consider \(\omega\in \Omega(P)_{bas}\), and its differential \(d\omega\). We characterize basic forms by being horizontal and \(G\)-invariant. By Cartan's magic formula the Lie derivative commutes with the differential, so for every \(X\in\mathfrak{g}\), \(\mathcal{L}_X d\omega=d(\mathcal{L}_X \omega) = 0\). Thus \(d\omega\) is still \(G\)-invariant. Also, \(\iota_X d\omega = \mathcal{L}_X\omega - d \iota_X \omega = 0\). Thus \(d\omega\) is still horizontal, and so basic.
\end{proof}

We recall now the definition of connection and curvature on principal bundles, from which one inherits covariant derivatives on associated vector bundles.
\begin{defn} An \emph{(Ehresmann) connection} on a principal \(G\)-bundle \(P\to M\) is an \emph{horizontal distribution} \(HP\), \textit{i.e.}\ a smooth choice at every point \(p\in P\) of vector subspaces \(H_pP\subset T_pP\) such that
\begin{enumerate}[label=(\roman*)]
\item \(T_p P = V_pP \oplus H_pP\);
\item \((r_g)_* (H_p P) = H_{p\cdot g}P\) (\(G\)-equivariance of the horizontal projection).
\end{enumerate}
Given a horizontal distribution \(HP\), every vector \(X\in T_pP\) decomposes into an \emph{horizontal} and a \emph{vertical} part, \[
X = hor(X) + ver(X) .\]
A \emph{connection 1-form} on \(P\to M\) is Lie algebra-valued 1-form \(A\in \Omega^1(P)\otimes\mathfrak{g})\) such that
\begin{enumerate}[label=(\roman*)]
\item for any \(X\in \mathfrak{g}\), \(\iota_X A = A(\underline{X}) = X\) (\emph{vertical} 1-form);
\item for any \(g\in G\), \((r_g)^*A = (Ad_{g^{-1}*}\circ A)\) (\(G\)-equivariance).
\end{enumerate}
\end{defn}
The choice of a horizontal distribution is equivalent to the choice of a connection 1-form on \(P\), since at every \(p\in P\) one can use \(A\) as a projection onto the vertical subspace \(V_pP\cong \mathfrak{g}\), and \(\pi_*\) as a projection onto the horizontal subspace, identifying \(V_p P := \mathrm{Ker}(\pi_{*p})\) and \(H_p P := \mathrm{Ker}(A_p)\). This choice is smooth  and \(G\)-equivariant since \(A\) is, by definition. Notice that the splitting \(HP\oplus VP\) induces a splitting \(\Omega^1(P) = \Omega^1_{hor}(P) \oplus \Omega^1_{ver}(P)\), and that we can identify the \lq\lq space of connection 1-forms\rq\rq\ as
\begin{equation}
\mathcal{A}(P) := \lbrace A\in (\Omega^1_{ver}(P)\otimes \mathfrak{g}) | A\ \text{is }G\text{-equivariant}\rbrace.
\end{equation}
It is easy to see that for every \(A,A'\in \mathcal{A}(P)\), their difference is not a connection, and in fact it is an \emph{horizontal} \(\mathfrak{g}\)-valued 1-form,
\begin{equation}
(A-A')\in \mathfrak{a}:= \lbrace a\in (\Omega^1_{hor}(P)\otimes \mathfrak{g}) | a\ \text{is }G\text{-equivariant}\rbrace .
\end{equation}
This means that every connection \(A\) can be written as another connection \(A'\) plus a horizontal form, or in other words that \((\mathcal{A}(P),\mathfrak{a})\) can be seen as a natural \emph{affine space}, modeled on the infinite-dimensional vector space \(\mathfrak{a}\). As for any affine space, one can think of the space of connections as an infinite-dimensional smooth manifold, with tangent spaces \(T_A\mathcal{A}(P) \cong \mathfrak{a}\) at every \(A\in \mathcal{A}(P)\).

\begin{defn}
The \emph{covariant exterior derivative} on \(\Omega(P)\) is \(D:= d\circ hor^*\). The \emph{curvature} of a connection 1-form \(A\) is \[
F := DA = dA(hor(\cdot),hor(\cdot)) \in \Omega^2(P)\otimes \mathfrak{g} .\]
\end{defn}
The curvature \(F\) satisfies the following properties:
\begin{enumerate}[label=(\roman*)]
\item by definition, \(F\) is horizontal: \(\iota_X F=0\) for every \(X\in \mathfrak{g}\);
\item by \(G\)-equivariance of \(A\), \(F\) is \(G\)-equivariant too;
\item it obeys the structural equation 
\begin{equation}
\label{appeq-1}
F = dA + \frac{1}{2}[A\stackrel{\wedge}{,}A]
\end{equation}
where \([A\stackrel{\wedge}{,}A] = f^a_{bc} A^b\wedge A^c \otimes T_a\) with respect to a basis \(\lbrace T_a\rbrace\) of \(\mathfrak{g}\) and the structure constants \(f^a_{bc}\);
\item it obeys the second Bianchi identity,
\begin{equation}
DF = 0 \qquad \text{or} \qquad dF = [F\stackrel{\wedge}{,}A] .
\end{equation}
\end{enumerate}

One can consider the very trivial construction of a principal \(G\)-bundle as \(G\to pt\), where \(P=G\times pt \cong G\). Here the right \(G\)-action is simply the diagonal action (trivial on \(pt\), induced by the natural action on \(G\)). On this bundle there is a canonical choice of connection 1-form, the \emph{Maureer-Cartan (MC) form} \(\Theta\in \Omega^1(G)\otimes \mathfrak{g}\). For every vector  \(X\in T_g G\) at some \(g\in G\), there is a Lie algebra element \(A\in T_e G\cong\mathfrak{g}\) such that \(X = l_{g*}(A)\), and the MC form is defined by
\begin{equation}
\Theta_g(X) := l_{g^{-1}*} (X) = A .
\end{equation}
One can check that this form is indeed \(G\)-equivariant, and it is obviously vertical, giving a connection 1-form. Moreover it satisfies the \emph{Maurer-Cartan equation}
\begin{equation}
d\Theta + \frac{1}{2}[\Theta\stackrel{\wedge}{,}\Theta] = 0,
\end{equation}
so that by \eqref{appeq-1} we see that its curvature is zero. On the trivial principal \(G\)-bundle \(G\times M \to M\) one can always define a connection 1-form by pulling back the MC connection along the projection \(\pi_1: G\times M \to G\). In the general case, the principal bundle \(P\) is locally trivial, so in any local patch \(G\times U_\alpha\to U_\alpha\) one can pull back the MC connection and use a suitable partition of unity to glue together the local pieces to a global connection 1-form on \(P\). This shows that any principal bundle allows for a connection. The curvature \(F\) of the chosen connection \(A\) measures, in a sense, the deviation of \(A\) from being the Maurer-Cartan connection.

\medskip
As said before, a connection on a principal \(G\)-bundle allows for the definition of a covariant derivative on \emph{associated vector bundles}. An associated vector bundle to the principal \(G\)-bundle \(P\xrightarrow{\pi} M\) is a vector bundle constructed over \(M\) with some typical fiber \(V\) (a vector space) that has a (left) \(G\)-action compatible with the one on \(P\). Precisely, the associated bundle is \(P_V \xrightarrow{\pi_V} M\), where
\begin{equation}
\begin{aligned}
&P_V := \faktor{(P\times V)}{\sim_G} \qquad \text{with}\ (p,v) \sim_G (p\cdot g, g^{-1}\cdot v)\ \forall (p,v) \in P\times V, g\in G, \\
&\pi_V ([p,v]) := \pi(p) ,
\end{aligned}
\end{equation}
and it has indeed typical fiber \(V\). In the case of the frame bundle \(P=LM\), one can construct the tangent bundle \(TM\), the cotangent bundle \(T^*M\) and all the tensor bundles as associated to \(LM\). In fact, for the tangent bundle for example, the typical fiber is \(V:= \mathbb{R}^n\) and the \(GL(n,\mathbb{R})\)-action is \((g\cdot v)^k = g^k_j v^j\). This encodes the change of basis rule if we see vectors as elements \([e,v]\in LM_{\mathbb{R}^n}\),
\begin{equation}
[e,v] \equiv  e_k  v^k  \sim_{GL} [e\cdot g, g^{-1}\cdot v] \equiv e_j g^j_k (g^{-1})^k_j v^j = e_k  v^k .
\end{equation}
On the associated vector bundle, a \emph{field} (in physics terms) is a (local, at least) section \(\phi:M\to P_V\), that can be always seen locally as a \(V\)-valued function on every \(U\subseteq M\), \(\tilde{\phi}:U\to V\), so that \(\phi(x) = [p,\tilde{\phi}(x)]\) for some chosen \(p\in \pi^{-1}(x), x\in U\).
Another example of this concept that came up in Chapter \ref{cha:equivariant cohomology} is the \emph{homotopy quotient} \(M_G := (M\times EG)/G\) of a \(G\)-manifold \(M\). This is precisely the associated bundle with fiber \(V=M\) to the principal \(G\)-bundle \(EG\to BG\) (however, this is not an associated \emph{vector} bundle, since \(M\) is not a vector space in general).

As we said at the beginning of this appendix, every principal bundle is locally trivial, so that it exists a set of \emph{local trivializations} \(\lbrace U_\alpha, \varphi_\alpha:\pi^{-1}(U_\alpha) \to U_\alpha\times G\rbrace\), where \(\lbrace U_\alpha\rbrace\) covers \(M\), and \(\varphi_\alpha\) is \(G\)-equivariant. This means that \(\varphi_\alpha(p) = (\pi(p), g_\alpha(p))\) for some \(G\)-equivariant map \(g_\alpha :\pi^{-1}( U_\alpha)\to G\),\footnote{In this case \(G\)-equivariance means \(g_\alpha(p\cdot h)=g_\alpha(p)\cdot h\).} that makes every fiber diffeomorphic to \(G\). To this local trivialization, one can canonically associate a family of local sections \(\lbrace \sigma_\alpha : U_\alpha\to \pi^{-1}(U_\alpha)\rbrace\), determined by the maps \(\varphi_\alpha\) so that for every \(m\in U_\alpha\), \(\varphi_\alpha(\sigma_\alpha(m)) = (m,e)\), where \(e\in G\) is the identity element. In other words, \(g_\alpha\circ \sigma_\alpha: U_\alpha\to G\) is the constant function over the local patch \(U_\alpha\subseteq M\) that maps every point to the identity. Conversely, a local section \(\sigma_\alpha\) allows us to identify the fiber over \(m\) with \(G\). Indeed, given any \(p\in \pi^{-1}(m)\), there is a unique group element \(g_\alpha(p)\in G\) such that \(p = \sigma_\alpha(m)\cdot g_\alpha(p)\). Using these canonical local data, the connection \(A\) and the curvature \(F\) can be pulled back on \(M\) giving the local \emph{gauge field} \(A^{(\alpha)} := \sigma_\alpha^*(A)\) and \emph{field strength} \(F^{(\alpha)} := \sigma_\alpha^*(F)\). The \emph{covariant derivative} along the tangent vector \(X\in TM\) of a local \(V\)-valued function \(\tilde{\phi}:U_\alpha \to V\) is defined as
\begin{equation}
\nabla_X \tilde{\phi} := d\tilde{\phi}(X) + A^{(\alpha)}(X) \cdot \tilde{\phi} ,
\end{equation}
where the second term denotes the action of the Lie algebra on \(V\), that for matrix groups coincides with the action of \(G\). We denote schematically the covariant derivative as \(\nabla = d+ A\) on a generic associated vector bundle. When \(V=\mathfrak{g}\) we have the so-called \emph{adjoint bundle}, often denoted \(\text{ad}(P)\), that is in one-to-one correspondence with the space \(\mathfrak{a}\) above, of horizontal and \(G\)-equivariant Lie algebra-valued forms on \(P\). On this special associated bundle, the covariant derivative acts with the infinitesimal adjoint action  of \(\mathfrak{g}\),
\begin{equation}
\nabla = d + [A,\cdot] .
\end{equation}
By the horizontal property of the curvature, we see that \(F\) can be regarded as a 2-form on \(M\) with values in \(\text{ad}(P)\).\footnote{Being horizontal means that pulling it back on the base space, we do not lose information on the 2-form. In fact, the local representation of the curvature \(F^{(\alpha)}=\sigma_\alpha^*(F)\) still transforms covariantly also as a \(\mathfrak{g}\)-valued 2-form over \(M\). Strictly speaking, the covariant derivative on the adjoint bundle acts on this local representative. Notice that the gauge field \(A^{(\alpha)}\) instead looks only locally as an element of the adjoint bundle, but globally it does not respect the \lq\lq right\rq\rq\ transformation property, and indeed it comes from a global 1-form on \(P\) that is not horizontal, but vertical.} Then the Bianchi identity can be rewritten in terms of the covariant derivative,
\begin{equation}
\nabla F = dF + [A,F] = [F,A] + [A,F]=0 .
\end{equation}

As the last piece of information, we recall the meaning of \emph{gauge transformations} from the perspective of the principal bundle. Locally, we can think of them as local actions of the gauge group \(G\), so that a gauge transformation is a map that associates to every point \(x\in M\) an element \(g(x)\in G\), acting on the local field strength in the adjoint representation. At the level of the principal bundle, this can be viewed more formally defining the group \(\mathcal{G}(P)\subset\mathrm{Diff}(P)\) of principal bundle maps of the type\footnote{As  a principal bundle map it is by definition \(G\)-equivariant, \(\Psi(p\cdot g) = \Psi(p)\cdot g\), and it commutes with the projection, \(\pi(\Psi(p))=\pi(p)\), for every \(p\in P\).}
\begin{equation}
\begin{tikzcd}[column sep=small]
P \arrow[rr,"\Psi"] \arrow{rd}[swap]{\pi} && P \arrow[dl,"\pi"] \\
 & M .
\end{tikzcd}
\end{equation}
We notice right-away that, from the local point of view, this can indeed be identified with the space of sections \(\Omega^0\left(M;\mathrm{Ad}(P)\right)\) of the bundle \(\mathrm{Ad}(P)\),  associated to \(P\) with typical fiber \(G\) and \(G\)-action defined by conjugation (the adjoint representation of \(G\) on itself).\footnote{Notice that this looks like \(P\to\Sigma\) as a fiber bundle, since both are locally trivial with fiber \(G\). It is only the \(G\) action that distinguishes them. On \(P\) we have a right action, on \(\mathrm{Ad}(P)\) we have the left action on the fibers \(g\cdot f := g f g^{-1}\).}

\begin{proof}[Proof of \(\mathcal{G}(P)\cong \Omega^0(M;\mathrm{Ad}(P))\)]
We can see that associated to every element \(\Psi\in \mathcal{G}(P)\) there is a unique class of local sections \(\lbrace \psi_\alpha :U_\alpha \to G\rbrace\) that transforms in the adjoint representation, and vice versa.
\begin{itemize}
\item[\((\Rightarrow)\)] In every local patch \(U_\alpha\), let us define the map \(\tilde{\psi}_\alpha:\pi^{-1}(U_\alpha)\to G\) such that \[ \tilde{\psi}_\alpha(p):= g_\alpha(\Psi(p))\ g_\alpha(p)^{-1},\]
where \(g_\alpha\) is the trivialization map inside \(U_\alpha\). By equivariance of \(\Psi\) and \(g_\alpha\), \(\tilde{\psi}_\alpha\) is \(G\)-invariant, so it depends only on the base point \(\pi(p)\). Thus we can define \(\psi_\alpha :U_\alpha\to G\) such that \[ \psi_\alpha(x) := \tilde{\psi}_\alpha(p) \qquad \text{for some}\ p\in \pi^{-1}(x) .\]
In changing local patch, this transforms in the adjoint representation. In fact, if \(x\in U_\alpha \cap U_\beta\) \[
\begin{aligned}
\psi_\beta(x) &= g_\beta(\Psi(p))\ g_\beta(p)^{-1} \\
&= \left[ g_\beta(\Psi(p))  g_\alpha(\Psi(p))^{-1}\right] g_\alpha(\Psi(p)) g_\alpha(p) \left[ g_\alpha(p)^{-1} g_\beta(p)\right] \\
&= g_{\alpha\beta}(x) \psi_\alpha(x) g_{\alpha\beta}(x)^{-1} ,
\end{aligned}
\]
where in the last passage we recognized that \( g_\beta(p)g_\alpha(p)^{-1}\) is \(G\)-invariant and thus can be written as a map \(g_{\alpha\beta}:U_\alpha \cap U_\beta\to G\) that depends only on the base point \(\pi(p)\), and we used that \(\pi\circ \Psi = \pi\).

\item[\((\Leftarrow)\)] Starting from a class of local sections \(\lbrace \psi_\alpha\rbrace\), we define the \(G\)-invariant maps \(\tilde{\psi}_\alpha := \psi_\alpha\circ \pi\). Then we can obtain \(\Psi\) by \lq\lq inverting\rq\rq\ the above definition in every patch and gluing them together,
\[ \Psi(p) := \sigma_\alpha(p) \cdot \left( \tilde{\psi}_\alpha(p) g_\alpha(p) \right). \]
\end{itemize}
\end{proof}

The group of gauge transformations \(\mathcal{G}(P)\) acts naturally on the space \(\mathcal{A}(P)\) via pull-back,
\begin{equation}
\Psi\cdot A := \Psi^*(A) \qquad \forall A\in \mathcal{A}(P),\Psi\in \mathcal{G}(P).
\end{equation}
If we consider the local gauge field \(A^{(\alpha)}\), one can prove that the trivialization of the gauge-transformed connection follows the usual rule
\begin{equation}
(\Psi\cdot A)^{(\alpha)} = \psi_\alpha^{-1} A^{(\alpha)} \psi_\alpha + \psi^{-1}_\alpha (d\psi_\alpha).
\end{equation}
From this local expression, it is easy to find the representation of \(Lie(\mathcal{G}(P))\) on \(T\mathcal{A}(P)\). In fact, writing \(\Psi\) as \(\exp(X)\) for some \(X\in Lie(\mathcal{G}(P))\cong \Omega^0(M;\mathrm{ad}(P))\), we can recognize the associated fundamental vector field as
\begin{equation}
\underline{X}_A = \left.\frac{d}{dt}\right|_{t=0} e^{-tX} \cdot A = dX + [A,X] = \nabla^A X .
\end{equation}
This make us see the usual \lq\lq infintesimal variation\rq\rq\ \(\delta_X A\) as a tangent vector \(\delta_X A \equiv \underline{X}_A \in T_A \mathcal{A}(P)\) at the point \(A\in \mathcal{A}(P)\).

\section{Spinors in curved spacetime}
\label{app:spin-geom}

In QFT, \emph{fermionic} particles are described geometrically by \emph{spinors}, \textit{i.e.}\ fields that transform under the Lorentz algebra in representations whose angular momentum is \emph{half-integer}. At the level of Lie groups, they transform thus in representations of the double-cover of the rotation group of spacetime, \(SO(1,d-1)\) (or \(SO(d)\) in the Euclidean case),
\begin{equation}
SO(1,d-1) \cong \faktor{Spin(1,d-1)}{\mathbb{Z}_2} .
\end{equation}
For simplicity, let us denote the dimension by \(d\) for the rest of the section, since the discussion is valid both for the Euclidean and the Lorentzian signature. In Minkowski spacetime \((\mathbb{R}^d,\eta)\), there exists a preferred class of \emph{global} coordinate systems, the global \lq\lq inertial frames\rq\rq, where the metric is diagonal
\begin{equation}
\eta_{\mu\nu} = diag(-1,+1,\cdots,+1)
\end{equation}
and that are preserved by the Lorentz transformations. Working only with such special type of coordinate systems, one can introduce and work with spinors as living in double-valued representations of the Lorentz algebra, and transforming as
\begin{equation}
\begin{aligned}
&\text{vector fields}: \qquad V^\mu \mapsto \Lambda^\mu_\nu V^\nu , \\
&\text{spinor fields}: \qquad \Psi_\alpha \mapsto S(\Lambda)^\beta_\alpha \Psi_\beta ,
\end{aligned}
\end{equation}
where, if \(\Lambda = \exp(i\omega_{\mu\nu}M^{\mu\nu})\), \(S(\Lambda) = \exp(i\omega_{\mu\nu}\Sigma^{\mu\nu})\). When we move on to the description of a generically curved spacetime \(M\), there is a priori no such choice of \lq\lq preferred\rq\rq\ coordinate systems, and a general coordinate transformation (GCT) is generated by a diffeomorphism \(M\to M\), reflecting on the tangent spaces as \(GL(d,\mathbb{R})\) basis changes. \(SO(d)\) injects as a subgroup of the General Linear group, but \(Spin(d)\) does not, since it is a double cover, so it is not clear a priori how GCTs act on spinor fields. Tensor fields are naturally present in the fully covariant formalism as fields over the manifold \(M\), but to define spinors one has to introduce further structure.

The solution to this puzzle is really to (try to) mimic the same idea applied the the Minkowski case, and employ the presence of a (pseudo-)Riemannian metric \(g\) on \(M\). On the metric manifold \((M,g)\) all the tangent bundles arise as associated bundles to the \emph{frame bundle} \(LM\), that is a principal \(GL(d,\mathbb{R})\)-bundle over \(M\). Using the presence of a metric on \(M\), one can restrict the frame bundle to a principal \(SO(d)\)-bundle, by considering only those frames \(e=(e_1,\cdots,e_d)\) such that, at a given point
\begin{equation}
g(e_i,e_j) = \eta_{ij},
\end{equation}
where \(\eta\) is the \lq\lq flat\rq\rq\ Minkowski (or Euclidean) metric. This reduction defines the so called \emph{orthonormal frame bundle} \(LM^{(SO)}\xrightarrow{\pi} M\). A section of this bundle is an orthonormal frame, or \emph{tetrad}. It is customary to denote with Latin indices the expansion of every vector field with respect to an orthonormal frame, and with Greek indices the expansion with respect to a generic (for example chart-induced) frame:\footnote{Latin indices are sometimes called \lq\lq flat\rq\rq, and Greek ones \lq\lq curved\rq\rq. If one needs to raise and lower indices, flat indices are understood to be multiplied by the diagonalized metric \(\eta_{ij}\), curved indices by \(g_{\mu\nu}\).}
\begin{equation}
V = V^i e_i = V^\mu \frac{\partial}{\partial x^\mu} \qquad \text{for}\ V\in \Gamma(TM).
\end{equation}
The choice of an orthonormal frame is encoded in the choice of a \emph{vielbein}, or \emph{solder form} on \(M\), that is a linear identification of the tangent bundle with the typical fiber \(\mathbb{R}^d\):
\begin{equation}
\begin{aligned}
E : TM &\to \mathbb{R}^d \\
 V &\mapsto E(V) := (\tilde{e}^i(V))_{i=1,\cdots,d} ,
\end{aligned}
\end{equation}
where \((\tilde{e}^i)\) is the dual frame to a chosen orthonormal frame \((e_i)\). Notice that the choice of a metric is in one to one correspondence with the choice of a vielbein, since
\begin{equation}
g(\cdot,\cdot) = \langle E(\cdot) ,E(\cdot)\rangle,
\end{equation}
where \(\langle\cdot,\cdot\rangle\) is the canonical inner product on \(\mathbb{R}^d\) with the chosen signature. In a chart-induced basis, \(g_{\mu\nu} = e^i_\mu e^j_\nu \eta_{ij}\), where we denoted the components of the vielbein \((E(\partial_\mu))^i \equiv e^i_\mu\). The \lq\lq inverse vielbein\rq\rq\ at any point is the matrix \(e^\mu_i\) such that \(e^i_\mu e^\mu_j = \delta^i_j\).

Once this orthonormal reduction is made, one can define spinor bundles as associated bundles to a principal \(Spin(d)\)-bundle, that must be compatible with the orthonormal frame bundle. This is made precise by defining the presence of a spin-structure on \(M\).
\begin{defn} A \emph{spin-structure} on \((M,g)\) is a principal \(Spin(d)\)-bundle \(Spin(M)\xrightarrow{\pi_S} M\), together with a principal bundle map\footnote{Recall that a principal bundle map by definition commutes with the projections, \(\pi(\Phi(p))=\pi_S(p)\).} \begin{equation*}
\begin{tikzcd}[column sep=small]
Spin(M)\arrow[rr,"\Phi"] \arrow{dr}[swap]{\pi_S}  && LM^{(SO)} \arrow[dl,"\pi"] \\
 & M 
\end{tikzcd}
\end{equation*}
with respect to the double-cover map \(\varphi:Spin(d) \to SO(d)\). This means that the equivariance condition is \[
\Phi(s\cdot g) = \Phi(s)\cdot \varphi(g) \qquad \forall s\in Spin(M), g\in Spin(d).\]
A section of \(Spin(M)\to M\) is called \emph{spin-frame}.
\end{defn}
We notice that the equivariance condition in this definition is just the formal requirement that spinors and tensors transform all together with compatible rotations by the action of the respective groups. Although the above restriction of the frame bundle to the orthonormal frame bundle can always be done in presence of a metric on \(M\), a spin-structure does not necessarily exist, and if it does it is not necessarily unique. There can be topological obstructions to this process that can be characterized in terms of the cohomology of \(M\).\footnote{In particular, it turns out that a spin-structure exists if and only if the second Stiefel–Whitney class of \(M\) vanishes \cite{marsh}.}

By this construction, and from the canonical Levi-Civita covariant derivative \(\nabla\) on \((M,g)\), we can induce a connection 1-form on the orthonormal frame bundle and on the spin-frame bundle, and thus have a compatible covariant derivative on associated spinor bundles. Let us recall that the Levi-Civita connection on \((M,g)\) is the unique metric-compatible and torsion free connection, \textit{i.e.}
\begin{equation}
\begin{array}{lcl}
\nabla_X g = 0 & \Leftrightarrow & X(g(Y,Z)) = g(\nabla_XY,Z)+ g(Y,\nabla_XZ) ,\\
T = 0 & \Leftrightarrow & \nabla_XY - \nabla_YX = [X,Y] .
\end{array}
\end{equation}
This covariant derivative is associated to the gauge field \(\Gamma\in \Omega^1(M)\otimes \mathfrak{gl}(n,\mathbb{R})\) such that \(\Gamma^\rho_{\mu\nu} := (\nabla_\mu (\partial_\nu))^\rho\). Simply restricting to orthonormal frames, one can induce a connection 1-form on \(LM^{(SO)}\), \(\omega\in \Omega^1(LM^{(SO)})\otimes \mathfrak{so}(d)\) such that in any trivialization induced by a local frame \((U\subset M, e:U\to LM^{(SO)})\) the gauge field has components
\begin{equation}
\omega(X)^i_j := (\nabla_X (e_i))^j = X^\mu e^j_\nu (\nabla_\mu e_i)^\nu = X^\mu e^j_\nu\left( \partial_\mu e^\nu_i + \Gamma^\nu_{\mu\sigma} e^\sigma_i \right)  \quad \text{or} \quad \omega(X)_{ij} = g(\nabla_X e_i,e_j) ,
\end{equation}
and it can be written as \(\omega^{(U)}:= e^*\omega = \frac{1}{2} \omega_{ij} M^{ij}\), where \(M^{ij}\) are the generators of \(\mathfrak{so}(d)\). Given a spin-structure as in the above definition, we can induce a \emph{compatible spin-connection} \(\tilde{\omega}\in \Omega^1(Spin(M))\otimes \mathfrak{so}(d)\) by pulling back \(\omega\), \(\tilde{\omega} := \Phi^*\omega\).\footnote{Notice that \(Lie(Spin(d))\cong Lie(SO(d))\cong \mathfrak{so}(d)\).} In a given patch \(U\subset M\), if \(s:U\to Spin(M)\) is a local spin-frame and \(e:= \Phi\circ s\) is the associated tangent frame, the local gauge fields representing the spin-connection and the Levi-Civita connection coincide,
\begin{equation}
\tilde{\omega}^{(U)} := s^*\tilde{\omega} = (\Phi\circ s)^* \omega = \omega^{(U)} ,
\end{equation}
so in particular the local components of the compatible spin-connection are defined as 
\begin{equation}
\tilde{\omega}(X)^i_{j} = (\nabla_X (e_i))^j.
\end{equation}

The covariant derivative on an associated spinor bundle is defined as usual. Let \(V\) be the typical fiber, acted upon by the representation \(\rho:Spin(d)\to GL(V)\). Then for every local \(V\)-valued function \(\psi:U\to V\),
\begin{equation}
\nabla_X \psi = d\psi(X) + \frac{1}{2}\omega(X)_{ij} \rho(M^{ij})\cdot \psi .
\end{equation}
If in particular we take the fundamental representation of \(Spin(d)\), \textit{i.e.}\ \(\psi\) is a \emph{Dirac spinor}, the generators are \(\rho(M^{ij}) = \Sigma^{ij} := \frac{1}{4}[\gamma^i,\gamma^j]\), where \(\gamma^i\) are the Dirac matrices. Thus,
\begin{equation}
\nabla_\mu \psi = \partial_\mu \psi + \frac{1}{8}\omega_{\mu ij} [\gamma^i,\gamma^j]\cdot \psi .
\end{equation}

We quote the fact that, in general, one is not forced to consider a spin-connection that is compatible with the Levi-Civita connection.\footnote{For example in SUGRA it is sometimes convenient to work with torsion-full spin connections.} However, in this work we always implicitly define covariant derivatives on spinors via a compatible spin-connections. A discussion about spinors in curved spacetime can be found also in \cite{wald-book}.

\chapter{Mathematical background on equivariant cohomology}
\label{app:eq_coho}

\section{Equivariant vector bundles and equivariant characteristic classes}
\label{app:char-classes}

We recall the definitions of characteristic classes on principal bundles \cite{tu-differential_geometry} and then their equivariant version when the bundle supports a \(G\)-action for some Lie group \(G\). Consider a principal \(H\)-bundle \(P\xrightarrow{\pi}M\) with connection 1-form \(A\), and curvature \(F\). Both are forms on \(P\) with values in the Lie algebra \(\mathfrak{h}\). A \emph{polynomial} on \(\mathfrak{h}\) is an element \(f\in S(\mathfrak{h}^*)\), and it is called \emph{invariant polynomial} if it is invariant with respect to the adjoint action of \(H\) on \(\mathfrak{h}\),
\begin{equation}
f(Ad_{*h}X) = f(X) \qquad \forall X\in \mathfrak{h}, h\in H.
\end{equation}
For example, if \(H\) is a matrix group, the adjoint action is simply \(Ad_{*h}X = hXh^{-1}\). If \(f\) is an invariant polynomial of degree \(k\), then \(f(F)\) is an element of \(\Omega^{2k}(P)\). Explicitly, with respect to a basis \((T_a)_{a=1,\cdots,\dim\mathfrak{h}}\) of \(\mathfrak{h}\) and the dual basis \((\alpha^a)_{a=1,\cdots,\dim\mathfrak{h}}\) of \(\mathfrak{h}^*\), if \(F = F^a T_a\) and \(f = f_{a_1\cdots a_k}\alpha^{a_1} \cdots \alpha^{a_k}\), then
\begin{equation}
f(F) = f_{a_1\cdots a_k} F^{a_1}\wedge \cdots\wedge F^{a_k}.
\end{equation}
The above form has three remarkable properties:
\begin{enumerate}[label=(\roman*)]
\item \(f(F)\) is a basic form on \(P\), \textit{i.e.}\ it exists a \(2k\)-form \(\Lambda\in \Omega^{2k}(M)\) such that \(f(F) = \pi^* \Lambda\);
\item \(d \Lambda=0\), or equivalently \(df(F)=0\);
\item the cohomology class \([\Lambda]\in H^{2k}(M)\) is independent on the connection \(F\).
\end{enumerate}
The cohomology class \([\Lambda]\) on \(M\) is called \emph{characteristic class} of \(P\) associated to the invariant polynomial \(f\). Denoting \(\text{Inv}(\mathfrak{h})\subseteq S(\mathfrak{h}^*)\) the algebra of invariant polynomials on \(\mathfrak{h}\), the map
\begin{equation}
\begin{aligned}
w: \text{Inv}(\mathfrak{h}) &\to H^*(M) \\
f &\mapsto [\Lambda]
\end{aligned}
\end{equation}
is called \emph{Chern-Weil homomorphism}.

If one is considering a vector bundle \(E\to M\) associated to the principal \(H\)-bundle \(P\to M\), here the connection 1-form \(A\) and the curvature \(F\) are represented only locally via \(\mathfrak{h}\)-valued forms on \(M\). Under a change of trivialization the local connection does not transform covariantly, but the local curvature does (by conjugation), so the invariant polynomial \(f(F)\) is independent on the frame and it defines a global form on \(M\). The definition of characteristic classes could be thus given in terms of the local curvature of a vector bundle, without changing the result.

We need mainly three examples of characteristic classes, associated to the invariant polynomials \(\mathrm{Tr}, \det\) and \(\mathrm{Pf}\), that corresponds for matrix groups to the standard trace, determinant and pfaffian. These are the \emph{Chern character} \begin{equation}
\mbox{ch}(F) := \mathrm{Tr}\left( e^F\right) ,
\end{equation}
the \emph{Euler class} 
\begin{equation}
e(F) := \mathrm{Pf}\left(\frac{F}{2\pi}\right) ,
\end{equation}
and the \emph{Dirac \(\hat{A}\)-genus}
\begin{equation}
\hat{A}(F) := \sqrt{\det{\left[\frac{\frac{1}{2}F}{\sinh\left( \frac{1}{2}F\right)} \right]}} .
\end{equation}

\medskip
Now we turn the discussion to the case of \(G\)-equivariant bundles  \cite{bott-tu-classes,berline-vergne-loc,szabo}. 
\begin{defn} A \emph{\(G\)-equivariant vector bundle} is a vector bundle \(E\xrightarrow{\pi} M\), such that:
\begin{enumerate}[label=(\roman*)]
\item both \(E\) and \(M\) are \(G\)-spaces and \(\pi\) is \(G\)-equivariant;
\item \(G\) acts linearly on the fibers.
\end{enumerate}
A principal \(H\)-bundle \(P\xrightarrow{\pi} M\) is \(G\)-equivariant if 
\begin{enumerate}[label=(\roman*)]
\item both \(E\) and \(M\) are \(G\)-spaces and \(\pi\) is \(G\)-equivariant;
\item the \(G\)-action commutes with the \(H\)-action on \(P\).
\end{enumerate}
\end{defn}

Usually, a connection \(A\) on a \(G\)-equivariant principal bundle is required to be \emph{\(G\)-invariant}, that is \(\mathcal{L}_X A=0\) for every \(X\in \mathfrak{g}\). If \(G\) is compact, this choice is always possible by averaging any connection over \(G\) to obtain a \(G\)-invariant one \cite{bott-tu-classes}. Since the \(G\)- and the \(H\)-actions commute, a principal \(H\)-bundle \(P\xrightarrow{\pi} M\) induces another principal \(H\)-bundle \(P_G\xrightarrow{\pi_G} M_G\) over the homotopy quotient \(M_G\). Topologically, the \emph{equivariant characteristic classes} of \(P\xrightarrow{\pi} M\) are the ordinary characteristic classes of \(P_G\xrightarrow{\pi_G} M_G\), thus defining elements in the \(G\)-equivariant cohomology \(H_G^*(M)\). From the differential geometric point of view, they can be derived as \emph{equivariantly closed extensions} of the ordinary characteristic classes in the Cartan model. In particular, in \cite{bott-tu-classes,berline-vergne-loc} it was shown that the equivariant characteristic class associated to an invariant polynomial \(f\) is represented by \(f(F^{\mathfrak{g}})\), where
\begin{equation}
F^{\mathfrak{g}} = 1\otimes F + \phi^a \otimes \mu_a
\end{equation}
is the equivariant extension of the curvature \(F\) on the principal \(H\)-bundle. \(\phi^{a=1,\cdots,\dim\mathfrak{g}}\) are the generators of \(S(\mathfrak{g}^*)\) in the Cartan model, and
the map \( \mu:\mathfrak{g}\to \Omega(P;\mathfrak{h}) \) such that 
\begin{equation}
\mu_X := -\iota_X A = -A(\underline{X})
\end{equation}
is called \emph{moment map}, with analogy to the symplectic case. We denoted \(\mu_a \equiv \mu_{T_a}\) with \(T_{a=1,\cdots,\dim\mathfrak{g}}\) the basis of \(\mathfrak{g}\) dual to \(\phi^a\). If we define \(\nabla = d+ A\) the covariant derivative, that in the adjoint bundle acts as \(\nabla \omega = d\omega + [A\stackrel{\wedge}{,}\omega]\), we notice that we can obtain the above equivariant curvature in the Cartan model from the \emph{equivariant covariant derivative}
\begin{equation}
\nabla^{\mathfrak{g}} := 1\otimes \nabla - \phi^a \otimes \iota_a
\end{equation}
that is completely analogous to the definition of the Cartan differential \eqref{eq:Cartan-differential}. With this definition, the equivariant curvature can be expressed as
\begin{equation}
F^\mathfrak{g} = (\nabla^\mathfrak{g})^2 + \phi^a \otimes  \mathcal{L}_a ,
\end{equation}
where the last piece takes care of the non-nilpotency of the Cartan differential on generic differential forms, and moreover it satisfies an equivariant variation of the Bianchi identity
\begin{equation}
(\nabla^\mathfrak{g} F^\mathfrak{g}) = 0 .
\end{equation}
Notice that, if we assume the connection \(A\) to be \(G\)-invariant, the moment map \(\mu\) indeed satisfies a moment map equation with respect to the curvature \(F\) (see Section \ref{subsec:eq-cohom-symplectic}),
\begin{equation}
\nabla \mu_X = -\iota_X F \qquad \forall X\in\mathfrak{g}.
\end{equation}

Once a suitable equivariant extension of the curvature \(F^\mathfrak{g}\) is known, the particular equivariant characteristic classes are simply a modification of the old ones, so the equivariant version of the above Chern character, Euler class and Dirac \(\hat{A}\)-genus are given by
\begin{equation}
\mathrm{ch}_G(F) := \mathrm{Tr}\left(e^{F^\mathfrak{g}}\right) ,\quad e_G(F) := \mathrm{Pf}\left(\frac{F^\mathfrak{g}}{2\pi}\right) , \quad \hat{A}_G(F) := \sqrt{\det{\left[\frac{\frac{1}{2}F^\mathfrak{g}}{\sinh\left( \frac{1}{2}F^\mathfrak{g}\right)} \right]}} ,
\end{equation}
respectively.

\section{Universal bundles and equivariant cohomology}
\label{app:univ-bundles}

In this section we motivate the well-definiteness of equivariant cohomology of Section \ref{sec:equivariant-cohomology}, starting from the definition of the space \(EG\). Proofs for the various propositions we are going to state informally and/or without proof can be found for example in \cite{tu-equiv_cohom, hatcher,Bott-equiv-cohom}. We should mention that the mathematically correct approach to this subject works considering only \emph{CW complexes}. These are special types of topological spaces that can be constructed by \lq\lq attaching deformed disks\rq\rq\ to each other \cite{hatcher}. We only quote that any smooth manifold can be given the structure of a CW complex, so that in the smooth setting we do not need to bother with this subtlety.\footnote{This is a result of Morse theory, see \cite{tu-equiv_cohom} and references therein.}

\begin{defn} A principal \(G\)-bundle \(\pi : EG \rightarrow BG\) is called \emph{universal G-bundle} if:
\begin{enumerate}[label=(\roman*)]
\item for any principal \(G\)-bundle \(P\rightarrow X\), there exists a map \(h:X\rightarrow BG\) such that \(P\cong h^*(EG)\)  (the pull-back bundle of \(EG\) through \(h\));
\item if \(h_0, h_1:X\rightarrow BG\) are such that \(h_0^*(EG)\cong h_1^*(EG)\), then the two maps are homotopic. 
\end{enumerate} 
The base space \(BG\) is called \emph{classifying space}. \end{defn}

The classifying property (i) required of \(EG\) means that for every principal \(G\)-bundle there is a copy of it sitting inside \(EG\rightarrow BG\). The important fact is that existence can be proven for a large class of interesting cases, the argument going as follows. First recall that \emph{homotopic maps pull back to isomorhic bundles}, \textit{i.e.}\ if \(E\rightarrow B\) is a vector bundle, \(X\) a paracompact space, then
\begin{equation}
\label{tu-1}
g,h:X\rightarrow B\ \text{homotopic maps}\ \Rightarrow g^*(E)\cong h^*(E) .
\end{equation}
Then the property (ii) in the definition above states that if \(E\rightarrow B\) is a universal bundle, \(\Rightarrow\) is replaced by \(\Leftrightarrow\). Now let, for any paracompact space \(X\),
\begin{equation}
P_G(X):=\left\lbrace \text{isomorphism classes of principal G-bundles over X}\right\rbrace
\end{equation}
and for some space \(BG\) (to be identified with the classifying space),
\begin{equation}
[X,BG]:= \left\lbrace \text{homotopy classes of maps }X \rightarrow BG\right\rbrace. 
\end{equation} 
Notice that the definition of \(P_G(X)\) is totally independent from the notion of universal \(G\)-bundle. Considering then the map
\begin{equation}
\begin{aligned}
  \phi : [X,BG] &\rightarrow P_G(X) \\
  [h:X\rightarrow BG] &\mapsto h^*(EG) ,
  \end{aligned}
\end{equation}
by \eqref{tu-1} we have that it is well-defined (independent from the representatives). The conditions (i) and (ii) are equivalent to surjectivity and injectivity of \(\phi\), so finally \(P_G(X)\cong[X,BG]\). Since \(P_G(X)\) exists, this proves the existence of the classifying space \(BG\) and of the universal bundle \(EG\rightarrow BG\).\footnote{In the language of category theory, we could say that \(P_G(\cdot)\) is a contravariant functor,  \emph{representable} through \([\cdot,BG]\).}

We can now motivate the well-definiteness of the Borel construction for equivariant cohomology. A fundamental result for this is that  \emph{a principal \(G\)-bundle is a universal bundle if and only if its total space is (weakly) contractible}.\footnote{A \emph{weakly contractible} space is a topological space whose homotopy groups are all trivial. Clearly any contractible space is weakly contractible. It is a fact that every CW complex that is weakly contractible is also contractible \cite{hatcher}, so for our purposes the two concepts coincide.} 
The contractibility of \(EG\) makes its cohomology trivial, so that, since \((M\times EG)\sim M\), we have \(H^*(M\times EG)\cong H^*(M)\). When we take the homotopy quotient, the product by \(EG\) acts as a \lq\lq regulator\rq\rq\ of the resulting cohomology. In fact, if the action of \(G\) on \(M\) is free, such that \(M\to M/G\) is a principal \(G\)-bundle, one can prove that for any (weakly) contractible \(G\)-space \(E\)
\begin{equation}
\faktor{(M\times E)}{G} \sim \faktor{M}{G} ,
\end{equation}
where \(\sim\) here stands for \lq\lq weakly homotopic\rq\rq.\footnote{Two spaces are \emph{weakly homotopic} if they have the same homotopy groups. Again, homotopy equivalence implies weak homotopy equivalence, and for CW complexes these two concepts coincide.} In general, even if the \(G\)-action is not free, two homotopy quotients with respect to different (weakly) contractible \(G\)-spaces \(E\) and \(E'\) are (weakly) homotopy equivalent,
\begin{equation}
\faktor{(M\times E)}{G} \sim \faktor{(M\times E')}{G} .
\end{equation}
Another known fact is that \emph{weakly homotopic spaces have the same (co)homology groups, for all coefficients}, generalizing \eqref{cohomology-1}. Putting together these properties, we have that 
\begin{equation}
H^*\left(\faktor{(M\times E)}{G}\right) \cong H^*\left( \faktor{(M\times E')}{G}\right) ,
\end{equation}
so that  the resulting cohomology  is independent of the choice of contractible principal \(G\)-bundle. The homotopy quotient thus well-defines the \(G\)-equivariant cohomology of \(M\), producing an \lq\lq homotopically correct\rq\rq\ version of its orbit space. As pointed out in Section \ref{sec:equivariant-cohomology}, when the \(G\)-action is free on \(M\) this reproduces the naive definition of cohomology of the quotient space \(M/G\).

\subsection*{Every compact Lie group has a universal bundle}

In Section \ref{sec:equivariant-cohomology} we gave the example of the universal bundle for the circle, \(EU(1)=\mathbb{S}^\infty\) and \(BU(1)=\mathbb{C}P^\infty\). One can generalize this construction to concretely define a universal bundle for any compact Lie group \(G\). This is because any such Lie group  embeds into \(U(n)\) or \(O(n)\) (the maximal compact subgroups of \(GL(n,\mathbb{C})\) and \(GL(n,\mathbb{R})\)), for some \(n\), and for them one can construct universal bundles explicitly. As a subgroup, \(G\) will act freely on the given universal bundle. Then one can take this to be its universal bundle too.

A class of principal \(O(n)\)- or \(U(n)\)-bundles is given by the so-called \emph{Stiefel manifolds}. A Stiefel manifold \(V_k(\mathbb{F}^n)\) is the set of all \emph{orthonormal \(k\)-frames} in \(\mathbb{F}^n\), where \(\mathbb{F}=\mathbb{R},\mathbb{C}\), and the orthonormality is defined with respect to the canonical Euclidean or sesquilinear inner products. A \emph{\(k\)-frame} is an ordered  set \((v_1,\cdots,v_k)\) of \(k\) linearly independent vectors in \(\mathbb{F}^n\). Notice that when \(k=1\), \(V_1(\mathbb{C}^n)\) is the set of all unit vectors in \(\mathbb{C}^n\cong \mathbb{R}^{2n}\), \textit{i.e.}\ the \((2n-1)\)-sphere. The latter is acted freely by the group \(U(1)\) by diagonal multiplication, and analogously the Stiefel manifold \(V_k(\mathbb{C}^n)\) is acted freely by \(U(k)\), that essentially rotates the vectors of the \(k\)-frames. Analogously, \(V_k(\mathbb{R}^n)\) is acted freely by \(O(k)\). Thus we have the generalization of the sequence of principal \(U(k)\)- and \(O(k)\)-bundles, that in the limit \(n\to\infty\) produces the contractible universal bundles \(EU(k)=V_k(\mathbb{C}^\infty)\) and \(EO(k)=V_k(\mathbb{R}^\infty)\). The Stiefel manifolds can thus be seen as a \lq\lq higher dimensional versions\rq\rq\ of the spheres, in the sense of the following consideration: 
\begin{equation}
V_k(\mathbb{R}^n) \cong \faktor{O(n)}{O(n-k)}.
\end{equation}
Comparing with Example \ref{ex:group-actions}, where we remarked that \(\mathbb{S}^{n-1}\cong O(n)/O(n-1)\), we see that indeed \(\mathbb{S}^{n-1} \equiv V_1(\mathbb{R}^n)\). The \((n-1)\)-sphere is just the set of unit vectors in \(\mathbb{R}^n\), so orthonormal 1-frames.

The base spaces \(G_k(\mathbb{C}^n) = V_k(\mathbb{C}^n)/U(k)\) and \(G_k(\mathbb{R}^n) = V_k(\mathbb{R}^n)/O(k)\) are the sets of equivalence classes of \(k\)-frames, that identify \(k\)\emph{-hyperplanes} through the origin inside \(\mathbb{C}^n\) or \(\mathbb{R}^n\). These manifolds are called \emph{Grassmannians}. The \emph{infinite Stiefel manifold} \(V_k(\mathbb{F}^\infty)\)  and the \emph{infinite Grassmannian} \(G_k(\mathbb{F}^\infty)\) are thus the total space of the universal bundle and the classifying space for the unitary and orthogonal groups \(U(k)\) and \(O(k)\), and generalize the universal bundle \(\mathbb{S}^\infty\to \mathbb{C}P^\infty\) of the circle.

As  recalled above, any compact Lie group \(G\) can be embedded as a closed subgroup of an orthogonal group (or a unitary group). This means that \(G\) also acts freely on \(V_k(\mathbb{F}^\infty)\) for some \(k\), and in turn \(V_k(\mathbb{F}^\infty)\to V_k(\mathbb{F}^\infty)/G\) is a principal \(G\)-bundle, whose total space is a contractible space. This gives the universal bundle for any compact Lie group \(G\).

\subsection*{Module structure of equivariant cohomology}
We end this section with a more algebraic comment about the construction of equivariant cohomology. Notice first that 
\begin{equation}
\label{tu-2}
pt_G = \faktor{(pt\times EG)}{G} \cong BG \quad \Rightarrow \quad H_G^*(pt)\cong H^*(BG),
\end{equation}
so the equivariant cohomology of a point is the standard cohomology of the classifying space \(BG\), generalizing Example \ref{ex:equivariant-cohomologies}. Thus the equivariant cohomology \(H_G^*(\cdot)\) inherits analogous functorial properties to the standard (singular) cohomology of the last section, with respect to the ring \(H^*(BG)\) instead of the coefficient ring \(A\cong H^*(pt; A)\). To see this, let us first notice that a  \(G\)-equivariant function \(f:M\to N\) between the two \(G\)-spaces \(M,N\) induces a well-defined map between the two homotopy quotients,
\begin{equation}
\begin{aligned}
   f_G :  M_G &\rightarrow N_G \\
           [m,e] &\mapsto [f(m),e].
\end{aligned}
\end{equation}
This induced map inherits many properties from \(f\):
\begin{enumerate}[label=(\roman*)]
   \item if \(f\) is injective (surjective), then \(f_G\) is injective (surjective);
   \item if \(id:M\rightarrow M\) is the identity, then \(id_G:M_G\rightarrow M_G \) is the identity;
   \item \( (h\circ f)_G = h_G \circ f_G\);
   \item if \(f:M\rightarrow N\) is a fiber bundle with fiber \(F\), then \( f_G:M_G\rightarrow N_G \) is also a fiber bundle with fiber \(F\).
\end{enumerate}
As pointed out in Section \ref{sec:cohomology-review}, a map between two topological spaces induces a map (in the opposite direction) between the associated singular cohomologies, so
\begin{equation}
f_G^* : \left(H^*(N_G)\equiv H^*_G(N)\right) \to \left( H^*(M_G)\equiv H^*_G(M)\right).
\end{equation}
Defining thus a trivial map \(\phi :M\to pt\), we see from \eqref{tu-2} that the induced homomorphism \(\phi^*_G : H^*(BG)\to H^*_G(M)\) makes the equivariant cohomology \(H^*_G(M)\) naturally into a \(H^*(BG)\)-module!\footnote{Recall that singular cohomology has a ring structure.} Also, in general \(f_G^* : H^*_G(N) \to  H^*_G(M)\) is a \(H^*(BG)\)-module homomorphism.\footnote{In category theory terminology, we could say that the Borel construction \((\cdot)_G\) is a covariant functor from the category of \(G\)-spaces to \textbf{Top} (or \textbf{Man}), and \(H_G^*(\cdot)\) is a contravariant functor between \textbf{Top} (or \textbf{Man}) and the category of \(H^*(BG)\)-modules.} Notice that the cohomology of the classifying space \(BG\) is usually very simple, as we pointed out in Section \ref{sec:cartan} via its associated Weil model.

There is a curious difference between standard cohomology and equivariant cohomology regarding the associated coefficient rings. In the former case, it is clear from the various examples in Section \ref{sec:cohomology-review} that the coefficient ring \(\mathbb{R}\cong H^*(pt)\) always embeds into the cohomology \(H^*(M)\) (also for other commutative rings). In the case of equivariant cohomology, on the other hand, the coefficient ring \(H_G^*(pt) = H^*(BG) = S(\mathfrak{g}^*)^G\) does not, since the map \(\phi_G^*\) above is not injective in general, as it is clear also from the example of \(H^*_{U(1)}(\mathbb{S}^1) = \mathbb{R}\). It turns out that the condition for \(H^*(BG)\) to embed in \(H^*_G(M)\) is that \(G\) acts on \(M\) \emph{with fixed points}. We can argue briefly why this is the case. Let \(p\in M\) be a fixed point. The inclusion \(i:\lbrace p\rbrace\to M\) is \(G\)-equivariant since the action on \(p\) is trivial, so there is a well-defined map \(i_G: pt_G=BG\to M_G\). This is easily checked to be a section of the bundle \(M_G\xrightarrow{\pi} BG\), with respect to the projection map \(\pi([m,e]) := [e]\in BG\). The identity \(\pi\circ i_G = id_{M_G}\) lifts to the pull-backs in the opposite direction: \(i_G^*\circ \pi^* = id\) on \(H_G^*(pt)=H^*(BG)\). This means that the map \(\pi^*: H^*(BG)\to H_G^*(M)\) has a left-inverse, and thus it is injective. This property can be seen in the example of the \(U(1)\)-equivariant cohomology of the 2-sphere. In this case there are two fixed points, and indeed \(H^*(BU(1))=\mathbb{R}[\phi]\) embeds in \(H^*_{U(1)}(\mathbb{S}^2)= \mathbb{R}[\phi]\oplus \mathbb{R}[\phi]y\), where \(y\) can be identified in the Cartan model with the equivariantly closed extension of the volume form, \(y\equiv [\tilde{\omega}]\).

\section{Fixed point sets and Borel localization}
\label{app:borel-loc}

We now spend a few words about a procedure that we used many times without many worries, that is to \lq\lq algebraically localize\rq\rq\ the space of equivariant differential forms \(\Omega(M)^{U(1)}[\phi]\) with respect to the indeterminate \(\phi\), setting it to \(\phi = -1\). This localization was useful to simplify the notation in many occasions, but it really has a non-trivial deeper meaning. In fact, it allows to show in a more algebraic way that the \(G\)-equivariant cohomology of the smooth \(G\)-manifold \(M\) is encoded in the fixed point set \(F\) of the \(G\)-action, at least when \(G\) is a torus. The fundamental theorem concerning this point is the so-called \emph{Borel localization theorem}, that sometimes allows to obtain the ring structure of the equivariant cohomology of the manifold from that of its fixed point set. We consider the case of a circle action here.

First, let us recall what localization in algebra means. If \(R\) is a commutative ring, the \emph{localization} of \(R\) with respect to a closed subset \(S\subseteq R\) is a way to formally introduce a multiplicative inverse for every element of \(S\) in \(R\), so to introduce \emph{fractions} in \(R\), analogously to what one does in the construction of the rational numbers \(\mathbb{Q}\) from the integers \(\mathbb{Z}\). This procedure makes the former commutative ring into a field (in the algebraic sense).
Since we are interested in \(U(1)\)-equivariant cohomologies, let us consider an \(\mathbb{R}[\phi]\)-module \(N\), and practically define the \emph{localization of N with respect to \(\phi\)} as
\begin{equation}
N_\phi \cong \left\lbrace\left. \frac{x}{\phi^n} \right| x\in N, n\in \mathbb{N} \right\rbrace ,
\end{equation}
identifying elements in \(N_\phi\) as
\begin{equation}
\frac{x}{\phi^n} \sim \frac{y}{\phi^m} \Leftrightarrow \exists k\in\mathbb{N}: \phi^k(\phi^m x - \phi^n y) = 0 \ \text{in}\ N.
\end{equation}
The simplest example of such a localized module is just \(\mathbb{R}[\phi]_\phi \cong \mathbb{R}[\phi^{-1},\phi]\), \textit{i.e.}\ the Laurent polynomials in \(\phi\). Notice that there is always an \(\mathbb{R}[\phi]\)-module homomorphism that makes \(N\) inject into \(N_\phi\), \(i:N\to N_\phi\) such that \(i(x):= x/\phi^0\). If \(f:N\to M\) is an \(\mathbb{R}[\phi]\)-module homomorphism, then there is a well-defined induced homomorphism between the localized modules \(f_\phi:N_\phi \to M_\phi\) such that \(f(x/\phi^n) := f(x)/\phi^n\). The important algebraic property of localization for what concerns this discussion is that \emph{it commutes with cohomology}: if \((A,d)\) is a differential complex,
\begin{equation}
A^{(0)} \xrightarrow{d} A^{(1)} \xrightarrow{d} \cdots , \qquad d^2 = 0 ,
\end{equation}
where \(A^{(i)}\) are \(\mathbb{R}[\phi]\)-modules, then also \((A_\phi,d_\phi)\) is a differential complex, and
\begin{equation}
\label{borel-1}
H^*(A,d)_\phi \cong H^*(A_\phi,d_\phi) .
\end{equation}

Quite analogously, from Example \ref{ex:U(1)-cohom} onward we substitute the \emph{indeterminate} \(\phi\in S(\mathfrak{u}(1)^*)\) with a \emph{variable}, and then set it to the value \(\phi=-1\) for notational convenience. Stated more formally, we start from the Cartan model of \(U(1)\)-equivariant differential forms \(\Omega(M)^{U(1)}[\phi]\), that has clearly an \(\mathbb{R}[\phi]\)-module structure, and localize it to \(\Omega(M)^{U(1)}[\phi]_\phi\), so introducing \(\phi\) also at the denominator. This puts \(\phi\) on the same footing as a real variable, so that we are allowed to fix it to some value, for convenience only. Notice that operations like \eqref{eq:inverse-of-localiz-form}, where we \lq\lq invert\rq\rq\ an equivariant form, are allowed only in the localized module \(\Omega(M)^{U(1)}[\phi]_\phi\), where the division by \(\phi\) is meaningful. From the result \eqref{borel-1}, we understand that this localization of the Cartan model does not spoil the resulting equivariant cohomology \(H_{U(1)}^*(M)\), because the two operations commute.\footnote{Notice that \(H_{U(1)}^*(M)\) has generically an \(\mathbb{R}[\phi]\)-module structure, by the discussion in Appendix \ref{app:univ-bundles} and the application of the Weil model (see Section \ref{sec:cartan}) \(H^*(BG)\cong S(\mathfrak{u}(1)^*) \cong \mathbb{R}[\phi]\).}

The Borel localization theorem relates really the localized equivariant cohomologies of the \(U(1)\)-manifold \(M\) and of its fixed point locus \(F\). To understand what this has to say about the actual equivariant cohomology of \(M\), we recall first some other algebraic facts. A \emph{torsion} element in a module \(N\) over a ring \(R\), is an element \(x\in N\) such that \(\exists r\neq 0 \in R: rx=0\). If \(N\) is an \(\mathbb{R}[\phi]\) module, the element \(x\) is said to be \(\phi\)-\emph{torsion} if it exists some power of \(\phi\) that annihilates it: \(\phi^k x = 0\) for some \(k\in \mathbb{N}\). The module \(N\) is \(\phi\)-torsion if every one of its elements is \(\phi\)-torsion. It is easy to see that\footnote{Just consider that in the localized module \(x \sim \frac{\phi^k}{\phi^k}x\), so if \(x\) is \(\phi\)-torsion it is equivalent to 0 in \(N_\phi\).} 
\begin{equation}
N\ \text{is }\phi\text{-torsion} \quad \Leftrightarrow \quad N_\phi = 0. 
\end{equation}
Applying this to the case of \(N=H^*_{U(1)}(M)\), we can see that the equivariant cohomology in the case of a free \(U(1)\)-action on \(M\) is \(\phi\)-torsion. In fact, if the action is free, we can easily compute \(H_{U(1)}^*(M) = H^*(M/U(1))\), so that \(H_{U(1)}^k(M)=0\) in some degree \(k> \dim(M/U(1))\). This means that \(\phi^k \cdot H_{U(1)}^*(M) = 0\) for some \(k\) high enough. The first argument in Section \ref{sec:eq_loc_princ} in fact is the proof that more is true: \(H^*_{U(1)}(M)\) is \(\phi\)-torsion if the \(U(1)\)-action is \emph{locally free} on \(M\), since we found essentially \((H^*_{U(1)}(M))_\phi = 0\) as the Poincaré lemma, after having introduced \(\phi\) at the denominator.\footnote{In Section \ref{sec:eq_loc_princ} \(M\) is the manifold without its fixed point set, there called \(\tilde{M}\).} This motivates the following theorem, that states that, \emph{up to torsion}, the \(U(1)\)-equivariant cohomology of \(M\) is concentrated on its fixed point set. A proof can be found in \cite{guillemin-sternberg-book, tu-equiv_cohom}.
\begin{thm}[Borel localization]
Let \(U(1)\) act smoothly on the manifold \(M\), with compact fixed point set \(F\). The inclusion \(i:F\hookrightarrow M\) induces an isomorphism of algebras over \(\mathbb{R}[\phi]\),
\[ i^*_\phi : H^*_{U(1)}(M)_\phi \to H^*_{U(1)}(F)_\phi .
\]
\end{thm}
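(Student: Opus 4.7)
The plan is to reduce the theorem to two ingredients: the vanishing, after localization, of the equivariant cohomology of any $U(1)$-manifold \emph{without} fixed points, and an equivariant tubular-neighborhood/retraction argument that identifies a neighborhood of $F$ with $F$ itself at the level of cohomology. These are then glued together via a Mayer--Vietoris long exact sequence.

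First I would set up the geometry. By the standard slice theorem for smooth actions of compact Lie groups, the fixed point set $F\subset M$ is a closed embedded submanifold, and admits a $U(1)$-invariant open tubular neighborhood $U$ that equivariantly deformation-retracts onto $F$; by functoriality of the Borel construction and homotopy invariance of singular cohomology, this gives an isomorphism $H^*_{U(1)}(U)\cong H^*_{U(1)}(F)$ of $\mathbb{R}[\phi]$-algebras. Let $V:=M\setminus F$; then $M = U\cup V$ is an open cover, and on both $V$ and $U\cap V$ the circle acts without fixed points. By lemma (iii) of Section~3.1 applied to $U(1)$ (whose only closed subgroups are $U(1)$ itself and the finite cyclic groups), the $U(1)$-action on both of these open sets is locally free.

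The key step is then to show that for any $U(1)$-manifold $N$ on which the action is locally free, $H^*_{U(1)}(N)_\phi=0$. There are two ways to argue this: either topologically, using that the homotopy quotient $N_{U(1)}\to N/U(1)$ is a fiber bundle with contractible fiber $EU(1)$ whenever the action is locally free (so that $H^*_{U(1)}(N)\cong H^*(N/U(1))$ as $\mathbb{R}[\phi]$-modules, with $\phi$ acting trivially; since $\dim(N/U(1))<\infty$, this module is annihilated by a sufficiently high power of $\phi$, hence is $\phi$-torsion), or algebraically in the Cartan model, by exhibiting explicitly the equivariant Poincaré homotopy constructed in Section~3.1, namely
\begin{equation*}
K \;=\; \beta\,(d_C\beta)^{-1}\,\wedge\, \cdot, \qquad \beta \;=\; \tfrac{1}{g(\underline{T},\underline{T})}\,g(\underline{T},\cdot),
\end{equation*}
which is well defined once $\phi$ has been inverted (the formal series in $(d\beta)^{-1}$ terminates after finitely many terms by degree reasons) and satisfies $[d_C,K]_{+}=1$ on $\Omega_{U(1)}(N)_\phi$; this shows directly that $H^*_{U(1)}(N)_\phi = 0$. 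Applying this to $N=V$ and $N=U\cap V$ kills the corresponding terms after localization.

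Now I would feed everything into the Mayer--Vietoris long exact sequence for the cover $M=U\cup V$, which lives in equivariant cohomology and is a sequence of $\mathbb{R}[\phi]$-modules. Since localization at $\phi$ is an exact functor on $\mathbb{R}[\phi]$-modules (it commutes with cohomology, as recalled in \eqref{borel-1}), the localized sequence is still exact. With $H^*_{U(1)}(V)_\phi=0=H^*_{U(1)}(U\cap V)_\phi$ inserted, the connecting and neighboring terms collapse and the sequence degenerates to the isomorphism
\begin{equation*}
H^*_{U(1)}(M)_\phi \;\xrightarrow{\;\cong\;}\; H^*_{U(1)}(U)_\phi \;\cong\; H^*_{U(1)}(F)_\phi,
\end{equation*}
and tracing through the construction identifies this composite with the localization $i^*_\phi$ of the restriction induced by $i:F\hookrightarrow M$. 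The isomorphism is automatically one of $\mathbb{R}[\phi]$-algebras because $i^*$ is a ring map and localization preserves ring structure.

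The main obstacle is the locally-free vanishing statement. The topological proof requires some care: one must justify that $N_{U(1)}\to N/U(1)$ is an honest fibration with contractible fiber when the action is merely \emph{locally} free (rather than free), which involves passing to finite covers to handle the finite isotropy groups and arguing that these do not obstruct the cohomological conclusion since we work with real coefficients (where the order of a finite group is invertible). The Cartan-model proof sidesteps this at the cost of checking carefully that the formal inverse $(1+d\beta)^{-1}=\sum_i(-1)^i(d\beta)^i$ really does produce a well-defined cochain homotopy at the level of $G$-invariant forms and commutes with the Cartan differential --- which is where the hypothesis of a $U(1)$-invariant Riemannian metric (lemma (i) of Section 3.1) is essential, so that $\beta$ is invariant.
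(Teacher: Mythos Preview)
The paper does not give its own proof of this theorem: it states the result and immediately refers the reader to \cite{guillemin-sternberg-book, tu-equiv_cohom}. What the paper \emph{does} do, in the paragraphs preceding the statement, is motivate the key ingredient you use --- that $H^*_{U(1)}(N)_\phi = 0$ whenever the circle acts locally freely on $N$ --- by pointing back to the explicit cochain homotopy $\xi = \beta(d_C\beta)^{-1}$ of Section~\ref{sec:eq_loc_princ}. Your Cartan-model argument for the vanishing step is exactly this, so that part is fully aligned with the paper.

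What you add beyond the paper's sketch is the Mayer--Vietoris gluing (tubular neighborhood $U$ of $F$, complement $V = M\setminus F$, exactness of localization on the resulting long exact sequence). This is precisely the standard textbook argument in the references the paper cites, and your outline is correct. One small caution on your first (topological) route to the vanishing: the statement that $N_{U(1)}\to N/U(1)$ is a fiber bundle with contractible fiber is literally true only for \emph{free} actions; for merely locally free actions the quotient is an orbifold and the fibers over singular strata are $EU(1)/\Gamma \simeq B\Gamma$ for finite $\Gamma$. You flag this yourself, and the fix you indicate (real coefficients make finite-group cohomology trivial) is the right one --- but the Cartan-model argument is cleaner here and avoids the issue entirely, which is presumably why the paper favors it.
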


This theorem is an \lq\lq abstract version\rq\rq\ of the localization theorems described in Chapter \ref{cha:loc theorems}, and intuitively gives another way to see that they have to be true, without having to travel through all the smooth algebraic models and the integration theory that we described in due time. It shows that localization is something present at a very low level of structure, originating just from the topological nature of equivariant cohomology.

\section{Equivariant integration and Stokes' theorem}
\label{app:integr}

In this section we define what it means to integrate a \(G\)-equivariant differential form \(\omega\in \Omega_G(M)\) over a smooth, oriented \(G\)-manifold \(M\) of dimension \(dim(M)=n\) and we report an extended version of Stokes' theorem that applies in the equivariant setup. Let \(G\) be a connected Lie group acting (smoothly) on the left on  \(M\), being \( \{\phi^a\}_{a=1,\cdots , dim(\mathfrak{g})} \) a basis for \(\mathfrak{g}^*:=Lie(G)^*\).
If the equivariant form \(\omega\) is of degree \(k\), we can express it as
\begin{equation}
\omega = \omega^{(k)} + \omega^{(k-2)}_a \phi^a + \omega^{(k-4)}_{ab} \phi^a \phi^b + \cdots = \sum_{p\geq 0} \omega^{(k-2p)}_{a_1 \cdots a_p} \phi^{a_1} \cdots \phi^{a_p}
\end{equation}
where the coefficients are differential forms on \(M\), tensor products have been suppressed and we require \(\omega\) to be \(G\)-invariant. The natural way to define integration of such objects is obtained just making the integral \(\int_M\) act on the coefficients \(\omega^{(k-2p)}_{a_1\cdots a_p}\) of the \(\phi\)-expansion of \(\omega\). In this way, one obtains a map
\begin{equation}
\int_M : \Omega_G(M)\to S(\mathfrak{g}^*)\equiv \mathbb{R}[\phi^a] .
\end{equation}
Thanks to the equivariant Stokes' theorem (to be stated later), this descends also in equivariant cohomology, \(\int_M: H^*_G(M)\to  S(\mathfrak{g}^*)\), analogously to the standard (non-equivariant) case.

\begin{defn} The integral on \(M\) of the \(G\)-equivariant form \(\omega\) of \(\textrm{deg}(\omega)=k\) is defined as 
   \[
   \int_M \omega := \sum_{p\geq 0} \left( \int_M \omega^{(k-2p)}_{a_1 \cdots a_p} \right) \phi^{a_1} \cdots \phi^{a_p} .
   \]
\end{defn}
Notice that if \(n\) and \(k\) are of different parity, the integral is automatically zero. If instead \(k=n+2m\) for some \(m\in \mathbb{Z}\), then 
  \begin{equation}
  \int_M \omega = \begin{cases} \left( \int_M \omega^{(n)}_{a_1 \cdots a_m} \right) \phi^{a_1} \cdots \phi^{a_m}  &  k\geqslant n \\
  0  &  k< n .\end{cases}
  \end{equation}
In particular, if we have a top form \(\omega\in\Omega(M)\) on M and \(\tilde{\omega}\) is any equivariant extension of \(\omega\) in \(\Omega_G(M)\), then we can deform the integral
\begin{equation}
\int_M \omega = \int_M \tilde{\omega}
\end{equation}
without changing its value.

We can then prove the equivariant version of the Stokes' theorem.
\begin{thm} Let \(G\) be a connected Lie group acting (smoothly) on the left on a smooth manifold \(M\) with boundary \(\partial M\). If \(\omega \in\Omega_G(M)\) of \(\textrm{deg}(\omega)=k\), then
  \[
  \int_M d_C \omega = \int_{\partial M} \omega
  \]
where \(d_C = 1 \otimes d + \phi^a \otimes \iota_{a} \) is the Cartan differential and \(\iota_a \equiv \iota_{T_a}\), with \(\{T_a\}\) a basis of \( \mathfrak{g}^*\) dual to \(\lbrace \phi^a\rbrace\). \end{thm}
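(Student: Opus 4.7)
The plan is to reduce the equivariant statement to the ordinary Stokes' theorem applied componentwise in the polynomial expansion with respect to the generators $\phi^a$. Write $\omega \in \Omega_G(M)$ of degree $k$ as
\[
\omega \;=\; \sum_{p \geq 0} \omega^{(k-2p)}_{a_1 \cdots a_p}\,\phi^{a_1}\cdots \phi^{a_p},
\]
where each coefficient is an ordinary invariant differential form on $M$. Applying the Cartan differential, $d_C \omega = d\omega + \phi^a \iota_a \omega$ splits into two contributions:
\[
d_C\omega \;=\; \sum_p \bigl(d\omega^{(k-2p)}_{a_1 \cdots a_p}\bigr)\phi^{a_1}\cdots \phi^{a_p} + \sum_p \bigl(\iota_b\,\omega^{(k-2p)}_{a_1 \cdots a_p}\bigr)\phi^{b}\phi^{a_1}\cdots \phi^{a_p}.
\]
The first term raises the form-degree by one; the second lowers it by one while raising the polynomial degree by one.

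Next I would unpack what equivariant integration $\int_M$ keeps: by definition it is nonzero only on the coefficients whose form-degree equals $n = \dim M$. Writing $k+1 = n+2m$ (otherwise both sides of the claimed identity vanish trivially by parity), the equivariant top-degree condition selects in the first sum the index $p=m$ with coefficient $d\omega^{(n-1)}_{a_1\cdots a_m}$, and in the second sum the index $p = m-1$ with coefficient $\iota_b\,\omega^{(n+1)}_{a_1\cdots a_{m-1}}$. Therefore
\[
\int_M d_C\omega \;=\; \int_M \bigl(d\omega^{(n-1)}_{a_1\cdots a_m}\bigr)\,\phi^{a_1}\cdots \phi^{a_m} \;+\; \int_M \bigl(\iota_b\,\omega^{(n+1)}_{a_1\cdots a_{m-1}}\bigr)\,\phi^{b}\phi^{a_1}\cdots \phi^{a_{m-1}}.
\]

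The key observation — and the cleanest step — is that the second contribution vanishes identically: on an $n$-manifold there are no non-trivial $(n+1)$-forms, so every coefficient $\omega^{(n+1)}_{a_1\cdots a_{m-1}}$ is already zero. This is the mechanism by which the interior multiplication piece of $d_C$, although nontrivial as an operation on $\Omega_G(M)$, drops out from $\int_M d_C \omega$. The first contribution is then handled by the ordinary Stokes' theorem applied coefficient-by-coefficient in the polynomial ring $\mathbb{R}[\phi^a]$, giving
\[
\int_M \bigl(d\omega^{(n-1)}_{a_1\cdots a_m}\bigr)\phi^{a_1}\cdots\phi^{a_m} \;=\; \int_{\partial M} \omega^{(n-1)}_{a_1\cdots a_m}\,\phi^{a_1}\cdots\phi^{a_m}.
\]
By the definition of equivariant integration on the $(n-1)$-dimensional boundary $\partial M$ (which inherits a smooth $G$-action from $M$ since $G$ is connected), the right-hand side is precisely $\int_{\partial M}\omega$, concluding the proof.

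The argument is essentially formal; the main conceptual point to emphasize is the dimensional vanishing of $\omega^{(n+1)}$, which is what permits the interior-multiplication term in $d_C$ to disappear from the integrated identity. I do not anticipate any serious obstacle: once the indexing is set up carefully, the whole statement follows from classical Stokes applied in each monomial in the $\phi^a$. One minor bookkeeping check I would perform is that the restriction of an equivariant form to $\partial M$ remains equivariant (which holds because the $G$-action on $M$ sends $\partial M$ to itself by connectedness of $G$ and continuity of the orbit map), so that $\int_{\partial M}\omega$ is well-defined as an equivariant integral.
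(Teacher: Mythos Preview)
Your proof is correct and follows essentially the same approach as the paper: expand $\omega$ in the $\phi^a$, note that only the form-degree $n$ component contributes to $\int_M d_C\omega$, observe that the $\iota_a$-contribution would require an $(n+1)$-form coefficient and hence vanishes, and apply ordinary Stokes to the remaining $d\omega^{(n-1)}$ term. Your write-up is in fact more careful with the indexing (the parity condition $k+1=n+2m$ and the identification of $p=m$ and $p=m-1$) than the paper's sketch, and your remark that $\partial M$ is $G$-invariant so that $\int_{\partial M}\omega$ is well-defined is a good addition.
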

\begin{proof} The proof follows from the direct evaluation and the standard Stokes' theorem. If the integral is not zero, selecting the component of top-degree, \[
\left.(d_C\omega)\right|_{(n)} = d\omega^{(n-1)}_I\phi^I + (\iota_{\lbrace a}\omega^{(n+1)}_{J\rbrace})\phi^a\phi^J ,\]
where \(I=(a_1,\cdots,a_{(k-n)/2})\) and \(J=(a_1,\cdots,a_{(k-n+1)/2})\). The second term vanishes since \(\omega^{(n+1)}=0\) by dimensionality. So the integral of \(d_C\omega\) is the integral of the first term, on which we can use the standard version of Stokes' theorem, and getting the statement of the theorem. \end{proof}

\cleardoublepage
\addcontentsline{toc}{chapter}{Bibliography}
\printbibliography

\end{document}